\documentclass[nonacm, authorversion, sigconf, final]{acmart} 
\pdfoutput=1
\usepackage[T1]{fontenc}
\usepackage{graphicx}
\usepackage{tikz}
\usetikzlibrary{calc}
\usetikzlibrary{positioning}
\usetikzlibrary{arrows,automata}
\usetikzlibrary{patterns,patterns.meta}
\usetikzlibrary{shapes.geometric}
\usepackage{xspace}
\usepackage{xcolor}
\usepackage{amsmath}
\usepackage{thmtools}
\usepackage{thm-restate}
\usepackage{mathrsfs}
\usepackage{ifdraft}
\usepackage{wrapfig}
\usepackage{subcaption}
\usepackage{verbatim}
\usepackage{booktabs}
\usepackage{multirow}
\usepackage{paralist}
\usepackage{lipsum}
\usepackage{amsbsy}
\usepackage[normalem]{ulem}
\usepackage{textcomp}
\usepackage{stmaryrd}
\usepackage{mathpartir}
\usepackage{keyval}
\usepackage{algorithm}
\usepackage{algorithmicx}
\usepackage[noend]{algpseudocode}
\usepackage{makecell}
\usepackage{stackengine}

\RequirePackage{centernot}
\RequirePackage[Symbol]{upgreek}

\RequirePackage{etoolbox}
\RequirePackage{mathtools}

\newcommand{\Kleenestar}{\mathrel{\vphantom{\to}^*}}
\let\originalleft\left
\let\originalright\right
\renewcommand{\left}{\mathopen{}\mathclose\bgroup\originalleft}
\renewcommand{\right}{\aftergroup\egroup\originalright}

\newcommand{\pvec}[1]{\vec{#1}\mkern2mu\vphantom{#1}}

\newcommand{\Nat}{\mathbb{N}}

\newcommand{\protocol}{\ensuremath{\mathcal{G}}}

\newcommand{\Procs}{\ensuremath{\mathcal{P}}}
\newcommand{\ProcsOf}[1]{\ensuremath{\mathcal{P}_{\! #1}}}\definecolor{roleColor}{rgb}{0.1, 0.3, 0.1}\newcommand{\roleCol}[1]{{\color{roleColor}#1}}\newcommand{\roleFmt}[1]{\roleCol{\mathtt{#1}}}

\newcommand{\procA}{{\roleFmt{p}}}
\newcommand{\procB}{{\roleFmt{q}}}
\newcommand{\procC}{{\color{roleColor}\roleFmt{r}}}
\newcommand{\procD}{{\color{roleColor}\roleFmt{s}}}
\newcommand{\procE}{{\color{roleColor}\roleFmt{t}}}

\newcommand{\val}{\ensuremath{m}}
\newcommand{\MsgVals}{\ensuremath{\mathcal{V}}}

\newcommand{\CSMabb}[1]{\ensuremath{\mathcal{#1}}}
\newcommand{\CSM}[1]{\ensuremath{\{\!\!\{#1_\procA\}\!\!\}_{\procA \in \Procs}}}

\newcommand{\makeAsync}[1]{\hat{#1}}

\newcommand{\CLTS}[1]{\ensuremath{\{\!\!\{#1_\procA\}\!\!\}_{\procA \in \Procs}}}

\newcommand{\TM}{\mathit{TM}\xspace} 
\newcommand{\condX}{(X)\xspace}
\newcommand{\TapeAlph}{\Delta}
\newcommand{\TMStates}{Q}

\newcommand{\msgBackForth}[3]{#1\leftrightarrow#2:#3}
\newcommand{\markEndLoop}{\bot}
\newcommand{\markNewRound}{\circ}
\newcommand{\markBeginConf}{{\langle\negthinspace\langle}}
\newcommand{\markEndConf}{{\rangle\negthinspace\rangle}}
\newcommand{\invisibleEndLine}{\phantom{s}}
\newcommand{\msc}{\operatorname{msc}}
\newcommand{\lbl}{\mathsf}
\newcommand{\sinkfinal}{sink-final\xspace}

\newcommand{\emptystring}{\varepsilon}

\newcommand{\set}[1]{\{#1\}}

\newcommand{\lang}{\mathcal{L}}
\newcommand{\langasync}{\mathcal{L}_{\mathit{async}}}
\newcommand{\langsync}{\mathcal{L}_{\mathit{sync}}}

\newcommand{\SyncToAsync}{\ensuremath{\texttt{\upshape split}}}
\newcommand{\channels}{\ensuremath{\mathsf{Chan}}}

\newcommand{\channel}[2]{\ensuremath{#1,#2}}

\newcommand{\AlphSync}{\ensuremath{\Gamma}} \newcommand{\AlphAsync}{\ensuremath{\hat{\Gamma}}}\newcommand{\AlphAsyncSubscript}{\ensuremath{Σ_{\mathit{async}}}}
\newcommand{\Alphabet}{\AlphAsync}

\newcommand{\restrict}[2]{{#1}|_{#2}}
\newcommand{\lfp}{\mathrm{lfp}}

\newcommand{\interswap}{\ensuremath{\sim}}
\newcommand{\interswapProcs}{\ensuremath{\sim}}
\newcommand{\interswapConc}{\equiv}

\stackMath
\DeclareMathOperator*{\MixCh}{\Sum}

\def \ifempty#1{\def\temp{#1} \ifx\temp\empty }

\newcommand{\snd}[3]{#1\triangleright#2\sendOp#3}
\newcommand{\rcv}[3]{#2\triangleleft#1\recOp#3}

\newcommand{\msgFromTo}[3]{#1\!\to\!#2\!:\!#3}
\newcommand{\msgFromToNS}[3]{#1\to#2:#3}

\newcommand{\alphOp}{\operatorname{\textit{alph}}}

\newcommand{\pref}{\operatorname{pref}}

\newcommand{\wproj}{{\ensuremath{\Downarrow}}}

\def\mmerge{\mathrel{\ThisStyle{\stretchrel*{\ooalign{\raise0.2\LMex\hbox{$\SavedStyle\sqcap$}\cr \raise-0.2\LMex\hbox{$\SavedStyle\sqcap$}}}{\sqcap}}}}
\def\mmmerge{\mathrel{\ThisStyle{\stretchrel*{\ooalign{\raise0.6\LMex\hbox{$\SavedStyle\sqcap$}\cr \raise0.2\LMex\hbox{$\SavedStyle\sqcap$}\cr \raise-0.2\LMex\hbox{$\SavedStyle\sqcap$}}}{\sqcap}}}}

\newcommand{\union}{\cup}
\newcommand{\inters}{\cap}
\newcommand{\Union}{\bigcup}

\newcommand{\dunion}{\uplus}

\DeclarePairedDelimiter\card{\lvert}{\rvert}

\providecommand{\implies}{\Rightarrow}

 \providecommand{\Coloneqq}{\mathrel{\mathop{::}}=} \newcommand{\is}{\coloneq}

\def\congr{\equiv}
\def\precongr{\sqsubseteq}

\newcommand{\reach}{\operatorname{reach}}

\newcommand{\from}{\colon}

 \def\st{.\,}  
\newcommand{\ie}{i.e.~}
\newcommand{\eg}{e.g.~}

\def\grammOr{\hspace{3pt}\mid\hspace{3pt}}
\def\grammIs{\Coloneqq}

\makeatletter
\begingroup
\catcode`\|=\active \gdef\@grammar@bar{\catcode`\|=\active \def|{\grammOr}}
\endgroup

\newcommand{\gramm}[1]{\begingroup
	\def\is{\grammIs}\@grammar@bar #1\endgroup }

\newenvironment{grammar}{\begin{equation*}\def\is{& \grammIs }\@grammar@bar \aligned }
	{\endaligned \end{equation*}\aftergroup\ignorespaces }
\makeatother

\newcommand{\hole}{\hbox{-}}

\usetikzlibrary{arrows.meta}

\tikzset{
  trans/.style={
    draw,-{Stealth[round]}, semithick, shorten >= 1pt,
  },
  init/.style={initial by arrow},
  final/.style={accepting},
  initial text={},
  initial distance=2ex,
  every initial by arrow/.style={trans},
}

\tikzset{
  sem/.style={
every state/.style = {semnode},
	 every edge/.style = {semarrow},
	 every loop/.style = {semarrow},
  }
}

\tikzset{
  semnode/.style={
    thick,
    draw,
    minimum size=1ex,
    shape=circle,
    font=\scriptsize,
    inner sep=2pt,
  },
  semarrow/.style={
    trans,
    font=\scriptsize,
    draw,
    pos=.4,
}
}

\newcommand{\adjustfigure}[1][\small]{\centering#1\columnwidth=\linewidth \belowdisplayskip=0pt\belowdisplayshortskip=0pt\abovedisplayskip=0pt\abovedisplayshortskip=0pt}

\DeclareMathOperator{\sendOp}{!}
\DeclareMathOperator{\recOp}{?}

\newcommand{\seq}{\,.\,}
\newcommand{\cat}{\cdot}

\newcommand{\starred}{\mathrel{\vphantom{\to}^*}}

\newcommand{\pn}[1]{\mathsf{#1}}

\newcommand{\restr}{\upnu}  \newcommand{\prefixPi}{\pi}  \newcommand{\zero}{\mathbf{0}}
\newcommand{\labelAndType}[2]{#1(#2)} \newcommand{\labelAndMsg}[2]{#1\langle#2\rangle}
\newcommand{\labelAndVar}[2]{#1(#2)}

\newcommand{\meta}[1]{{\color{gray}#1}}
\newcommand{\types}{\;{\color{blue}\vdash}\;}
\newcommand{\typesSFd}{\;{\color{blue}\Vdash_{\tiny{SF}}}\;}
\newcommand{\typesSFs}{\;{\color{blue}\vdash_{\tiny{SF}}}\;}
\newcommand{\redContext}{\mathbb{C}}
\newcommand{\hasType}{:}
\newcommand{\typingContextOne}{\Theta}
\newcommand{\typingContextTwo}{\Lambda}
\newcommand{\typingContextThree}{\Omega}
\newcommand{\typingContextCat}{\mathrel{\color{red!60!black}\shortmid}}
\newcommand{\EndState}{\operatorname{end}}
\newcommand{\SessionName}{\mathcal{S}}

\newcommand{\Parallel}{\prod}

\newcommand{\Sum}{\Sigma}

\newcommand{\redto}{\longrightarrow}

\newcommand{\Defs}{\mathcal{D}}
\newcommand{\ProcIds}{\mathcal{Q}}

\newcommand{\procReductionProcName}{\textsc{PR-$\pn{Q}$}\xspace}
\newcommand{\procReductionContext}{\textsc{PR-ctx}\xspace}
\newcommand{\procReductionExc}{\textsc{PR-exc}\xspace}
\newcommand{\procReductionCongr}{\textsc{PR-$\precongr$}\xspace}

\newcommand{\procTypingProcDefEmpty}{\textsc{PT-def-$\emptystring$}\xspace}
\newcommand{\procTypingProcDef}{\textsc{PT-def}\xspace}
\newcommand{\procTypingProcName}{\textsc{PT-$\pn{Q}$}\xspace}
\newcommand{\procTypingZero}{\textsc{PT-$\zero$}\xspace}
\newcommand{\procTypingEnd}{\textsc{PT-end}\xspace}

\newcommand{\procTypingMixCh}{\textsc{PT-$\MixCh$}\xspace}
\newcommand{\procTypingParallel}{\textsc{PT-$\parallel$}\xspace}
\newcommand{\procTypingRestr}{\textsc{PT-$\restr$}\xspace}

\newcommand{\typingReductionMixCh}{\textsc{TR-$\MixCh$}\xspace}

\newcommand{\Elang}[2]{[#1]^\exists_{#2}}
\newcommand{\Alang}[2]{[#1]^\forall_{#2}}

\newcommand{\EQUIVALENCE}{\ensuremath{\mathit{EQUIVALENCE}}}
\newcommand{\INCLUSION}{\ensuremath{\mathit{INCLUSION}}}

\newcommand{\PSPACE}{\mbox{PSPACE}\xspace}
\newcommand{\EXPSPACE}{\mbox{EXPSPACE}\xspace}
\newcommand{\EXPTIME}{\mbox{EXPTIME}\xspace} \usepackage{newunicodechar}
\newunicodechar{∃}{\ensuremath{\exists}}
\newunicodechar{∀}{\ensuremath{\forall}}
\newunicodechar{θ}{\ensuremath{\theta}}
\newunicodechar{τ}{\ensuremath{\tau}}
\newunicodechar{φ}{\ensuremath{\varphi}}
\newunicodechar{ξ}{\ensuremath{\xi}}
\newunicodechar{ζ}{\ensuremath{\zeta}}
\newunicodechar{ψ}{\ensuremath{\psi}}
\newunicodechar{π}{\ensuremath{\pi}}
\newunicodechar{α}{\ensuremath{\alpha}}
\newunicodechar{β}{\ensuremath{\beta}}
\newunicodechar{γ}{\ensuremath{\gamma}}
\newunicodechar{δ}{\ensuremath{\delta}}
\newunicodechar{ε}{\ensuremath{\varepsilon}}
\newunicodechar{κ}{\ensuremath{\kappa}}
\newunicodechar{λ}{\ensuremath{\lambda}}
\newunicodechar{μ}{\ensuremath{\mu}}
\newunicodechar{ρ}{\ensuremath{\rho}}
\newunicodechar{σ}{\ensuremath{\sigma}}
\newunicodechar{ω}{\ensuremath{\omega}}
\newunicodechar{Γ}{\ensuremath{\Gamma}}
\newunicodechar{Φ}{\ensuremath{\Phi}}
\newunicodechar{Δ}{\ensuremath{\Delta}}
\newunicodechar{Σ}{\ensuremath{\Sigma}}
\newunicodechar{Π}{\ensuremath{\Pi}}
\newunicodechar{∑}{\ensuremath{\Sigma}}
\newunicodechar{∏}{\ensuremath{\Pi}}
\newunicodechar{Θ}{\ensuremath{\Theta}}
\newunicodechar{Ω}{\ensuremath{\Omega}}
\newunicodechar{⇒}{\ensuremath{\Rightarrow}}
\newunicodechar{⇐}{\ensuremath{\Leftarrow}}
\newunicodechar{⇔}{\ensuremath{\Leftrightarrow}}
\newunicodechar{→}{\ensuremath{\rightarrow}}
\newunicodechar{←}{\ensuremath{\leftarrow}}
\newunicodechar{↔}{\ensuremath{\leftrightarrow}}
\newunicodechar{¬}{\ensuremath{\neg}}
\newunicodechar{∧}{\ensuremath{\land}}
\newunicodechar{∨}{\ensuremath{\lor}}
\newunicodechar{≠}{\ensuremath{\neq}}
\newunicodechar{≡}{\ensuremath{\equiv}}
\newunicodechar{∼}{\ensuremath{\sim}}
\newunicodechar{≈}{\ensuremath{\approx}}
\newunicodechar{≥}{\ensuremath{\geq}}
\newunicodechar{≤}{\ensuremath{\leq}}
\newunicodechar{≫}{\ensuremath{\gg}}
\newunicodechar{≪}{\ensuremath{\ll}}
\newunicodechar{∅}{\ensuremath{\emptyset}}
\newunicodechar{⊆}{\ensuremath{\subseteq}}
\newunicodechar{⊂}{\ensuremath{\subset}}
\newunicodechar{∩}{\ensuremath{\cap}}
\newunicodechar{⋂}{\ensuremath{\cap}}
\newunicodechar{∪}{\ensuremath{\cup}}
\newunicodechar{⋃}{\ensuremath{\cup}}
\newunicodechar{⊎}{\ensuremath{\uplus}}
\newunicodechar{∈}{\ensuremath{\in}}
\newunicodechar{∉}{\ensuremath{\not\in}}
\newunicodechar{⊤}{\ensuremath{\top}}
\newunicodechar{⊥}{\ensuremath{\bot}}
\newunicodechar{₀}{\ensuremath{_0}}
\newunicodechar{₁}{\ensuremath{_1}}
\newunicodechar{₂}{\ensuremath{_2}}
\newunicodechar{₃}{\ensuremath{_3}}
\newunicodechar{₄}{\ensuremath{_4}}
\newunicodechar{₅}{\ensuremath{_5}}
\newunicodechar{₆}{\ensuremath{_6}}
\newunicodechar{₇}{\ensuremath{_7}}
\newunicodechar{₈}{\ensuremath{_8}}
\newunicodechar{₉}{\ensuremath{_9}}
\newunicodechar{⁰}{\ensuremath{^0}}
\newunicodechar{¹}{\ensuremath{^1}}
\newunicodechar{²}{\ensuremath{^2}}
\newunicodechar{³}{\ensuremath{^3}}
\newunicodechar{⁴}{\ensuremath{^4}}
\newunicodechar{⁵}{\ensuremath{^5}}
\newunicodechar{⁶}{\ensuremath{^6}}
\newunicodechar{⁷}{\ensuremath{^7}}
\newunicodechar{⁸}{\ensuremath{^8}}
\newunicodechar{⁹}{\ensuremath{^9}}
\newunicodechar{𝔹}{\ensuremath{\mathbb{B}}}
\newunicodechar{ℝ}{\ensuremath{\mathbb{R}}}
\newunicodechar{ℕ}{\ensuremath{\mathbb{N}}}
\newunicodechar{ℂ}{\ensuremath{\mathbb{C}}}
\newunicodechar{ℚ}{\ensuremath{\mathbb{Q}}}
\newunicodechar{𝕋}{\ensuremath{\mathbb{T}}}
\newunicodechar{𝕏}{\ensuremath{\mathbb{X}}}
\newunicodechar{ℤ}{\ensuremath{\mathbb{Z}}}
\newunicodechar{✓}{\ding{51}}
\newunicodechar{✗}{\ding{55}}
\newunicodechar{◊}{\ensuremath{\lozenge}}
\newunicodechar{□}{\ensuremath{\square}}
\newunicodechar{𝓐}{\ensuremath{\mathcal{A}}}
\newunicodechar{𝓑}{\ensuremath{\mathcal{B}}}
\newunicodechar{𝓒}{\ensuremath{\mathcal{C}}}
\newunicodechar{𝓓}{\ensuremath{\mathcal{D}}}
\newunicodechar{𝓔}{\ensuremath{\mathcal{E}}}
\newunicodechar{𝓕}{\ensuremath{\mathcal{F}}}
\newunicodechar{𝓖}{\ensuremath{\mathcal{G}}}
\newunicodechar{𝓗}{\ensuremath{\mathcal{H}}}
\newunicodechar{𝓘}{\ensuremath{\mathcal{I}}}
\newunicodechar{𝓙}{\ensuremath{\mathcal{J}}}
\newunicodechar{𝓚}{\ensuremath{\mathcal{K}}}
\newunicodechar{𝓛}{\ensuremath{\mathcal{L}}}
\newunicodechar{𝓜}{\ensuremath{\mathcal{M}}}
\newunicodechar{𝓝}{\ensuremath{\mathcal{N}}}
\newunicodechar{𝓞}{\ensuremath{\mathcal{O}}}
\newunicodechar{𝓟}{\ensuremath{\mathcal{P}}}
\newunicodechar{𝓠}{\ensuremath{\mathcal{Q}}}
\newunicodechar{𝓡}{\ensuremath{\mathcal{R}}}
\newunicodechar{𝓢}{\ensuremath{\mathcal{S}}}
\newunicodechar{𝓣}{\ensuremath{\mathcal{T}}}
\newunicodechar{𝓤}{\ensuremath{\mathcal{U}}}
\newunicodechar{𝓥}{\ensuremath{\mathcal{V}}}
\newunicodechar{𝓦}{\ensuremath{\mathcal{W}}}
\newunicodechar{𝓧}{\ensuremath{\mathcal{X}}}
\newunicodechar{𝓨}{\ensuremath{\mathcal{Y}}}
\newunicodechar{𝓩}{\ensuremath{\mathcal{Z}}}
\newunicodechar{…}{\ensuremath{\ldots}}
\newunicodechar{∗}{\ensuremath{\ast}}
\newunicodechar{⊢}{\ensuremath{\vdash}}
\newunicodechar{⊧}{\ensuremath{\models}}
\newunicodechar{′}{\ensuremath{'}}
\newunicodechar{″}{\ensuremath{''}}
\newunicodechar{‴}{\ensuremath{'''}}
\newunicodechar{∥}{\ensuremath{\|}}
\newunicodechar{⊕}{\ensuremath{\oplus}}
\newunicodechar{⁺}{\ensuremath{^+}}
\newunicodechar{⊇}{\ensuremath{\supseteq}}
\newunicodechar{∘}{\ensuremath{\circ}}
\newunicodechar{∙}{\ensuremath{\cdot}}
\newunicodechar{⋅}{\ensuremath{\cdot}}
\newunicodechar{≈}{\ensuremath{\approx}}
\newunicodechar{×}{\ensuremath{\times}}
\newunicodechar{∞}{\ensuremath{\infty}}
\newunicodechar{⊑}{\ensuremath{\sqsubseteq}}
 \RequirePackage{tikz}
\usetikzlibrary{calc,scopes}

\usetikzlibrary{fit,positioning}
\usetikzlibrary{arrows.meta}
\usetikzlibrary{chains}
\usetikzlibrary{automata}

\pgfdeclarelayer{background}
\pgfsetlayers{background,main}

\tikzset{
  strut sized/.style = {
    execute at end node={\strut}
  },
  over/.style={remember picture,overlay},
  tight/.style={
    inner sep=0pt,
    outer sep=0pt,
  },
  clear/.style={
    preaction={draw=white,line width=2.5pt,-, shorten <=4pt}
  },
  shorten/.style={
    shorten >=#1,shorten <=#1,
  },
  shorten/.default=2pt,
  exclude from bounding box/.style={
    execute at begin scope={\useasboundingbox;},
  },
  between/.style args={#1 and #2}{
    at = ($(#1)!0.5!(#2)$),
  },
}

\tikzset{
  trans/.style={
    draw,-{Stealth[round]}, semithick, shorten >= 1pt,
  },
}

\tikzset{
  hmsc/.style={
    event sep/.store in=\hmsceventsep,
    event sep=1em,
    clear blockguys/.code={\gdef\blockguys{}},
    add blockguy/.code={\xdef\blockguys{\blockguys(##1)(##1-end)}},
    add block node/.code={\xdef\blockguys{\blockguys(##1)}},
    block style/.style={
      draw,
      rounded corners=4pt,
      inner sep=7pt,
      minimum size=10pt,
      node contents={},
    },
    empty block/.style={
      block style,
    },
    block/.style={
      clear blockguys,
add blockguy/.list={##1},
fit={\blockguys},
      block style,
    },
    msc/.style={
      name prefix=##1,
      clear blockguys,
execute at begin scope={
        \gdef\blockguys{}
      },
      execute at end scope={
        \node[
          block style,
          msc node style/.try,
          fit={\blockguys},
          name=,
        ];
      },
      start chain=participants going right,
      head/.append style={on chain=participants},
      final/.style={msc node style/.append style={accepting}},
      init/.style={msc node style/.append style={initial}},
    },
    participant/.style={
      add blockguy={##1},
      start chain=##1 going {below=\hmsceventsep of \tikzchainprevious},
      every join/.style={semithick,gray},
      every on chain/.style={join},
      head/.append style={
        name=##1,
      },
      participant head/.style={participant name={}},
      execute at end scope={
        \node[skip=3pt] {};
      },
    },
    participant name/.style={node contents={##1}},
    head/.style={
inner sep=0pt,append after command={(\tikzlastnode) ++(0,-2pt) node[skip,outer sep=3pt]},
      participant head/.try,
    },
    no event/.style={
      on chain,
      tight,
      minimum size=0pt,
    },
    event/.style={
      on chain,
      inner sep=0pt,
      outer sep=0pt,
minimum size=2pt,
      fill=black,
      circle,
    },
    skip/.default=0pt,
    skip/.style={
      on chain,
      yshift=##1,
      tight,
      node contents={},
    },
    messages/.style={
      thick,
      -{[scale=.7]Triangle},
      shorten=1pt,
      every edge/.append style={clear},
    },
    messagesMC/.style={
      thick,
      {[scale=.7,open]Triangle}-{[scale=.7]Triangle},
      shorten=1pt,
      every edge/.append style={clear},
    },
    msg/.style={
      above,
      fill=white,
      inner sep=1pt,
      rounded corners=.5ex,
      font={\tiny},
    },
    anonymous/.style={
      participant name/.style={node contents={}},
    },
    eventless/.style={
      event/.style={
        on chain,
        inner sep=0pt,
        outer sep=0pt,
        minimum size=0pt,
      },
    },
    on grid,
node distance=5ex,
    initial where=above,
  },
}
 
\ifoptionfinal
{\usepackage[disable]{todonotes}}
{\usepackage{todonotes}}

\newcommand\FS[1]{\todo[color=white,inline]{\textcolor{blue}{FS, #1}}}

\usepackage{hyperref}
\usepackage[capitalise]{cleveref}
\usepackage[shortlabels]{enumitem}

\crefformat{section}{\S#2#1#3} \crefmultiformat{section}{\S#2#1#3}{ and~\S#2#1#3}{, \S#2#1#3}{, and~\S#2#1#3} \crefrangeformat{section}{\S#3#1#4 to~\S#5#2#6}  

\setlist[itemize]{leftmargin=*,topsep=0.3ex,itemsep=0.1ex} 
\setlist[enumerate]{leftmargin=*,topsep=0.3ex,itemsep=0.1ex}

\theoremstyle{remark}
\newtheorem{remark}{Remark}[section] 

\AtBeginDocument{}

\begin{document}

\title
    [Global Protocols under Rendezvous Synchrony: From Realizability to Type Checking]
    {
Global Protocols under Rendezvous Synchrony: \\From Realizability to Type Checking
    }
	
	\author{Elaine Li}
	\orcid{0000-0003-0173-4498}
	\email{efl9013@nyu.edu}
	\affiliation{
		     \institution{New York University}
		     \country{USA}
		 }
	\author{Felix Stutz}
	\orcid{0000-0003-3638-4096}
	\email{felix.stutz@uni.lu}
	\affiliation{
		\institution{University of Luxembourg}
		\country{Luxembourg}
	}

\begin{abstract}
    Global protocol specifications are the starting point of top-down verification methodologies, and serve as a blueprint for synthesizing local specifications that guarantee the correctness of distributed implementations. 
In this work, we study global protocol specifications targeting distributed processes that communicate via rendezvous synchrony. 
We obtain the following positive results for the synchronous realizability problem: 
(a) realizability is decidable for global protocols over a transitive concurrency alphabet in \mbox{2-\EXPTIME} in the size of the protocol, and in 3-\EXPTIME in the size of the alphabet, and
(b) realizability is decidable in \EXPTIME for global protocols that \emph{unambiguously} represent their trace language. 
Unambiguous global protocols encompass fragments of directed and sender-driven choice studied in prior work. 
Further, our reductions admit the corollary that the synchronous verification problem is solvable with the same complexity, where a candidate realization is included as part of the input. 
We then prove a negative result stating that synchronous realizability is undecidable in general. 
Finally, we propose a type system to check $\pi$-calculus processes against local specifications in the form of synchronous communicating state machines. 
Our type system is the first to support processes with local mixed choice in the presence of session interleaving and~delegation.  

 	\end{abstract}

\maketitle
	
	\section{Introduction}
\label{sec:introduction}

Global protocols are an abstraction for specifying message-passing protocols that describes the behavior of all participants jointly from a birds-eye view. 
Due to their attractive simplicity, global protocols enjoy industry adoption as part of the ITU standard \cite{z120-standard}, are studied in the form of high-level message sequence charts \mbox{(HMSCs)} \cite{DBLP:conf/sdl/MauwR97,
DBLP:conf/ac/GenestMP03,DBLP:conf/acsd/GenestM05,DBLP:conf/concur/GazagnaireGHTY07,DBLP:journals/tosem/RoychoudhuryGS12, 
DBLP:journals/tse/AlurEY03, 
DBLP:journals/tcs/Lohrey03, 
DBLP:conf/concur/AlurY99,DBLP:conf/mfcs/MuschollP99, 
DBLP:conf/stacs/Morin02,
DBLP:journals/jcss/GenestMSZ06}, 
 and prominently featured in programming language frameworks including multiparty session types (MSTs)~\cite{DBLP:conf/popl/HondaYC08, DBLP:conf/concur/BocchiHTY10, DBLP:conf/tgc/BocchiDY12, DBLP:journals/jlp/ToninhoY17, DBLP:journals/pacmpl/00020HNY20, DBLP:conf/cav/LiSWZ23, DBLP:journals/pacmpl/HinrichsenBK20, DBLP:journals/lmcs/HinrichsenBK22, DBLP:journals/pacmpl/JacobsHK23} and choreographic programming~\cite{DBLP:journals/tcs/Cruz-FilipeM20,DBLP:conf/ecoop/GiallorenzoMPRS21,DBLP:journals/pacmpl/HirschG22,DBLP:journals/corr/abs-2111-03484,DBLP:journals/corr/abs-2303-00924}. 
 Implementations of such frameworks can be found in more than a dozen programming languages. 

A central feature of global protocols is their separation of concerns between protocol logic and communication. 
Purely as a specification formalism, global protocols do not commit to a particular network model. 
Consider the example protocol $\protocol$ depicted in \cref{fig:receiver-power-sync-protocol}, involving three participants $\procA, \procB$ and $\procC$. 
The protocol $\protocol$ simply captures the intention that all participants exchange the same message value, either $\val_1$ or $\val_2$, which is decided upon by participant $\procA$. 

Global protocols are designed to be ultimately implemented by a collection of distributed processes communicating with one another over a network. 
A typical workflow takes a global protocol and generates local specifications, one for each participant, that govern the behavior of local implementations. 
Figures \ref{fig:receiver-power-sync-p}, 
\ref{fig:receiver-power-sync-q}, and 
\ref{fig:receiver-power-sync-r} depict local specifications for each participant in $\protocol$ in the form of finite automata, 
 whose transitions are labeled with send and receive events, \eg $\snd{\procA}{\procB}{\val_1}$ denotes $\procA$ sending $\val_1$ to $\procB$. \cref{fig:receiver-power-sync-procs} depicts each participant's implementation as a $\pi$-calculus process. 

\begin{figure*}[t]
\begin{subfigure}[b]{0.32\textwidth}
\centering
\resizebox{1\textwidth}{!}{
    \begin{tikzpicture}[sem, node distance=0.08cm and 1.4cm]

  \node[state, initial left] (p1) {$q_{1}$};
  \node[state, above right = of p1] (p2) {$q_{2}$}; 
  \node[state, final, right = of p2] (p3) {$q_{3}$}; 
  \node[state, below right = of p1] (p4) {$q_{4}$}; 
  \node[state, final, right = of p4] (p5) {$q_{5}$}; 

  \path 
  (p1) edge [sloped,above] node {$\msgFromTo{\procA}{\procB}{\val_1}$} (p2)
  (p1) edge [sloped,below] node {$\msgFromTo{\procA}{\procB}{\val_2}$} (p4)
  (p2) edge [above] node {$\msgFromTo{\procC}{\procB}{\val_1}$} (p3)
  (p4) edge [below] node {$\msgFromTo{\procC}{\procB}{\val_2}$} (p5)
  ;

\end{tikzpicture}
 }
    \caption{Global protocol.}
\label{fig:receiver-power-sync}
    \label{fig:receiver-power-sync-protocol}
\end{subfigure}
\hspace{1.5cm}
\begin{subfigure}[b]{0.32\textwidth}
\begin{align*}
P_\procA & = 
    s[\procA][\procB] \sendOp \val_1 \seq \zero 
    +
    s[\procA][\procB] \sendOp \val_2 \seq \zero
\\
P_\procB & = 
    {+} \begin{cases}
    s[\procB][\procA] \recOp \val_1 \seq 
    s[\procB][\procC] \recOp \val_1 \seq \zero
    \\ 
    s[\procB][\procA] \recOp \val_2 \seq 
    s[\procB][\procC] \recOp \val_1 \seq \zero
    \end{cases}
\\
P_\procC & = 
    s[\procC][\procB] \sendOp \val_1 \seq \zero 
    +
    s[\procC][\procB] \sendOp \val_2 \seq \zero
\end{align*}
\vspace{-4ex}
\caption{Processes in session $s$.}
    \label{fig:receiver-power-sync-procs}
\end{subfigure}
\\
\vspace{2.5ex}
\begin{subfigure}[b]{0.33\textwidth}
\centering
    \begin{tikzpicture}[sem, node distance=0.08cm and 1.4cm]

  \node[state, initial left, initial text={$\procA$}] (p0) {$q_{\procA, 0}$};
  \node[state, final, above right = of p0] (p1) {$q_{\procA, 1}$}; 
  \node[state, final, below right = of p0] (p2) {$q_{\procA, 2}$}; 

  \path 
  (p0) edge [sloped,above] node {$\snd{\procA}{\procB}{\val_1}$} (p1)
  (p0) edge [sloped,below] node {$\snd{\procA}{\procB}{\val_2}$} (p2)
  ;

\end{tikzpicture}
     \caption{Finite automaton for $\procA$.}
    \label{fig:receiver-power-sync-p}
\end{subfigure}
\begin{subfigure}[b]{0.33\textwidth}
\centering
    \begin{tikzpicture}[sem, node distance=0.08cm and 1.4cm]

  \node[state, initial left, initial text={$\procB$}] (p0) {$q_{\procB, 0}$};
  \node[state, above right = of p0] (p1) {$q_{\procB, 1}$}; 
  \node[state, final, right = of p1] (p2) {$q_{\procB, 2}$}; 
  \node[state, below right = of p0] (p3) {$q_{\procB, 3}$}; 
  \node[state, final, right = of p3] (p4) {$q_{\procB, 4}$}; 

  \path 
  (p0) edge [sloped,above] node {$\rcv{\procA}{\procB}{\val_1}$} (p1)
  (p1) edge [sloped,above] node {$\rcv{\procC}{\procB}{\val_1}$} (p2)
  (p0) edge [sloped,below] node {$\rcv{\procA}{\procB}{\val_2}$} (p3)
  (p3) edge [sloped,below] node {$\rcv{\procC}{\procB}{\val_2}$} (p4)
  ;

\end{tikzpicture}
     \caption{Finite automaton for $\procB$.}
    \label{fig:receiver-power-sync-q}
\end{subfigure}
\begin{subfigure}[b]{0.33\textwidth}
\centering
    \begin{tikzpicture}[sem, node distance=0.08cm and 1.4cm]

  \node[state, initial left, initial text={$\procC$}] (p0) {$q_{\procC, 0}$};
  \node[state, final, above right = of p0] (p1) {$q_{\procC, 1}$}; 
  \node[state, final, below right = of p0] (p2) {$q_{\procC, 2}$}; 

  \path 
  (p0) edge [sloped,above] node {$\snd{\procC}{\procB}{\val_1}$} (p1)
  (p0) edge [sloped,below] node {$\snd{\procC}{\procB}{\val_2}$} (p2)
  ;

\end{tikzpicture}
     \caption{Finite automaton for $\procC$.}
    \label{fig:receiver-power-sync-r}
\end{subfigure}
\caption{
    A global protocol (\subref{fig:receiver-power-sync-protocol}), 
    local specifications as finite automata 
    (\subref{fig:receiver-power-sync-p}, 
     \subref{fig:receiver-power-sync-q}, and 
     \subref{fig:receiver-power-sync-r}),
    and implementations as $\pi$-calculus processes~(\subref{fig:receiver-power-sync-procs}). 
    }
\end{figure*}

Observe that none of the components in the workflow commit to a particular network model. 
If we allow $\procA, \procB$ and $\procC$ to communicate via rendezvous synchrony, then the participants collectively follow~$\protocol$, exhibiting exactly the desired behaviors and avoiding deadlocks. 
On the other hand, if we allow $\procA, \procB$ and $\procC$ to communicate asynchronously, through peer-to-peer FIFO channels, a violation of $\protocol$ occurs: participant $\procC$, who cannot infer the choice made by $\procA$, can independently send a message that does not match the message exchanged between $\procA$ and $\procB$. 

The choice of network model is of central concern to the \emph{realizability} problem, which asks whether a given global protocol is truly distributable, such that its participants avoid communication errors or system-wide deadlock, despite only having a partial view of the entire system. 
Realizability is defined relative to a target implementation and network model. 
Investigations of realizability in the literature primarily target asynchronous, reliable, peer-to-peer FIFO communication. 
Decidability and complexity results have been obtained for global protocols in the form of \mbox{HMSCs}~\cite{DBLP:journals/tse/AlurEY03, 
DBLP:journals/tcs/Lohrey03, 
DBLP:conf/concur/AlurY99,DBLP:conf/mfcs/MuschollP99, 
DBLP:conf/stacs/Morin02,DBLP:journals/jcss/GenestMSZ06}, and for global communicating labeled transition systems~\cite{DBLP:journals/pacmpl/LiSWZ25}, which encompass global types.

Recently there has been increased interest in synchronous communication models, both for their expressivity and their realizability problem~\cite{DBLP:conf/coordination/BarbaneraLT20,DBLP:journals/lmcs/BarbaneraLT23,DBLP:conf/lics/PetersY24,DBLP:conf/ppdp/GiustoLU25,NODBLPyet:journals/corr/abs-2511-22692}. 
\citet{DBLP:conf/lics/PetersY24} study the impact of local mixed choice on the expressivity of synchronously communicating $\pi$-calculus processes, and propose a type system for such processes. 
Local mixed choice allows a process to choose between receiving and sending a message. 
\citet{NODBLPyet:journals/corr/abs-2511-22692} propose a synchronous MST framework, 
in which typability of a process against a global protocol implies that the former correctly implements the latter. 
With this, they soundly approximate a variant of the synchronous realizability problem, which we call the synchronous verification problem, through restrictions on global protocols and typing rules. 
Of recent interest is \cite{DBLP:conf/ppdp/GiustoLU25}, which connects the realizability problem to complementability, and proposes the first algorithms for deciding synchronous realizability of various global type fragments, including sender-driven global types~\cite{DBLP:conf/concur/MajumdarMSZ21}. 
Sender-driven choice allows a sender to send messages to different receives in a choice. 

In this paper, we make the following contributions. Our decidability and complexity results are contextualized in \cref{fig:venn-diagram}. 
\begin{itemize} 
	\item We show that synchronous realizability is decidable for global protocols over a transitive concurrency alphabet in 2-\EXPTIME in the size of the protocol, and in 3-\EXPTIME in the size of the alphabet. To the best of our knowledge, this constitutes the first upper bound for the equivalence and inclusion problems of trace languages with a transitive concurrency alphabet. We additionally provide an exhaustive case analysis of transitive alphabets in terms of their participant set, message value set, and communication topology. 
	\item We show that synchronous realizability is decidable in \EXPTIME for global protocols that \emph{unambiguously} represent their trace language, via a reduction to the multiplicity equivalence problem for rational trace languages. 
	\item Our reductions allow us to conclude that the synchronous verification problem is solvable with the same complexity. 
	\item We show that synchronous realizability is undecidable in general. 
	\item We propose a type system for checking $\pi$-calculus processes against finite state automata, that is the first to simultaneously handle local mixed choice, session interleaving and delegation. 
\end{itemize}

Our results are inspired by two separate lines of work. 
The first line of work uses Mazurkiewicz trace theory to answer questions about message sequence charts (MSCs) and HMSCs.  
Specifically, reductions target \emph{recognizable} trace languages, which are a well-behaved fragment corresponding to sets of $I$-closed regular languages. 
Correspondingly, restrictions are proposed to ensure MSCs and HMSCs fall into the recognizable fragment, such as loop-connectedness and boundedness~\cite{DBLP:journals/tcs/Lohrey03, DBLP:conf/mfcs/MuschollP99, DBLP:conf/stacs/Morin02}. 
Kleene characterizations and MSO-definability characterizations have also been obtained~\cite{DBLP:conf/dlt/GenestMK04}. 
The second line of work~\cite{DBLP:conf/cav/LiSWZ23, DBLP:conf/esop/LiSW24, DBLP:conf/esop/StutzD25} is situated within the context of using global protocols to design programming language and verification frameworks. 
Deviating from the standard workflow, these works separate the realizability problem of global protocols from problem of synthesizing local specifications. 
This separation has enabled the target implementation models to be generalized in expressivity.

\begin{figure*}[t]
	\centering
	\scalebox{0.75}{
\colorlet{lightgray}{gray!50!}
		\colorlet{colLinesKnown}{lightgray}
		\colorlet{colTextKnown}{gray}
\begin{tikzpicture}[line width=1.5pt]

\def \rectangleRAT
    {(-1.5,-2) rectangle (16,8)}
\def \rectangleREC
    {(1.5,0.5) rectangle (14.5,7.5)}
\def \rectangleCC
    {(2.0,1.0) rectangle (12.25,7)}
\def \rectangleUR
    {(2.5,-1.5) rectangle (11.75,5.85)}

\def \ellipseItransitive {(4.0, 2.8) ellipse 
[x radius=4.7cm, y radius=4.0cm, rotate=0]}
\def \ellipseIempty{(5.0, 3.4) ellipse 
[x radius=2.1cm, y radius=2.1cm, rotate=0]}
\def \ellipsePleqThree{(5.6, 2.7) ellipse 
[x radius=0.8cm, y radius=0.8cm, rotate=0]}
\def \ellipseCD{(7.7, 1.8) ellipse 
[x radius=3.7cm, y radius=3.0cm, rotate=0]}
\def \ellipseSD{(7.8, 0.9) ellipse 
[x radius=2.4cm, y radius=1.3cm, rotate=-30]}

\draw [densely dotted] \rectangleRAT 
node [
    label={[below, xshift=-16.65cm, yshift=-0.2cm, align=left]
            rational \\ trace \\ languages}
    ] 
{};
\draw \rectangleREC
node [
    label={[xshift=-1.15cm, yshift=-1.1cm, align=center]
            star-connected \\ (\cref{subsec:unambiguous})},
    label={[xshift=-1.15cm, yshift=-1.55cm, align=center]
            \EXPSPACE},
    ] 
{};
\draw [colLinesKnown] \rectangleCC
node [
    label={[xshift=-1.85cm, yshift=-0.7cm, align=center, text=colTextKnown]
            commutation-closed (\cref{subsec:unambiguous})}, 
    label={[xshift=-1.85cm, yshift=-1.1cm, align=center, text=colTextKnown]
            \PSPACE}
     ]
{};
\draw \rectangleUR 
node [
    label={[below, xshift=-1.2cm, yshift=-0.2cm, align=center]
            unambiguous \\ (\cref{subsec:unambiguous})},
    label={[below, xshift=-1.2cm, yshift=-1.2cm, align=center]
            \EXPTIME},
    ] 
{};
\draw \ellipseItransitive 
node [
    label={[xshift=-3.7cm, yshift=-0.2cm, align=center]
            $I$ transitive \\ (\cref{subsec:transitive})}, 
    label={[xshift=-3.7cm, yshift=-0.75cm, align=center]
            2-\EXPTIME}, 
     ]
{};
\draw \ellipseIempty
node [
    label={[xshift=-0.0cm, yshift=0.9cm, align=center] 
            $I$ empty \\ (\cref{subsec:transitive})}, 
     ]
{};
\draw [colLinesKnown] \ellipsePleqThree 
node [
    label={[xshift=0.0cm, yshift=-0.35cm, color=colTextKnown, align=center]
            $\card{\Procs} \leq 3$ \\  (\cref{subsec:unambiguous})}
    ] 
{};
\draw [colLinesKnown] \ellipseCD 
node [
    label={[xshift=2.1cm, yshift=0.4cm, color=colTextKnown, align=center]non-mixed\\choice \\  (\cref{subsec:unambiguous})}, 
    label={[xshift=2.1cm, yshift=-0.2cm, color=colTextKnown, align=center]\PSPACE}
    ] 
{};
\draw [colLinesKnown] \ellipseSD 
node [
    label={[xshift=1.1cm, yshift=-1.6cm, color=colTextKnown, align=center]
            sender-driven \\  (\cref{subsec:unambiguous})} 
    ] 
{};

\end{tikzpicture} 	}
	\caption
	{
		Contextualization of our positive results. 
		Known results \cite{DBLP:conf/ppdp/GiustoLU25} are depicted in gray. 
		Sets denote trace languages (existentially) represented by regular string languages. 
		For every set without a complexity annotation, its complexity is the smallest of all supersets. 
	} 
	\label{fig:venn-diagram}
\end{figure*}
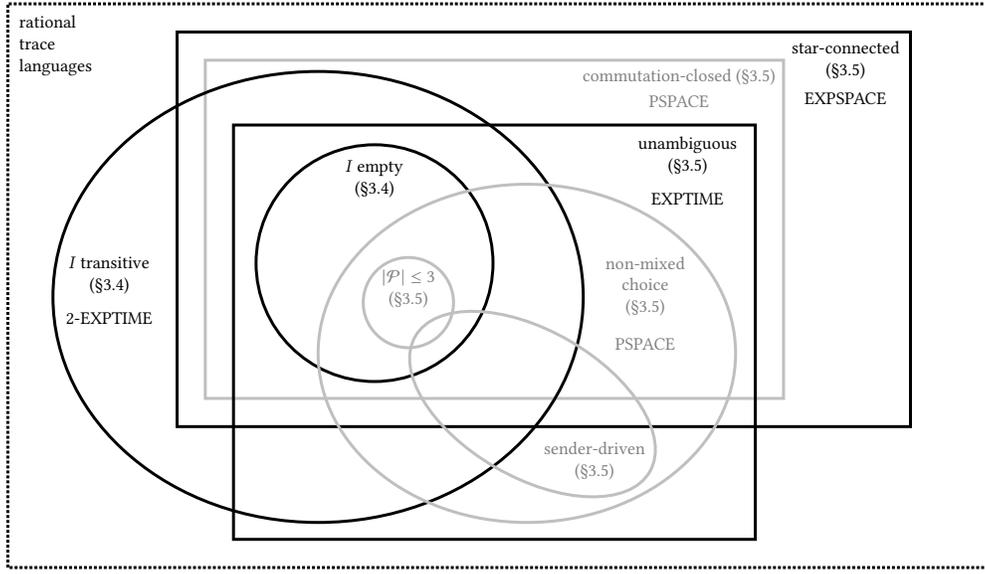

 	\section{Preliminaries} 

\paragraph{Words.}
Let $\Sigma$ be an alphabet.
$\Sigma^*$ denotes the set of finite words over $\Sigma$, $\Sigma^\omega$ the set of infinite words, and $\Sigma^\infty$ their union $\Sigma^* \cup \Sigma^\omega$.
$\Sigma^+$ denotes the set of non-empty finite words over $\Sigma$, \ie $\Sigma^+ = \Sigma^*\Sigma$. 
A word $u \in \Sigma^*$ is a \emph{prefix} of word $v \in \Sigma^\infty$, denoted $u \leq v$, if there exists $w \in \Sigma^\infty$ with $u \cdot w = v$;
we denote all prefixes of $u$ with $\pref(u)$, which we extend to sets of words $\pref(S)$ in the usual way. 
We denote by $\alphOp(u)$ the set of symbols in $u$.

\paragraph{Message Alphabets.}
Let $\Procs$ be a finite set of participants and $\MsgVals$ be a finite data domain. 
We define the \emph{synchronous alphabet} as the set of events $\AlphSync \is \set{ \msgFromTo{\procA}{\procB}{\val} \mid \procA,\procB ∈ \Procs, \procA \neq \procB \text{ and } \val ∈ \MsgVals}$ where $\msgFromTo{\procA}{\procB}{\val}$ denotes a message exchange of $\val$ between $\procA$ and $\procB$.
For each participant $\procA\in \Procs$, we define the alphabet
$\AlphSync_{\procA} = 
\set{\msgFromTo{\procA}{\procB}{\val} \mid \procB \in \Procs,\; \val \in \MsgVals }
\union
\set{\msgFromTo{\procB}{\procA}{\val} \mid \procB \in \Procs,\; \val \in \MsgVals }
$ of events that $\procA$ participates in. We define a homomorphism~$\wproj_{\AlphSync_\procA}$, where $x \wproj_{\AlphSync_\procA} = x$ if $x \in \AlphSync_{\procA}$, and $\emptystring$ otherwise.
We write $\Procs(w)$ to project the events in $w$ onto their participants, and $\MsgVals(w)$ to project the events in $w$ onto their messages.

\paragraph{Finite Automata.}
A \emph{finite automaton} 
over the alphabet $\Sigma$ is a tuple
$A = (Q, \Sigma, \delta, q_0, F)$,
where
$Q$ is a finite non-empty set of states, 
$\delta \from Q \times \Sigma \to Q$
is a transition relation , $q_0 \in Q$ is the initial state, and 
$F \subseteq Q$ is a set of final states.
The language recognized by $A$ is defined by $\lang(A) = \set{w \in \Sigma^* \mid \delta^*(q_0, w) \in F}$, where $\delta^*$ is the reflexive transitive closure of $\delta$. 
We use 
$q_1 \xrightarrow{\alpha} q_2$ 
to denote $q_1 \xrightarrow{w}\starred q_2$, and 
$q_1 \xrightarrow{w}\starred q_2$
to denote 
$q_2 \in \delta^*(q_1, w)$. 
We say that $A$ is \emph{deterministic} iff $\delta$ is a function and we abbreviate deterministic (resp.\ non-deterministic) finite automaton with DFA (resp.\ NFA). 
We say that $A$ is \emph{complete} iff $\delta$ is defined for every $q \in Q$ and $\alpha \in \Sigma$. 
The automaton $A$ is \emph{reduced} iff for every state in $q \in Q$, there exists $u \in \Sigma^*$ such that $\delta(q_0, u) = q$, and $v \in \Sigma^*$ such that $\delta(q, v) \in F$.  
Observe that a finite automaton always admits a language-preserving transformation into one that is deterministic and reduced.

\paragraph{Global Protocols.} We use reduced, deterministic finite automata (DFAs) over the synchronous alphabet~$\AlphSync$ to model global protocols, which we denote with $\protocol$.

\paragraph{Restricting Protocols to Participants.}
Given a global protocol $\protocol = (Q, \AlphSync, \delta, q_0, F)$, we define an NFA $\protocol_\procA$ for each participant $\procA$ via domain restriction to $\AlphSync_\procA$: 
\vspace{-1.5ex}
\[
\protocol_\procA \is
(Q, \AlphSync_\procA \dunion \set{\emptystring}, \delta_\procA, q_0, F) 
\text{ with } 
\delta_\procA \is
\set{q \xrightarrow{l \wproj_{\AlphSync_\procA}} q'
	\mid q' \in \delta(q, l)}
    \enspace .
\]

Realizability is defined relative to our target realization model, which is synchronous communicating state machines. 
Our realization model coincides with \emph{loosely cooperating systems} from \cite{DBLP:conf/fsttcs/CastellaniMT99}, and its semantics coincide with the synchronous runs of a communicating finite state machine 
in \cite{DBLP:conf/ppdp/GiustoLU25}, but differs from Zielonka's asynchronous automata~\cite{DBLP:journals/ita/Zielonka87}, also called synchronously communicating systems in \cite{DBLP:conf/fsttcs/CastellaniMT99}, in that individual finite automata do not read one another's states when synchronizing on an event. 

\paragraph{Synchronous Communicating State Machines.}
$\CSMabb{A} = \CLTS{\CSMabb{A}}$ is a \emph{synchronous communicating state machine} (SCSM) over $\Procs$ and~$\MsgVals$ if
$\CSMabb{A}_\procA$
is a DFA
over~$\AlphSync_\procA$ for every $\procA\in\Procs$, denoted by
$(Q_\procA, \AlphSync_\procA, \delta_\procA, q_{0, \procA}, F_\procA)$.
A~\emph{configuration} of $\CSMabb{A}$ is a global state $\vec{q} \in \prod_{\procA \in \Procs} Q_\procA$. 
We use $\vec{q}_\procA$ to denote the state of $\procA$ in $\vec{q}$.
The transition relation, denoted $\rightarrow$, is defined as follows: 
\begin{align*}
    & 
	\vec{q} \xrightarrow{\msgFromToNS{\procA}{\procB}{\val}} \pvec{q}' \text{ if }
	(\vec{q}_\procA, \msgFromTo{\procA}{\procB}{\val}, \pvec{q}'_\procA)\in\delta_\procA, \;
    (\vec{q}_\procB, \msgFromTo{\procA}{\procB}{\val}, \pvec{q}'_\procB)\in\delta_\procB, \\
    & 
    \text{and }
	\vec{q}_\procC = \pvec{q}'_\procC
    \text{ for every } 
    \procC
    \text{ with }
    \procA \neq \procC \neq \procB 
    \enspace .
\end{align*}

\vspace{-1ex}
As for finite automata, we use $\xrightarrow{l}\starred$ for the reflexive transitive closure and may omit the post-state. 
In the initial configuration $\vec{q}_0$, each participant's state is the initial state $q_{0,\procA}$ of $A_\procA$. 
A configuration $\vec{q}$ is said to be \emph{final} iff $\vec{q}_\procA$ is final for every $\procA \in \Procs$. 
Runs and traces are defined in the expected way and $\reach(\CSMabb{A})$ denotes the set of reachable configurations.\footnote{We use the word trace to refer to both execution traces of an SCSM and elements of a free partially commutative monoid, as the meaning is always clear from context.}
A run is \emph{maximal} if it is finite and ends in a final configuration. 
The language of $\CSMabb{A}$, denoted $\lang(\CSMabb{A})$, is defined as the set of maximal traces.
A configuration $\vec{q}$ is a \emph{deadlock} if it is not final and has no outgoing transitions.
An SCSM is \emph{deadlock-free} if no reachable configuration is a~deadlock.

\paragraph{Synchronous Product of an SCSM}
Given an SCSM $\CSMabb{A}$
over $\Procs$ and~$\MsgVals$, 
where $\CSMabb{A}_\procA = (Q_\procA, \AlphSync_\procA, \delta_\procA, q_{0, \procA}, F_\procA)$
is a DFA
over~$\AlphSync_\procA$ for every $\procA\in\Procs$, 
we define a DFA 
$Prod(\CSMabb{A}) = 
    (\Pi_{\procA \in \Procs} Q_\procA, \AlphSync, \delta, q_0, F)$ 
where 
\begin{itemize}
\item $q_0 = (q_{0,\procA})_{\procA\in \Procs}$, 
\item $F = \Pi_{\procA \in \Procs} F_\procA$, and 
\item 
$(\vec{q}, \msgFromTo{\procA}{\procB}{\val}, \pvec{q}') \in \delta$ 
\; iff \;
$(\vec{q}_\procA, \msgFromTo{\procA}{\procB}{\val}, \pvec{q}'_\procA) \in \delta_\procA$, \\ 
$\qquad 
(\vec{q}_\procB, \msgFromTo{\procA}{\procB}{\val}, \pvec{q}'_\procB) \in \delta_\procB$, and 
$\vec{q}_\procC = \pvec{q}'_\procC$ for all 
$\procC$ with 
$\procA \neq \procC \neq \procB$. 
\end{itemize}

\paragraph{Synchronous Protocol Semantics.}
The synchronous semantics of a protocol $\protocol$ is defined as the set of all words that are per-participant equal to some word in $\lang(\protocol)$: \[
\langsync(\protocol) \is \set{w \in \AlphSync^*~\mid~\exists w' \in \lang(\protocol).~\forall \procA\in \Procs.~w \wproj_{\AlphSync_\procA} = w' \wproj_{\AlphSync_\procA}} 
\]

\begin{definition}[Synchronous Realizability Problem]
	\label{def:sync-impl}
	An SCSM $\CSMabb{A}$ is a synchronous realization of protocol $\protocol$ if the following two properties hold:
	\begin{inparaenum}[(i)]
		\item \label{def:lts-implementability-protocol-fidelity}
		\emph{protocol fidelity}: $\lang(\CSMabb{A}) = \langsync(\protocol)$, and
		\item \label{def:lts-implementability-deadlock-freedom}
		\emph{deadlock freedom}: $\CSMabb{A}$ is deadlock-free.
	\end{inparaenum}
Given a protocol $\protocol$, the \emph{synchronous realizability problem} asks whether there exists an SCSM $\CSMabb{A}$ that is a synchronous realization of $\protocol$. 
\end{definition}

One can also ask the synchronous realization question for a given SCSM as input. 
We refer to this problem as the \emph{synchronous verification problem}.  \begin{definition}[Synchronous Verification Problem]
	\label{def:sync-impl}
	Given a protocol $\protocol$ and an SCSM $\CSMabb{A}$, the \emph{synchronous verification problem} asks whether $\CSMabb{A}$ synchronously realizes $\protocol$. 
\end{definition}

In the MST literature, the two problems are typically studied separately, with the latter reformulated as 
a form of subtyping. 
In \cite{DBLP:conf/esop/LiSW24}, it was shown that sound and complete solutions to the asynchronous realizability problem can be used to derive sound and complete solutions to the asynchronous protocol verification problem. 
Our solutions to realizability in the synchronous setting mirror this observation. 
Indeed, we show in \cref{sec:realizability} that the two problems share an upper bound in complexity. 

We may omit the term ``synchronous'' when clear from context. 
 	\section{Realizability}
\label{sec:realizability}

\subsection{Trace theory preliminaries} 
\label{subsec:trace-theory-prelim} 
We begin with preliminaries from trace theory.

\paragraph{Traces and Trace Languages.} 
Let $\Sigma$ be a finite alphabet and $C$ be an irreflexive, symmetric relation on $\Sigma$, called the \emph{concurrency} relation, or the \emph{independence} relation. 
The pair $(\Sigma, C)$ is called a \emph{concurrent alphabet}. 
The reflexive and symmetric relation $D = \Sigma \times \Sigma \setminus C$ is called the \emph{dependence} relation. 
The relations $C$ and $D$ extend to words in the expected way: $(u,v) \in C$ iff $\alphOp(u) \times \alphOp(v) \subseteq C$ and $(u,v) \in D$ iff 
$(\alphOp(u) \times \alphOp(v)) \inters D \neq \emptyset$. 
Given $x \in \Sigma^*$, we denote $C(x) = \set{w \in \Sigma^* \mid (w,x) \in C}$.

The concurrent alphabet $(\Sigma, C)$ defines a \emph{free partially commutative monoid}, denoted $M(\Sigma, C)$, also called a \emph{trace monoid}, 
which 
is the quotient structure induced by $\equiv_C$, the reflexive, transitive closure of $C$ extended to words. A trace on concurrent alphabet $(\Sigma, C)$ is an element $t \in M(\Sigma, C)$, namely a congruence class of words. We use $[w]_C$ to denote the trace that contains $w \in \Sigma^*$. 

\paragraph{Rational, Recognizable, and Unambiguous Trace Languages.}
Let $L \subseteq \Sigma^*$ be a language and $(\Sigma, C)$ be a concurrent alphabet. 
Language~$L$, as what we may call \emph{string language}, can define a trace language in several ways. The trace language \emph{existentially} defined by $L$, denoted $\Elang{L}{C}$, is the set $\set{t \in M(\Sigma, C) \mid t \inters L \neq \emptyset}$. 
The trace language \emph{universally} defined by $L$, denoted $\Alang{L}{C}$, is the set $\set{t \in M(\Sigma, C) \mid t \subseteq L}$. 
If $\Elang{L}{C} = \Alang{L}{C} = T$, then we say that $T$ is \emph{consistently} defined by $L$, \ie $L = \bigcup_{t \in T} t$. 
Moreover, if $L$ is regular, then $\Elang{L}{C}$ is a existentially regular trace language, and $\Alang{L}{C}$ is a universally regular trace language. 

By convention, we drop the $\exists$ subscript, using $[L]_C$ to denote $\Elang{L}{C}$. 
When clear from context, we further drop the $C$, using $[L]$ to denote $\Elang{L}{C}$. 

Conversely, given a trace language $T \subseteq M(\Sigma, C)$, the set $lin(T) = \bigcup_{t \in T} \varphi^{-1}(t)$ is the \emph{linearization of T}, where $\varphi$ is the canonical morphism from $\Sigma^*$ to $M(\Sigma, C)$. 

Let $T \subseteq M(\Sigma, C)$ be a a trace language. 
We say that $T$ is \emph{rational} iff there exists a regular language $L \subseteq \Sigma^*$ such that $[L]_C = T$. 
We say that $T$ is \emph{recognizable} iff $lin(T)$ is a regular language over $\Sigma$. 
We say $T$ is \emph{unambiguously rational}, or \emph{unambiguous} for short, iff there exists a regular language $L$ such that $T = [L]_C$, and moreover $w \in L$ implies $\card{lin([w]_C) \inters L} = 1$. 
The latter condition is equivalent to $t \in M(\Sigma, C)$ implies $\card{lin(t) \inters L} \leq 1$, or $t \in T$ implies $\card{lin(t) \inters L} = 1$. 
Note that the definition of existentially regular and rational coincide, while the definition of recognizable and consistently regular coincide. 

Recognizable trace languages admit multiple algebraic characterizations. 
One such characterization is that of a star-connected rational expression, where all languages appearing under Kleene star are connected, \ie ruling out concurrent iteration. 
Of particular interest to us is the notion of a \emph{lexicographic representation} of a trace language. 
Let $<$ be a total order on $\Sigma$.
Then, $<$ induces a total, lexicographic order on $\Sigma^*$, defined as: 
\begin{inparaenum}[(i)]
\item for every $u \in \Sigma^*$, $\emptystring < u$, and
\item for every $u, v \in \Sigma^*$, $a, b \in \Sigma$, $au < bv$ if and only if either $a < b$, or $a = b$ and $u < v$. 
\end{inparaenum}
The \emph{lexicographic representation} of a trace language $T \subseteq M(\Sigma, C)$ is the set 
$Lex(T) = \set{u \in lin(T) \mid \forall v \in [u].~u \leq v}$. 
Members of $Lex(T)$ are called lexicographically \emph{minimal} words with respect to $C$ and $<$. 

\begin{remark} Rational trace languages were introduced as the least class of subsets of $M(\Sigma, C)$ closed under the usual set operations of union, product and Kleene star. Recognizable trace languages were introduced as the subset of $M(\Sigma, C)$ recognizable by finite automata. Unambiguous trace languages admit an algebraic definition in terms of unambiguous union, product and Kleene star, and are also called \emph{1-sequential} or \emph{deterministic} trace languages. 
For the purposes of exposition, we treat the characterizations of each class in relation to regular string languages directly as definitions. 
For a comprehensive treatment, we refer the reader to \cite{DBLP:books/ws/95/DR1995}. 
\end{remark}

\paragraph{Equivalence, Inclusion and Multiplicity Equivalence} 
Let $A_1, A_2$ be two finite automata recognizing the languages $L_1, L_2 \subseteq \Sigma^*$. 
The $\EQUIVALENCE(\Sigma, I)$ problem asks whether $[L_1] = [L_2]$. 
The $\INCLUSION(\Sigma, I)$ problem asks whether $[L_1] \subseteq [L_2]$. 
Last, the $\sharp \EQUIVALENCE(\Sigma,I)$ problem asks whether $\sharp L_1 = \sharp L_2$, where $\sharp L$ is a function $M(\Sigma,I) \rightarrow \Nat$ such that for every trace $t \in M(\Sigma, I)$, $\sharp L(t) = \card{\set{x \in L \mid [x] = t}}$, where $\Nat$ denotes the set of natural numbers. 
Thus, $\sharp L_1 = \sharp L_2$ when every trace has the same number of representatives in $L_1$ and $L_2$. 
Observe that if $\sharp L_1 = \sharp L_2$, then $[L_1] = [L_2]$, whereas the converse only holds when both $[L_1]$ and $[L_2]$ are unambiguous trace languages witnessed by $L_1$ and $L_2$ respectively.

\subsection{Global protocols and their realizations as trace languages} 

Observe that a global protocol naturally defines a rational trace language over concurrent alphabet $(\AlphSync, \interswapConc)$, where $\interswapConc$ is defined as:

\begin{definition} 
	\label{def:interswapConc}
	Given a synchronous alphabet $\AlphSync$ defined over a set of participants $\Procs$ and a set of message labels $\MsgVals$, the symmetric binary relation $\interswapConc$ is defined as 
	$\set{(x_1, x_2) \mid \Procs(x_1) \inters \Procs(x_2) = \emptyset}$. 
\end{definition}

Clearly, $\interswapConc$ defines an equivalence relation over $\AlphSync^*$, capturing pairs of execution traces that are equivalent up to a choice of scheduling. 
We prove that a global protocol's synchronous semantics exactly coincides with the closure of its language under $\interswapConc$ and that the semantics of SCSMs is closed under $\interswapConc$. 

\begin{restatable}{proposition}{protocolSemanticsEqualsClosureEquiv} 
	\label{prop:equiv-equivalence}
	\label{prop:global-rational-trace-language} 
	Let $\protocol$ be a global protocol. 
	Then, it holds that 
$\langsync(\protocol) = [\lang(\protocol)]_\interswapConc$ and 
	$\langsync(\protocol)$ is a rational trace language over the concurrent alphabet $(\AlphSync, \interswapConc)$, existentially defined by $\lang(\protocol)$. 
\end{restatable}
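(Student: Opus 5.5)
The claim reduces to showing that two words have the same projection onto every participant exactly when they are trace-equivalent under $\interswapConc$. We prove this as a lemma: for all $w, w' \in \AlphSync^*$, $w \equiv_\interswapConc w'$ iff $w\wproj_{\AlphSync_\procA} = w'\wproj_{\AlphSync_\procA}$ for every $\procA \in \Procs$. Given the lemma, $w \in \langsync(\protocol)$ iff some $w' \in \lang(\protocol)$ lies in $[w]_\interswapConc$. That is exactly $[w]_\interswapConc \inters \lang(\protocol) \neq \emptyset$, i.e.\ membership in $\Elang{\lang(\protocol)}{\interswapConc}$ (identifying traces with their linearizations). Since $\protocol$ is a DFA, $\lang(\protocol)$ is regular, so by definition the trace language is rational and existentially defined by $\lang(\protocol)$. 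We also note that $\interswapConc$ is irreflexive and symmetric, because every event involves two distinct participants, so $(\AlphSync, \interswapConc)$ is a concurrent alphabet.

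For the direction ($\Rightarrow$) of the lemma, we argue by induction on the number of adjacent swaps. A single swap $u\,x_1x_2\,v \mapsto u\,x_2x_1\,v$ with $\Procs(x_1)\inters\Procs(x_2)=\emptyset$ leaves every projection unchanged. For each $\procA$, at most one of $x_1, x_2$ lies in $\AlphSync_\procA$, so the projected words coincide.

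The direction ($\Leftarrow$) is the main step. We proceed by induction on $|w|$, noting that equal projections force equal length and equal multisets of events. If $w = \emptystring$ we are done. Otherwise write $w = x\,u$ with $\Procs(x) = \set{\procA,\procB}$. Because $w\wproj_{\AlphSync_\procA} = w'\wproj_{\AlphSync_\procA}$, the first $\AlphSync_\procA$-event of $w'$ is $x$, and likewise for $\procB$. Both participants see $x$ as their first event, so these two occurrences are one and the same occurrence of $x$ in $w'$. Write $w' = v\,x\,v'$, where $v$ contains no event involving $\procA$ or $\procB$. Then $(v, x) \in \interswapConc$, so $w' \equiv_\interswapConc x\,v\,v'$.

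To finish ($\Leftarrow$), we observe that $u$ and $v v'$ have equal projections onto every participant. For $\procA$ and $\procB$ this follows by stripping the leading $x$. For any other participant $\procC$, $x$ contributes nothing to the projection, and the projections of $w$ and $w'$ agree by assumption. The induction hypothesis gives $u \equiv_\interswapConc v v'$, hence $w = x u \equiv_\interswapConc x v v' \equiv_\interswapConc w'$.

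The only subtle point is justifying that $x$ is the first event for \emph{both} $\procA$ and $\procB$ at the same position of $w'$. We handle this by taking the first occurrence of an event in $\AlphSync_\procA \cup \AlphSync_\procB$ in $w'$. Call it $y$. It involves $\procA$ or $\procB$, say $\procA$, so $y$ is the first $\AlphSync_\procA$-event of $w'$, which equals $x$ by the projection equality. No event before $y$ involves $\procA$ or $\procB$, which gives the required decomposition $w' = v\,x\,v'$.
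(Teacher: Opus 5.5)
Your proof is correct and shares the paper's skeleton. Both reduce the statement to the lemma that $[w]_\interswapConc = [w']_\interswapConc$ holds iff $w \wproj_{\AlphSync_\procA} = w' \wproj_{\AlphSync_\procA}$ for every $\procA \in \Procs$. Both also handle the easy direction by observing that swapping two participant-disjoint events preserves each participant's local order.

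The routes diverge on the converse. The paper argues through message sequence charts: the per-participant orders determine a unique MSC of which $w$ and $w'$ are both linearizations. It then appeals to Lohrey for the fact that all linearizations of one MSC are related by $\interswapConc$. You prove that fact directly, by induction on $|w|$. You locate the first event $x$ of $w$ inside $w'$ as the first event of $w'$ involving a participant of $x$, and commute it to the front.

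Your argument is self-contained and elementary; it is the standard projection lemma of trace theory, instantiated with the cliques $\AlphSync_\procA$ of the dependence relation. It also avoids transferring the cited MSC machinery, where sends and receives are separate events, to synchronous exchanges. The paper's version is shorter but delegates the real combinatorial content to the reference.

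You also make explicit two points the paper leaves implicit. First, $(\AlphSync, \interswapConc)$ is a concurrent alphabet, where irreflexivity uses $\procA \neq \procB$ in every event. Second, the stated equality identifies the trace language $[\lang(\protocol)]_\interswapConc$ with its set of linearizations.
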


\begin{proposition} 
	\label{prop:scsm-interswap-closed} 
	\label{prop:scsm-recognizable-trace-language} 
	Let $\CSMabb{A}$ be an SCSM and let $w$ be a trace of $\CSMabb{A}$. 
	Then, for all $w' \in [w]_\interswapConc$, it holds that $w'$ is a trace of $\CSMabb{A}$ and 
	$\lang(\CSMabb{A})$ is a recognizable trace language over the concurrent alphabet $(\AlphSync, \interswapConc)$. 
\end{proposition}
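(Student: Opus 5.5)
The proof has two parts: closure of traces under $\interswapConc$, and recognizability of $\lang(\CSMabb{A})$.

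\textbf{Closure under $\interswapConc$.} Since $\equiv_\interswapConc$ is the reflexive, transitive closure of single adjacent swaps of independent letters, it suffices to treat one swap and then induct on the number of swaps. So let $w = u\,x_1x_2\,v$ be a trace of $\CSMabb{A}$ with $x_1 \interswapConc x_2$, and let $w' = u\,x_2x_1\,v$.

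Consider the run on $w$: it passes $\vec{q}\xrightarrow{x_1}\vec{q}_1\xrightarrow{x_2}\vec{q}_2$. Let $x_1 = \msgFromTo{\procA}{\procB}{\val}$ and $x_2 = \msgFromTo{\procC}{\procD}{\val'}$, with $\set{\procA,\procB}\inters\set{\procC,\procD}=\emptyset$.

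By the transition rule, the step on $x_1$ changes only the $\procA$ and $\procB$ components. Hence $(\vec{q}_1)_\procC = \vec{q}_\procC$ and $(\vec{q}_1)_\procD = \vec{q}_\procD$. So the local transitions of $\procC$ and $\procD$ on $x_2$ are already enabled in $\vec{q}$.

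Firing $x_2$ first from $\vec{q}$ yields a configuration $\vec{q}'_1$. It agrees with $\vec{q}$ on $\procA$ and $\procB$, so $x_1$ is still enabled from it. Firing $x_1$ then updates $\procA$ and $\procB$ exactly as before.

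Componentwise, the result equals $\vec{q}_2$: the $\procA,\procB$ components come from $x_1$, the $\procC,\procD$ components come from $x_2$, and all other components are untouched. The suffix run on $v$ therefore continues unchanged. In particular, if $w$ was maximal, so is $w'$, because it ends in the same final configuration. This gives closure both for traces and for $\lang(\CSMabb{A})$.

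\textbf{Recognizability.} By definition, we must show that $lin([\lang(\CSMabb{A})]_\interswapConc)$ is regular. First note that $\lang(\CSMabb{A}) = \lang(Prod(\CSMabb{A}))$. This holds because the transition relation of $Prod(\CSMabb{A})$ coincides with the SCSM transition relation, and its final states are exactly the final configurations. So $\lang(\CSMabb{A})$ is regular.

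By the first part, $\lang(\CSMabb{A})$ is closed under $\equiv_\interswapConc$. Hence
\[
lin([\lang(\CSMabb{A})]_\interswapConc) = \lang(\CSMabb{A}),
\]
which is regular. Equivalently, $\lang(\CSMabb{A})$ consistently defines its trace language. This yields recognizability.

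\textbf{Main obstacle.} The only delicate point is the commutation step. It requires the state-update bookkeeping to be stated precisely: configurations are updated only on the components of the two communicating participants. Everything else is routine.
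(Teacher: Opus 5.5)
Your proof is correct. The paper states this proposition without proof; no argument for it appears in the appendix either, so there is nothing to compare against.

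Your two ingredients are exactly the justification the authors leave implicit:
\begin{itemize}
\item The diamond property: a synchronous step changes only the components of its two participants. Adjacent events with disjoint participant sets can therefore be swapped and still reach the same configuration, which preserves both traces and maximality.
\item The identity $\lang(\CSMabb{A}) = \lang(Prod(\CSMabb{A}))$. Combined with closure under $\interswapConc$, it gives that $lin([\lang(\CSMabb{A})]_\interswapConc) = \lang(\CSMabb{A})$ is regular.
\end{itemize}
Your remark that $\lang(\CSMabb{A})$ consistently defines its trace language is also the fact the paper later uses, without further justification, for $Prod(\protocol)$ in the proof of the transitive-case complexity result.
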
 

\subsection{Canonical SCSMs} 
It is well-known that if a global protocol is realizable, its \emph{canonical} candidate realization can always serve as existential witness to realizability~\cite{DBLP:journals/tse/AlurEY03,extendedversion,DBLP:conf/ppdp/GiustoLU25}. 
We define the \emph{canonical SCSM} of a global protocol and state facts about its canonicity. 
The results in this section are not novel and are included only to keep our presentation self-contained.

\begin{definition}[Canonical SCSM]
	\label{def:canonical-scsm}
	Let $\protocol = (Q, \AlphSync, \delta, q_0, F)$ be a protocol. 
	For each participant $\procA \in \Procs$,  we define a finite automaton $\CSMabb{A}_{\procA} =
	\bigl(
	Q_{\procA},
	\AlphSync_\procA,
	\delta_{\procA},
	q_{0, \procA},
	F_{\procA}
	\bigr)$, where 
	\begin{itemize}
		\item $ \delta(q, a) \is
		\set{s' \in Q
			\mid
			\exists s \in q,
			s \xrightarrow{a} \xrightarrow{\emptystring}\mathrel{\vphantom{\to}^*} s' \in \delta_\procA
		},
		$
		for
		every
		$q \subseteq Q$ and
		$a \in \AlphSync_{\procA}$\item $q_{0, \procA} \is
		\set{s \in Q \mid
			s_0 \xrightarrow{\emptystring} \mathrel{\vphantom{\to}^*} s \in \delta_\procA}$, \item $Q_{\procA} \is \lfp_{\set{q_{0,\procA}}}^\subseteq \lambda S.\, S \union \set{ \delta(s,a) \mid s \in S \land a \in \AlphSync_{\procA}} \setminus \set{\emptyset}$
		, 
		\item $\delta_{\procA} \is \restrict{\delta}{Q_{\procA} \times \AlphSync_{\procA}}$, and 
		\item $F_{\procA} \is
		\set{q \in Q_{\procA}
			\mid q \inters F \neq \emptyset}
            \enspace .$
		\end{itemize} 
		Note that $\delta_\procA$ enforces that $\CSMabb{A}_\procA$ is deterministic and reduced. 
		We define $\CSMabb{A}$ to be the canonical SCSM of $\protocol$. 
\end{definition}

Canonical SCSMs satisfy the following local language property. 

\begin{restatable}{lemma}{canonicalSCSMLocalLanguageProperty}
	Let $\protocol$ be a global protocol and $\CSMabb{A}$ be its canonical SCSM. 
	Then, for every $\procA \in \Procs$, the finite automaton $\CSMabb{A}_\procA$ satisfies: \\
	\begin{inparaenum}[(i)]
		\item \label{def:canonicity-finite-words} 
		$\lang(\CSMabb{A}_\procA) = \lang(\protocol) \wproj_{\AlphSync_\procA}$, and
		\item \label{def:canonicity-prefixes} 
		$\pref(\lang(\CSMabb{A}_\procA)) = \pref(\lang(\protocol) \wproj_{\AlphSync_\procA})$.
	\end{inparaenum}
\end{restatable}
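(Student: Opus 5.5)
The canonical automaton $\CSMabb{A}_\procA$ is the subset construction applied to the $\emptystring$-NFA $\protocol_\procA$: its initial state is the $\emptystring$-closure of $q_0$, each transition takes one $a$-step and then the $\emptystring$-closure, and the empty set is dropped. The plan is therefore the standard correctness argument for the subset construction, carried out relative to $\protocol_\procA$. We then show separately that $\lang(\protocol_\procA) = \lang(\protocol)\wproj_{\AlphSync_\procA}$.

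The key invariant is proved by induction on the length of $u \in \AlphSync_\procA^*$. The claim is that $\delta_\procA^*(q_{0,\procA}, u)$ is defined exactly when some state $s$ of $\protocol$ is reachable via $u$ in $\protocol_\procA$. When it is defined, it equals the set
\[
S_u \is \set{s \in Q \mid q_0 \xrightarrow{u}\starred s \text{ in } \protocol_\procA}.
\]
The base case $u = \emptystring$ is the definition of $q_{0,\procA}$ as the $\emptystring$-closure of $q_0$. For the inductive step $u\cdot a$, we unfold the definition of the transition function on subsets. Any path in $\protocol_\procA$ reading $u a$ splits into three parts: a path reading $u$ that ends at the last non-$\emptystring$ step before $a$, then the $a$-step, then an $\emptystring$-path. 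Any $\emptystring$-moves preceding the $a$-step are absorbed because $S_u$ is already $\emptystring$-closed. This closure holds by construction, since both the initial state and every transition target are $\emptystring$-closed. Emptiness of $S_{ua}$ corresponds exactly to the transition being undefined, because $\emptyset$ is excluded from $Q_\procA$.

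Part (i) now follows. By the invariant, $u \in \lang(\CSMabb{A}_\procA)$ iff $S_u \inters F \neq \emptyset$, which holds iff $u \in \lang(\protocol_\procA)$. It remains to check that $\lang(\protocol_\procA) = \lang(\protocol)\wproj_{\AlphSync_\procA}$, which is a direct path correspondence. Each run of $\protocol$ on $w$ yields a run of $\protocol_\procA$ through the same states, labelled by $w\wproj_{\AlphSync_\procA}$, since each step is relabelled by the homomorphism. Conversely, each $\protocol_\procA$-transition is by definition the image of some $\protocol$-transition, so every run of $\protocol_\procA$ lifts to a run of $\protocol$ whose projection is the read word.

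For part (ii), we use that $\protocol$ is reduced. Every state reachable in $\protocol_\procA$ is a state of $\protocol$, and from it some final state is reachable. By the invariant, $u \in \pref(\lang(\CSMabb{A}_\procA))$ iff $\delta_\procA^*(q_{0,\procA},u)$ is defined and can reach a final state. The reachability condition is automatic: $S_u \neq \emptyset$ contains some $s$, and a completing word $v$ from $s$ to $F$ in $\protocol$ projects to a completion of $u$. Hence membership in the prefix set reduces to $S_u \neq \emptyset$. This in turn is equivalent to $u$ being a prefix of the projection of some word of $\lang(\protocol)$, via the same lift/project correspondence applied to a prefix of a run followed by its completion.

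The only delicate point is the bookkeeping of $\emptystring$-closures in the inductive step, in particular that $\emptystring$-moves before the $a$-step are already accounted for. Once it is established that every state of $\CSMabb{A}_\procA$ is $\emptystring$-closed, the rest is routine. No earlier result of the paper is needed beyond the definitions.
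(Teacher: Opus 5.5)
Your proof is correct and is essentially the paper's argument: the paper also proceeds by induction on prefix length (phrased as proving (ii) first) and then reads off (i) from the definition of $F_\procA$, and your explicit invariant $\delta_\procA^*(q_{0,\procA},u) = S_u$ is exactly what that induction has to establish. As a minor simplification, (ii) follows from (i) by applying $\pref$ to both sides, so reducedness of $\protocol$ is only needed for the stronger fact your argument also yields, namely that $\CSMabb{A}_\procA$ is reduced.
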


It is easy to see that by construction, the canonical SCSM of a global protocol includes at least the global protocol's specified behaviors. 

\begin{lemma}\label{lm:canonical-implementation-incl-protocol-semantics} 
	Let $\protocol$ be a global protocol and let $\CSMabb{A}$ be its canonical SCSM. 
	Then, $\langsync(\protocol) \subseteq \lang(\CSMabb{A})$. 
\end{lemma}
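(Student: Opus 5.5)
We first reduce the statement to the protocol's own language. By \cref{prop:equiv-equivalence}, $\langsync(\protocol) = [\lang(\protocol)]_\interswapConc$. By \cref{prop:scsm-interswap-closed}, the set of traces of $\CSMabb{A}$ is closed under $\interswapConc$. Whether a trace is maximal depends only on the configuration it reaches, and that configuration is the same for all $\interswapConc$-equivalent words. The reason is that each participant's DFA sees the same projection $w\wproj_{\AlphSync_\procA}$ in every word of the class, and $\CSMabb{A}_\procA$ is deterministic. Hence $\lang(\CSMabb{A})$ is itself $\interswapConc$-closed. It therefore suffices to show $\lang(\protocol) \subseteq \lang(\CSMabb{A})$, because any $w \in \langsync(\protocol)$ is $\interswapConc$-equivalent to some $w' \in \lang(\protocol)$.

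To show the inclusion, fix $w = a_1\cdots a_n \in \lang(\protocol)$ with accepting run $s_0 \xrightarrow{a_1} s_1 \cdots \xrightarrow{a_n} s_n \in F$ in $\protocol$. We prove by induction on $i$ the following invariant. The SCSM reaches a configuration $\vec{q}^{\,i}$ on $a_1\cdots a_i$, and for every $\procA$ the subset state satisfies $s_i \in \vec{q}^{\,i}_\procA$.

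For the base case, note that $q_{0,\procA}$ contains every state reachable from $s_0$ via $\emptystring$-transitions of $\protocol_\procA$, so in particular $s_0 \in q_{0,\procA}$.

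For the step, let $a_{i+1} = \msgFromTo{\procA}{\procB}{\val}$ and consider each participant $\procC$.
\begin{itemize}
\item If $\procC \notin \set{\procA,\procB}$, then $a_{i+1}\wproj_{\AlphSync_\procC} = \emptystring$. We need $s_{i+1} \in \vec{q}^{\,i}_\procC$. This holds because every subset state in $Q_\procC$ (both $q_{0,\procC}$ and every image $\delta(q,a)$) is closed under $\xrightarrow{\emptystring}$ in $\protocol_\procC$ by construction.
\item If $\procC \in \set{\procA,\procB}$, then $s_i \xrightarrow{a_{i+1}} s_{i+1}$ in $\protocol_\procC$ with $s_i \in \vec{q}^{\,i}_\procC$. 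By definition of $\delta$, $s_{i+1} \in \delta(\vec{q}^{\,i}_\procC, a_{i+1})$. This set is nonempty, so it lies in $Q_\procC$ and the transition is defined.
\end{itemize}
Since both $\procA$ and $\procB$ can move on $a_{i+1}$ and all other participants stay put, the SCSM transition $\vec{q}^{\,i} \xrightarrow{a_{i+1}} \vec{q}^{\,i+1}$ exists and the invariant is preserved.

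At the end, $s_n \in F \cap \vec{q}^{\,n}_\procA$ for every $\procA$. So each $\vec{q}^{\,n}_\procA \in F_\procA$, the final configuration is accepting, and $w \in \lang(\CSMabb{A})$.

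The main point requiring care is the $\emptystring$-closure of the subset states. The definition closes under $\emptystring$ only after a labelled step and at the initial state, so one must check that every state in $Q_\procA$ is of one of these two forms. This follows directly from the least-fixpoint definition of $Q_\procA$. The rest is routine.
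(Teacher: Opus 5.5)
Your proof is correct. The paper gives no argument beyond the remark that the inclusion holds ``by construction,'' placed right after the local language property of canonical SCSMs. The intended route is therefore most likely the following:
\begin{itemize}
\item for $w \in \langsync(\protocol)$, pick $w' \in \lang(\protocol)$ with $w \wproj_{\AlphSync_\procA} = w' \wproj_{\AlphSync_\procA}$ for all $\procA$;
\item part (i) of that property gives $w \wproj_{\AlphSync_\procA} \in \lang(\CSMabb{A}_\procA)$ for every $\procA$;
\item an SCSM accepts exactly those words all of whose projections are accepted by the respective local DFAs, since each event $\msgFromTo{\procA}{\procB}{\val}$ moves exactly $\procA$ and $\procB$.
\end{itemize}
That route works directly with the projection-based definition of $\langsync$ and needs neither \cref{prop:equiv-equivalence} nor \cref{prop:scsm-interswap-closed}.

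You instead re-derive the needed half of the local language property inline, via the invariant $s_i \in \vec{q}^{\,i}_\procA$ along a single run of $\protocol$. You then transfer from $\lang(\protocol)$ to its $\interswapConc$-closure. This makes your argument self-contained with respect to the subset construction, including the $\emptystring$-closure of the states in $Q_\procA$, which the paper never spells out. The cost is that you lean on \cref{prop:scsm-interswap-closed}, which the paper states without proof. The observation you already use for maximality, namely that a participant's state after a trace depends only on its own projection, is also what underlies that proposition. You could therefore argue both the trace-closure and the maximality step from it directly, avoiding the citation.
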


We relate global protocols to their canonical SCSMs below. 
\begin{lemma} 
	\label{lm:realizable-iff-canonical-scsm} 
	Let $\protocol$ be a global protocol. 
	Then, $\protocol$ is realizable if and only if its canonical SCSM realizes it. 
\end{lemma}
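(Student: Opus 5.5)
The backward direction is immediate: if the canonical SCSM of $\protocol$ realizes $\protocol$, then it is an existential witness for \cref{def:sync-impl}. All the work is in the forward direction. Assume some SCSM $\CSMabb{B} = \CLTS{\CSMabb{B}}$ realizes $\protocol$, and let $\CSMabb{A}$ be the canonical SCSM. We must show two things for $\CSMabb{A}$: protocol fidelity, $\lang(\CSMabb{A}) = \langsync(\protocol)$, and deadlock freedom. One inclusion of fidelity, $\langsync(\protocol) \subseteq \lang(\CSMabb{A})$, is exactly \cref{lm:canonical-implementation-incl-protocol-semantics}.

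The central step is to compare each local automaton of $\CSMabb{A}$ with the corresponding one of $\CSMabb{B}$ at the level of languages. We may assume each $\CSMabb{B}_\procA$ is reduced with respect to reachable configurations of $\CSMabb{B}$; states unused in any run can be pruned without changing runs. Deadlock freedom of $\CSMabb{B}$ together with fidelity shows that every trace of $\CSMabb{B}$ extends to a maximal one in $\langsync(\protocol)$. Hence every locally reachable prefix of $\CSMabb{B}_\procA$ is a prefix of $\lang(\protocol)\wproj_{\AlphSync_\procA}$. Conversely, every word of $\lang(\protocol)\wproj_{\AlphSync_\procA}$ is the projection of some word in $\lang(\CSMabb{B})$. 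Combining this with the local language property (the lemma on canonical SCSMs, parts (i) and (ii)), we get $\pref(\lang(\CSMabb{A}_\procA)) \subseteq \pref(\lang(\CSMabb{B}_\procA))$ and $\lang(\CSMabb{A}_\procA) \subseteq \lang(\CSMabb{B}_\procA)$ on reachable behaviour.

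Next we lift this to the product. Take any run of $\CSMabb{A}$ on a word $w$. Each projection $w\wproj_{\AlphSync_\procA}$ lies in $\pref(\lang(\CSMabb{A}_\procA))$, and hence can be read by the deterministic $\CSMabb{B}_\procA$. Since synchronous steps only require the two participating local automata to agree, $w$ is then a trace of $\CSMabb{B}$. If $w$ is maximal in $\CSMabb{A}$, then each projection lies in $\lang(\CSMabb{A}_\procA) \subseteq \lang(\CSMabb{B}_\procA)$, so $w \in \lang(\CSMabb{B}) = \langsync(\protocol)$. This gives the remaining fidelity inclusion.

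For deadlock freedom, let $w$ reach a non-final configuration of $\CSMabb{A}$. Then $w$ is a trace of $\CSMabb{B}$, so some continuation $wu \in \langsync(\protocol)$ exists. By \cref{prop:equiv-equivalence} and \cref{lm:canonical-implementation-incl-protocol-semantics}, $wu \in \lang(\CSMabb{A})$. Because $\CSMabb{A}$ is deterministic, the run on $wu$ passes through the configuration reached by $w$, so that configuration has an outgoing transition.

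I expect the main obstacle to be the per-participant comparison in the second paragraph, namely justifying that $\CSMabb{B}_\procA$ accepts the full local projections of the protocol. The subtle point is that a local word of $\CSMabb{A}_\procA$ arises from some global word, but its realization in $\CSMabb{B}$ needs partners in matching states. I would handle this by working only with whole global words $w' \in \lang(\protocol)$, which lie in $\lang(\CSMabb{B})$. The inclusions are then derived projection-wise, using determinism of each $\CSMabb{B}_\procA$ so that the local state reached depends only on the local word.
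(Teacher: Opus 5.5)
The gap is the claim in your second paragraph that deadlock freedom plus fidelity make every trace of $\CSMabb{B}$ extendable to a maximal one. The paper's deadlock notion only forbids reachable non-final configurations with \emph{no} outgoing transition. It says nothing about configurations from which no final configuration is reachable while some transition stays enabled forever. So $\CSMabb{B}$ may livelock, and your deadlock-freedom step ``$w$ is a trace of $\CSMabb{B}$, so some continuation $wu \in \langsync(\protocol)$ exists'' is unjustified. Pruning unused states does not help, because the offending states lie on reachable runs.

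Your fidelity half does not need this claim and is fine. Projecting maximal runs of $\CSMabb{B}$ on words of $\lang(\protocol)\subseteq\lang(\CSMabb{B})$ gives $\lang(\CSMabb{A}_\procA)\subseteq\lang(\CSMabb{B}_\procA)$. Since an SCSM accepts exactly the words all of whose projections are locally accepted, this yields $\lang(\CSMabb{A})\subseteq\lang(\CSMabb{B})=\langsync(\protocol)$ directly.

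Under the paper's definitions as literally stated, the step cannot be repaired, because the forward direction itself fails. Take four participants and
\[
\lang(\protocol)=\{\msgFromTo{\procA}{\procB}{a}\cdot\msgFromTo{\procC}{\procD}{a}\cdot\msgFromTo{\procD}{\procB}{c},\ \msgFromTo{\procA}{\procB}{b}\cdot\msgFromTo{\procC}{\procD}{b}\cdot\msgFromTo{\procD}{\procB}{d}\}.
\]
The canonical SCSM deadlocks after $\msgFromTo{\procA}{\procB}{a}\cdot\msgFromTo{\procC}{\procD}{b}$. There $\procA$ and $\procC$ sit in final sink states, $\procB$ only accepts $c$ from $\procD$, and $\procD$ only offers $d$.

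Now let $\CSMabb{B}$ coincide with the canonical SCSM, with two changes:
\begin{itemize}
\item $\procB$'s state after receiving $a$ (resp.\ $b$) gets an extra transition on $\msgFromTo{\procD}{\procB}{d}$ (resp.\ $\msgFromTo{\procD}{\procB}{c}$) into a fresh non-final state $j_\procB$ with a self-loop on $\msgFromTo{\procB}{\procA}{z}$;
\item both final states of $\procA$ get a transition on $\msgFromTo{\procB}{\procA}{z}$ into a fresh non-final state $j_\procA$ with the same self-loop.
\end{itemize}
All local automata stay deterministic and keep their languages, and matched runs are unaffected. Every mismatched run now ends in the endless exchange $(\msgFromTo{\procB}{\procA}{z})^\omega$ through non-final configurations instead of getting stuck. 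Hence $\lang(\CSMabb{B})=\langsync(\protocol)$ and $\CSMabb{B}$ is deadlock-free in the paper's sense, so $\protocol$ is realizable, yet its canonical SCSM deadlocks. In $\CSMabb{B}$, $\procB$ even reaches the local word $\msgFromTo{\procA}{\procB}{a}\cdot\msgFromTo{\procD}{\procB}{d}$ on a reachable run, which refutes your intermediate claim about locally reachable prefixes.

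What your argument does prove is the version where deadlock freedom of a realization is strengthened to ``every reachable configuration can reach a final one'', i.e.\ its synchronous product is reduced. This is exactly the property \cref{lm:realizable-minimal} tests for $Prod(\protocol)$. Under that hypothesis on $\CSMabb{B}$, your continuation $u$ exists, and your final step via \cref{lm:canonical-implementation-incl-protocol-semantics} and determinism of the product is correct. It even shows that the canonical product is reduced.
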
 

\cref{lm:realizable-iff-canonical-scsm} and \cref{lm:canonical-implementation-incl-protocol-semantics} together imply that deciding  realizability of a global protocol amounts to checking deadlock-freedom of the canonical SCSM, and checking language inclusion of the canonical SCSM into global protocol semantics. 
Observe further that each SCSM is in fact a finite automaton, due to the absence of (unbounded) message channels. 
We define the language preserving finite automaton of an SCSM below. 

\begin{definition}[Canonical synchronous product]
	\label{def:sclts-to-dfa} 
	Let $\protocol$ be a global protocol and let $\CSMabb{A}$ be its canonical SCSM. 
	The canonical synchronous product of $\protocol$, denoted $Prod(\protocol)$ is the synchronous product of $\CSMabb{A}$. 
\end{definition} 

We characterize realizability of a global protocol in terms of its canonical synchronous product.

\begin{lemma} 
	\label{lm:realizable-minimal} 
	Let $\protocol$ be a global protocol.
	Then, $\protocol$ is realizable if and only if $Prod(\protocol)$ is reduced and satisfies $\lang(Prod(\protocol)) \subseteq \langsync(\protocol)$. \end{lemma}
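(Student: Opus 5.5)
By \cref{lm:realizable-iff-canonical-scsm}, $\protocol$ is realizable iff its canonical SCSM $\CSMabb{A}$ realizes it, i.e., iff (i) $\lang(\CSMabb{A}) = \langsync(\protocol)$ and (ii) $\CSMabb{A}$ is deadlock-free. The plan is to translate each of these two conditions into a property of $Prod(\protocol)$.

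\textbf{Step 1: languages and configurations.} By construction of the synchronous product, the runs of $\CSMabb{A}$ from $\vec{q}_0$ coincide step by step with the runs of the DFA $Prod(\protocol)$ from its initial state. In particular $\lang(\CSMabb{A}) = \lang(Prod(\protocol))$, the reachable configurations of $\CSMabb{A}$ are exactly the reachable states of $Prod(\protocol)$, and final configurations are exactly final states. By \cref{lm:canonical-implementation-incl-protocol-semantics} we already have $\langsync(\protocol) \subseteq \lang(\CSMabb{A})$. Hence (i) is equivalent to $\lang(Prod(\protocol)) \subseteq \langsync(\protocol)$.

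\textbf{Step 2: the key difficulty.} Deadlock-freedom is weaker than reducedness: a reachable non-final state may have outgoing transitions and still be unable to reach a final state (a livelock, or a path ending in a deadlock elsewhere). So we cannot simply match (ii) with reducedness. Instead we prove the equivalence of the conjunctions (i)$\wedge$(ii) and ``reduced $\wedge$ inclusion''. Note that reducedness here only needs co-reachability of reachable states, reading $Prod(\protocol)$ as restricted to its reachable part.

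\emph{($\Leftarrow$)} Assume $Prod(\protocol)$ is reduced and the inclusion holds. Step 1 gives (i). For (ii), take a reachable configuration $\vec{q}$ that is not final. Reducedness gives a nonempty word $v$ leading from $\vec{q}$ to a final state, so $\vec{q}$ has an outgoing transition and is not a deadlock.

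\emph{($\Rightarrow$)} Assume $\CSMabb{A}$ realizes $\protocol$. Step 1 gives the inclusion. For reducedness, take a reachable $\vec{q}$ via some $u$; we must show some final state is reachable from $\vec{q}$. This is where protocol fidelity is used together with the local language property. For each participant $\procA$, the local state $\vec{q}_\procA$ is reached in the reduced DFA $\CSMabb{A}_\procA$ on $u\wproj_{\AlphSync_\procA}$. By item (ii) of the local language lemma, $u\wproj_{\AlphSync_\procA} \in \pref(\lang(\protocol)\wproj_{\AlphSync_\procA})$. This per-participant statement alone does not yield a global completion, so we argue via deadlock-freedom. Suppose no final state is reachable from $\vec{q}$. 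Then every run from $\vec{q}$ avoids final configurations. Since $Prod(\protocol)$ is finite and deadlock-free, such runs can be extended forever and must eventually cycle among non-co-reachable states.

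This is the step I expect to be hardest. The approach I would try first is to exploit the determinism and reducedness of each canonical local machine. From the prefix property, each $\procA$ can locally complete $u\wproj_{\AlphSync_\procA}$ to some $u\wproj_{\AlphSync_\procA} x_\procA$ lying in $\lang(\protocol)\wproj_{\AlphSync_\procA}$. One then tries to synchronize these local completions into a global word $ux \in \lang(\CSMabb{A}) = \langsync(\protocol)$, using fidelity to pick a protocol word $w'$ whose projections extend those of $u$. If that fails, the fallback is to check whether the paper's notion of ``reduced'' is meant only as trimming unreachable states. In that case $Prod(\protocol)$ should be read as trimmed, and the equivalence reduces entirely to Step 1 plus deadlock-freedom, with the non-co-reachable cycles handled by noting that such configurations would contradict fidelity on prefixes.
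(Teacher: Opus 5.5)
Your Step~1 and your ($\Leftarrow$) direction are correct. You are also right that $Prod(\protocol)$ has to be read as restricted to its reachable part: under the literal definition of reduced, even the realizable protocol of \cref{fig:receiver-power-sync-protocol} fails, because its product contains unreachable tuples. The gap is the co-reachability half of ($\Rightarrow$), which you leave open. It cannot be closed, because under the paper's definitions it is false. Fidelity only constrains maximal traces, and deadlock freedom only forbids stuck non-final configurations, so livelocks are permitted. Take the reduced, even sink-final, protocol with
\[
\lang(\protocol) = \set{\msgFromTo{\procA}{\procB}{\val_i} \cat \msgFromTo{\procC}{\procD}{\val_i} \cat (\msgFromTo{\procB}{\procD}{\val})^n \cat \msgFromTo{\procB}{\procD}{\val'_i} \mid i \in \set{1,2},\; n \geq 0} \enspace .
\]
Its canonical SCSM behaves as follows. $\procA$ and $\procC$ each send some $\val_i$ and stop in a final state. $\procB$ receives $\val_i$ from $\procA$ and then sends $\val^* \val'_i$ to $\procD$. $\procD$ receives $\val_j$ from $\procC$ and then receives $\val^* \val'_j$ from $\procB$. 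A maximal run must end with a joint exchange $\msgFromTo{\procB}{\procD}{\val'_k}$, which forces $k = i = j$, so $\lang(\CSMabb{A}) = \langsync(\protocol)$. Every reachable non-final configuration enables some transition, so there is no deadlock, and $\protocol$ is realizable. Yet after $\msgFromTo{\procA}{\procB}{\val_1} \cat \msgFromTo{\procC}{\procD}{\val_2}$, only $\msgFromTo{\procB}{\procD}{\val}$ is ever enabled. This reachable configuration is therefore not co-reachable, and $Prod(\protocol)$ is not reduced.

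This example defeats both of your strategies. First, $\procB$'s local completion $\val^*\val'_1$ and $\procD$'s local completion $\val^*\val'_2$ are each valid, but they cannot be synchronized. Second, the livelock does not ``contradict fidelity on prefixes'', because fidelity says nothing about traces that never reach a final configuration. The paper gives no proof of this lemma. In the proof of \cref{thm:realizability-transitive-decidable} it simply identifies deadlock freedom of $Prod(\protocol)$ with reducedness, which is exactly the unjustified step. What does hold is the statement with ``reduced'' replaced by ``every reachable non-final state of $Prod(\protocol)$ has an outgoing transition''. Your Step~1 together with \cref{lm:realizable-iff-canonical-scsm,lm:canonical-implementation-incl-protocol-semantics} already proves that version. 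Alternatively, reducedness becomes the right condition if realizations are additionally required to have every trace in $\pref(\langsync(\protocol))$. A trace $u$ then extends to some $ux \in \langsync(\protocol) \subseteq \lang(\CSMabb{A})$, and since $Prod(\protocol)$ is deterministic, the configuration reached by $u$ is co-reachable.
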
 

A keen reader may have observed that in light of \cref{lm:realizable-minimal}, to decide realizability, it suffices to decide recognizability.

Unfortunately, we have the following, due to Sakarovitch~\cite{DBLP:conf/latin/Sakarovitch92}: 
\begin{theorem}
	Recognizability of rational trace languages is undecidable in general. 
\end{theorem}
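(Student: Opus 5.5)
Since this statement is attributed to Sakarovitch~\cite{DBLP:conf/latin/Sakarovitch92}, the plan is to reconstruct the classical reduction from the Post Correspondence Problem (PCP). It suffices to exhibit one concurrent alphabet $(\Sigma, C)$ on which recognizability of rational trace languages is undecidable. I would take $\Sigma = A \dunion I$ with $A = \set{a,b}$ and $I$ a finite index alphabet, and set $C = (A \times I) \union (I \times A)$. Then $M(\Sigma, C) \cong A^* \times I^*$, the direct product of two free monoids. Rational and recognizable trace languages then correspond exactly to rational and recognizable subsets of $A^* \times I^*$, so I may argue directly about relations.

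Given a PCP instance $(u_i, v_i)_{i \in I}$ with all $u_i, v_i \in A^+$, extend to morphisms $w \mapsto u_w$ and $w \mapsto v_w$ on $I^*$. Let $U = \set{(u_w, w) \mid w \in I^+}$ and $V = \set{(v_w, w) \mid w \in I^+}$. The complement of $U$ in $A^* \times I^*$ is rational. To see this, a finite automaton can read $w$ letter by letter while consuming blocks of the first component, and it accepts exactly when a block mismatches or the two components end out of step. This is a deterministic, synchronized-by-blocks relation, so its complement is again rational. The same holds for $V$. Hence $X = \overline{U} \union \overline{V} = \overline{U \inters V}$ is rational and can be computed effectively from the instance.

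The reduction then splits into two cases.
\begin{itemize}
\item If the instance has no solution, then $U \inters V = \emptyset$. So $X = A^* \times I^*$, which is recognizable.
\item If the instance has a solution $w$, then every $w^n$ is also a solution, so $U \inters V$ is infinite. Recognizable sets are closed under complement, so it suffices to show that $U \inters V$ is not recognizable.
\end{itemize}
For the second case I would invoke Mezei's theorem: a recognizable subset of $A^* \times I^*$ is a finite union of products $K_j \times L_j$ with $K_j, L_j$ recognizable. Every such product contained in $U \inters V$ lies in the graph of the map $w \mapsto u_w$, so it is a singleton. If $L_j$ had two elements, we would need $K_j$ to be a singleton; but then all $w \in L_j$ share one image, which is impossible since $\card{u_w} \geq \card{w}$ and the map is injective on the set of solutions. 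Symmetrically, if $K_j$ had two elements, some single $w$ would have two images. Hence a finite union of such products is finite, contradicting that $U \inters V$ is infinite. Consequently $X$ is recognizable iff the PCP instance has no solution, which is undecidable.

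The main obstacle I expect is the rationality of $\overline{U}$. This needs a careful automaton construction that handles both mismatched letters and length discrepancies at the end, including partial blocks. The Mezei decomposition argument is routine once the injectivity observation is in place.
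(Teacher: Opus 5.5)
Your proof is correct in substance, but it takes a different route from the one the paper points to. The paper does not prove this theorem itself. It cites Sakarovitch, whose reduction starts from Ibarra's undecidable universality problem for two-tape one-way NFAs over a unary alphabet, and produces a rational trace language that is either empty or non-recognizable. You instead reduce directly from PCP over the product monoid $A^*\times I^*$. You obtain $X=\overline{U\cap V}$, which is either the whole monoid or non-recognizable, and Mezei's theorem certifies the second case. Your route is elementary and self-contained, needing only PCP and Mezei. The cited route delegates the hard part to Ibarra's theorem, but in return it works with a unary second component, i.e.\ a smaller concurrent alphabet. Your PCP encoding needs a non-unary index component. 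Fixing the number of pairs, or coding $I$ in binary, keeps your alphabet fixed, e.g.\ as the rectangle alphabet $\AlphSync_{abcd}$.

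One step is false as written. The map $w\mapsto u_w$ need not be injective on solutions, so a product $K_j\times L_j\subseteq U\cap V$ need not be a singleton. Take $I=\set{1,2}$ with $(u_1,v_1)=(a,a)$ and $(u_2,v_2)=(aa,aa)$. The solutions $12$, $21$ and $111$ all satisfy $u_w=aaa$, so $\set{aaa}\times\set{12,21,111}\subseteq U\cap V$. What the argument actually needs is weaker: every fibre $\set{w\mid u_w=x}$ is finite. Your own observation $\card{u_w}\geq\card{w}$ gives exactly this. Then each $K_j$ is a singleton and each $L_j$ is finite, so the finite union is finite, and the contradiction with $U\cap V$ being infinite goes through unchanged.

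For the step you flag as the main obstacle, no determinism argument is needed. Put $R=\set{(u_i,i)\mid i\in I}$. Then $\overline{U}$ is the union of four sets:
$A^*\times\set{\emptystring}$;
$R^*\cdot\set{(y,i)\mid y\text{ a proper prefix of }u_i}\cdot(\set{\emptystring}\times I^*)$;
$R^*\cdot(A^+\times\set{\emptystring})$;
and $R^*\cdot\set{(ya,i)\mid u_i=yby',\ a\neq b}\cdot(A^*\times I^*)$.
These cover the cases $w=\emptystring$, $x$ a proper prefix of $u_w$, $u_w$ a proper prefix of $x$, and a first mismatch, respectively. Each set is visibly rational.
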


The reduction chosen is from the universality problem for two-tape one-way non-deterministic automata over a unary alphabet, which Ibarra showed to be undecidable in~\cite{DBLP:journals/jacm/Ibarra78}. 
The constructed rational trace language in fact satisfies a stronger property: either it is \emph{empty}, or it is not recognizable. 
This stronger property establishes universality if and only if recognizability. 
The construction, in the form of an NFA, can be encoded using a synchronous alphabet with four participants.
Note, however, that the same construction does not work for undecidability of realizability. 
This is because encoding invalid solutions to a Post Correspondence Problem set requires concurrent iteration, which immediately renders the protocol unrealizable. In \cref{sec:undecidability}, we detail a different reduction that avoids concurrent iteration.

In the following two sections, we combine \cref{lm:realizable-minimal} with our novel trace-theoretic characterizations in \cref{prop:global-rational-trace-language} and \cref{prop:scsm-recognizable-trace-language} to reduce the realizability problem to different problems over trace languages. 
All proofs can be found in \cref{app:decidability}. 

\subsection{The transitive case} 
\label{subsec:transitive} 

Trace languages are classically studied through the lens of their concurrency relation.  
In particular, it was shown that classes of rational trace languages over concurrent alphabets, whose concurrency relation
is \emph{transitive}, enjoy many decidability~properties. 

In this section, we transfer these decidability properties to global protocols. 
In \cref{subsubsec:case-analysis-transitive}, we provide an exhaustive case analysis of synchronous alphabets with a transitive concurrency relation. In \cref{subsubsec:dec-transitive}, we propose an algorithm for deciding realizability of global protocols with a transitive concurrent alphabet, and in \cref{subsubsec:complexity-transitive}, we analyze its complexity. 
While our algorithm is based on an existing decidability proof~\cite{DBLP:journals/ita/AalbersbergW86} and is thus not novel, to the best of our knowledge, no upper bound has previously been provided for the equivalence and inclusion problems over rational trace languages. 
In \cref{subsubsec:flaw-transitive}, we analyze a flaw in the original proof claiming the decidability of rational trace language equivalence in the transitive setting. 

\subsubsection{Subsets of $\mathit{\AlphSync}$ with transitive $\interswapConc$} 
\label{subsubsec:case-analysis-transitive} 
Let $\protocol$ be a global protocol, and let $\AlphSync_{\protocol}$ be the subset of $\AlphSync$ whose symbols appear in $\protocol$. 
Let $\interswapConc_{\protocol}$ be the concurrency relation $\interswapConc$ (\cref{def:interswapConc}) restricted to $\AlphSync_{\protocol}$. 
The transitivity of $\interswapConc_{\protocol}$ depends both on the size of $\Procs$ and $\MsgVals$, as well as the communication topology in $\protocol$. 
Note that in the special case when $\interswapConc_{\protocol}$ is empty and thus trivially transitive, realizability is decidable in PSPACE by checking equivalence and reducedness of finite state automata directly.

When $\card{\Procs} \leq 3$, $\interswapConc_\protocol$ is empty. 
When $\card{\Procs} \geq 4$, we first case split on whether the communication topology is \emph{complete}, \ie where all pairs of participants communicate in the global protocol. 

When $\card{\Procs} = 4$ and complete, we further case split on the size of~$\MsgVals$. 
Let $\Procs = \set{\procA, \procB, \procC, \procD}$. 
If $\card{\MsgVals} = 1$, then $\interswapConc_\protocol$ is transitive:
\[
	\set{(\msgFromTo{\procA}{\procB}{\val}, \msgFromTo{\procC}{\procD}{\val}), 
			 (\msgFromTo{\procA}{\procC}{\val}, \msgFromTo{\procB}{\procD}{\val}),
			 (\msgFromTo{\procA}{\procD}{\val}, \msgFromTo{\procB}{\procC}{\val})}
    \enspace .
\]
When a pair of participants exchanges more than one message value, the following pattern breaks transitivity: 
\[
	\msgFromTo{\procA}{\procB}{\val_1} \interswapConc_{\protocol} \msgFromTo{\procC}{\procD}{\val_1} \interswapConc_{\protocol} \msgFromTo{\procA}{\procB}{\val_2} \not\interswapConc_{\protocol}
	\msgFromTo{\procA}{\procB}{\val_1}
    \enspace .
\]

\noindent When $\card{\Procs} \geq 5$, $\interswapConc_\protocol$ is not transitive since the pattern above is contained. 

In the case that the communication topology is \emph{not complete}, when $\card{\Procs} = 4$ and $\card{\MsgVals} > 1$, transitive alphabets are characterized by the following property. 

\begin{proposition}
	Let $\Gamma_\protocol$ be a subset of $\AlphSync$, defined over $\card{\Procs} = 4$ and $\card{\MsgVals} > 1$. 
	Then, $\interswapConc_\protocol$ is transitive if and only if  $\card{\set{\msgFromTo{\procA}{\procB}{\val} \mid \val \in \MsgVals} \inters  \AlphSync_{\protocol}} \geq 1$ implies 
	$\card{\set{\msgFromTo{\procC}{\procD}{\val} \mid \val \in \MsgVals} \inters  \AlphSync_{\protocol}} = 0$ for all $\procA, \procB, \procC, \procD \in \Procs$ with 
	$\set{\procA, \procB} \inters \set{\procC, \procD} = \emptyset$. 
\end{proposition}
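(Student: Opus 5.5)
The plan is to read transitivity literally and reduce the statement to a purely combinatorial fact about which channels occur in $\AlphSync_\protocol$. The first step records a basic property of $\interswapConc_\protocol$ that the argument relies on: it is symmetric by \cref{def:interswapConc}, and it is irreflexive, because every event $\msgFromTo{\procA}{\procB}{\val}$ has two distinct participants, so $\Procs(x) \inters \Procs(x) \neq \emptyset$. From this I would prove a small lemma: a symmetric, irreflexive relation $R$ is transitive if and only if $R = \emptyset$. The nontrivial direction is the forward one. If $x \mathrel{R} y$, then by symmetry $y \mathrel{R} x$, and transitivity would then give $x \mathrel{R} x$, which contradicts irreflexivity. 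Hence $\interswapConc_\protocol$ is transitive exactly when it is empty, i.e.\ when no two symbols of $\AlphSync_\protocol$ have disjoint participant sets.

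The second step rephrases emptiness through channels, using $\card{\Procs} = 4$.
\begin{itemize}
\item Suppose there are distinct participants $\procA,\procB,\procC,\procD$ such that some $\msgFromTo{\procA}{\procB}{\val} \in \AlphSync_\protocol$ and some $\msgFromTo{\procC}{\procD}{\val'} \in \AlphSync_\protocol$. These two symbols have disjoint participant sets, so they are related by $\interswapConc_\protocol$, which is therefore nonempty.
\item Conversely, suppose $x \interswapConc_\protocol y$. Write $x = \msgFromTo{\procA}{\procB}{\val}$ and $y = \msgFromTo{\procC}{\procD}{\val'}$, so that $\set{\procA,\procB} \inters \set{\procC,\procD} = \emptyset$. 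This yields a pair of nonempty channel sets on disjoint participant pairs, and so the hypothesis of the statement is violated for this choice of $\procA,\procB,\procC,\procD$.
\end{itemize}
Since the statement quantifies over all $\procA,\procB,\procC,\procD$, both orientations of each channel are covered. Combining this equivalence with the lemma from the first step gives the proposition. Neither step uses $\card{\MsgVals} > 1$; it only fixes the setting of the case analysis.

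The main obstacle is the convention question rather than the computation. The surrounding discussion treats the $\card{\MsgVals}=1$, complete-topology relation as ``transitive'', which only makes sense if transitivity is required for $x \neq z$. Under that weaker reading the proposition would fail: for example, $\AlphSync_\protocol = \set{\msgFromTo{\procA}{\procB}{\val_1}, \msgFromTo{\procC}{\procD}{\val_1}}$ satisfies it vacuously, yet the statement would call it non-transitive. I would therefore state explicitly that the proof uses the standard (strict) notion of transitivity, under which the statement holds.

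If the weaker reading is intended instead, I would prove a weaker variant. Under that reading, transitivity fails exactly when some channel (counting both orientations) carries two distinct values and some disjoint channel is used, via the pattern $\msgFromTo{\procA}{\procB}{\val_1} \interswapConc \msgFromTo{\procC}{\procD}{\val} \interswapConc \msgFromTo{\procA}{\procB}{\val_2}$ displayed above. Assuming that $\card{\MsgVals} > 1$ means each used channel carries at least two values would then recover the statement as written.
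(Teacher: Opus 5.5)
The paper states this proposition without proof, so there is nothing to compare against. Your analysis is correct, and its most valuable part is the diagnosis of the convention.

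\textbf{Strict reading.} Under the textbook definition of transitivity your two steps are sound. $\interswapConc_\protocol$ is symmetric and irreflexive, hence transitive iff empty. The right-hand side says precisely that no two symbols of $\AlphSync_\protocol$ have disjoint participant sets. That second step holds for any $\card{\Procs}$; despite your remark, it does not use $\card{\Procs}=4$.

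\textbf{The paper's reading.} The paper does not use the strict reading. Besides the $\card{\MsgVals}=1$ example you cite, its appendix proofs for the transitive case rely on ``Because $C$ is transitive, it is the disjoint union of a finite number of complete concurrency relations''. That is the trace-theoretic convention that the reflexive closure of $C$ is an equivalence relation. Read strictly, every transitive alphabet would have empty concurrency, and \cref{thm:realizability-transitive-decidable} would collapse to the PSPACE special case noted in \cref{subsubsec:case-analysis-transitive}. So your main proof is valid but establishes a degenerate statement.

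\textbf{Counterexample.} Under the paper's reading your counterexample is valid. It survives even if $\AlphSync_\protocol$ itself must use two values, e.g.\ $\set{\msgFromTo{\procA}{\procB}{\val_1}, \msgFromTo{\procC}{\procD}{\val_2}}$. So no global reading of $\card{\MsgVals}>1$ rescues the statement, and a per-pair hypothesis like yours is needed.

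\textbf{Repaired version.} Your repaired version is correct, with two refinements.

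First, the hypothesis should be that every unordered pair of participants occurring in $\AlphSync_\protocol$ carries at least two distinct \emph{events}, not values. For instance, $\msgFromTo{\procA}{\procB}{\val}$ and $\msgFromTo{\procB}{\procA}{\val}$ together with $\msgFromTo{\procC}{\procD}{\val}$ already violate transitivity. This also means the paper's $\card{\MsgVals}=1$ example tacitly assumes one orientation per pair.

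Second, your claim that transitivity fails \emph{exactly} under this pattern is the one place where $\card{\Procs}=4$ is genuinely needed. Since each symbol has exactly two participants, $x \interswapConc_\protocol y \interswapConc_\protocol z$ forces $\Procs(x) = \Procs \setminus \Procs(y) = \Procs(z)$. So a violation is precisely two distinct events on one pair together with one event on the complementary pair. Equivalently, $\interswapConc_\protocol$ is transitive iff, for each split of $\Procs$ into two pairs, either one side carries no event or both sides carry exactly one. Your hypothesis rules out the second alternative, which yields the proposition. For $\card{\Procs}\geq 5$ this criterion breaks down, as $\msgFromTo{\procA}{\procB}{\val} \interswapConc \msgFromTo{\procC}{\procD}{\val} \interswapConc \msgFromTo{\procA}{\procE}{\val}$ shows.
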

Note that if two distinct pairs $(\procA, \procC)$ and $(\procA, \procB)$ 
both exchange more than one message, then $\AlphSync_\protocol$ can only additionally contain $\msgFromTo{\procA}{\procD}{\val}$ and $\msgFromTo{\procB}{\procC}{\val}$. 
Moreover, there must exist one ``coordinator'' participant that communicates with all other participants in $\Procs$. 

Finally, we mention
two 
important instances of non-transitive~$\interswapConc_\protocol$:  the following synchronous alphabet with $\card{\Procs} = 4$ and $\card{\MsgVals} = 2$ is isomorphic to the ``boomerang'' concurrency alphabet used in a variety of undecidability results, for example to show that the inclusion problem for rational trace languages is undecidable in~general~\cite{DBLP:books/ws/95/DR1995}: 
\[
\AlphSync_{abc} = \set{\msgFromTo{\procA}{\procB}{\val_1}, \msgFromTo{\procC}{\procD}{\val_1}, \msgFromTo{\procA}{\procB}{\val_2}}
\enspace .
\]
The following synchronous alphabet with $\card{\Procs} = 4$ and $\card{\MsgVals} = 2$ is isomorphic to the ``rectangle'' concurrency alphabet 
that suffices to encode any concurrent alphabet of the form $(\Sigma_1 \cup \Sigma_2, \Sigma_1 \times \Sigma_2)$~\cite{DBLP:conf/icalp/BertoniMS82}, whose rational trace languages can be expressed as multi-tape one-way non-writing automata~\cite{DBLP:journals/jcss/FischerR68}: 
\[
\AlphSync_{abcd} = \set{\msgFromTo{\procA}{\procB}{\val_1}, \msgFromTo{\procC}{\procD}{\val_1}, \msgFromTo{\procA}{\procB}{\val_2}, \msgFromTo{\procC}{\procD}{\val_2}}
\enspace .
\]

\subsubsection{Decidability}
\label{subsubsec:dec-transitive}
Our decidability result is stated below. 

\begin{restatable}{theorem}{realizabilityTransitiveDecidable}
	\label{thm:realizability-transitive-decidable} 
	Synchronous realizability is decidable for all global protocols $\protocol$ over $\AlphSync$ whose concurrency relation $\interswapConc_{\protocol}$ is transitive.
\end{restatable}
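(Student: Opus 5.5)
By \cref{lm:realizable-minimal}, $\protocol$ is realizable iff $Prod(\protocol)$ is reduced and $\lang(Prod(\protocol)) \subseteq \langsync(\protocol)$. Reducedness of the finite automaton $Prod(\protocol)$ is decidable by plain graph reachability. The canonical SCSM is effectively computable from $\protocol$ via the subset construction of \cref{def:canonical-scsm}, so the whole difficulty is the inclusion check.

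For the inclusion, I would first reformulate it over traces. By \cref{prop:scsm-recognizable-trace-language}, $\lang(Prod(\protocol))$ is $\interswapConc$-closed, so $\lang(Prod(\protocol)) = lin([\lang(Prod(\protocol))]_\interswapConc)$. By \cref{prop:global-rational-trace-language}, $\langsync(\protocol) = lin([\lang(\protocol)]_\interswapConc)$. Hence the inclusion is equivalent to
\[
[\lang(Prod(\protocol))]_{\interswapConc_\protocol} \subseteq [\lang(\protocol)]_{\interswapConc_\protocol},
\]
an instance of $\INCLUSION(\AlphSync_\protocol, \interswapConc_\protocol)$ for two regular languages. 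Restricting to $\AlphSync_\protocol$ is harmless: letters outside $\AlphSync_\protocol$ never occur in $\lang(\protocol)$, and any word of $Prod(\protocol)$ using such a letter is immediately a counterexample, which can be detected syntactically. So it suffices to show that $\INCLUSION(\Sigma, C)$ is decidable whenever $C$ is transitive.

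For this, I would use the structure of transitive concurrency. Since $C$ is irreflexive, symmetric and transitive, it is the disjoint union of complete bipartite-free ``cliques of independence'': $\Sigma$ partitions into classes $\Sigma_0$ (letters independent of nothing) and $\Sigma_1,\dots,\Sigma_k$. Within each $\Sigma_i$ with $i\geq 1$, distinct letters commute with each other, and a letter commutes with itself trivially. Letters from different classes are dependent. Thus $M(\Sigma, C)$ is a free product of free commutative monoids $\Nat^{\Sigma_i}$ (and free monoids on singletons). A trace is then uniquely a sequence of maximal blocks, each block a nonempty commutative word over a single class, with consecutive blocks from different classes. Following \cite{DBLP:journals/ita/AalbersbergW86}, I would choose a normal form: every trace is represented by the word whose block factorization uses the lexicographic order inside each block. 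I would then show that $Lex([L])$ is effectively regular, or at least effectively semilinear-per-block, for regular $L$. Concretely, from an NFA for $L$, for each pair of states $(q,q')$ and class $\Sigma_i$, the set of Parikh vectors of words over $\Sigma_i$ leading from $q$ to $q'$ is effectively semilinear by Parikh's theorem. The trace language $[L]$ is then recognized by a ``block automaton'' whose transitions are labeled by these semilinear sets, with the constraint that consecutive maximal blocks come from different classes.

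The main obstacle is this constraint. A single maximal block of the trace may be split in $L$ across several consecutive NFA segments over the same class, and the segments may even interleave with nothing else. The correct move is therefore to saturate, taking for each $(q,q')$ and class the Parikh image of all $\Sigma_i^+$-paths. This is a single regular language, so its Parikh image stays semilinear. Inclusion of two such block-languages is then decided by determinizing at the block level. I would form the subset construction over states, where a macro-transition is labeled by a Boolean combination of semilinear sets, namely the atoms of the Presburger-definable partition. Inclusion then reduces to emptiness checks of Presburger formulas along a finite macro-state graph. Getting this determinization right — in particular, ensuring that blocks are maximal so that no trace is counted via two different factorizations — is exactly where I expect the subtlety (and presumably the flaw discussed in \cref{subsubsec:flaw-transitive}). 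Once it is settled, decidability follows, since semilinear sets are effectively closed under Boolean operations.
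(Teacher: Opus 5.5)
Your argument is correct, but after the shared first step it takes a genuinely different route from the paper. Like the paper, you use \cref{lm:realizable-minimal} and the two trace-theoretic propositions to reduce realizability to two checks: reducedness of $Prod(\protocol)$, and the inclusion $[\lang(Prod(\protocol))]_{\interswapConc} \subseteq [\lang(\protocol)]_{\interswapConc}$.

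The paper then uses that $\lang(Prod(\protocol))$ is $\interswapConc$-closed to read the inclusion as $\Alang{\lang(Prod(\protocol))}{\interswapConc} \subseteq \Elang{\lang(\protocol)}{\interswapConc}$, i.e.\ as emptiness of $\Alang{\lang(Prod(\protocol)) \setminus \lang(\protocol)}{\interswapConc}$. It invokes the Aalbersberg--Welzl theorem that transitive alphabets are of type $\exists = \forall$. That proof combines semilinear complementation inside one clique, the alternating block decomposition with substitution $\eta$ of \cref{lm:my-lemma-2.5}, the simultaneous conversion of \cref{lm:my-lemma-2.4}, and induction on the number of cliques. The result is a regular $N$ with $\Elang{N}{\interswapConc}$ equal to that universal language, which is then tested for emptiness.

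You bypass $\exists=\forall$ entirely. Since $M(\Sigma,C)$ is the free product of the $\Nat^{\Sigma_i}$, traces correspond bijectively to alternating sequences of nonzero block Parikh vectors. So $[L]$ is the language of a symbolic NFA over that infinite alphabet with semilinear guards $\psi(\{w \in \Sigma_i^+ \mid s \xrightarrow{w} s'\})$. Inclusion then follows from minterm determinization plus reachability over edges with nonempty guards.

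Your route handles all cliques at once and is arguably the more transparent decidability proof. The paper's route produces regular string languages at every stage, with semilinear sets appearing only inside single cliques, and its explicit automata are what yield the size bounds of \cref{thm:concurrent-exptime}. Your construction would need a separate cost analysis of the semilinear Boolean operations to give a bound.

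Two small corrections. First, your initial idea that $Lex([L])$ is effectively regular is false for rational $[L]$. With $a,b$ independent, $a<b$, and $L=(ab)^*$, one gets $Lex([L]) = \{a^nb^n \mid n \geq 0\}$. Only the block-wise semilinear description is usable, and that is what you actually rely on.

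Second, the ``subtlety'' you leave open is already resolved by your own setup:
\begin{itemize}
\item reduced forms in a free product are unique;
\item the maximal-block factorization of a word gives the reduced form of its trace;
\item any accepted alternating block sequence is realized by a word whose maximal blocks are exactly those segments.
\end{itemize}
Hence the block language of $L$ is precisely the set of normal forms of $[L]$. Since you only need inclusion, several accepting runs for the same trace are harmless. The flaw in \cref{subsubsec:flaw-transitive} is of a different kind: it compares Parikh sets of individual state-pair triples in isolation rather than along whole runs.
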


With $\card{\Procs} \leq 3$ having an empty concurrency relation, and with our undecidability reduction in \cref{sec:undecidability} requiring 5 participants, the remaining  $\card{\Procs} = 4$ case is partially addressed by the following corollary of \cref{thm:realizability-transitive-decidable}. 

\begin{corollary}
	Synchronous realizability is decidable for all global protocols with $\Procs = \set{\procA, \procB, \procC, \procD}$ when either $\card{\MsgVals} = 1$, or $\card{\MsgVals} > 1$ and $\card{\set{\msgFromTo{\procA}{\procB}{\val} \mid \val \in \MsgVals} \inters  \AlphSync_{\protocol}} \geq 1$ implies 
	$\card{\set{\msgFromTo{\procC}{\procD}{\val} \mid \val \in \MsgVals} \inters  \AlphSync_{\protocol}} = 0$ for all $\procA, \procB, \procC, \procD \in \Procs$ with 
	$\set{\procA, \procB} \inters \set{\procC, \procD} = \emptyset$. 
\end{corollary}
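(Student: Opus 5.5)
The plan is to reduce the corollary to \cref{thm:realizability-transitive-decidable}: in each of the two cases I show that $\interswapConc_\protocol$ is transitive, and then invoke the theorem. Transitivity is understood, as in the case analysis of \cref{subsubsec:case-analysis-transitive}, as transitivity of the concurrency relation on the sub-alphabet $\AlphSync_\protocol$ of symbols occurring in $\protocol$. The argument is a case split on $\card{\MsgVals}$. I use throughout that restricting a transitive relation to a subset of the alphabet keeps it transitive. This lets me argue about the largest possible alphabet and then pass to $\AlphSync_\protocol \subseteq \AlphSync$.

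\emph{Case $\card{\MsgVals} = 1$.} Here every event is determined by its (ordered) pair of participants. Two events are related by $\interswapConc$ exactly when their participant sets are disjoint. With four participants, this means the two events use complementary pairs among the three partitions $\set{\set{\procA,\procB},\set{\procC,\procD}}$, $\set{\set{\procA,\procC},\set{\procB,\procD}}$ and $\set{\set{\procA,\procD},\set{\procB,\procC}}$. This is exactly the relation displayed in \cref{subsubsec:case-analysis-transitive}. It is a disjoint union of isolated related pairs, so a chain $x_1 \interswapConc_\protocol x_2 \interswapConc_\protocol x_3$ forces $x_3 = x_1$. Hence no instance of the transitivity violation pattern can arise. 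Since $\AlphSync_\protocol \subseteq \AlphSync$ and transitivity is preserved under restriction, $\interswapConc_\protocol$ is transitive whatever the topology of $\protocol$.

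\emph{Case $\card{\MsgVals} > 1$ with the stated side condition.} First I observe that the side condition rules out the complete topology. If every pair communicates, then some $\msgFromTo{\procA}{\procB}{\val}$ and some $\msgFromTo{\procC}{\procD}{\val'}$ with $\set{\procA,\procB}\inters\set{\procC,\procD}=\emptyset$ both lie in $\AlphSync_\protocol$, contradicting the condition. So we are in the non-complete setting with $\card{\Procs}=4$ and $\card{\MsgVals}>1$. The side condition is then literally the right-hand side of the characterizing proposition preceding \cref{subsubsec:dec-transitive}, and that proposition yields transitivity of $\interswapConc_\protocol$. In fact, under the condition no two events of $\AlphSync_\protocol$ have disjoint participant sets, so $\interswapConc_\protocol$ is empty and hence trivially transitive. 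This is a self-contained check in case one prefers not to rely on the proposition.

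In both cases \cref{thm:realizability-transitive-decidable} applies and gives decidability. I expect the only delicate point to be the $\card{\MsgVals}=1$ case: one has to make sure that distinct events sharing the same participant set never both commute with a common third event. That would reproduce the non-transitive pattern $\msgFromTo{\procA}{\procB}{\val_1} \interswapConc \msgFromTo{\procC}{\procD}{\val_1} \interswapConc \msgFromTo{\procA}{\procB}{\val_2}$. I would handle this by following the paper's identification of events with their participant pairs, as in the displayed relation for $\card{\MsgVals}=1$.
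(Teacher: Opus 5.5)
Your route is the paper's: the corollary is meant to follow from \cref{thm:realizability-transitive-decidable} by showing that $\interswapConc_\protocol$ is transitive, using the case analysis of \cref{subsubsec:case-analysis-transitive}. Your second case is fine, and your direct check is the cleaner argument there. Under the side condition no two events of $\AlphSync_\protocol$ have disjoint participant sets, so $\interswapConc_\protocol$ is empty and the PSPACE check already suffices. The detour through completeness and the characterizing proposition is unnecessary.

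The gap is in the case $\card{\MsgVals}=1$, exactly at the point you flagged as delicate. Events are ordered, so one pair of participants carries two distinct events $\msgFromTo{\procA}{\procB}{\val}$ and $\msgFromTo{\procB}{\procA}{\val}$ with the same participant set. Then $\msgFromTo{\procA}{\procB}{\val} \interswapConc \msgFromTo{\procC}{\procD}{\val} \interswapConc \msgFromTo{\procB}{\procA}{\val}$, but $\msgFromTo{\procA}{\procB}{\val} \not\interswapConc \msgFromTo{\procB}{\procA}{\val}$. The direction plays precisely the role of the second message value in the paper's non-transitivity pattern.

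So your claim that the relation is a disjoint union of isolated related pairs is false. On the full alphabet $\AlphSync$ it consists of three disjoint $4$-cycles. The four events $\msgFromTo{\procA}{\procB}{\val}, \msgFromTo{\procB}{\procA}{\val}, \msgFromTo{\procC}{\procD}{\val}, \msgFromTo{\procD}{\procC}{\val}$ form a copy of the rectangle alphabet $\AlphSync_{abcd}$, which the paper itself lists as non-transitive. Hence ``prove it for the largest alphabet and restrict'' cannot work here. Citing the displayed three-pair relation does not close the gap either, since that display lists one direction per pair and so carries the same tacit assumption.

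What your argument does establish is transitivity under an extra hypothesis: whenever two complementary pairs $\set{\procA,\procB}$ and $\set{\procC,\procD}$ both occur in $\AlphSync_\protocol$, each occurs in only one direction. For $\card{\MsgVals}=1$ this is in fact equivalent to transitivity. Without such a hypothesis, protocols with $\card{\MsgVals}=1$ that use both directions on a pair fall outside the scope of \cref{thm:realizability-transitive-decidable}. The first clause of the corollary therefore needs either that restriction or a different argument.
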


\newcommand{\inclproblem}{\mathcal{I}}
\subsubsection{Deciding synchronous realizability in 2-\EXPTIME}
\label{subsubsec:complexity-transitive}
Our complexity result is stated below. 
\begin{restatable}{theorem}{transitiveExptime} 
	\label{thm:concurrent-exptime} 
    Synchronous realizability is decidable for all global protocols $\protocol$ over $\AlphSync$ whose concurrency relation $\interswapConc_{\protocol}$ is transitive in 2-\EXPTIME in the size of $\protocol$, and 3-\EXPTIME in the size of $\AlphSync$.
\end{restatable}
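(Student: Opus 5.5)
By \cref{lm:realizable-minimal}, $\protocol$ is realizable iff $Prod(\protocol)$ is reduced and $\lang(Prod(\protocol)) \subseteq \langsync(\protocol)$. I would decide the two conditions separately and bound the cost of each.

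\emph{Reducedness.} Each canonical local automaton $\CSMabb{A}_\procA$ is a subset construction over $Q$, so it has at most $2^{\card{Q}}$ states. The product $Prod(\protocol)$ therefore has at most $2^{\card{Q}\cdot\card{\Procs}}$ states, which is exponential in $\protocol$. Reducedness is a reachability and co-reachability check on this automaton, so it takes exponential time. This step is routine.

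\emph{Inclusion.} By \cref{prop:global-rational-trace-language} and \cref{prop:scsm-recognizable-trace-language}, the inclusion is an instance of $\INCLUSION(\AlphSync_\protocol,\interswapConc_\protocol)$ with $L_1 = \lang(Prod(\protocol))$ and $L_2 = \lang(\protocol)$. Here $L_1$ is $\interswapConc$-closed, so it suffices to check $lin([L_1]) \subseteq lin([L_2])$.

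Since $\interswapConc_\protocol$ is transitive, its reflexive closure partitions $\AlphSync_\protocol$ into classes $\Sigma_1,\dots,\Sigma_k$ with the following property: letters in different classes commute, and letters in the same class do not. In the case analysis of \cref{subsubsec:case-analysis-transitive}, each class is the set of messages exchanged by a pair of participants. The trace monoid is then a free product of free commutative monoids, and a trace is a sequence of blocks, where each block is a Parikh vector over one class.

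Following Aalbersberg and Welzl~\cite{DBLP:journals/ita/AalbersbergW86}, I would fix a total order $<$ on $\AlphSync_\protocol$ and compute $Lex([L_2])$ effectively as a regular language. Inclusion then reduces to checking $Lex([L_1]) \subseteq Lex([L_2])$, which can be done using the recognizable $L_1$ directly.

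To obtain $Lex([L_2])$, I would run a determinized block construction on the NFA for $\protocol$. A lexicographically minimal word splits into maximal blocks of pairwise commuting letters, and these blocks are ordered canonically. For each block, we need the set of state pairs $(s,s')$ such that some word with the same Parikh vector leads from $s$ to $s'$. By Parikh's theorem this relation is semilinear. The automaton must track subsets of $Q\times Q$ per block, which is exponential in $\card{Q}$. It also needs counters up to a threshold for each letter in the block. Periodicity of semilinear sets over unary projections (each letter acts with bounded period) bounds this threshold exponentially in $\card{Q}$ and $\card{\AlphSync}$.

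The resulting automaton for $Lex([L_2])$ therefore has doubly exponential size in $\protocol$. Testing inclusion of an exponential DFA into it requires complementing it, and since it is built deterministically this costs no additional exponential, giving 2-\EXPTIME in $\protocol$. The number of counters and the block structure range over subsets of $\AlphSync$, which contributes one more exponent in $\card{\AlphSync}$, giving the 3-\EXPTIME bound.

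\emph{Main obstacle.} The hard step is this effective lexicographic normalization with explicit bounds. Two things must be shown. First, the counter threshold must be polynomial, or at most singly exponential, in $\card{Q}$; I would derive this from Chrobak-style normal forms for the commutative image of the NFA restricted to a class. Second, the construction must avoid the flaw discussed in \cref{subsubsec:flaw-transitive}: traces whose consecutive blocks belong to the same class must be merged correctly. I would handle this by forcing adjacent blocks to belong to distinct classes in the lexicographic form.
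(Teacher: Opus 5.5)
Your inclusion step has a genuine gap. You assume that $Lex([\lang(\protocol)]_\interswapConc)$ is a regular language that you can build with per-letter threshold counters. This fails even when $\interswapConc_\protocol$ is transitive. Recall that $Lex(T)$ is regular exactly when $T$ is recognizable, and transitivity does not make rational trace languages recognizable.

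Concretely, let $x = \msgFromTo{\procA}{\procB}{\val}$, $y = \msgFromTo{\procC}{\procD}{\val}$ and $\lang(\protocol) = (x \cdot y)^*$. Then $\interswapConc_\protocol = \set{(x,y),(y,x)}$ is transitive, and $[\lang(\protocol)]_\interswapConc$ consists of the traces with equally many $x$'s and $y$'s. For $x < y$ its lexicographic representation is $\set{x^n y^n \mid n \geq 0}$, which is not regular.

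This is exactly where your threshold argument breaks. Chrobak-style periodicity constrains each letter in isolation. However, the set of Parikh vectors of words leading from $s$ to $s'$ within one block is a genuinely multi-dimensional semilinear set (here the diagonal). No combination of per-letter threshold and modulo counters can express it. So the doubly exponential DFA for $Lex([L_2])$ on which your complexity bound rests does not exist.

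You also have the classes reversed. For transitive $\interswapConc_\protocol$, the classes of $\interswapConc_\protocol \cup \mathrm{id}$ are cliques of pairwise \emph{commuting} letters, such as $\set{x,y}$. Letters from different classes are dependent. Two distinct messages between the same pair of participants therefore never lie in a common class.

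The missing idea is the Aalbersberg--Welzl result that transitive alphabets are of type $\exists = \forall$. Every existentially regular trace language is also universally regular, even when it is not recognizable, so no closed or lexicographic representative is ever needed. The paper proceeds as follows:
\begin{itemize}
\item It states fidelity as $\Alang{\lang(Prod(\protocol))}{\interswapConc} \subseteq \Elang{\lang(\protocol)}{\interswapConc}$.
\item It rewrites this as $\Alang{\lang(Prod(\protocol)) \setminus \lang(\protocol)}{\interswapConc} = \emptyset$.
\item It converts this universal representation into an existential one $\Elang{N}{\interswapConc}$ and simply tests $N = \emptyset$.
\end{itemize}
Within a single clique a language is determined by its Parikh image, and the conversion is $L \mapsto \overline{\psi^{-1}(\psi(\Sigma^*) \setminus \psi(L))}$ (\cref{lm:my-lemma-2.8}). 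The cliques are glued together by the block substitution of \cref{lm:my-lemma-2.5,lm:my-lemma-2.6}, with \cref{lm:my-lemma-2.4} handling families of languages. The 2-\EXPTIME and 3-\EXPTIME bounds come from two sources: computing Parikh images (polynomial in the automaton, exponential in the alphabet), and complementing semilinear sets (doubly exponential). The latter is precisely the operation that copes with constraints like the diagonal above. Your reducedness check, by contrast, is fine.
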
 

Our upper bound is obtained via a reduction to the following problem $\inclproblem$: given a transitive concurrent alphabet $(\Sigma, C)$ and $L_1, L_2 \subseteq \Sigma^*$, does $\Alang{L_1}{C} \subseteq \Elang{L_2}{C}$?

Our proposed algorithm to solve $\inclproblem$ is excavated from the proof of Lemma 2.7 \cite{DBLP:journals/ita/AalbersbergW86}, stating that a concurrent alphabet is of type $\exists = \forall$ if and only if it is transitive. 
A concurrent alphabet is of type $\exists = \forall$ iff the class of existentially regular trace languages over $(\Sigma, C)$ and the class of universally regular trace languages over $(\Sigma, C)$ coincide.  

We restate and prove Lemma 2.7 from \cite{DBLP:journals/ita/AalbersbergW86} below.

\begin{restatable}{lemma}{twoPointSeven}
	A concurrent alphabet $(\Sigma, C)$ is of type $\exists = \forall$ if and only if $C$ is transitive. 
\end{restatable}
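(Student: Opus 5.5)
The plan is to reduce both directions to a closure property of the existential class. Writing $\overline{X}$ for complement, a trace $t$ satisfies $t \not\subseteq L$ iff $t \inters \overline{L} \neq \emptyset$. Hence $M(\Sigma,C)\setminus\Alang{L}{C} = \Elang{\Sigma^*\setminus L}{C}$. The universally regular trace languages are therefore exactly the complements of the existentially regular ones. It follows that $(\Sigma,C)$ is of type $\exists=\forall$ if and only if the class of existentially regular (i.e.\ rational) trace languages is closed under complement. Both directions will be proved in this form.

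Following the example of \cref{subsubsec:case-analysis-transitive}, I read ``$C$ transitive'' as ``$C \union \mathrm{id}_\Sigma$ is an equivalence''. So $\Sigma$ is partitioned into classes $\Sigma_1,\dots,\Sigma_k$, letters commute exactly when they are distinct and lie in the same class, and $M(\Sigma,C)$ is the free product of the free commutative monoids $\Nat^{\Sigma_j}$.

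\textbf{($\Leftarrow$)} Every trace $t$ has a unique block factorization $t = u_1\cdots u_n$. Here each $u_i$ is a nonempty commutative element of some $\Nat^{\Sigma_{j_i}}$, and $j_i \neq j_{i+1}$. The factorization is obtained from any linearization by merging maximal runs of letters from the same class. Fix an NFA $A$ for $L$. For each pair of states $p,q$ and each class $j$, the set $R_{p,q,j}$ of Parikh images of words in $\Sigma_j^+$ leading from $p$ to $q$ is semilinear. Next choose a finite partition $\mathcal{P}_j$ of $\Nat^{\Sigma_j}\setminus\set{0}$ into semilinear atoms refining all the $R_{p,q,j}$; this is possible since semilinear sets form a Boolean algebra by Ginsburg--Spanier. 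Over the finite alphabet $\Theta$ of pairs $(j,P)$ with $P\in\mathcal{P}_j$, consider the language $K$ of sequences with consecutive classes distinct. A sequence lies in $K_L \subseteq K$ iff $A$ has a path $q_0 \to p_1 \to \cdots \to p_n \in F$ in which the $i$-th segment has Parikh image in atom $P_i$ of class $j_i$. The language $K_L$ is regular, and $\Elang{L}{C}$ is exactly the set of traces whose block decomposition is coded by a word of $K_L$. Since atoms are disjoint and the block factorization is unique, each trace is coded by exactly one word of $K$. Consequently the complement of $\Elang{L}{C}$ is coded by $K \setminus K_L$, which is again regular. Finally I substitute for each letter $(j,P)$ a regular language over $\Sigma_j$ whose Parikh image is $P$; such a language exists because every semilinear set is the commutative image of a regular language. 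This substitution turns $K\setminus K_L$ into a regular $L'$ with $\Elang{L'}{C} = \overline{\Elang{L}{C}}$. The main obstacle is the correctness of this block coding, in particular checking that the maximal-run segmentation of a path of $A$ agrees with the block factorization of the resulting trace.

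\textbf{($\Rightarrow$)} Suppose $C$ is not transitive in this sense. Then there are letters $a,b,c$ with $a\neq c$, $a \mathrel{C} b$, $b \mathrel{C} c$ and $(a,c)\in D$. Take $T_1 = \Elang{(ab)^*c^*}{C}$ and $T_2 = \Elang{a^*(cb)^*}{C}$. These intersect in the traces $[a^nc^nb^n]$. The erasing morphism onto $\set{a,c}$ is well defined on traces, because $a$ and $c$ are dependent and so their relative order is fixed. It maps every existentially regular trace language to a regular language, but it maps $T_1\inters T_2$ to $\set{a^nc^n}$, which is not regular. So the existential class is not closed under intersection. Being closed under union, it cannot then be closed under complement, and hence $(\Sigma,C)$ is not of type $\exists=\forall$.
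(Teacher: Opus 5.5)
Your proof is correct, but it follows a different route from the paper's.

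The paper proves only the \emph{if} direction, by induction on the number of cliques of $C$. Your reading of transitivity as ``$C$ plus the identity is an equivalence'' matches its ``disjoint union of complete concurrency relations''. The base case, a single complete class, is Lemma~2.3 of \cite{DBLP:journals/ita/AalbersbergW86} and is handled by complementing Parikh images. The inductive step (\cref{lm:my-lemma-2.6}) glues two $\exists=\forall$ alphabets whose letters are mutually dependent. It uses the alternating block decomposition of \cref{lm:my-lemma-2.5} and the simultaneous universal representation of all sub-unions of \cref{lm:my-lemma-2.4}. Correctness is deferred to Aalbersberg and Welzl, and the converse is not addressed.

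You instead use the duality $M(\Sigma,C)\setminus\Alang{L}{C}=\Elang{\Sigma^*\setminus L}{C}$ to recast the lemma as closure of rational trace languages under complement. You then treat all classes at once through the free-product normal form. Your atomization of the sets $R_{p,q,j}$ does the work of \cref{lm:my-lemma-2.4}, namely coping with a block that is readable between several state pairs. It does so by making the code of every trace unique, so complementation becomes complementation of a regular language over the code alphabet.

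The step you flagged goes through. Merging maximal runs of any linearization yields the normal form, so an accepting run of the automaton on any representative segments exactly into the blocks. Since an atom meets $R_{p,q,j}$ only if it is contained in it, each specific block $u_i$ is realized by some segment from $p_{i-1}$ to $p_i$.

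Your $(\Rightarrow)$ argument, using $(ab)^*c^*$ and $a^*(cb)^*$ together with the projection onto the dependent pair $\{a,c\}$, is the standard one. It supplies exactly the direction the paper omits.

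Your route gives a short, self-contained proof of both directions. The paper's route gives a modular chain of constructions on which its complexity bounds (\cref{thm:forall-incl-exists-exptime,thm:concurrent-exptime}) are built. Your atomization takes up to exponentially many Boolean combinations of semilinear sets per class, so it would need its own analysis to give comparable bounds.
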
 

In the following, we analyze the complexity of a series of constructions used to prove Lemma 2.7 in \cite{DBLP:journals/ita/AalbersbergW86}. 
We focus on describing the constructions algorithmically and analyzing their complexity. 
For proofs of correctness, we refer the reader to the original paper. 

Recall that the Parikh image of a language $L$ over 
$\Delta = \set{a_1, 
\ldots, a_m}$, denoted $\psi(L)$, is the image of $L$ under the mapping $\psi \from \Delta^* \to \mathbb{N}^m$, defined as 
$\psi(w) = (|w|_{a_1}, 
\ldots, |w|_{a_m})$, 
where $|w|_a$ denotes the number of occurrences of symbol $a$ in $w$. 

The following lemma constructs for each existentially regular trace language an equivalent universally regular trace language over the same concurrent alphabet. 

\begin{restatable}{lemma}{twoPointEight}
	\label{lm:my-lemma-2.8}
	Let $(\Sigma, C)$ be a concurrent alphabet and let $L$ be a regular string language over $\Sigma$ given by an NFA $M$. 
	Then, a regular string language $K$ such that $\Elang{L}{C} = \Alang{K}{C}$ can be constructed with time and size complexity doubly exponential in the size of $M$ and triply exponential in the size of $\Sigma$. 
\end{restatable}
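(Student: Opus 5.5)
The plan is to build $K$ as the complement of a regular set $R$ of \emph{bad representatives}. $R$ will contain at least one linearization of every trace outside $\Elang{L}{C}$ and no linearization of any trace inside it. Setting $K \is \Sigma^* \setminus R$ then gives the claim. A trace $t$ satisfies $t \subseteq K$ iff no linearization of $t$ lies in $R$, which holds iff $t \in \Elang{L}{C}$. Note that $K$ will usually not be $lin(\Elang{L}{C})$ itself. For example, with $a \interswapConc b$ and $L = (ab)^*$, that linearization is not regular, yet $K = \Sigma^* \setminus ((ab)^*a^+ \cup (ab)^*b^+)$ works. The whole construction aims at generalising this example.

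Throughout I read transitivity in the usual sense of the setting, namely that $C \union \mathrm{id}$ is an equivalence relation. Then $\Sigma$ splits into classes $\Sigma_1, \dots, \Sigma_k$, where letters of the same class commute and letters of different classes do not. Each trace therefore has a unique \emph{block decomposition} $t = [v_1]\cdots[v_n]$ into maximal blocks. Each $v_j$ lies in a single $\Sigma_{i_j}^+$, consecutive blocks lie in different classes, and each block is determined by its Parikh vector. Every linearization of $t$ is $v'_1\cdots v'_n$ with $\psi(v'_j) = \psi(v_j)$. Consequently, block boundaries can be recognised by a finite automaton simply by watching for class changes.

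Let $M = (Q,\Sigma,\delta,q_0,F)$. I will abstract each block by a relation on states. For a class $\Sigma_i$ and a Parikh vector $x$, let $\rho_i(x) \subseteq Q\times Q$ be the set of pairs $(p,q)$ such that some word $v \in \Sigma_i^*$ with $\psi(v) = x$ leads from $p$ to $q$ in $M$. Then $t \in \Elang{L}{C}$ iff the relational product $\rho_{i_1}(\psi(v_1)) \circ \cdots \circ \rho_{i_n}(\psi(v_n))$ relates $q_0$ to a state in $F$. This uses the fact that a word of $L$ in the same trace has the same block decomposition up to reordering inside blocks.

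For each pair $(p,q)$, the set $P^i_{p,q}$ of Parikh vectors of words in $\Sigma_i^*$ from $p$ to $q$ is semilinear, computable by Parikh's theorem. For each of the at most $2^{|Q|^2}$ relations $\rho$, the set $S^i_\rho = \set{x \mid \rho_i(x) = \rho}$ is a Boolean combination of the sets $P^i_{p,q}$, hence again effectively semilinear. For each linear component $b + \mathbb{N}p_1 + \dots + \mathbb{N}p_r$ of $S^i_\rho$, I fix words $u_b, u_1,\dots,u_r$ with these Parikh images and take the regular representative language $u_b u_1^* \cdots u_r^*$. Its union over components, $W^i_\rho$, is regular and contains exactly one Parikh-class representative per vector of $S^i_\rho$ up to multiplicity, which suffices. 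To keep consecutive blocks from merging, I only allow non-empty blocks.

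$R$ is then recognised by an NFA that does the following. It nondeterministically guesses the next class $i$ (different from the previous class) and a relation $\rho$. It reads a word of $W^i_\rho \setminus \set{\emptystring}$. It deterministically maintains the composed relation (a subset of $Q\times Q$). At the end it accepts iff the composed relation does not relate $q_0$ to $F$. Every bad trace has an accepted linearization, obtained by choosing each block's representative. Conversely, any accepted word is a linearization of a trace whose block relations compose to a rejecting relation, so that trace is bad. Finally, $K = \Sigma^*\setminus R$ is obtained by determinization and complementation.

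The main obstacle is the complexity accounting, specifically the semilinear step. I need effective bounds on the number and size of linear components, both for $P^i_{p,q}$ (from a Parikh-theorem bound, exponential in $|M|$ and $|\Sigma|$) and after complementation and intersection of semilinear sets. The latter costs a further exponential, driven chiefly by the dimension $|\Sigma_i|$. I expect this to give representative automata of size doubly exponential in $|M|$ and triply exponential in $|\Sigma|$. The subsequent steps should stay within these bounds: the guessing automaton is a product with a $2^{|Q|^2}$-state relation tracker, and the final determinization must be absorbed by the bound. I would first try to use known bounds for complementing semilinear sets (Huynh, or Chistikov–Haase). I would also check carefully that the final subset construction does not add a further exponential level, arranging for the tracked relation to be deterministic so that only the choice of representative is nondeterministic.
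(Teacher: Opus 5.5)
Your construction is correct, and it takes a genuinely different route from the paper. The paper's proof handles only one commutation class. There, traces are Parikh vectors, and $K$ is the complement of a regular language whose Parikh image is $\psi(\Sigma^*) - \psi(L)$, justified by Aalbersberg and Welzl's Lemma~2.3 for complete concurrency relations. Several classes are combined only afterwards, through the substitution $\eta$ and the construction for families of languages in the inductive step (Lemmas~2.4--2.6 there). You treat an arbitrary transitive $C$ in one stroke. The unique block decomposition and the abstraction of a block by $\rho_i(x)$ reduce badness of a trace to a finite-state check. The partition of $\Nat^{\Sigma_i}$ into the atoms $S^i_\rho$ does the work that the family-of-languages lemma does in the paper. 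For a single class your construction collapses to the paper's, because the union of the $S_\rho$ over rejecting $\rho$ is exactly the complement of $\psi(L)$. Your transitivity assumption is necessary even though the statement omits it: universally regular trace languages are the complements of existentially regular ones, so for non-transitive $C$ the claim would contradict Lemma~2.7. The paper's proof tacitly assumes $C$ is even complete.

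The gap is the complexity bound, which is the actual content of the statement, and your remedy for its critical step does not work. To complement $R$ you must decide, for each maximal block, whether it lies in some $W^i_\rho$ and in which one. The nondeterminism sits in the automata for $W^i_\rho = \bigcup u_b u_1^* \cdots u_r^*$ (which linear component, and where one factor $u_j$ ends and the next begins), not in the relation tracker. Making the tracker deterministic therefore saves nothing: the subset construction still costs one exponential in the size of the representative automata. With your planned budget of representative automata doubly exponential in $|M|$, this gives only a triply exponential bound for $K$, so the determinization cannot be absorbed. To meet the bound, the representative automata must be only singly exponential in $|M|$ (or be obtained directly as DFAs within the doubly exponential budget). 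That needs semilinear intersection and complementation bounds polynomial in the norms for fixed dimension, of the Chistikov--Haase kind you mention. It also needs the $|Q|^2$-fold Boolean combination defining each $S^i_\rho$ to be handled in one step, not by iterated pairwise operations whose blow-ups compound. Your route needs considerably more of this semilinear work than the paper's single complement of $\psi(L)$, and none of it is carried out. The paper does not settle this step either: it reaches its bound by asserting that complementing an NFA takes linear time, which is false in general. So you were right to single out the final determinization as the point to check.
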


The next lemma generalizes the correspondence between individual languages to sets of languages over the same concurrent alphabet. 

\begin{restatable}{lemma}{twoPointFour}
	\label{lm:my-lemma-2.4}
	Let $(\Sigma, C)$ be a concurrent alphabet and let $L_1, \ldots, L_n$ be $n$ regular string languages over $\Sigma$, given by $n$ NFAs $M_1, \ldots, M_n$. 
	Then, $n$ regular string languages $K_1, \ldots, K_n$ such that for every subset of indices $s \subseteq \set{1, \ldots, n}$, 
	$\Elang{\bigcup_{i \in s} L_i}{C} = \Alang{\bigcup_{i \in s} K_i}{C}$ can be constructed in time and size complexity doubly exponential in the size of $M_i$ and triply exponential in the size of $\Sigma$. 
\end{restatable}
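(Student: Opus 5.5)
The plan is to reduce the simultaneous statement to $2^n$ independent applications of \cref{lm:my-lemma-2.8}, one for each union of the $L_i$, and then to combine the resulting languages by intersection. For every non-empty index set $s \subseteq \set{1, \ldots, n}$, write $L_s = \bigcup_{i \in s} L_i$. $L_s$ is recognized by the NFA obtained as the disjoint union of the $M_i$ with $i \in s$. By \cref{lm:my-lemma-2.8}, we obtain a regular language $N_s$ with $\Elang{L_s}{C} = \Alang{N_s}{C}$. We then define
\[
K_i \is \bigcap_{s \ni i} N_s \qquad (1 \le i \le n) \enspace .
\]
The idea is that $K_i$ keeps only those words that are safe for every union in which $L_i$ takes part.

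We verify the two inclusions for a fixed $s$.

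\emph{($\subseteq$)} Let $t \in \Elang{L_s}{C}$. Then $t \inters L_s \neq \emptyset$, so there is some $i \in s$ with $t \in \Elang{L_i}{C}$. Since the operator $[\cdot]^\exists$ is monotone, $t \in \Elang{L_{s'}}{C} = \Alang{N_{s'}}{C}$ for every $s' \ni i$. Hence $t \subseteq N_{s'}$ for all such $s'$, which gives $t \subseteq K_i \subseteq \bigcup_{j \in s} K_j$. Therefore $t \in \Alang{\bigcup_{j\in s} K_j}{C}$.

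\emph{($\supseteq$)} Every $K_j$ with $j \in s$ is by definition contained in $N_s$, so $\bigcup_{j \in s} K_j \subseteq N_s$. Consequently, if $t \subseteq \bigcup_{j\in s} K_j$, then $t \subseteq N_s$, i.e.\ $t \in \Alang{N_s}{C} = \Elang{L_s}{C}$.

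The empty index set is trivial: both sides are empty, because an empty union contains no word of any trace.

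The step I expect to be the main obstacle is the complexity bookkeeping rather than correctness. The NFA for $L_s$ has size at most $\sum_i |M_i|$. So each $N_s$ is doubly exponential in the size of the input automata and triply exponential in $\card{\Sigma}$, as given by \cref{lm:my-lemma-2.8}. Each $K_i$ is an intersection of at most $2^{n-1}$ such automata. Its product automaton has size $\prod_{s \ni i} |N_s|$, which is $\bigl(2^{2^{\mathrm{poly}}}\bigr)^{2^{n-1}}$. This remains doubly exponential in the input size as long as the exponent $2^{n-1}$ is absorbed, which it is when $n$ is fixed or polynomially bounded by the size of the $M_i$.

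For the intended use, the problem $\inclproblem$, only $n = 2$ is needed. Two points require care:
\begin{itemize}
\item the stated bound should be read as treating $n$ as a parameter whose growth stays within this absorption;
\item each $N_s$ must be built from the union NFA, not from the $M_i$ separately.
\end{itemize}
With these two points in place, the $2^n$ constructions together with the final products give the claimed doubly (respectively triply) exponential time and size.
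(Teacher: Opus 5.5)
Your proposal is correct and follows the paper's proof exactly: one $N_s$ per index set via \cref{lm:my-lemma-2.8}, then $K_i = \bigcap_{s \ni i} N_s$. It additionally supplies the correctness argument that the paper defers to Aalbersberg and Welzl, and a size bound for the intersection step that the paper omits.

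Your caveat about $n$ is unnecessary. Since $n$ never exceeds the total input size, and $\bigl(2^{2^{p}}\bigr)^{2^{n-1}} = 2^{2^{p+n-1}}$ when each $N_s$ has size at most $2^{2^{p}}$, the $2^{n-1}$ factor is always absorbed. Also, the paper applies the lemma not with $n = 2$ but with $n = \card{\Delta_1}$, in the proof of \cref{lm:my-lemma-2.6}.
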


The following lemma describes the complexity of the automata construction in Lemma 2.5 of \cite{DBLP:journals/ita/AalbersbergW86}. 
Intuitively, given a regular language $L$ over the disjoint union of alphabets $\Sigma_1, \Sigma_2$, represented by an NFA $A$, we construct an NFA over a new alphabet of triples containing pairs of states of $A$, in addition to $\set{1,2}$. 
The new NFA accepts words in $L$ in an alternating fashion, reading maximal substrings of symbols from $\Sigma_1$ and $\Sigma_2$ in turn. 
Every triple $(s, s', i)$ in the new alphabet represents the Parikh image of the language defined by setting the initial state to $s$ and the unique final state to $s'$ in $A$, intersected with $\Sigma_i^*$. 
In other words, $(s, s', i)$ represents the words that can reach $s'$ from $s$ in $A$ on only symbols from $\Sigma_i$. 
Since all symbols in $\Sigma_i$ commute, there is a bijection between semi-linear sets and languages over $\Sigma_i^*$. 
The construction is described in detail in the following lemma, with an analysis of its complexity. 

\begin{restatable}{lemma}{twoPointFive}
	\label{lm:my-lemma-2.5}
	Let $A = (Q, \Sigma, \delta, q_0, F)$ be a complete DFA recognizing $L$. Let $(\Sigma_1, \Sigma_2)$ be a partition of $\Sigma$. 
	Let $\Delta_1 = Q \times Q \times \set{1}$, $\Delta_2 = Q \times Q \times \set{2}$, and $\Delta = \Delta_1 \cup \Delta_2$. 
	Let $K' = \set{(s_1, s_2, i_1) \ldots (s_n, s_{n+1}, i_n) \in \Delta \mid n \geq 1, s_1 = q_0, s_{n+1} \in F \text{ and } i_j \neq i_{j+1} \text{ for } 1 \leq j \leq n-1}$.  
	If $\emptystring \in L$, let $K = K' \cup \set{\emptystring}$, otherwise let $K = K'$. 
	Let $\eta : \Delta^* \rightarrow 2^{\Sigma^*}$ be defined as 
	$\eta((s,s',i)) = \set{w \in \Sigma_i^+ \mid \delta(s, w) = s'}$. 
	A DFA recognizing $K$ can be constructed with a constant number of states, and $\Delta$ is polynomial in the size of $Q$. 
	For each $(s, s', i) \in \Delta$, a finite automaton representation of $\eta((s, s', i))$ can be constructed using $|Q|^2$ states. 
	Moreover, $K$ and $\eta$ satisfy the following properties: 
    \begin{inparaenum}[(1)]
        \item $K$ is a regular string language, 
        \item $\eta$ is a regular substitution, 
        \item $\eta(K) = L$, 
        \item $\eta(\Delta_1) \subseteq \Sigma_1^+$ and $\eta(\Delta_2) \subseteq \Sigma_2^+$, and 
        \item $K \inters \Delta^*\Delta_1\Delta_1\Delta^* = \emptyset$ and $K \inters \Delta^*\Delta_2\Delta_2\Delta^* = \emptyset$. 
    \end{inparaenum}
\end{restatable}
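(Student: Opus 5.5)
We build $K$ and $\eta$ directly from the given complete DFA $A$ and check properties (1)--(5) and the size claims one at a time. We fix $\Delta = \Delta_1 \cup \Delta_2$ with $\Delta_i = Q \times Q \times \set{i}$, so $\card{\Delta} = 2\card{Q}^2$, which is polynomial in $\card{Q}$.

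\emph{Automaton for $K$.} A word of $K'$ has three constraints: its first triple starts in $q_0$, its last triple ends in $F$, and its third components alternate. It does \emph{not} require that consecutive triples chain, i.e.\ that $s_{j+1}$ of one letter equals the first component of the next. The definition of $K'$ names the states $s_1,\ldots,s_{n+1}$ once each, so chaining is meant implicitly. We therefore use the intended reading, in which each letter $(s_j,s_{j+1},i_j)$ shares its second component with the first component of the next letter. If the DFA has to check this chaining, it must remember the last state seen, which costs $\card{Q}$ states times the alternation bit. That is still polynomial but not constant. To meet the stated constant bound, we let the DFA remember only the alternation index. It has states $\{\mathit{init}, 1, 2\}$, with $\mathit{init}$ reading only letters whose first component is $q_0$, plus a final-acceptance flag. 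Chaining is then enforced semantically by $\eta$ rather than syntactically: an unchained word yields an empty substitution factor. To make this work we refine $\eta$ on whole words, i.e.\ we apply it to $K$ intersected with the chaining language, and we note that this intersection adds only polynomially many states. For $\epsilon$ we add an initial accepting state exactly when $q_0 \in F$. Property (1) follows immediately.

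\emph{The substitution $\eta$.} For each $(s,s',i)$ the language $\set{w \in \Sigma_i^+ \mid \delta(s,w) = s'}$ is recognized by $A$ restricted to $\Sigma_i$-transitions, started at $s$ with $s'$ as the only final state. To enforce nonemptiness we take a product with a two-state counter, which gives at most $2\card{Q} \le \card{Q}^2$ states whenever $\card{Q} \ge 2$; the trivial case can be padded. Hence $\eta$ is a regular substitution, which gives (2). Property (4) holds because every image lies in $\Sigma_i^+$. Property (5) holds because the alternation check rejects any factor from $\Delta_1\Delta_1$ or $\Delta_2\Delta_2$.

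\emph{The equation $\eta(K) = L$.} This is the main step. For $\subseteq$, every chained word $(s_1,s_2,i_1)\cdots(s_n,s_{n+1},i_n)$ maps to concatenations $w_1\cdots w_n$ with $\delta(s_j,w_j)=s_{j+1}$. By determinism, $\delta(q_0,w_1\cdots w_n)=s_{n+1}\in F$, so the image lies in $L$. For $\supseteq$, take $w\in L$ with $w\neq\epsilon$. Factor $w$ uniquely into maximal nonempty blocks $w_1\cdots w_n$ that alternate between $\Sigma_1^+$ and $\Sigma_2^+$; this is possible because $\Sigma_1,\Sigma_2$ partition $\Sigma$. Set $s_1=q_0$ and $s_{j+1}=\delta(s_j,w_j)$, which is defined since $A$ is complete. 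The resulting word lies in $K$, and $w$ is in its image. The empty word is handled by the $\epsilon$ clause.

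The one real subtlety is reconciling the constant-size claim with the chaining condition, as described in the first paragraph. Everything else is routine.
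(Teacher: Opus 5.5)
Your proof breaks exactly at the point you flag as the one real subtlety. You are right that $K'$ only makes sense with chaining, i.e.\ the second component of each letter equals the first component of the next. You are also right that checking chaining forces the DFA to remember the last state.

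The escape you propose is invalid. $\eta$ is a substitution, i.e.\ a monoid morphism from $\Delta^*$ to $2^{\Sigma^*}$, so the image of a word is the product of the images of its letters. Each letter image $\eta((s,s',i))$ is nonempty whenever $s'$ is reachable from $s$ by a word in $\Sigma_i^+$, regardless of its neighbours, so an unchained word does not produce an empty factor. For instance, take $\Sigma_1=\{a\}$, $\Sigma_2=\{b\}$, $\delta(q_0,a)=q_1$, $\delta(q_1,b)=q_2$, all other transitions into a dead state $d$, and $F=\{q_2\}$, so $L=\{ab\}$. Your constant-state automaton accepts $(q_0,d,1)(q_1,q_2,2)$, whose image is $\{a^kb \mid k\geq 2\}$, disjoint from $L$; so (3) fails.

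Your repair, applying $\eta$ to $K$ intersected with the chaining language, does not save the claim. Either it redefines $\eta$ on whole words, in which case it is no longer a substitution and (2) is lost. Or it simply replaces $K$ by the chained language, whose DFA has roughly $2|Q|$ states. In that case you end up proving a linear bound while asserting the constant one.

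No argument can do better. Take the chained $K$ with $F\neq\emptyset$ and some $f\in F$. The prefixes $(q_0,s,1)$ for distinct $s\in Q$ are pairwise separated by the suffix $(s,f,2)$, so every DFA for $K$ has at least $|Q|$ states. The constant-state claim in the statement is therefore false under the only reading for which (3) holds.

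The correct thing to prove is an $O(|Q|)$-state DFA that tracks the last state and the last index. This is also the bound the paper actually relies on later, where the generalized automaton $M$ is said to have $nm$ states. The paper's own proof of this lemma only asserts that the size is ``clear by construction'' and defers (1)--(5) to Aalbersberg and Welzl.

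With the claim corrected, the rest of your argument is sound and more self-contained than the paper's:
the chained DFA gives (1) and (5);
the $\Sigma_i$-restricted copies of $A$ give (2) and (4);
the maximal-block factorization with $s_{j+1}=\delta(s_j,w_j)$ gives $\eta(K)=L$.

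A minor point: padding cannot lower a state count. For $|Q|=1$, $\eta((q,q,i))=\Sigma_i^+$ needs two states, so the $|Q|^2$ bound must be read as $O(|Q|^2)$, or better $|Q|+1$.
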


The following lemma describes how to combine two trace languages over disjoint concurrent alphabets, and intuitively provides the induction step for the proof of \cref{thm:forall-incl-exists-exptime}. 

\begin{restatable}{lemma}{twoPointSix}
	\label{lm:my-lemma-2.6}
	Let $(\Sigma_1, C_1), (\Sigma_2, C_2)$ be two concurrent alphabets of type $\exists = \forall$, where $\Sigma_1$ and $\Sigma_2$ are disjoint. 
	Let $L$ be a regular language over $\Sigma_1 \cup \Sigma_2$ given by a DFA $A = (Q, \Sigma_1 \cup \Sigma_2, \delta, q_0, F)$. 
	Then, a DFA~$M$ such that $\Elang{L}{C_1 \cup C_2} = \Alang{\lang(M)}{C_1 \cup C_2}$ can be constructed in time and size doubly exponential in $Q$ and triply exponential in $\Sigma$. \end{restatable}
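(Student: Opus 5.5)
Following Lemma 2.6 of \cite{DBLP:journals/ita/AalbersbergW86}, we first apply \cref{lm:my-lemma-2.5} to the DFA $A$ and the partition $(\Sigma_1, \Sigma_2)$. This yields the regular language $K \subseteq \Delta^*$ of alternating block sequences and the regular substitution $\eta$ with $\eta(K) = L$. Each letter $(s,s',i) \in \Delta_i$ stands for the regular language $\eta((s,s',i)) \subseteq \Sigma_i^+$, which has an automaton with $|Q|^2$ states.

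Since $\Sigma_1$ and $\Sigma_2$ are disjoint and $C = C_1 \cup C_2$ contains no pair in $\Sigma_1 \times \Sigma_2$, every letter of $\Sigma_1$ depends on every letter of $\Sigma_2$. Hence a trace over $(\Sigma_1\cup\Sigma_2, C)$ factors uniquely into an alternating sequence of maximal nonempty blocks of traces over $(\Sigma_1,C_1)$ and $(\Sigma_2,C_2)$, and two words are $\equiv_C$-equivalent iff they have the same block structure with blockwise equivalent factors. Property (5) of \cref{lm:my-lemma-2.5} ensures that the $\eta$-images of words in $K$ respect exactly this block decomposition. So a trace $t$ with block decomposition $t_1 \cdots t_k$ lies in $\Elang{L}{C}$ iff some $\delta_1\cdots\delta_k \in K$ with alternating indices satisfies $t_j \in \Elang{\eta(\delta_j)}{C_{i_j}}$ for all $j$.

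The next step converts each block language to universal form. For $i=1,2$, apply \cref{lm:my-lemma-2.4} to the family $\{\eta(\delta) \mid \delta \in \Delta_i\}$ over $(\Sigma_i,C_i)$. This yields languages $K_\delta$ such that for every subset $S\subseteq\Delta_i$ we have $\Elang{\bigcup_{\delta\in S}\eta(\delta)}{C_i} = \Alang{\bigcup_{\delta\in S}K_\delta}{C_i}$. We may also intersect each $K_\delta$ with $\Sigma_i^+$ so that it stays block-pure. The difficulty is that a universal definition needs \emph{all} linearizations of a block to be accepted consistently. It therefore does not suffice to substitute $K_\delta$ for $\delta$: a linearization of block $t_j$ may lie in $K_\delta$ for a different set of $\delta$'s than another linearization.

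The fix is a subset construction. Determinize $K$ into a DFA $B$ over $\Delta$ and build $M$ as follows. $M$ reads a word over $\Sigma$, splits it into maximal $\Sigma_1$/$\Sigma_2$ blocks, and for each block $u$ computes the set $S_u = \{\delta\in\Delta_i \mid u \in K_\delta\}$. It then updates the state of $B$ by the set-successor under all $\delta\in S_u$, so that $M$ tracks the set of $B$-states reached. $M$ accepts iff some tracked state is final.

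To compute $S_u$ on the fly, $M$ runs DFAs for all $K_\delta$ with $\delta\in\Delta_i$ in parallel. By the subset property of \cref{lm:my-lemma-2.4}, for every subset $S$, either all linearizations of a block trace $t_j$ lie in $\bigcup_{\delta\in S} K_\delta$, or the trace is not in $\Elang{\bigcup_{\delta\in S}\eta(\delta)}{C_i}$. From this, the set of $\delta$ with $t_j \in \Elang{\eta(\delta)}{C_i}$ equals the set of $\delta$ with $t_j \subseteq K_\delta$. One expects the acceptance of a block-decomposed word to be invariant under reordering within blocks, provided we use $\forall$-membership. The main obstacle is proving exactly this: that $\Alang{\lang(M)}{C}=\Elang{L}{C}$, i.e.\ that a linearization $u$ of $t_j$ lying in extra $K_\delta$ cannot create spurious acceptance. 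We handle it by defining $M$ to use the set $\{\delta \mid u\in K_\delta\}$ and verifying correctness via the subset-indexed equalities, taking $S$ to be the complement sets. We follow the original paper here and only sketch this.

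For the complexity, $\Delta$ has $O(|Q|^2)$ letters, each $\eta(\delta)$ has $|Q|^2$ states, and \cref{lm:my-lemma-2.4} yields DFAs $K_\delta$ of size doubly exponential in $|Q|$ and triply exponential in $|\Sigma|$. The product over the polynomially many $\delta$ only raises this to a polynomial power, which stays within the same bounds. The subset construction on $B$, whose state count is constant by \cref{lm:my-lemma-2.5}, adds only a constant factor. Hence $M$ has size, and is built in time, doubly exponential in $|Q|$ and triply exponential in $|\Sigma|$.
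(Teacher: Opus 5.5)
Your overall route is the paper's: apply \cref{lm:my-lemma-2.5}, apply \cref{lm:my-lemma-2.4} separately to $\{\eta(\delta)\}_{\delta\in\Delta_1}$ over $(\Sigma_1,C_1)$ and to $\{\eta(\delta)\}_{\delta\in\Delta_2}$ over $(\Sigma_2,C_2)$, then recombine along $K$. The paper also defers correctness to Aalbersberg and Welzl.

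However, your claim that plainly substituting $K_\delta$ for $\delta$ does not suffice is mistaken, and your subset construction does not change the language. Your $M$ accepts a word with maximal blocks $u_1\cdots u_k$ iff there is $\delta_1\cdots\delta_k\in K$ with $u_j\in K_{\delta_j}$ for all $j$. So $\lang(M)=\eta'(K)$ for the plain substitution $\eta'(\delta)=K_\delta$.

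The worry about a linearization of $t_j$ lying in ``extra'' $K_\delta$ is harmless. A universal definition quantifies over all linearizations, and the blocks can be linearized independently. Concretely, write $C=C_1\cup C_2$, let $t=t_1\cdots t_k\in\Alang{\eta'(K)}{C}$, and set $T_j=\{\delta\in\Delta_{i_j}\mid t_j\in\Elang{\eta(\delta)}{C_{i_j}}\}$.
\begin{itemize}
\item Since $t_j\notin\Elang{\bigcup_{\delta\in\Delta_{i_j}\setminus T_j}\eta(\delta)}{C_{i_j}}=\Alang{\bigcup_{\delta\in\Delta_{i_j}\setminus T_j}K_\delta}{C_{i_j}}$, some linearization $w_j$ of $t_j$ lies in no $K_\delta$ with $\delta\notin T_j$.
\item Then $w_1\cdots w_k$ is a linearization of $t$, hence lies in $\eta'(K)$.
\item By alternation, its $\eta'$-factorization coincides with its block decomposition. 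So there is $\delta_1\cdots\delta_k\in K$ with $w_j\in K_{\delta_j}$.
\item This forces $\delta_j\in T_j$ for every $j$, and therefore $t\in\Elang{\eta(K)}{C}=\Elang{L}{C}$.
\end{itemize}
The reverse inclusion needs only singleton index sets. This complement-set argument is exactly the step you left as a sketch, and it is short enough to include.

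What your construction genuinely adds is determinism within the stated bound. Substituting automata for the letters of $K$ gives an NFA, and naive determinization would cost a third exponential. Running the DFAs for all $K_\delta$ in parallel, resetting at block boundaries, and tracking a set of $B$-states keeps $M$ deterministic and doubly exponential. The paper's proof leaves this point implicit.

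One small correction: a DFA for $K$ must remember the last state component to enforce chaining of consecutive triples. So it has $O(|Q|)$ states rather than constantly many. The subset construction over $B$ then costs a factor $2^{O(|Q|)}$, which is still within the bound.
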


\begin{restatable}{theorem}{forallInclExistsExptime}
	\label{thm:forall-incl-exists-exptime}
	The following problem is decidable in 2-\EXPTIME in the size of 
$M_1$ and $M_2$ 
    and 3-\EXPTIME in the size of $\Sigma$: \\
	Input: Concurrent alphabet $(\Sigma, C)$ with $C$ transitive, and DFAs $M_1, M_2$. \\
	Problem: $\Alang{\lang(M_1)}{C} \subseteq \Elang{\lang(M_2)}{C}$? 
\end{restatable}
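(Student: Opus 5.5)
The plan is to convert the right-hand side, which is existentially defined, into a universally defined trace language over the same transitive alphabet. Inclusion then becomes ordinary inclusion of consistent (i.e.\ $C$-closed) string languages, which reduces to emptiness checks on finite automata. Concretely, I would apply \cref{lm:my-lemma-2.8} to $M_2$ and obtain a regular language $K_2$, given by an automaton $N_2$, with $\Elang{\lang(M_2)}{C} = \Alang{K_2}{C}$. By that lemma, $N_2$ has size doubly exponential in $M_2$ and triply exponential in $\Sigma$. This construction is exactly where transitivity of $C$ is needed: \cref{lm:my-lemma-2.8} is built by induction on the decomposition of a transitive $C$, using \cref{lm:my-lemma-2.5} and \cref{lm:my-lemma-2.6}, so its correctness is available to us.

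With this in hand the question becomes $\Alang{\lang(M_1)}{C} \subseteq \Alang{K_2}{C}$. Universally defined trace languages are consistently represented by their largest $C$-closed sublanguage. Writing $U(L) = \set{w \mid [w]_C \subseteq L}$, we have $lin(\Alang{L}{C}) = U(L)$. So the inclusion holds iff $U(\lang(M_1)) \subseteq U(K_2)$. Since $U(K_2) \subseteq K_2$ and $U(\lang(M_1))$ is $C$-closed, this is equivalent to $U(\lang(M_1)) \subseteq K_2$. Indeed, if every word of the closed set $U(\lang(M_1))$ lies in $K_2$, then every member of each of its traces lies in $K_2$, so the whole set lies in $U(K_2)$.

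It remains to get an automaton for $U(\lang(M_1))$ of acceptable size. The route I would take avoids computing it directly. Since $\lang(M_1)$ is regular, its complement $\overline{\lang(M_1)}$ is regular with the same number of states, because $M_1$ is a DFA. Also $U(\lang(M_1))$ is the complement of $lin(\Elang{\overline{\lang(M_1)}}{C})$. Applying \cref{lm:my-lemma-2.8} again, now to the complement DFA, gives $K_1$ with $\Elang{\overline{\lang(M_1)}}{C} = \Alang{K_1}{C}$. Hence $U(\lang(M_1))$ is the complement of $U(K_1)$, and $U(K_1) = lin(\Elang{\overline{\lang(M_1)}}{C})$.

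The test $U(\lang(M_1)) \subseteq K_2$ then amounts to: every $w \notin K_2$ satisfies $[w]_C \cap \overline{\lang(M_1)} \neq \emptyset$. Equivalently, $\overline{K_2} \subseteq lin(\Elang{\overline{\lang(M_1)}}{C})$. An alternative phrasing is that every $w$ with $[w]_C \subseteq \lang(M_1)$ lies in $K_2$. I would decide this as $U(\lang(M_1)) \cap \overline{K_2} = \emptyset$, with $U(\lang(M_1)) = \overline{U(K_1)}$, determinizing $N_1$ and $N_2$ where needed. The determinizations are the main obstacle for the bound. \cref{lm:my-lemma-2.6} states that its output is already a DFA, so I expect $N_1$ and $N_2$ to be deterministic, and complementation is then free. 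What still needs an argument is that $U(K_i)$ is obtainable without another exponential blow-up. I would argue this by exploiting the fact that, for the $K_i$ produced by the construction, $\Alang{K_i}{C}$ coincides with $[K_i]$ restricted to consistent traces. If that fails, I would instead build the product of $\overline{\lang(M_1)}$-closure and $\overline{K_2}$ directly via the same inductive decomposition, reusing \cref{lm:my-lemma-2.4} to handle both languages simultaneously so that they share the construction's size bound.

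The final emptiness test runs in time polynomial in the product automaton. The product is doubly exponential in $|M_1|+|M_2|$ and triply exponential in $|\Sigma|$, which yields the claimed 2-\EXPTIME{} and 3-\EXPTIME{} bounds.
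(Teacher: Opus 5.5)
There is a genuine gap, and it is already in your first paragraph. An inclusion between universally defined trace languages does not ``become ordinary inclusion of consistent string languages'' decidable by automata. The reason is that $U(L)=lin(\Alang{L}{C})$ is $C$-closed but in general not regular, even for a complete (hence transitive) $C$: universally regular trace languages need not be recognizable.

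Concretely, take $\Sigma=\set{a,b}$, $C=\set{(a,b),(b,a)}$ and $\lang(M_1)=\overline{(ab)^*}$. Traces are then Parikh vectors, and $(ab)^*$ meets exactly the traces with equally many $a$'s and $b$'s. Hence $U(\lang(M_1))=\set{w \mid \card{w}_a\neq\card{w}_b}$ and $U(K_1)=lin(\Elang{(ab)^*}{C})=\set{w\mid\card{w}_a=\card{w}_b}$, and both are non-regular.

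Since $U(K_1)$ depends only on the trace language $\Alang{K_1}{C}=\Elang{\overline{\lang(M_1)}}{C}$, no special property of the $K_1$ produced by the construction can help. So the step ``get an automaton for $U(\lang(M_1))$'' cannot be carried out, and the discussion of determinizations is moot. Your final test does not escape this either. As you note yourself, $\overline{K_2}\subseteq U(K_1)$ is equivalent to $\Elang{\overline{K_2}}{C}\subseteq\Elang{\overline{\lang(M_1)}}{C}$, an inclusion between two existentially defined trace languages, i.e.\ a restatement of the original question. The fallback via \cref{lm:my-lemma-2.4} is not specific enough to fill this.

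The missing idea is to merge both sides into one universally defined language first. The question then becomes an emptiness question, which is the one question an existential representation answers trivially. A trace lies inside $\lang(M_1)$ and outside $\lang(M_2)$ iff it lies inside $\lang(M_1)\setminus\lang(M_2)$, hence
\[
\Alang{\lang(M_1)}{C}\subseteq\Elang{\lang(M_2)}{C}\iff\Alang{L}{C}=\emptyset,\qquad L=\lang(M_1)\setminus\lang(M_2).
\]
Applying the $\exists\to\forall$ conversion to $\overline{L}$ (a product DFA) yields $K$ with $\Elang{\overline{L}}{C}=\Alang{K}{C}$. Then $\Alang{L}{C}=\Elang{\overline{K}}{C}$, which is empty iff $\overline{K}=\emptyset$. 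This uses the doubly exponential construction exactly once and ends in a plain emptiness check.

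That is the paper's proof. There, the automaton for $\overline{K}$ (called $A_N$) is assembled from the multi-clique version of \cref{lm:my-lemma-2.5}, with transitions substituted via \cref{lm:my-lemma-2.6}. By converting $\lang(M_2)$ first, you are left comparing two universal languages. Repairing that would need a second conversion applied to an automaton that is already doubly exponential, which at best gives a worse bound.

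Finally, \cref{lm:my-lemma-2.8} is the Parikh-image construction for a single clique (the base case). It is not an induction over the cliques using \cref{lm:my-lemma-2.5} and \cref{lm:my-lemma-2.6}; for a transitive $C$, the conversion you need is the latter.
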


As a corollary, we obtain the following. 
\begin{corollary} 
	The synchronous verification problem for global protocols is decidable in 2-\EXPTIME in the size of $\protocol$ and the input SCSM, and 3-\EXPTIME in the size of $\Sigma$. 
\end{corollary}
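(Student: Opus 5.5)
The corollary concerns the synchronous verification problem: given $\protocol$ and an SCSM $\CSMabb{A}$, decide whether $\CSMabb{A}$ synchronously realizes $\protocol$. I would reduce this to two checks, deadlock freedom and protocol fidelity, and then dispatch fidelity via \cref{thm:forall-incl-exists-exptime}. Throughout, I assume (as in \cref{thm:concurrent-exptime}) that $\interswapConc$ restricted to the relevant alphabet is transitive.

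\textbf{Step 1: the product.} Build the synchronous product $Prod(\CSMabb{A})$. It is a DFA over $\AlphSync$ whose number of states is the product of the local state counts, so it is at most exponential in the size of $\CSMabb{A}$.

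\textbf{Step 2: deadlock freedom.} Deadlock freedom can be decided directly on $Prod(\CSMabb{A})$. Compute the reachable configurations and check that every non-final reachable one has an outgoing transition. This is a reachability computation polynomial in the product, hence at most exponential overall, which fits within the claimed bound.

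\textbf{Step 3: fidelity as two inclusions.} Fidelity demands $\lang(\CSMabb{A}) = \langsync(\protocol)$, which I split into two inclusions.

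\emph{Inclusion $\langsync(\protocol) \subseteq \lang(\CSMabb{A})$.} By \cref{prop:equiv-equivalence}, $\langsync(\protocol) = [\lang(\protocol)]_\interswapConc$. By \cref{prop:scsm-interswap-closed}, $\lang(\CSMabb{A})$ is $\interswapConc$-closed. So this inclusion reduces to plain string inclusion $\lang(\protocol) \subseteq \lang(Prod(\CSMabb{A}))$. Since $Prod(\CSMabb{A})$ is a DFA, complement it and intersect with $\protocol$; this costs time polynomial in $\protocol$ and exponential in $\CSMabb{A}$.

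\emph{Inclusion $\lang(\CSMabb{A}) \subseteq \langsync(\protocol)$.} Because $\lang(\CSMabb{A})$ is closed under $\interswapConc$, it is consistently defined, so
\[
\lang(\CSMabb{A}) = \bigcup \Alang{\lang(Prod(\CSMabb{A}))}{\interswapConc}.
\]
The inclusion therefore becomes
\[
\Alang{\lang(Prod(\CSMabb{A}))}{\interswapConc} \subseteq \Elang{\lang(\protocol)}{\interswapConc},
\]
which is exactly an instance of problem $\inclproblem$. Apply \cref{thm:forall-incl-exists-exptime} with $M_1 = Prod(\CSMabb{A})$, $M_2 = \protocol$ and $\Sigma = \AlphSync_\protocol$. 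Restricting to symbols used by $\protocol$ is harmless: if $\CSMabb{A}$ accepts a word containing a symbol outside $\AlphSync_\protocol$, fidelity already fails, and that can be checked by a preliminary emptiness test.

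\textbf{Main obstacle: the complexity accounting.} The danger is that $M_1$ is already exponential in $\CSMabb{A}$, so a naive composition with the 2-\EXPTIME bound would give triple exponential in the SCSM. To avoid this, I would inspect the proof of \cref{thm:forall-incl-exists-exptime}. There, the expensive constructions (\cref{lm:my-lemma-2.8}, \cref{lm:my-lemma-2.4}, \cref{lm:my-lemma-2.6}) are applied only to the existential side $L_2 = \lang(\protocol)$, turning it into a universal representation $K$. The remaining check $\Alang{L_1}{} \subseteq \Alang{K}{}$ should then be a product/complement check that is only polynomial in the DFA $M_1$.

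With that observation, the total cost is doubly exponential in $\protocol$, exponential (hence within 2-\EXPTIME) in the SCSM, and triply exponential in $\Sigma$, as claimed. If that inspection fails, I would fall back to stating the bound in terms of the product size.
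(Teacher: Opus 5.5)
Your decomposition is sound and matches what the paper intends; the paper states the corollary without proof right after \cref{thm:forall-incl-exists-exptime}. Deadlock freedom is decided by reachability on $Prod(\CSMabb{A})$, and fidelity goes through problem $\mathcal{I}$ with $M_1 = Prod(\CSMabb{A})$ and $M_2 = \protocol$. Your explicit reduction of $\langsync(\protocol) \subseteq \lang(\CSMabb{A})$ to the string inclusion $\lang(\protocol) \subseteq \lang(Prod(\CSMabb{A}))$ is correct. It is also needed, since for a given SCSM you cannot appeal to \cref{lm:canonical-implementation-incl-protocol-semantics}.

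The gap is in your complexity paragraph, because the proof of \cref{thm:forall-incl-exists-exptime} does not do what you claim. It does not apply the expensive constructions only to $\lang(M_2)$. It first forms a DFA for $\lang(M_1) \setminus \lang(M_2)$ and runs the Aalbersberg--Welzl constructions on that automaton. The resulting blow-up is doubly exponential in its state count $n = |M_1| \cdot |M_2|$. So the inspection you propose fails. Instantiating the theorem literally with $M_1 = Prod(\CSMabb{A})$ gives a bound triply exponential in the SCSM. The paper does not address this, and the same issue arises for $Prod(\protocol)$ in \cref{thm:concurrent-exptime}. Your fallback, stating the bound in terms of the product size, would not prove the corollary as stated.

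What does work is to turn your remark into an argument of its own, as follows.
\begin{itemize}
\item Use \cref{lm:my-lemma-2.6}, generalised to all cliques of $\interswapConc_\protocol$, to build from $\protocol$ alone a DFA $K$ with $\Elang{\lang(\protocol)}{\interswapConc} = \Alang{K}{\interswapConc}$. This costs time doubly exponential in $\protocol$ and triply exponential in $\AlphSync$.
\item Then $\langsync(\protocol) = \{w \mid [w]_\interswapConc \subseteq K\}$.
\item Since $\lang(\CSMabb{A})$ is $\interswapConc$-closed (\cref{prop:scsm-interswap-closed}), $\lang(\CSMabb{A}) \subseteq \langsync(\protocol)$ holds iff $\lang(Prod(\CSMabb{A})) \subseteq K$.
\item That is an emptiness check on $\lang(Prod(\CSMabb{A})) \cap \overline{K}$, of size $|Prod(\CSMabb{A})| \cdot |K|$. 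This gives 2-\EXPTIME in $\protocol$, single-exponential in the SCSM, and 3-\EXPTIME in $\AlphSync$.
\end{itemize}

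Two facts must be stated explicitly for this to go through. First, you need closure of $L_1$. For non-closed $L_1$, $\Alang{L_1}{\interswapConc} \subseteq \Alang{K}{\interswapConc}$ is not equivalent to $L_1 \subseteq K$. Second, you need determinism of $K$, which \cref{lm:my-lemma-2.6} asserts. If $K$ were only an NFA, complementing it would add another exponential on top of a doubly exponential automaton, giving 3-\EXPTIME in $\protocol$.
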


\subsubsection{Error in Lemma 5.2 of \cite{DBLP:conf/icalp/BertoniMS82}}
\label{subsubsec:flaw-transitive}
Lemma 5.2 proposes the following construction for deciding equivalence of rational trace languages over transitive concurrency alphabets. 
The decidability result is stated in Theorem 5.2 of \cite{DBLP:conf/icalp/BertoniMS82}. 
Let $(\Sigma, C)$ be a concurrency alphabet such that $C$ is transitive, and let $\Sigma$ be partitioned into $\Sigma_1, \ldots, \Sigma_n$ by~$C$. 
Let $M = (Q, \Sigma, \delta, q_0, F)$ be a DFA. 
Define 
$N(M) = 
\set{
    (q',j,q'') 
        \mid 
    \exists~\alpha, \sigma', z, \sigma'', k \in \Sigma^* \st \sigma', \sigma'' \notin \Sigma_j \land z \in \Sigma_j^* \text{ for some } 1 \leq j \leq n, \delta(q_0, \alpha\sigma') = q'' \land \delta(q'', \sigma''k) \in F
}$. 
Intuitively, $N(M)$ captures pairs of states $(q', q'')$ together with an alphabet $\Sigma_j$ such that there exists a run of $M$ that reads a ``pure'' word from $\Sigma_j^*$, sandwiched by two symbols not from $\Sigma_j$, that reaches a final state.

Let $L_{q'q''}$ be defined as the language accepted by the DFA obtained from $M$ by setting $q'$ to be the initial state, and $q''$ to be the only final state. 
Let $P_{q'jq''}$ be defined as the Parikh language defined by $[L_{q'q''} \inters \Sigma_j^*]_C$. 
Because $L_{q'q''} \inters \Sigma_j^*$ is permutation-closed, it is uniquely represented by its Parikh image. 

Lemma 5.2 claims that for two finite automata $M_1$ and~$M_2$ recognizing languages $L_1$ and $L_2$, the following statements are equivalent: 
\begin{enumerate}[(a)]
	\item \label{lemma5.2-9-a}
    $[L_1]_C = [L_2]_C$ 
	\item \label{lemma5.2-9-b}
    $\forall j, q_1', q_2', q_1'', q_2''.~(q_1',j,q_1'') \in N(M_1) \land (q_2',j,q_2'') \in N(M_2)$ \\ 
    $\implies P_{q_1'jq_1''} = P_{q_2'jq_2''}$ 
\end{enumerate} 

Let $\Sigma = {a_1, a_2, b}$ and let $C$ be $\set{(a_1, a_2), (a_2, a_1)}$. 
Consider $L = b \cdot \set{a_1, a_2} \cdot b \cdot \set{a_1, a_3}$.
Then, $[L]_C = [L]_C$ is trivially satisfied, yet the automaton recognizing $L$ does not satisfy (b). 
Consider further 
$L_1 = b \cdot \set{a_1, a_2} \cdot b$, and 
$L_2 = b \cdot \set{a_1, a_2} \cdot b \cdot \set{a_1, a_2}$. 
$[L_1]_C \neq [L_2]_C$, yet (b) holds for the automata recognizing $L_1$ and $L_2$ respectively. 
Thus, neither does 
\ref{lemma5.2-9-a}
imply
\ref{lemma5.2-9-b},
nor 
\ref{lemma5.2-9-b}
imply
\ref{lemma5.2-9-a}. 

\subsection{The unambiguous case} 
\label{subsec:unambiguous}

Unambiguous trace languages are originally defined algebraically using unambiguous restrictions of the $\cup$, $\cdot$ and $\Kleenestar$ operators defined over trace languages~\cite{DBLP:journals/iandc/BruschiPS94}. 
\begin{definition}
	\label{def:unambiguous-regex} 
	Let $(\Sigma, C)$ be a concurrent alphabet and let $T_1, T_2 \subseteq M(\Sigma, C)$. 
	Then $T_1 \sqcup T_2$ is defined as $T_1 \cup T_2$ iff $T_1 \inters T_2 = \emptyset$, 
	$T_1 \circ T_2$ is defined as $T_1 \cdot T_2$ iff for all $x, z \in T_1$ and $y, w \in T_2$, 
	$xy = zw \implies x = z \land y = w$, 
	and finally $T^@$ is defined as $T^*$ iff $T$ is the basis of a free submonoid $T^*$ of $M(\Sigma, C)$. 
	The class of unambiguous trace languages over $(\Sigma, C)$ is the least class of subsets of $M(\Sigma, I)$ containing finite sets and closed with respect to $\sqcup$, $\circ$ and $(-)^@$.
\end{definition}

This definition 
naturally also defines unambiguous global protocols over $(\AlphSync, \interswapConc)$. 
However, when we switch to the view of trace languages defined by regular languages, unambiguous trace languages are defined as the class of languages such that there \emph{exists} a regular language that unambiguously represents it. 
Given a trace language $[L]_C$ defined by a regular language $L$, it may be unambiguous even if $L$ does not unambiguously represent it. 
Sakarovitch~\cite{DBLP:journals/tcs/Sakarovitch87} showed that the class of rational trace languages and unambiguous trace languages over a concurrency alphabet coincide if and only if the concurrency relation is transitive. 
This means that for every rational trace language represented as $[L]_C$, one can effectively construct $L'$ such that $L'$ unambiguously represents $[L]_C$. 
However, such a construction is not effective in the general case. 

Nonetheless, if a trace language is defined by a regular language that unambiguously represents it, this guarantees its membership in the class of unambiguous rational trace languages. 
In the following, we say that $\protocol$ is an \emph{unambiguous global protocol}, or $\protocol$ is \emph{unambiguous}, iff $\protocol$ unambiguously represents $[\lang(\protocol)]_\interswapConc$, namely for every trace $t \in M(\AlphSync, \interswapConc)$, $t$ has either 0 or 1 representatives in~$\lang(\protocol)$. 

In \cref{subsubsec:choice-unambiguous}, we show that previously proposed restrictions on \emph{branching choice} in global protocols are sufficient conditions for unambiguity. 
In \cref{subsubsec:dec-unambiguous}, we show that realizability is decidable for unambiguous global protocols. 
The key fact we prove is that for a recognizable trace language given as a finite state automaton recognizing all of its linearizations, an unambiguous representative can be computed in linear time. 
In \cref{subsubsec:complexity-unambiguous}, we state the complexity of realizability, invoking an upper bound on the multiplicity equivalence problem for unambiguous trace languages.

\subsubsection{Choice restrictions} 
\label{subsubsec:choice-unambiguous} 

In the asynchronous setting, mixed choice is key to showing that realizability is undecidable in general~\cite{DBLP:journals/tcs/Lohrey03, DBLP:phd/dnb/Stutz24}. 
For protocols with a more restrictive form of choice, called \emph{sender-driven choice}~\cite{DBLP:conf/concur/MajumdarMSZ21}, realizability is co-NP-complete~\cite{DBLP:journals/pacmpl/LiSWZ25}. 
Naturally, the notion of sender-driven choice was investigated in the synchronous setting, and \citet{NODBLPyet:10.1145/3756907.3756918} showed that synchronous realizability for sender-driven choice global types is decidable in PSPACE. 
They further relaxed sender-driven choice to \emph{commutation-determinism}, and showed that realizability remains in PSPACE. 
While sender-driven choice requires a dedicated sender among all outgoing transitions from a single state in a protocol, commutation-determinism only requires that every pair of states share a participant. We state all three definitions below.  

\begin{definition}
Let $\protocol = (Q, \AlphSync, \delta, q_0, F)$ be a global protocol. 
We say $\protocol$ is \emph{mixed choice} iff there exists $q \in Q$ with $q \xrightarrow{a} q'$ and $q \xrightarrow{b} q''$, such that $a \interswapConc b$. 
We say $\protocol$ is \emph{non-mixed choice} iff for all $q \in Q$ with $q \xrightarrow{a} q'$ and $q \xrightarrow{b} q''$, it holds that $a \not\interswapConc b$. 
We say $\protocol$ is \emph{sender-driven} iff for all $q \in Q$ with $q \xrightarrow{a} q'$ and $q \xrightarrow{b} q''$, $a = \msgFromTo{\procA}{\procB}{\val}$ and $b = \msgFromTo{\procA}{\procC}{\val'}$ for some $\procA, \procB, \procC, \val, \val'$. 
\end{definition}

Note that the definition of commutation-deterministic~\cite{DBLP:conf/ppdp/GiustoLU25} coincides with non-mixed choice, of which sender-driven choice is a subclass. 
We show that non-mixed choice places a protocol 
into the class of unambiguous global protocols. 
\begin{restatable}{lemma}{nonMixedChoiceUnambiguous}
	Every non-mixed choice global protocol defines an unambiguous trace language over the concurrent alphabet $(\AlphSync, \interswapConc)$. 
\end{restatable}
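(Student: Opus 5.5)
We must show: if $\protocol$ is non-mixed choice, then any two distinct words $u \neq v \in \lang(\protocol)$ satisfy $[u]_\interswapConc \neq [v]_\interswapConc$. We argue by contradiction. Suppose $u, v \in \lang(\protocol)$ are distinct with $u \interswapConc$-equivalent to $v$. Write $u = w\,a\,u'$ and $v = w\,b\,v'$, where $w$ is their longest common prefix. Equivalent words have the same length and the same Parikh image, so neither word is a proper prefix of the other. Hence $w$ is a proper prefix of both, and the next letters satisfy $a \neq b$.

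Since $\protocol$ is a DFA, reading $w$ leads both runs to the same state $q = \delta^*(q_0, w)$. From $q$ there are transitions $q \xrightarrow{a} q'$ (on the run of $u$) and $q \xrightarrow{b} q''$ (on the run of $v$). The non-mixed choice hypothesis then gives $a \not\interswapConc b$, so $a$ and $b$ share a participant.

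Next we use cancellation in the trace monoid: if $[w a u'] = [w b v']$, then $[a u'] = [b v']$. This follows from the standard fact that $M(\AlphSync,\interswapConc)$ is left-cancellative, or directly from the projection characterization. By \cref{prop:equiv-equivalence}, two words are $\interswapConc$-equivalent iff they have equal projections $\wproj_{\AlphSync_\procA}$ for all $\procA$, and projection is a homomorphism, so the common prefix $w$ cancels participant-wise.

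Now pick a participant $\procA \in \Procs(a) \cap \Procs(b)$, which exists since $a \not\interswapConc b$. Then $(a u')\wproj_{\AlphSync_\procA}$ begins with $a$, while $(b v')\wproj_{\AlphSync_\procA}$ begins with $b$. Because $a \neq b$, the projections differ, contradicting $[a u'] = [b v']$. Hence every trace has at most one representative in $\lang(\protocol)$, i.e.\ $\protocol$ unambiguously represents $[\lang(\protocol)]_\interswapConc$. Since $\lang(\protocol)$ is regular, this trace language is unambiguous in the sense of \cref{subsec:trace-theory-prelim}.

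The main point requiring care is the projection characterization of $\interswapConc$-equivalence, which both the cancellation step and the final contradiction rely on. It suffices to cite it from \cref{prop:equiv-equivalence}. Failing that, it is the standard projection lemma for trace monoids whose dependence relation is generated by shared participants, since each $\AlphSync_\procA$ is a clique of the dependence relation and these cliques cover it. The remaining steps are routine.
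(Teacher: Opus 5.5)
Your proof is correct and follows the same route as the paper's: take two distinct equivalent representatives and find the state where their runs diverge. Non-mixed choice then gives a shared participant, and equality of that participant's projections fails, contradicting \cref{prop:equiv-equivalence}. Your version is slightly more careful than the paper's: you rule out the proper-prefix case, use determinism to get a common divergence state, and pick an arbitrary participant in $\Procs(a) \cap \Procs(b)$ rather than assuming, as the paper does, that it is the sender of both transitions. The cancellation step is harmless but unnecessary, since the common prefix $w$ projects identically anyway.
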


However, unambiguous global protocols are not necessarily non-mixed choice. Thus, non-mixed choice is only a sufficient~condition, as indicated by the strict containment in \cref{fig:venn-diagram}. 

\begin{remark} 
	There exist unambiguous mixed choice global protocols, such as the following: 
	\[
		\set{\msgFromTo{\procA}{\procB}{\val} \cdot 
		\msgFromTo{\procC}{\procD}{\val} \cdot 
		\msgFromTo{\procA}{\procB}{\val_1}}
		\cup 
		\set{\msgFromTo{\procC}{\procD}{\val} \cdot 
		\msgFromTo{\procA}{\procB}{\val} \cdot 
		\msgFromTo{\procA}{\procB}{\val_2}}
		\enspace . 
	\]
\end{remark}

\subsubsection{Decidability} 
\label{subsubsec:dec-unambiguous} 

Our decidability result is stated below. 
\begin{restatable}{theorem}{unambiguousDecidable} 
	\label{thm:unambiguous-dec} 
	Synchronous realizability of unambiguous global protocols is decidable. 
\end{restatable}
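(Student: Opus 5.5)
By \cref{lm:realizable-minimal}, $\protocol$ is realizable if and only if $Prod(\protocol)$ is reduced and $\lang(Prod(\protocol)) \subseteq \langsync(\protocol)$. Reducedness of the finite automaton $Prod(\protocol)$ is decidable by a standard graph search. It therefore remains to decide the inclusion. By \cref{lm:canonical-implementation-incl-protocol-semantics}, this inclusion is equivalent to the equality $\lang(Prod(\protocol)) = \langsync(\protocol)$. By \cref{prop:global-rational-trace-language}, the right-hand side is $[\lang(\protocol)]_\interswapConc$. By \cref{prop:scsm-recognizable-trace-language}, the left-hand side is closed under $\interswapConc$, i.e.\ it is the linearization of a recognizable trace language, so $\lang(Prod(\protocol)) = lin([\lang(Prod(\protocol))]_\interswapConc)$. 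Hence realizability reduces, modulo reducedness, to the trace equivalence $[\lang(Prod(\protocol))]_\interswapConc = [\lang(\protocol)]_\interswapConc$.

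Trace equivalence is undecidable in general, so I reduce it to multiplicity equivalence $\sharp\EQUIVALENCE(\AlphSync,\interswapConc)$, which is decidable for regular languages (the known decidability result for rational trace series / multiplicity equivalence, invoked here as in the complexity subsection). For this, both sides need unambiguous representatives. The protocol side is unambiguous by hypothesis: $\sharp\lang(\protocol)(t) \in \set{0,1}$ for every trace $t$. For the product side, I construct an unambiguous representative of the recognizable trace language $[\lang(Prod(\protocol))]_\interswapConc$. The plan is to take lexicographic normal forms: fix a total order $<$ on $\AlphSync$ and let $L' = \lang(Prod(\protocol)) \inters Lex(M(\AlphSync,\interswapConc))$. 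The set of lexicographically minimal words is regular; by Anisimov–Knuth, a word is lex-minimal iff it contains no factor $b\,u\,a$ with $a < b$, $a \interswapConc b$, and $a \interswapConc u$, and this condition is checkable by a finite automaton tracking, for each $a$, whether a pending larger independent letter still blocks it. Since $\lang(Prod(\protocol))$ is $\interswapConc$-closed, every trace in it has exactly one representative in $L'$, so $\sharp L'$ is the characteristic function of $[\lang(Prod(\protocol))]$.

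With both sides unambiguous, $[\lang(Prod(\protocol))] = [\lang(\protocol)]$ holds if and only if $\sharp L' = \sharp\lang(\protocol)$, by the remark after the definition of $\sharp\EQUIVALENCE$. This is an instance of multiplicity equivalence between two finite automata, hence decidable. Together with the reducedness check, this decides realizability.

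The main obstacle is building the automaton for $L'$ cheaply and proving that it picks exactly one representative per trace. For the intended linear-time claim, I would first try a more direct construction instead of intersecting with a generic lex-normal-form automaton. The idea is to prune $Prod(\protocol)$ itself: a transition on $a$ from a configuration reached after $b$ is disallowed when $a \interswapConc b$, $a < b$, and $a$ was enabled before $b$. Since SCSM configurations record per-participant states, the fact that $a$ was enabled is determined locally, because $a$ touches only participants that $b$ did not touch. Correctness rests on the commutation closure from \cref{prop:scsm-interswap-closed}. The remaining steps are routine.
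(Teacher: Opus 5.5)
Your proposal is correct and follows the paper's proof. You reduce via \cref{lm:realizable-minimal} (with \cref{lm:canonical-implementation-incl-protocol-semantics}) to reducedness plus $[\lang(Prod(\protocol))]_\interswapConc = [\lang(\protocol)]_\interswapConc$, make the product side unambiguous by intersecting with the lexicographic normal forms as in \cref{lm:rec-is-unambiguous}, and decide equivalence of the two unambiguous representatives via multiplicity equivalence.

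Your closing pruning idea is not needed for decidability, and as stated it would not give unique representatives. Forbidding only adjacent inversions $b\,a$ lets the non-normal word $b\,c\,a$ survive next to its normal form $a\,b\,c$, where $c < a < b$, $a \interswapConc b$, $a \interswapConc c$ and $b \not\interswapConc c$. So keep the intersection with the full $b\,u\,a$-avoiding automaton.
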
 

A key observation towards \cref{thm:unambiguous-dec} is that all recognizable trace languages are unambiguous trace languages. 
The proof is immediate via the characterization of recognizable trace languages in terms of regular lexicographic representations. 
We additionally provide a bound on the size of the unambiguous representative. 
\begin{restatable}{lemma}{recIsUnambiguous}
	\label{lm:rec-is-unambiguous} 
	Let $T$ be a recognizable trace language over concurrent alphabet $(\Sigma, C)$.
	Then, $T$ is an unambiguous trace language. 
	Moreover, given an automaton that recognizes $lin(T)$, an unambiguous regular language generating $T$ can be computed in linear time. 
\end{restatable}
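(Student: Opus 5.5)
The plan is to use the lexicographic representation $Lex(T)$ as the unambiguous witness. Fix an arbitrary total order $<$ on $\Sigma$. Since every trace $t \in M(\Sigma, C)$ is a finite set of words, it has exactly one lexicographically minimal element. Hence $Lex(T)$ contains exactly one representative of each $t \in T$ and none of any $t \notin T$. This gives $[Lex(T)]_C = T$ and $\card{lin(t) \inters Lex(T)} = 1$ for every $t \in T$, which is the unambiguity condition. It therefore remains to show that $Lex(T)$ is regular and to bound the cost of computing it from an automaton $A$ for $lin(T)$.

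For regularity, I will use the classical characterization of lexicographically minimal words (Anisimov--Knuth; see \cite{DBLP:books/ws/95/DR1995}). A word $u$ is the minimum of $[u]_C$ if and only if $u$ has no factor $b\,v\,a$ with $a < b$, $(a,b) \in C$, and $v \in C(a)$. Such a factor means that $a$ commutes with every letter of $bv$ and could be moved left past $b$, producing a smaller equivalent word.

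For the converse, suppose $u$ is not minimal and let $u'$ be the minimal word of $[u]_C$. Take the first position where $u$ and $u'$ differ. There $u$ has some letter $b$ and $u'$ has a smaller letter $a$. Because $u \equiv_C u'$, the next occurrence of $a$ in $u$ must commute with every letter between it and the position of $b$, including $b$ itself. This yields exactly a forbidden factor.

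Thus the set $\mathit{Min}$ of all minimal words is $\Sigma^* \setminus \Sigma^* \bigl(\bigcup_{a<b,(a,b)\in C} b\, C(a)^* a\bigr) \Sigma^*$, and $Lex(T) = lin(T) \inters \mathit{Min}$. Here I use that $lin(T)$ is closed under $\equiv_C$, since $T$ is recognizable.

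For the complexity, I will construct a deterministic automaton $B$ for $\mathit{Min}$ whose size depends only on $(\Sigma, C)$, treating the alphabet as fixed. A state of $B$ records the set $S$ of letters $a$ that are currently ``pending''. A letter $a$ is pending if some earlier letter $b > a$ with $(a,b) \in C$ occurred and every letter since then, including $b$, commutes with $a$. On reading a letter $c$:
\begin{itemize}
\item if $c \in S$, the automaton rejects;
\item otherwise the new pending set is $\set{a \in S \mid (a,c) \in C} \union \set{a \mid a < c,\ (a,c) \in C}$.
\end{itemize}
This automaton has at most $2^{\card{\Sigma}}$ states, which is a constant with respect to $A$.

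The product $A \times B$ then recognizes $Lex(T)$. Its size is linear in $\card{A}$ and it is computed in linear time. Since the product of an unambiguous automaton with a deterministic one remains usable as a representation, the result is an automaton for a regular language that unambiguously represents $T$.

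I expect the main obstacle to be the converse direction of the forbidden-factor characterization, that is, that every non-minimal word contains a forbidden factor. I will argue it carefully through the first-difference position as sketched above. I must also state honestly that ``linear time'' is meant with the alphabet $(\Sigma, C)$ treated as fixed, since the factor coming from $B$ is $2^{\card{\Sigma}}$.
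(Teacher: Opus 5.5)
Your proposal is correct and follows the paper's route. Both take $Lex(T)$ as the unambiguous witness and compute it by intersecting the automaton for $lin(T)$ with an automaton for the lexicographically minimal words, which are characterized by the forbidden factors $b\,C(a)\,a$ with $a<b$ and $(a,b)\in C$. The paper relies on \cite{DBLP:books/ws/95/DR1995} for this characterization and for the regularity of $Lex(T)$, whereas you prove the characterization yourself and build the $2^{\card{\Sigma}}$-state DFA explicitly. Your caveat that ``linear'' treats $(\Sigma,C)$ as fixed applies equally to the paper's complementation step.

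Two small corrections to your wording. First, $lin(T)$ is $\equiv_C$-closed by definition; recognizability only supplies its regularity. Second, unambiguity is a property of the language $Lex(T)$, so no assumption on the ambiguity of the automaton for $lin(T)$ is needed.
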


Given \cref{lm:rec-is-unambiguous}, when the global protocol unambiguously represents its trace language, protocol fidelity amounts to checking the equivalence of two unambiguous rational trace languages. 
Unlike in \cref{subsec:transitive}, where we made use of the definition of canonicity to reduce protocol fidelity to language inclusion, here we reduce realizability to equivalence directly. 
Indeed, inclusion cannot be used, because of the following negative result. 

\begin{theorem}[\textnormal{Theorem\,5.4.16 of} \cite{DBLP:books/ws/95/DR1995}]
	Inclusion is undecidable for unambiguous rational trace languages in general. 
\end{theorem}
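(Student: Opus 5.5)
The plan is to reduce from the Post Correspondence Problem. I would work over a ``rectangle'' concurrent alphabet $(\Sigma_1 \cup \Sigma_2, \Sigma_1 \times \Sigma_2 \cup \Sigma_2 \times \Sigma_1)$ with $\Sigma_1 \inters \Sigma_2 = \emptyset$. The key structural fact is that a trace $t$ over this alphabet is determined exactly by the pair of projections $(t \wproj_{\Sigma_1}, t \wproj_{\Sigma_2})$. So a trace language is the same thing as a binary relation between $\Sigma_1^*$ and $\Sigma_2^*$, and a rational trace language is a rational relation read by a two-tape one-way automaton. Inclusion of unambiguous rational trace languages then becomes inclusion of relations given by suitably unambiguous expressions. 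As noted above, $\AlphSync_{abcd}$ realizes this rectangle situation inside a synchronous alphabet, so the argument would also transfer to global protocols.

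Fix a PCP instance $(u_1,v_1),\ldots,(u_k,v_k)$ over $\set{a,b}$. Let $\Sigma_1 = \set{a,b,1,\ldots,k}$ and let $\Sigma_2 = \set{\bar x \mid x \in \Sigma_1}$ be a disjoint copy, with $\bar{\cdot}$ extended to a homomorphism.

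First, define $L = (\bigcup_{i} i\, u_i\, \bar i\, \bar v_i)^+$. The $\Sigma_1$-projection of any word of $L$ is $i_1 u_{i_1} \cdots i_n u_{i_n}$. Because every block starts with an index letter, this projection determines the index sequence $i_1 \cdots i_n$, and hence the word of $L$. Since a trace is determined by its projections, two words of $L$ in the same trace coincide, so $L$ unambiguously represents $[L]$.

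Second, define $K'$ to represent all pairs $(x, \bar y)$ with $x \neq y$, where $x, y \in \Sigma_1^*$. I would take
\[
K' = \Bigl(\textstyle\bigcup_{c \in \Sigma_1} c\,\bar c\Bigr)^{*} \cdot \Bigl(\textstyle\bigcup_{c \neq d} c\,\bar d\,\Sigma_1^*\,\Sigma_2^* \;\cup\; \textstyle\bigcup_{c} c\,\Sigma_1^* \;\cup\; \textstyle\bigcup_{c} \bar c\,\Sigma_2^*\Bigr) .
\]
The three alternatives say, respectively, that after the longest common prefix the first mismatch is a letter clash, that $y$ is a proper prefix of $x$, or that $x$ is a proper prefix of $y$. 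In every case the length of the longest common prefix and the mismatching letters are determined by the trace, so the decomposition of a representative is forced. Inside $\Sigma_1^*\Sigma_2^*$ the linear order of the remainder is fixed, so $K'$ is again an unambiguous representative. Alternatively, I would check unambiguity against \cref{def:unambiguous-regex} directly, using the operators $\sqcup$, $\circ$ and $(-)^@$.

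Third, a trace of $[L]$ lies outside $[K']$ exactly when its two projections satisfy $x = y$. That means $i_1 u_{i_1}\cdots i_n u_{i_n} = i_1 v_{i_1} \cdots i_n v_{i_n}$, and since the index letters appear at the same positions, this is equivalent to $u_{i_1}\cdots u_{i_n} = v_{i_1}\cdots v_{i_n}$. Hence $[L] \subseteq [K']$ holds if and only if the PCP instance has no solution, which gives undecidability of inclusion.

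The main obstacle is the unambiguity of $K'$: I must ensure that no trace has two representatives, for instance one via the clash branch and one via a prefix branch. My first attempt is to argue that the branch is determined by the pair $(x,y)$ alone, by comparing the length of the common prefix with $|x|$ and $|y|$. If that turns out delicate, the fallback is to add end markers to both tapes, which turns every mismatch into a letter clash and removes the prefix branches entirely.
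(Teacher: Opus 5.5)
The paper gives no proof of its own here; the statement is imported from Theorem~5.4.16 of the Book of Traces. Your framework (rectangle alphabet, traces as pairs of projections, reduction from PCP) is a sensible route, and your unambiguity arguments for $L$ and for $K'$ are correct. The gap is in the PCP encoding, which breaks the reduction.

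You put the index letters on \emph{both} tapes, so a trace of $[L]$ has projections $x = i_1 u_{i_1}\cdots i_n u_{i_n}$ and $y = i_1 v_{i_1}\cdots i_n v_{i_n}$. Index letters differ from $a,b$, so $x = y$ forces the second, third, \ldots\ index letters to occur at the same positions. That means $u_{i_j} = v_{i_j}$ for every $j$. Your step ``since the index letters appear at the same positions, this is equivalent to $u_{i_1}\cdots u_{i_n} = v_{i_1}\cdots v_{i_n}$'' is valid only in the direction from $x=y$ to a PCP solution; the converse fails. For example, take $(u_1,v_1) = (a,ab)$ and $(u_2,v_2) = (ba,a)$. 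The index sequence $1\,2$ is a solution, yet $1\,a\,2\,ba \neq 1\,ab\,2\,a$. What your construction actually decides is whether $u_i = v_i$ for some $i$. That is trivially decidable, so no undecidability follows.

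The repair is to record the index sequence on one tape only, e.g.\ $L = (\bigcup_i i\,u_i\,\bar v_i)^+$. The $\Sigma_1$-projection still determines $i_1\cdots i_n$, so $L$ remains an unambiguous representative. Correspondingly, $K'$ must represent the pairs $(x,\bar y)$ with $h(x)\neq y$, where $h$ erases index letters. Take the synchronized prefix $(\bigcup_{c\in\set{a,b}} c\bar c \cup \bigcup_i i)^*$, followed by your three mismatch branches with $c,d$ ranging over $\set{a,b}$.

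Your longest-common-prefix argument still gives unambiguity. Every branch starts with a letter from $\set{a,b,\bar a,\bar b}$, so all index letters before the first mismatch are forced into the prefix; in the third branch all of $x$ is. Hence the split point of each representative is determined by the trace. With this change, a trace of $[L]$ lies outside $[K']$ exactly when $u_{i_1}\cdots u_{i_n} = v_{i_1}\cdots v_{i_n}$, and the reduction goes through.
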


Equivalence, in contrast, is decidable thanks to the fact that two unambiguous rational trace languages are multiplicity equivalent if and only if they are equivalent. 
Multiplicity equivalence can in turn be decided by a reduction to the equality of two rational formal power series in one variable, for more details see~\cite{DBLP:books/ws/95/DR1995}. 

\subsubsection{Complexity} 
\label{subsubsec:complexity-unambiguous} 
The complexity of the synchronous realizability and verification problems for unambiguous global protocols is stated below. 
\begin{restatable}{theorem}{unambiguousExptime}
	Synchronous realizability of unambiguous global protocols is decidable in \EXPTIME. 
\end{restatable}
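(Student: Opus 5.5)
We follow the same route as \cref{thm:unambiguous-dec}, but track the size of every object we build. By \cref{lm:realizable-minimal}, $\protocol$ is realizable if and only if $Prod(\protocol)$ is reduced and $\lang(Prod(\protocol)) \subseteq \langsync(\protocol)$. Together with \cref{lm:canonical-implementation-incl-protocol-semantics}, this becomes: $Prod(\protocol)$ is reduced and $\lang(Prod(\protocol)) = \langsync(\protocol)$. The plan has four steps:
\begin{enumerate}
\item bound the size of $Prod(\protocol)$;
\item check reducedness;
\item build an unambiguous representative of $\lang(Prod(\protocol))$;
\item decide equivalence of two unambiguous rational trace languages via multiplicity equivalence, within the stated bound.
\end{enumerate}

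\textbf{Size of the product.} Each component $\CSMabb{A}_\procA$ of the canonical SCSM comes from a subset construction on $\protocol_\procA$, so it has at most $2^{|Q|}$ states. The product therefore has at most $2^{|Q|\cdot|\Procs|}$ states, which is singly exponential in the size of $\protocol$. We can build it explicitly in \EXPTIME.

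\textbf{Reducedness.} Deadlock freedom corresponds to reducedness of $Prod(\protocol)$. We check it by forward and backward reachability on the explicit product. This is polynomial in the product's size, hence \EXPTIME overall.

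\textbf{Unambiguous representative.} By \cref{prop:scsm-recognizable-trace-language}, $\lang(Prod(\protocol)) = lin(T)$ for a recognizable trace language $T$. \cref{lm:rec-is-unambiguous} then gives, in linear time, an automaton $B$ of size linear in $Prod(\protocol)$ that unambiguously represents $T$. The natural construction intersects with the regular set of lexicographically minimal words; that set is recognized by a small automaton over $\AlphSync$ that forbids factors $b\,u\,a$ with $a<b$ and $a$ independent of $b u$. On the other side, $\protocol$ is unambiguous by assumption, so $\lang(\protocol)$ unambiguously represents $\langsync(\protocol)$, using \cref{prop:global-rational-trace-language}.

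\textbf{Equivalence via multiplicity.} Since both $B$ and $\protocol$ are unambiguous representatives, $[\lang(B)] = [\lang(\protocol)]$ holds if and only if $\sharp\lang(B) = \sharp\lang(\protocol)$. This is the multiplicity equivalence problem for two automata. Both sides have unique representatives, so the multiplicity functions take values in $\{0,1\}$ and equality is exactly equivalence of the trace languages.

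I expect the main obstacle to be the complexity of multiplicity equivalence. It is known to be decidable by reduction to equality of rational formal power series, via the generating function $\sum_t \sharp L(t)\,\mu(t)$ over a suitable commutative encoding. The first thing I would try is a Möbius-function approach, following Cartier--Foata and the treatment in \cite{DBLP:books/ws/95/DR1995}. Multiplying $\sharp L$ by the Möbius polynomial of $(\AlphSync,\interswapConc)$ turns trace multiplicities into a noncommutative series computable from a weighted automaton. Equality of weighted automata over $\mathbb{Q}$ is decidable in polynomial time by Schützenberger/Tzeng-style linear-algebra minimization.

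The remaining concern is the Möbius polynomial itself. It sums over cliques of $\interswapConc$, i.e.\ sets of pairwise participant-disjoint events, and there can be exponentially many of them in $|\AlphSync|$. That still fits an \EXPTIME budget when combined with the singly exponential product. So the plan is to show that the multiplicity-equivalence procedure runs in time polynomial in the automata sizes times $2^{O(|\AlphSync|)}$. Combined with the size bound from the first step, this yields the claimed \EXPTIME bound. The same argument, with the given SCSM in place of the canonical one, gives the analogous bound for the verification problem.
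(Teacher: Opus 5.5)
Your reduction is correct up to and including the passage to multiplicity equivalence. One small correction there: the standard automaton for lexicographically minimal words has up to $2^{|\AlphSync|}$ states, so it is not ``small'', though this still fits the budget. The gap is the final step, deciding $\sharp\lang(B) = \sharp\lang(\protocol)$ within the bound. The M\"obius polynomial $\mu$ is invertible in the algebra of trace series (by Cartier--Foata, $\mu^{-1} = \sum_{t} t$). Multiplying both sides by $\mu$ therefore preserves the question, but it leaves you with two \emph{trace} series. It does not produce word series whose word-level equality (decidable by Sch\"utzenberger/Tzeng) is equivalent to equality over $M(\AlphSync,\interswapConc)$. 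Any word encoding has to pick representatives, and the natural choice destroys rationality. Take independent events $a = \msgFromTo{\procA}{\procB}{\val}$ and $b = \msgFromTo{\procC}{\procD}{\val}$. The unambiguous language $(ab)^*$ then has lexicographic section $\set{a^nb^n \mid n \geq 0}$, and multiplying by $\mu = 1 - a - b + ab$ does not repair this. So the claimed bound ``polynomial in the automata times $2^{O(|\AlphSync|)}$'' is unsupported. The paper argues differently: it invokes a small-model property from the proof of Theorem~5.4.10 of \cite{DBLP:books/ws/95/DR1995}. If the multiplicities differ, they differ on a trace of length at most the sum of the two state counts.

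You already have what you need for an \EXPTIME bound without general multiplicity equivalence.
\begin{itemize}
\item Both multiplicity functions are $0/1$-valued.
\item \cref{lm:canonical-implementation-incl-protocol-semantics} gives $[\lang(\protocol)] \subseteq [\lang(B)]$, so $\sharp\lang(\protocol) \leq \sharp\lang(B)$ pointwise.
\item Hence they are equal iff $|\lang(\protocol)\cap\AlphSync^n| = |\lang(B)\cap\AlphSync^n|$ for every $n$. Both automata are deterministic, so collapsing all letters to a single letter turns this into equality of two one-variable rational series.
\item The difference sequence has the form $u^{\top} W^n v$ for a block-diagonal integer matrix $W$ of dimension $d = |Q| + |B|$. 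By Cayley--Hamilton it vanishes iff its first $d$ terms vanish.
\item These terms are computed by $d$ matrix--vector products over integers of $O(d\log|\AlphSync|)$ bits. This takes time polynomial in $d$, which is singly exponential in $\protocol$.
\end{itemize}
This per-length counting is also what the paper's argument needs. Literally enumerating all traces up to the exponential length bound, as written there, takes doubly exponential time in general.
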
 
\begin{corollary} 
	The synchronous verification problem for unambiguous global protocols is decidable in \EXPTIME. 
\end{corollary}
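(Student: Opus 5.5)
We prove the corollary by reducing the synchronous verification problem to the same two checks that underlie the proof of the preceding theorem, with the canonical SCSM replaced by the given SCSM $\CSMabb{A}$. Let $\protocol$ be an unambiguous global protocol and $\CSMabb{A}$ an SCSM, given as input. By \cref{def:sync-impl}, $\CSMabb{A}$ realizes $\protocol$ iff (i) $\lang(\CSMabb{A}) = \langsync(\protocol)$ and (ii) $\CSMabb{A}$ is deadlock-free.

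\textbf{Step 1: build the product.} First construct the synchronous product $Prod(\CSMabb{A})$. It is a DFA with at most $\prod_{\procA} \card{Q_\procA}$ states, so it is exponential in the input. It recognizes exactly $\lang(\CSMabb{A})$.

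\textbf{Step 2: deadlock freedom.} Deadlock freedom holds iff every reachable configuration of $Prod(\CSMabb{A})$ is either final or has an outgoing transition. This is checked by a reachability search on the product, which takes time polynomial in its size and hence exponential overall.

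\textbf{Step 3: protocol fidelity.}
\begin{itemize}
\item By \cref{prop:scsm-recognizable-trace-language}, $\lang(\CSMabb{A})$ is $\interswapConc$-closed, so $\lang(\CSMabb{A}) = lin(T)$ for the recognizable trace language $T = [\lang(\CSMabb{A})]_\interswapConc$.
\item Apply \cref{lm:rec-is-unambiguous} to $Prod(\CSMabb{A})$. This yields, in time linear in the product and hence exponential overall, an automaton $B$ whose language unambiguously represents $T$, namely the lexicographic normal forms.
\item By \cref{prop:global-rational-trace-language}, $\langsync(\protocol) = [\lang(\protocol)]_\interswapConc$. Because both languages are $\interswapConc$-closed, fidelity is equivalent to the trace-language equality $[\lang(B)]_\interswapConc = [\lang(\protocol)]_\interswapConc$.
\item Since $\protocol$ is unambiguous and $\lang(B)$ is unambiguous, each of these representations has at most one representative per trace. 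Therefore equality of the trace languages holds iff $\sharp\lang(B) = \sharp\lang(\protocol)$.
\end{itemize}

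\textbf{Step 4: multiplicity equivalence.} Finally, invoke the decision procedure for $\sharp\EQUIVALENCE$ used in the proof of the preceding theorem. This procedure reduces multiplicity equivalence to equality of rational formal power series via a weighted-automaton construction. Feeding it an exponential-size automaton $B$ must still fit in \EXPTIME.

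The main obstacle is this complexity accounting. It only goes through if the bound invoked for the preceding theorem is polynomial in the automaton size, possibly exponential in the fixed alphabet. The plan is to mirror that theorem's proof verbatim, observing that there the canonical product is likewise of exponential size, so the same budget applies. The only change is that the product is now built from the given $\CSMabb{A}$ rather than from the subset construction, which has no worse a size bound.
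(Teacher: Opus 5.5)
Your proposal is correct and is essentially the paper's argument: the corollary is the proof of the \EXPTIME theorem rerun with $Prod(\CSMabb{A})$ in place of the canonical product, and your remark that both products are exponential is exactly why the same budget carries over. The rerun keeps the same steps: deadlock freedom is checked on the exponential-size product, and fidelity is checked via the lexicographic normal forms of \cref{lm:rec-is-unambiguous} and multiplicity equivalence with the unambiguous $\protocol$.

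The accounting step you flag is inherited from the paper. Its witness length $m$ is linear in the automaton sizes, but its literal enumeration of all traces up to length $m$ would be doubly exponential. You can close it as follows. First check $\lang(\protocol) \subseteq \lang(Prod(\CSMabb{A}))$ on words, so that $[\lang(\protocol)]_\interswapConc \subseteq [\lang(\CSMabb{A})]_\interswapConc$. Then, for each $k$ below the total number of states of $\protocol$ and $B$, compare the numbers of length-$k$ words in $\lang(\protocol)$ and in $\lang(B)$. By unambiguity these are the numbers of length-$k$ traces, and two one-variable rational series of that total dimension coincide once these initial coefficients agree. All of this takes time polynomial in the exponential-size automata.
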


\begin{remark} 
    \label{rem:recognizable-PSPACE}
	We conclude this section by mentioning the case of \emph{recognizable} trace languages. 
When $\lang(\protocol) = lin([\lang(\protocol)]_\interswapConc)$, realizability is decidable in PSPACE, and this case coincides with the class of \emph{commutation-closed} global types from \cite{DBLP:conf/ppdp/GiustoLU25}, or $I$-closed automata from \cite{DBLP:journals/ita/Zielonka87}. 
	When $\protocol$ is a star-connected regular expression, realizability is decidable in EXPSPACE thanks to the exponential construction from~ \cite{DBLP:conf/mfcs/MuschollP99}. 
\end{remark} 

\subsection{Undecidability} 
\label{sec:undecidability}

In this section, we show that synchronous realizability is undecidable in general. 
We adapt the undecidability proof of asynchronous realizability for 
global protocols, given in Theorem 9.1 of ~\cite{DBLP:phd/dnb/Stutz24}, which is in turn inspired by Theorem 3.4 of~\cite{DBLP:journals/tcs/Lohrey03}. 
In the proof, the acceptance problem for Turing machines is reduced to realizability. 
The proof constructs a 
protocol $\protocol_{\TM}$, that is asynchronously realizable if and only if there is no accepting computation for Turing machine~$\TM$. 
We show that in fact, $\protocol_{\TM}$ is synchronously realizable if and only if $\protocol_{\TM}$ is asynchronously realizable. 
Additionally, the encoding $\protocol_{\TM}$ is \sinkfinal, meaning that final states have no outgoing transitions. 

\begin{definition}[Sink finality] 
Let $A = (Q, \Sigma, \delta, q_0, F)$ be a finite automaton.
We say $A$ is \sinkfinal if for every final state $q \in F$, there is no $l \in \Sigma$ and $q' \in Q$ 
with $q \xrightarrow{l} q'$.
\end{definition}

We introduce the required preliminaries for the asynchronous setting, and defer full definitions and proofs to \cref{app:undecidability}. 

\subsubsection*{Alphabets and CSMs for asynchronous setting.}

A synchronous event can be split into a send and receive event, yielding \emph{asynchronous events}: 
$
\AlphAsync_{\procA} = 
\set{\snd{\procA}{\procB}{\val} \mid \procB \in \Procs,\; \val \in \MsgVals }
\union
\set{\rcv{\procB}{\procA}{\val} \mid \procB \in \Procs,\; \val \in \MsgVals }
$
and 
$
\AlphAsync = \Union_{\procA \in \Procs}
$. 
The homomorphism 
$
\SyncToAsync(\msgFromTo{\procA}{\procB}{\val})
\is
\snd{\procA}{\procB}{\val}. \,
\rcv{\procA}{\procB}{\val}
$
maps the synchronous alphabet to its asynchronous counterpart. 
The event $\snd{\procA}{\procB}{\val}$ denotes participant $\procA$ sending a message $\val$ to $\procB$,
and $\rcv{\procB}{\procA}{\val}$ denotes participant $\procA$ receiving a message $\val$ from $\procB$.
An (asynchronous) \emph{communicating state machine} (CSM) $\CSMabb{A} = \CLTS{\CSMabb{A}}$ consists of a DFA $\CSMabb{A}_\procA$ for each participant over $\AlphAsync_\procA$, connected by peer-to-peer FIFO channels. 
It is straightforward to obtain a finite automaton $\makeAsync{A}$ over $\AlphAsync_\procA$ from a finite automaton $A$ over~$\AlphSync_\procA$: one applies $\SyncToAsync(\hole)$ to the label of each transition and projects the result onto the asynchronous alphabet of $\procA$.
This extends to SCSMs as expected: for a SCSM $\CSMabb{B}$, its asynchronous CSM is denoted by $\makeAsync{\CSMabb{B}}$. 
We say a CSM or SCSM $\CSMabb{A}$ is \sinkfinal if $\CSMabb{A}_\procA$ is \sinkfinal for each $\procA \in \Procs$. 
The asynchronous semantics for CSMs is denoted by $\lang(\CSMabb{A})$ for CSM~$\CSMabb{A}$.  
The asynchronous semantics of global protocols is defined in terms of reorderings induced by peer-to-peer FIFO communication, and is denoted by $\langasync(\protocol)$ for protocol $\protocol$. 

\begin{definition}[Asynchronous Realizability Problem] \label{def:async-realizability}
A CSM $\CSMabb{A}$ is an asynchronous realization of protocol $\protocol$ if the following two properties hold:
\begin{inparaenum}[(i)]
\item \label{def:lts-implementability-protocol-fidelity}
\emph{protocol fidelity}: $\lang(\CSMabb{A}) = \langasync(\protocol)$, and
\item \label{def:lts-implementability-deadlock-freedom}
\emph{deadlock freedom}: $\CSMabb{A}$ is deadlock-free.
\end{inparaenum}
Given a protocol $\protocol$, the \emph{asynchronous realizability problem} asks whether there exists a CSM $\CSMabb{A}$ that is an asynchronous realization of $\protocol$. 
\end{definition}

We want to prove that $\protocol_{\TM}$ is synchronously realizable if and only if $\protocol_{\TM}$ is asynchronously realizable. 
The ``if'' direction follows directly from \cite[Thm.\,5.2]{NODBLPyet:10.1145/3756907.3756918}. 
The ``only if'' direction is significantly more involved. 

The (regular) language of encoding $\protocol_{\TM}$ satisfies an interesting property: every sent message is immediately acknowledged. 
We formalize this notion as follows. 

\begin{definition}
	We say a word $w \in \AlphSync^*$ is \emph{immediately acknowledging} if 
	$w = w_1 \ldots w_n$ such that  
	if $w_i = \msgFromTo{\procA}{\procB}{\val}$ for some $1 \leq i < n$,
	then $w_{i+1} = \msgFromTo{\procB}{\procA}{\val}$. 
	We say a finite automaton $A$ over $\AlphSync$ is \emph{immediately acknowledging} if every word in $\lang(A)$ is immediately acknowledging. 
An SCSM $\CSMabb{A}$ is \emph{immediately acknowledging} if
    for every $\procA \in \Procs$,  
    $\CSMabb{A}_\procA$ is immediately acknowledging. 
\end{definition}

Being immediately acknowledging means that for any sent message, its receiver has the power to prevent its sender from taking further actions, in both the asynchronous and synchronous setting. 
While this is a useful property, immediately acknowledging alone is not a sufficient condition for a synchronously realizable protocol to also be asynchronously realizable. 
The protocol depicted in \cref{fig:receiver-power-sync} serves as a counterexample: the key discrepancy is that in the asynchronous setting, sends are always enabled, whereas in the synchronous setting, both the sender and the receiver in a message exchange have the power to veto the exchange. 
Protocols that do not provide receivers this additional veto power are asynchronously realizable if they are synchronously realizable. 
We capture this intuition in a sufficient condition below, which states that for every message that can be sent from a reachable CSM configuration, either the receiver can receive the message immediately, or can receive any message value from the sender, after executing an immediately acknowledging sequence of events. 
We provide an intermediate definition towards defining our sufficient~condition. 

\begin{definition}Let $A$ be an immediately acknowledging finite automaton over $\AlphSync_\procA$ and let $q$ be a state in $A$. 
	We call a state $q'$ \emph{send-ack-reachable} if there is 
    $w \in (\Union_{\procA \neq \procB \in \Procs, \val \in \MsgVals} \set{
        \msgFromTo{\procA}{\procB}{\val} \cat 
        \msgFromTo{\procB}{\procA}{\val} 
        })^*$
	such that 
	$q \xrightarrow{w}\starred q'$. 
	We say $q$ is maximal wrt.\ send-ack-reachability if there is no state~$q'$ that is send-ack-reachable from $q$ with $q' \neq q$. 
\end{definition}

We define the sufficient condition on SCSMs rather than protocols, as the proof reasons about a canonical SCSM of a protocol. 
In the following definition, $\MsgVals(\CSMabb{A}, \procA, \procB)$ denotes all message values sent from $\procA$ to $\procB$ in a CSM $\CSMabb{A}$. 

\begin{definition}[Condition \condX]
	An SCSM $\CSMabb{A}$ satisfies Condition~\condX if 
	for every 
	$\vec{q} \in \reach(\CSMabb{A})$  and 
	$\procA, \procC \in \Procs$, 
	one of the following holds 
	if $\vec{q}_\procA \xrightarrow{\msgFromToNS{\procA}{\procC}{\val}}$: 
	\begin{enumerate}[label=(\alph*)]
		\item $\vec{q}_\procC \xrightarrow{\msgFromToNS{\procA}{\procC}{\val}}$, or 
		\item for every maximal send-ack-reachable $q'_\procC$, \\
		it holds that 
		$s'_\procC \xrightarrow{\msgFromToNS{\procA}{\procC}{\val}} $
		for every $m \in \MsgVals(\CSMabb{A}, \procA, \procC)$. 
	\end{enumerate}
\end{definition}

We show that Condition \condX is a sufficient condition on the realization of a protocol to preserve realizability from the synchronous to the asynchronous setting. 
Interestingly, Condition \condX is only required to show deadlock freedom. 

\begin{restatable}{lemma}{syncRealizationAndConditionXyieldAsyncRealization}
	\label{lm:sync-realization-and-Condition-X-yield-async-realization}
	Let $\protocol$ be an immediately acknowledging protocol and let $\CSMabb{B}$ be a sink-final immediately acknowledging SCSM that synchronously realizes $\protocol$. 
    Then, $\langasync(\protocol) = \lang(\CSMabb{\makeAsync{B}})$. 
    If $\CSMabb{B}$ satisfies Condition $\condX$, 
	then $\CSMabb{\makeAsync{B}}$ is asynchronously deadlock-free. 
\end{restatable}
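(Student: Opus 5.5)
Both claims will be proved by relating asynchronous runs of $\makeAsync{\CSMabb{B}}$ to synchronous runs of $\CSMabb{B}$. The bridge is a normal form for asynchronous executions that the immediately acknowledging property makes available.

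\emph{Step 1: normal form.} Since every local automaton $\CSMabb{B}_\procA$ is immediately acknowledging, a participant that has sent $\snd{\procA}{\procB}{\val}$ can next only perform $\rcv{\procB}{\procA}{\val}$. Consequently, in any reachable configuration of $\makeAsync{\CSMabb{B}}$ each participant has at most one pending message in flight, and only an acknowledgement of that message can follow it. We show by induction on the length of an asynchronous trace $u$ that $u$ is equivalent, up to the asynchronous reorderings (swapping independent events of different participants while respecting FIFO and send-before-receive), to a prefix of $\SyncToAsync(w)$ for some synchronous trace $w$ of $\CSMabb{B}$. The induction step takes the last event. A send extends $w$ only tentatively. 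A receive $\rcv{\procA}{\procB}{\val}$ matches the unique pending send on channel $(\procA,\procB)$, and we commute this send past the independent events of other participants so that the pair becomes adjacent. Since $\CSMabb{B}_\procB$ took the receive transition, the synchronous event $\msgFromTo{\procA}{\procB}{\val}$ is enabled in $\CSMabb{B}$ at the corresponding configuration.

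\emph{Step 2: fidelity.} For $\lang(\makeAsync{\CSMabb{B}}) \subseteq \langasync(\protocol)$, take a maximal asynchronous trace. Final states are sinks, so every channel is empty at the end, because a final participant cannot receive anymore. Step 1 then yields a maximal synchronous trace $w \in \lang(\CSMabb{B}) = \langsync(\protocol) = [\lang(\protocol)]_\interswapConc$ (using \cref{prop:equiv-equivalence}). Applying $\SyncToAsync$ and closing under the asynchronous reorderings gives membership in $\langasync(\protocol)$. For the converse, each word of $\lang(\protocol)$ is a maximal synchronous trace of $\CSMabb{B}$, and its split is executable in $\makeAsync{\CSMabb{B}}$ by performing each send immediately followed by its receive. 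The closure of $\lang(\makeAsync{\CSMabb{B}})$ under asynchronous reorderings is standard for CSMs.

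\emph{Step 3: deadlock freedom under \condX.} Let $c$ be a reachable asynchronous configuration with no outgoing transition. By Step 1, $c$ consists of a synchronous configuration $\vec{q}$ reached in $\CSMabb{B}$, together with a set of pending sends $\procA\to\procC$, each sender sitting in an intermediate ack-waiting state. If nothing is pending, then $\vec{q}$ has no synchronous successor either, so deadlock freedom of $\CSMabb{B}$ forces $\vec{q}$ to be final, and $c$ is final.

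Otherwise pick a pending send $\msgFromTo{\procA}{\procC}{\val}$ enabled at $\vec{q}_\procA$. Under \condX(a), $\procC$ can receive it at once, which contradicts stuckness. Under \condX(b), we argue that $\procC$ is driven through send-ack pairs to a maximal send-ack-reachable state, at which it accepts $\val$ from $\procA$. So $\procC$ must be blocked inside a send-ack sequence, waiting for an acknowledgement from some $\procD$ who is itself blocked. This yields a waiting chain $\procC\to\procD\to\cdots$ among finitely many participants, and the chain must close into a cycle of pending sends. Each participant in the cycle is waiting on its successor.

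\emph{Main obstacle.} The crux is refuting this cycle. The first attempt is to take the participants of the cycle, undo their pending sends, and reconstruct a synchronous configuration $\vec{q}$ of $\CSMabb{B}$. By \condX, every participant in the cycle still has a choice to accept an incoming send, so we locate a synchronous run in $\CSMabb{B}$ from $\vec{q}$ that fires one of those exchanges. Then we have to show that the asynchronous configuration corresponds to this run, contradicting the assumption that nothing is enabled at $c$. Making this correspondence precise, i.e.\ showing that the ``maximal send-ack-reachable'' quantifier in \condX(b) covers exactly the states at which the blocked receiver sits, is where the care is needed. Sink-finality is also used here, to rule out cases where a participant has already reached a final state while a message to it is still pending.
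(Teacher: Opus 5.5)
\textbf{Gap in Step~1.} The problem is in the receive case of your Step~1, and it is the same obstruction you hit at the end of Step~3. When the last event is $\rcv{\procA}{\procB}{\val}$, you conclude that $\msgFromTo{\procA}{\procB}{\val}$ is enabled at the synchronous configuration because $\CSMabb{B}_\procB$ took the receive transition. That requires $\procB$ to have no pending send of its own, i.e.\ to still be in its state of $\vec{q}$. Immediate acknowledgement does not give you this. A participant waiting for an acknowledgement from $\procA$ consumes any matching value $\procA$ has put on the channel, including a fresh message $\procA$ sent from its own synchronous state. (Separately, ``a prefix of $\SyncToAsync(w)$'' is too strong once two sends are pending. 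The shape you use in Step~3, a synchronous part followed by pending sends of distinct senders, is the right invariant.)

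\textbf{Counterexample.} Take $\Procs=\set{\procA,\procB}$ and $\lang(\protocol)=\set{\msgFromTo{\procA}{\procB}{\val}\cdot\msgFromTo{\procB}{\procA}{\val},\ \msgFromTo{\procB}{\procA}{\val'}\cdot\msgFromTo{\procA}{\procB}{\val'}}$. Let $\CSMabb{B}$ be its canonical SCSM; both local automata consist of exactly these two branches. Since $\interswapConc$ is empty for two participants, $\CSMabb{B}$ synchronously realizes $\protocol$. It is sink-final and immediately acknowledging, and Condition~\condX holds by case (a) at every reachable configuration. In $\makeAsync{\CSMabb{B}}$, however, both participants may send first. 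After $\snd{\procA}{\procB}{\val}\,\snd{\procB}{\procA}{\val'}$, $\procA$ waits for $\rcv{\procB}{\procA}{\val}$ and $\procB$ waits for $\rcv{\procA}{\procB}{\val'}$.
\begin{itemize}
\item If $\val\neq\val'$, this configuration is a deadlock.
\item If $\val=\val'$, each participant consumes the other's initiating message as its acknowledgement. The run ends in a final configuration with $\procA$-projection $\snd{\procA}{\procB}{\val}\,\rcv{\procB}{\procA}{\val}$ and $\procB$-projection $\snd{\procB}{\procA}{\val}\,\rcv{\procA}{\procB}{\val}$. No word of $\SyncToAsync(\lang(\protocol))$ matches both projections, so the fidelity claim already fails, and it does not involve \condX.
\end{itemize}

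\textbf{Consequence for Step~3.} The same example shows that Step~3 cannot be completed. Case (a) of Condition~\condX is a statement about $\vec{q}_\procC$. It therefore does not let you conclude that $\procC$ can receive in $c$ when $\procC$ has a pending send of its own. The example is precisely a two-element cycle of pending sends under case (a), so the ``crux'' you isolate cannot be refuted from the stated hypotheses.

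You are not missing an idea from the paper. Its proof follows your outline: Property~(i) is its version of your Step~1, followed by the same two fidelity inclusions and a contraposition via \condX. It takes the same two steps without justification:
\begin{itemize}
\item Property~(i) assumes that a message an ack-waiting participant consumes as its acknowledgement was sent as one.
\item The deadlock part dismisses case (a) on the grounds that the receiver ``could receive''.
\end{itemize}
The lemma needs an additional hypothesis that excludes these races. For instance, require that no participant is ever sent a fresh message by the partner whose acknowledgement it awaits, and that pending sends never form a cycle. Such a hypothesis would then have to be verified for the canonical SCSM of $\protocol_{\TM}$.
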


Last, we show that the canonical SCSM for $\protocol_\TM$ satisfies Condition~\condX when $\protocol_\TM$ is realizable. 

\begin{restatable}{lemma}{canonicalRealizationSatConditionX}
	\label{lm:canonical-realization-sat-Condition-X}
	If its canonical SCSM $\CSMabb{A}$ synchronously realizes $\protocol_\TM$, then $\CSMabb{A}$ satisfies Condition~\condX. 
\end{restatable}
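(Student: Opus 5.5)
The plan is to argue semantically, through the protocol $\protocol_\TM$ and the local language property of canonical SCSMs, rather than through the automata of $\CSMabb{A}$ directly. Fix a reachable configuration $\vec{q} \in \reach(\CSMabb{A})$ and participants $\procA, \procC$ with $\vec{q}_\procA \xrightarrow{\msgFromToNS{\procA}{\procC}{\val}}$. Since $\CSMabb{A}$ synchronously realizes $\protocol_\TM$, there is a trace $u$ of $\CSMabb{A}$ reaching $\vec{q}$. By deadlock freedom and protocol fidelity ($\lang(\CSMabb{A}) = \langsync(\protocol_\TM)$, combined with \cref{prop:global-rational-trace-language}), $u$ extends to a maximal trace. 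Hence $u$ is, up to $\interswapConc$, a prefix of some word of $\lang(\protocol_\TM)$. By the local language property of the canonical SCSM (part (ii)), the local state $\vec{q}_\procA$ (resp.\ $\vec{q}_\procC$) is exactly the set of protocol states consistent with $u \wproj_{\AlphSync_\procA}$ (resp.\ $u \wproj_{\AlphSync_\procC}$). This lets us translate both conditions (a) and (b) into statements about which continuations of $u \wproj_{\AlphSync_\procC}$ lie in $\pref(\lang(\protocol_\TM)\wproj_{\AlphSync_\procC})$.

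Next I will do a case analysis along the phases of the encoding $\protocol_\TM$ (initial choice, the two participants streaming configuration encodings symbol by symbol with immediate acknowledgements, and the final checking or round markers $\markNewRound$, $\markEndLoop$, $\markBeginConf$, $\markEndConf$). I will determine, for each participant $\procA$ and each kind of local state of $\procA$ that enables a send to $\procC$, what $\procC$'s local state can be.

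In the phases where the protocol dictates $\procA \to \procC$ as the next interaction for $\procC$ in every consistent protocol state, case (a) follows: $\procC$ has already advanced to a matching state, and because $\protocol_\TM$ is immediately acknowledging and sink-final, there is no pending interaction of $\procC$ that could precede it. The remaining situation is when $\procC$ is still engaged in an exchange with a third participant. This happens because of concurrency between the two configuration streams. Here I use that all of $\procC$'s pending interactions are send--acknowledge pairs, so following them leads to a maximal send-ack-reachable state $q'_\procC$. At such a state $\procC$ has finished its current block and waits for the next symbol of $\procA$'s stream.

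The crux, and the step I expect to be the main obstacle, is showing that in such a waiting state $\procC$ accepts \emph{every} value in $\MsgVals(\CSMabb{A}, \procA, \procC)$. The route I will try first is to exploit that in $\protocol_\TM$ the symbols a participant sends while encoding a configuration are freely guessed: the protocol contains a branch with each tape symbol or marker at that position, and the correctness check is deferred to the end. Consequently $\lang(\protocol_\TM)\wproj_{\AlphSync_\procC}$ is closed under replacing such a value. By part (i) of the local language property, $q'_\procC$ then has an outgoing $\msgFromTo{\procA}{\procC}{\val'}$ transition for all such $\val'$.

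The delicate part is the boundary between phases, where the set of admissible values differs: for instance, $\markEndLoop$ may be expected only after a marker. There I will argue by contradiction using realizability. If some admissible value were missing at a maximal send-ack-reachable state, I would construct an interleaving in $\CSMabb{A}$, justified by \cref{prop:scsm-interswap-closed}, in which $\procA$ commits to that value while $\procC$ cannot receive it and no other move is possible. Such a configuration would be a reachable deadlock, contradicting the assumption that $\CSMabb{A}$ is a realization.
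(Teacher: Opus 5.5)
Reducing Condition~\condX to statements about projections is sound. Since $\CSMabb{A}_\procC$ is deterministic and reduced with $\pref(\lang(\CSMabb{A}_\procC)) = \pref(\lang(\protocol_\TM)\wproj_{\AlphSync_\procC})$, the events enabled after a local history $x$ are exactly those $a$ with $x a \in \pref(\lang(\protocol_\TM)\wproj_{\AlphSync_\procC})$. The gap is in where you place case (b) and how you discharge it.

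In the actual encoding, the stream receivers $\procA_1$ and $\procA_5$ talk only to $\procA_2$ and $\procA_4$ and stay in lockstep with them. So every stream symbol (contents, $\markBeginConf$, $\markEndConf$, $\markEndLoop$) falls under case (a), as does every acknowledgement. Your crux about ``freely guessed'' symbols therefore concerns a case that needs no work. The only genuine instance of case (b) is $\procA_3$ offering $\markNewRound$ or $\markEndLoop$ to $\procA_2$ (resp.\ $\procA_4$) while the latter is still streaming a block. There the relevant maximal send-ack-reachable state is the one reached after $\markEndConf$, and you must show that both markers are receivable in it.

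That is exactly the ``boundary'' case you defer to a contradiction argument, and the argument fails. $\procA_3$ never commits to one marker; it offers both at once. If $\procA_2$ could not take $\markEndLoop$, the rendezvous would simply happen on $\markNewRound$: the receiver's veto produces no deadlock. In fact Condition~\condX does not follow from synchronous realizability at all. Add acknowledgements to the protocol of \cref{fig:receiver-power-sync-protocol}. The result is synchronously realizable, immediately acknowledging and sink-final, yet its canonical SCSM violates \condX: $\procC$ can send to $\procB$ initially, while $\procB$'s initial state is its only maximal send-ack-reachable state and admits nothing from $\procC$. So no generic ``missing value implies deadlock'' argument can succeed.

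What you need instead is a positive fact about $\procA_2$'s projection. After every history ending a block, both $\msgFromTo{\procA_3}{\procA_2}{\markNewRound}$ and $\msgFromTo{\procA_3}{\procA_2}{\markEndLoop}$ must stay inside $\pref(\lang(\protocol_\TM)\wproj_{\AlphSync_{\procA_2}})$; this holds because the number of rounds is unconstrained.

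Establishing such facts also requires correcting your picture of $\protocol_\TM$. There is no deferred check, and $L_r$ ($L_l$ minus the accepting computations) need not be closed under replacing a symbol in a word. Its single-participant projections are nevertheless unconstrained, because the other stream can always be chosen to spoil acceptance, but you would have to argue this.

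The paper sidesteps the issue. It first uses realizability to obtain $L_l = L_r$, so $\protocol_\TM$ collapses to the initial choice followed by the unconstrained language $L'_l$. It then writes down the canonical automata for $\procA_1, \procA_2, \procA_3$ (those for $\procA_5, \procA_4$ being symmetric) and checks \condX by inspection:
\begin{itemize}
\item $\procA_1$ and $\procA_5$ move in lockstep with their senders;
\item $\procA_3$ only ever receives acknowledgements;
\item state $q_{2,1}$ accepts both $\markNewRound$ and $\markEndLoop$.
\end{itemize}
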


\cref{lm:sync-realization-and-Condition-X-yield-async-realization,lm:canonical-realization-sat-Condition-X}
are used to show the ``only if'' direction of the equivalence, namely that $\protocol_\TM$ is synchronously realizable implies that $\protocol_\TM$ is asynchronously realizable. With this equivalence, we obtain our main result. 

\begin{restatable}{theorem}{syncRealizabilityUndecidable}
	\label{thm:sync-realizability-undecidable}
	Synchronous realizability of sink-final global protocols is undecidable. 
\end{restatable}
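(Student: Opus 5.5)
The plan is a many-one reduction from the complement of Turing machine acceptance, using the sink-final, immediately acknowledging encoding $\protocol_\TM$ adapted from Theorem 9.1 of \cite{DBLP:phd/dnb/Stutz24}. For a given $\TM$, that construction yields $\protocol_\TM$, which is asynchronously realizable if and only if $\TM$ has no accepting computation. Since this acceptance problem is undecidable, it suffices to show that $\protocol_\TM$ is synchronously realizable if and only if it is asynchronously realizable. We must also check that $\protocol_\TM$ is sink-final, immediately acknowledging, and computable from $\TM$; all three are syntactic properties of the construction and can be read off directly.

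The ``if'' direction follows from \cite[Thm.\,5.2]{NODBLPyet:10.1145/3756907.3756918}: any asynchronous realization yields a synchronous one.

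For the ``only if'' direction, assume $\protocol_\TM$ is synchronously realizable. By \cref{lm:realizable-iff-canonical-scsm}, its canonical SCSM $\CSMabb{A}$ then realizes it. Before applying \cref{lm:sync-realization-and-Condition-X-yield-async-realization}, we check its hypotheses on $\CSMabb{A}$.
\begin{enumerate}
\item \emph{Immediately acknowledging.} By the local language property of canonical SCSMs, $\lang(\CSMabb{A}_\procA) = \lang(\protocol_\TM)\wproj_{\AlphSync_\procA}$, and the same holds for prefixes. Since every word of $\protocol_\TM$ is immediately acknowledging, and a send $\msgFromTo{\procA}{\procB}{\val}$ is always followed by its acknowledgement, which involves the same participant $\procA$, each projection is immediately acknowledging.
\item \emph{Sink-final.} Each final state of $\CSMabb{A}_\procA$ is a subset of protocol states containing a final state. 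Sink-finality of $\protocol_\TM$ alone does not suffice: we must also argue that no state in that subset has a $\procA$-relevant continuation. This needs an inspection of the encoding, which should show that once some run can end, the participant has no further actions on any compatible run.
\end{enumerate}
With these in place, \cref{lm:sync-realization-and-Condition-X-yield-async-realization} gives $\langasync(\protocol_\TM) = \lang(\makeAsync{\CSMabb{A}})$. \cref{lm:canonical-realization-sat-Condition-X} shows that $\CSMabb{A}$ satisfies Condition~\condX, so the same lemma gives deadlock freedom of $\makeAsync{\CSMabb{A}}$. Hence $\makeAsync{\CSMabb{A}}$ asynchronously realizes $\protocol_\TM$.

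The main obstacle is the sink-finality of the canonical SCSM, since the canonical subset construction could in principle merge a final protocol state with a non-final one. If direct inspection of the encoding fails, the fallback is to modify $\protocol_\TM$ by appending a dedicated termination message to every participant. This would have to be done in a way that preserves immediate acknowledgement, together with a check that the undecidability of asynchronous realizability survives the change. Everything else is a composition of the preceding lemmas.
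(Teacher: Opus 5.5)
Your proposal matches the paper's proof: reduce from non-acceptance via $\protocol_{\TM}$, obtain the ``if'' direction from \cite[Thm.\,5.2]{NODBLPyet:10.1145/3756907.3756918}, and for ``only if'' pass to the canonical SCSM, invoke \cref{lm:canonical-realization-sat-Condition-X}, and apply \cref{lm:sync-realization-and-Condition-X-yield-async-realization}. The paper leaves your added hypothesis checks implicit, and your sink-finality worry is settled by inspection: in every participant's projection of $\lang(\protocol_{\TM})$ the last event is a $\markEndLoop$-exchange occurring exactly once per word, so these projected languages are prefix-free, and by reducedness the final states of the canonical automata have no outgoing transitions (the automata drawn in the proof of \cref{lm:canonical-realization-sat-Condition-X} show this too).
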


Because the encoding can be represented as a high-level message sequence chart, we obtain the following: 

\begin{corollary}
Synchronous realizability of high-level message sequence charts is undecidable. 
\end{corollary}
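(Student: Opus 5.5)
The proof is a reduction from the (undecidable) acceptance problem for Turing machines, using the protocol $\protocol_{\TM}$ from Theorem 9.1 of \cite{DBLP:phd/dnb/Stutz24}. The goal is the equivalence
\[
\protocol_{\TM} \text{ synchronously realizable} \iff \protocol_{\TM} \text{ asynchronously realizable} \iff \TM \text{ has no accepting computation.}
\]
The second equivalence is the content of the cited asynchronous result, and the encoding is computable from $\TM$. It therefore suffices to establish the first equivalence, together with two syntactic facts about the encoding that are checked directly from its construction:
\begin{itemize}
\item $\protocol_{\TM}$ is \sinkfinal;
\item $\protocol_{\TM}$ is immediately acknowledging.
\end{itemize}

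\textbf{The ``if'' direction.} If $\protocol_{\TM}$ is asynchronously realizable, then it is synchronously realizable by \cite[Thm.\,5.2]{NODBLPyet:10.1145/3756907.3756918}.

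\textbf{The ``only if'' direction.} Suppose $\protocol_{\TM}$ is synchronously realizable.
\begin{enumerate}
\item By \cref{lm:realizable-iff-canonical-scsm}, its canonical SCSM $\CSMabb{A}$ realizes it.
\item $\CSMabb{A}$ inherits sink-finality and immediate acknowledgement from $\protocol_{\TM}$. This follows from the local language property of canonical SCSMs: $\lang(\CSMabb{A}_\procA) = \lang(\protocol_{\TM})\wproj_{\AlphSync_\procA}$, and the same holds for prefixes. Final states of the subset construction have no continuations because $\protocol_{\TM}$ is \sinkfinal and reduced. Immediate acknowledgement projects to each participant, since every acknowledgement involves the same two participants as the message it acknowledges.
\item By \cref{lm:canonical-realization-sat-Condition-X}, $\CSMabb{A}$ satisfies Condition~\condX.
\item By \cref{lm:sync-realization-and-Condition-X-yield-async-realization}, $\makeAsync{\CSMabb{A}}$ satisfies $\lang(\makeAsync{\CSMabb{A}}) = \langasync(\protocol_{\TM})$ and is asynchronously deadlock-free. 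Hence it is an asynchronous realization, and $\protocol_{\TM}$ is asynchronously realizable.
\end{enumerate}

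Chaining the two equivalences, an algorithm for synchronous realizability of sink-final protocols would decide non-acceptance for Turing machines. That is impossible, so the problem is undecidable.

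The main obstacle is step 2, justifying that the canonical SCSM is itself \sinkfinal and immediately acknowledging. The subset construction could in principle merge a final state with a non-final one that has outgoing transitions, which would break sink-finality. I would handle this using the specific shape of $\protocol_{\TM}$: termination is signalled by dedicated final messages to every participant, so any subset containing a final protocol state consists only of such end states. If this argument fails in general, I would instead show these properties directly for the realizing SCSM, since deadlock freedom and protocol fidelity force its local languages to coincide with the projections of $\protocol_{\TM}$.
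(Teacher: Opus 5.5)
Your argument re-derives \cref{thm:sync-realizability-undecidable}, and it follows the paper's proof of that theorem step for step: Thm.~5.2 of Di~Giusto et al.\ for the ``if'' direction, and the canonical SCSM with \cref{lm:canonical-realization-sat-Condition-X} and \cref{lm:sync-realization-and-Condition-X-yield-async-realization} for ``only if''. But it stops there. Your conclusion is undecidability for \sinkfinal global protocols given as DFAs over $\AlphSync$. The statement, however, is about high-level message sequence charts, and your proposal never mentions them. A reduction whose instances are DFA protocols says nothing about HMSCs until you show that those instances are effectively HMSCs with the same synchronous semantics. That observation is the entire content of the paper's proof of the corollary (``the encoding can be represented as a high-level message sequence chart''). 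To close the gap, add that $\protocol_{\TM}$, and indeed any global protocol, is effectively an HMSC: label each transition with the atomic MSC consisting of that single synchronous exchange. Weak sequential composition of such MSCs orders events exactly per participant. So the HMSC denotes $\langsync(\protocol_{\TM}) = [\lang(\protocol_{\TM})]_{\interswapConc}$ (\cref{prop:equiv-equivalence}), and synchronous realizability of the HMSC coincides with that of $\protocol_{\TM}$. With this step added, your chain of equivalences yields the corollary.

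A smaller point concerns your step~2. It is not true that a subset state containing a final protocol state consists only of final (``end'') states. After $\procA_1$'s last exchange of $\markEndLoop$, its subset still contains protocol states in which $\procA_3$, $\procA_4$, $\procA_5$ have not yet exchanged $\markEndLoop$. Sink-finality of $\CSMabb{A}_\procA$ only requires that no state in the subset has a pending $\AlphSync_\procA$-event. The paper sidesteps the issue by exhibiting the canonical automata explicitly (\cref{fig:mixed-choice-canonical-p1}), where sink-finality and immediate acknowledgement can be read off. Your fallback via an arbitrary realization would not work. Protocol fidelity and deadlock freedom do not force a realizing SCSM's local automata to be \sinkfinal, nor their local languages to equal the projections, because local transitions that are never jointly enabled are invisible to both properties.
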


     \section{Type system} \label{sec:type-system}

In this section, we present a type system that checks mixed choice processes against SCSMs, which thus act as interface between global protocols and processes.
In fact, this makes our type system applicable to any SCSM, which does not need to have been obtained from or even be related to a global protocol. 
Our type system adapts the one from \cite{DBLP:conf/esop/StutzD25} 
(which is inspired by \cite{DBLP:journals/pacmpl/ScalasY19}) from the asynchronous to the synchronous setting and extends it by mixed choice. It is the first type system with support for mixed choice processes that addresses session interleaving and delegation.

\begin{definition}
Processes
and process definitions are defined by the following grammar:
\begin{grammar}
 c \is
        x
    |   s[\procA]
\\
 \prefixPi \is 
        c[\procB] \sendOp \labelAndMsg{l}{c}
    |   c[\procB] \recOp \labelAndVar{l}{y}
    \\
 P \is
        \zero
    |   P_1 \parallel P_2
    |   (\restr s \hasType \CSMabb{A}) \, P
    |   \MixCh_{i \in I} \prefixPi_i \seq P_i
    |   \pn{Q}[\vec{c}]
    \\
\Defs \is
    \bigl(\pn{Q}[\vec{x}] =
    \MixCh_{i \in I} c[\procB_i] \, \prefixPi_i \seq P_i\bigr); \; \Defs
    | \emptystring
\end{grammar}

A term $c$ is either a session endpoint $s[\procA]$ or a variable $x$ (which at runtime will resolve to a session endpoint). 
As is standard, $\zero$ represents explicit termination and $\parallel$ is parallel composition. 
The term $(\restr s \hasType \CSMabb{A})$ restricts a session~$s$ where SCSM $\CSMabb{A}$ provides the intended session behavior, an annotation solely used for type checking and not for reductions. 
We have mixed choice ($\MixCh$) so processes can choose between 
sending ($\sendOp \labelAndMsg{l}{c}$) or 
receiving ($\recOp \labelAndVar{l}{y}$) a message. 
We assume $\card{I} > 0$ and omit $\MixCh$ if $\card{I} = 1$.   
With $\pn{Q}[\vec{c}]$, we specify a process call: $\pn{Q}$ is the process identifier and $\vec{c}$ its parameters.  
The process definitions are given by $\Defs$: for instance 
$
    \pn{Q}[\vec{x}] =
    \MixCh_{i \in I} \prefixPi_i \seq P_i
$, providing guarded process definitions. 
With $\Defs(\pn{Q}, \vec{c})$, we denote the unfolding of the process definition, \ie 
$\bigl( \MixCh_{i \in I} \prefixPi_i \seq P_i \bigr) [\vec{c} / \vec{x}]$. 
\end{definition}

Structural congruence is a standard way to define if processes shall be considered equivalent, \eg parallel composition with $\zero$ can be removed (or added). 

\begin{definition}
Structural congruence, denoted by $\congr$, is defined using the following rules:\footnote{Note that $\interswapConc$ is also used to denote the concurrency relation in \cref{sec:realizability}. The intended interpretation of $\interswapConc$ is always clear from context.}

\begin{itemize}
 \item $P_1 \parallel P_2
        \congr
        P_2 \parallel P_1$
 \item $(P_1 \parallel P_2) \parallel P_3
        \congr
        P_1 \parallel (P_2 \parallel P_3)$
 \item $P \parallel \zero
        \congr
        P$
 \item $(\restr s \hasType \CSMabb{A}) \, (\restr s' \hasType \CSMabb{B}) \, P
        \congr
        (\restr s' \hasType \CSMabb{B}) \, (\restr s \hasType \CSMabb{A}) \, P$
 \item $(\restr s \hasType \CSMabb{A}) \, (P_1 \parallel P_2)
        \congr
        P_1 \parallel (\restr s \hasType \CSMabb{A}) \, P_2$,
        if $s$ is not free in $P_1$
\end{itemize}

We define structural precongruence $\precongr$ for processes as the smallest precongruence relation that
includes $\congr$ and
    $(\restr s \hasType \CSMabb{A}) \, \zero
    \precongr
    \zero$.
\end{definition}

When we let processes reduce, we allow descending into the structure of the process by the means of reduction contexts.

\begin{definition}
We define reduction contexts as follows:
\begin{grammar}
    \redContext \is
        \redContext \parallel P
    |   P \parallel \redContext
    |   (\restr s \hasType \CSMabb{A}) \, \redContext
    |   [\,] \enspace .
\end{grammar}

We define the reduction rules in \cref{fig:reduction-rules}. 
The rule \procReductionExc lets a composition of two processes exchange a message: the label is the same and a value is sent and received. 
The rule \procReductionProcName unfolds a process definition. The rule \procReductionContext allows us to descend for reductions using contexts. 
Last, \procReductionCongr allows us to consider structurally precongruent processes for reductions.
\end{definition}

\begin{figure}[t]
  \adjustfigure[\scriptsize]
  \begin{mathpar}

  \inferrule*[right=\procReductionExc]{
          I \inters J = \emptyset \\
          k \in I  \\
          k' \in J  \\
\prefixPi_k = s[\procA][\procB] \sendOp \labelAndMsg{l_k}{v_k} \\
\prefixPi_{k'} = s[\procB][\procA] \recOp \labelAndVar{l_{k'}}{y_{k'}} \\
          l_k = l_{k'} 
  }{
          \MixCh_{i \in I} \prefixPi_i \seq P_i
          \parallel
          \MixCh_{j \in J} \prefixPi_j \seq P_j
              \redto
          P_k
          \parallel
          P_{k'}[v_k / y_{k'}]
  }

  \inferrule*[right=\procReductionProcName]{
          \Defs(\pn{Q}, \vec{c})
          \parallel
          P
          \redto
          P'
  }{
          \pn{Q}[\vec{c}]
          \parallel
          P
          \redto
          P'
  }

  \inferrule*[right=\procReductionContext]{
          P \redto P'
  }{
          \redContext[P] \redto \redContext[P']
  }

  \inferrule*[right=\procReductionCongr]{
          P_1 \precongr P'_1
          \\
          P'_1 \redto P'_2
          \\
          P'_2 \precongr P_2
  }{
          P_1 \redto P_2
  }

\end{mathpar}
  \caption{Reduction rules for processes.}
  \label{fig:reduction-rules}
\end{figure}

\begin{definition}
The process definition typing context $\typingContextOne$ is a function that maps process identifiers to the types of their parameters, \ie 
$\typingContextOne \from \ProcIds \to \vec{L}$.
A \emph{syntactic typing context} is given by this grammar:
\begin{grammar}
    \typingContextTwo \is
        \typingContextTwo, s[\procA] \hasType L
    |   \typingContextTwo, x \hasType L
    |   \emptyset 
        \enspace .
\end{grammar}
A syntactic typing context is a \emph{typing context} if every element has at most one type and we only consider the latter in this work. 
Typing contexts are considered equivalent up to reordering and, thus, may be treated as mappings.
Notation $\set{\typingContextTwo_i}_{i \in I}$ denotes splitting $\typingContextTwo$ into $\card{I}$ typing contexts.
\end{definition}

\begin{figure}[t]
  \adjustfigure[\small]
  \begin{mathpar}
\inferrule*[right=\procTypingProcDefEmpty]{
}{
      \types
      \emptystring \hasType \typingContextOne
  }

  \inferrule*[right=\procTypingProcDef]{
      \typingContextOne
      \typingContextCat
      \vec{x} \hasType \vec{L}
      \types
      P \\
\typingContextOne(\pn{Q}) = \vec{L} 
      \\
      \types
      \Defs \hasType \typingContextOne
  }{
      \types
      (\pn{Q}[\vec{x}] = P); \Defs \hasType \typingContextOne
  }

  \inferrule*[right=\procTypingProcName]{
      \typingContextOne(\pn{Q}) = \vec{L} }{
      \typingContextOne
          \typingContextCat
          \vec{c}  \hasType \vec{L}
\types
          \pn{Q}[\vec{c}]
  }

  \inferrule*[right=\procTypingZero]{
}{
      \typingContextOne
      \typingContextCat
      \emptyset
          \types
      \zero
  }

  \inferrule*[right=\procTypingEnd]{
      \typingContextOne
      \typingContextCat
      \typingContextTwo \types P \\
\EndState(q)
}{
      \typingContextOne
      \typingContextCat
      c \hasType q,
      \typingContextTwo
          \types
      P
  }

  \inferrule*[right=\procTypingParallel]{
      \typingContextOne
          \typingContextCat
          \typingContextTwo_1
          \types
          P_1 \\
\typingContextOne
          \typingContextCat
          \typingContextTwo_2
          \types
          P_2 \\
  }{
      \typingContextOne
          \typingContextCat
          \typingContextTwo_1,
          \typingContextTwo_2
          \types
          P_1 \parallel P_2
  }

  \inferrule*[right=\procTypingMixCh]{
\delta(q) =
      \set{(\msgFromTo{\procA}{\procB_i}{\labelAndType{l_i}{L_i}}, q_i) \mid i \in I}
      \dunion
      \set{(\msgFromTo{\procB_j}{\procA}{\labelAndType{l_j}{L_j}}, q_j) \mid j \in J}
      \\
\meta{\forall i \in I \st}
      \typingContextOne \typingContextCat
          \typingContextTwo , c \hasType q_i,
          \set{c_j \hasType L_j}_{j \in I\setminus \set{i}}
          \types P_i 
      \\
      \meta{\forall j \in J \st}
      \typingContextOne \typingContextCat
          \typingContextTwo,
          c \hasType q_j,
          \set{c_i \hasType L_i}_{i \in I}, 
          y_j \hasType L_j
          \types P_j \\
  }{
      \typingContextOne
      \typingContextCat
      \typingContextTwo,
      c \hasType q,
      \set{c_i \hasType L_i}_{i \in I}
      \\
          \types
      \MixCh_{i \in I} c[\procB_i] \sendOp \labelAndMsg{l_i}{c_i} \seq P_i
      + 
      \MixCh_{j \in J} c[\procB_j] ? \labelAndVar{l_j}{y_j} \seq P_j
  }

\inferrule*[right=\procTypingRestr]{
\vec{q} \in \reach(\CSMabb{A})
      \\
      \typingContextTwo_s =
          \set{s[\procA] \hasType \vec{q}_\procA}_{\procA \in \ProcsOf{\CSMabb{A}}}
      \\
\typingContextOne
          \typingContextCat
          \typingContextTwo,
          \typingContextTwo_s
      \types
      P
}{
      \typingContextOne
          \typingContextCat
          \typingContextTwo
          \types
          (\restr s \hasType \CSMabb{A})\, P
  }
  \end{mathpar}
  \caption{
    Typing rules for processes.
}
  \label{fig:proc-typing}
\end{figure}

In any system, for two labels and types 
$\labelAndType{l_1}{L_1}$ and 
$\labelAndType{l_2}{L_2}$, we assume that 
$l_1 = l_2$ implies that $L_1 = L_2$. 
Other synchronous MST frameworks ensure this when projecting global types, through the merge operator. 
Some also circumvent the underlying problem by splitting branching decisions from sending and receiving values. 
In practice, such types can be common base types (like booleans, integers) or a channel endpoint $s[\procA]$ with the goal of delegation. 
It is well-known how to add base types to such type systems (and orthogonal) so, following \cite{DBLP:journals/pacmpl/ScalasY19}, we focus on channel endpoints.

\begin{definition}
We give the typing rules for processes in \cref{fig:proc-typing}. 
For any state $q$, $\EndState(q)$ holds if $q$ is final and has no outgoing transitions. The rules \procTypingProcDefEmpty and \procTypingProcDef are used for process definitions and allow us to reason about the types of their parameters. 
This information is required to type process definitions, as in \procTypingProcName. 
With \procTypingZero, we type the process $\zero$ but only with an empty second typing context, which can be achieved through \procTypingEnd: one can drop $c \hasType q$ if $\EndState(q)$ holds. 
The rule \procTypingMixCh lets us type processes for which we partition the choices into sending and receiving next. 
With \procTypingParallel, parallel compositions of processes and their typing contexts are split and then typed individually. 
When using \procTypingRestr, type bindings for a session are added and the remaining process should be typed using also those. 
\end{definition}

\begin{definition}
We define the reductions for typing contexts in a way that 
they precisely match the semantics of SCSMs: 
{ \small 
\begin{mathpar}
\inferrule*[right=\typingReductionMixCh]{
    q_1
        \xrightarrow{\msgFromToNS{\procA}{\procB}{\labelAndType{l}{L}}}
    q_3
    \\
    q_2
        \xrightarrow{\msgFromToNS{\procA}{\procB}{\labelAndType{l}{L}}}
    q_4
}{
    s[\procA] \hasType q_1,
    s[\procB] \hasType q_2,
        \typingContextTwo
        \redto
    s[\procA] \hasType q_2,
    s[\procB] \hasType q_4,
        \typingContextTwo
}
\end{mathpar}
}
\end{definition}

\begin{proposition}
\label{prop:typing-reduction-cong}
\label{lm:typing-reduction-cong} 
Let $\typingContextTwo_1, \typingContextTwo'_1$, and $\typingContextTwo_2$ be typing contexts. 
Whenever 
    $
        \typingContextTwo_1
        \redto
        \typingContextTwo'_1
    $
holds,
it holds that 
    $
        \typingContextTwo_1,
            \typingContextTwo_2
        \redto
        \typingContextTwo_1',
            \typingContextTwo_2
    $.
\end{proposition}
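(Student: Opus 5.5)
The plan is a direct case analysis on the single rule that can derive the premise $\typingContextTwo_1 \redto \typingContextTwo'_1$. For typing contexts the only reduction rule is \typingReductionMixCh, so the premise decomposes as
$\typingContextTwo_1 = s[\procA] \hasType q_1, s[\procB] \hasType q_2, \typingContextTwo$ and $\typingContextTwo'_1 = s[\procA] \hasType q_3, s[\procB] \hasType q_4, \typingContextTwo$,
with $q_1 \xrightarrow{\msgFromToNS{\procA}{\procB}{\labelAndType{l}{L}}} q_3$ and $q_2 \xrightarrow{\msgFromToNS{\procA}{\procB}{\labelAndType{l}{L}}} q_4$. I read the post-state in the rule's conclusion as $q_3$ for $\procA$, as the rule intends.

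Next, I will apply \typingReductionMixCh again, this time with the context parameter instantiated to $\typingContextTwo, \typingContextTwo_2$ instead of $\typingContextTwo$. The transition premises on $q_1, q_2$ are unchanged, since they refer only to the automata states. This directly yields
$\typingContextTwo_1, \typingContextTwo_2 \redto \typingContextTwo'_1, \typingContextTwo_2$.
Here I use that typing contexts are equivalent up to reordering, so $s[\procA]\hasType q_1, s[\procB]\hasType q_2, \typingContextTwo, \typingContextTwo_2$ is the same object as $\typingContextTwo_1, \typingContextTwo_2$.

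The only real obstacle is well-formedness. The concatenation $\typingContextTwo_1, \typingContextTwo_2$ must itself be a typing context, meaning every element has at most one type, since the paper only considers such contexts. The statement implicitly presupposes this, because otherwise the left-hand side is not in scope. I will state explicitly that the domains of $\typingContextTwo_1$ and $\typingContextTwo_2$ are disjoint. Since reduction preserves the domain (only the types of $s[\procA]$ and $s[\procB]$ change), $\typingContextTwo'_1, \typingContextTwo_2$ is then also a typing context, and this completes the argument.
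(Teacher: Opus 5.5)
Your proof is correct and follows the reasoning the paper relies on. The paper states this proposition without proof and treats it as immediate, because \typingReductionMixCh is parametric in the untouched remainder $\typingContextTwo$; your argument simply re-instantiates that remainder with $\typingContextTwo, \typingContextTwo_2$. Reading the post-state of $s[\procA]$ as $q_3$ (correcting the typo in the rule) is right. So is your explicit disjointness assumption on the domains of $\typingContextTwo_1$ and $\typingContextTwo_2$, which the paper leaves implicit; it holds wherever the proposition is used, e.g.\ after a \procTypingParallel split in subject reduction.
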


Now, we can state (and prove) our main properties. 
\emph{Subject reduction} says that typing contexts can match any step a process (typed using this context) can take and the new typing context can type the new process. 

\begin{restatable}[Subject Reduction]{theorem}{subjectReduction}
\label{thm:subject-reduction}
Let $P$ be a process with a set of sessions~$\SessionName$.
If 
\begin{enumerate}[label=\textnormal{(\arabic*)}]
 \item \label{lm:subject-reduction-assumption-1} $\types \Defs \hasType \typingContextOne$,
 \item \label{lm:subject-reduction-assumption-2} $\typingContextOne
           \typingContextCat
           \typingContextTwo
       \types
       P$ 
       with
        $\typingContextTwo = \hat{\typingContextTwo}, \set{\typingContextTwo_s}_{s \in \SessionName}$,
\item \label{lm:subject-reduction-assumption-3} for every $s \in \SessionName$, it holds that
       there is $
\vec{q}\in \reach(\CSMabb{A}_s)
   $ such that $\typingContextTwo_s =
        \set{s[\procA] \hasType \vec{q}_\procA}_{\procA \in \ProcsOf{\CSMabb{A}}}, and 
        $
\item \label{lm:subject-reduction-assumption-4} $P \redto P'$,
\end{enumerate}
then there exists
$\typingContextTwo'$
with
$\typingContextTwo
\redto
\typingContextTwo'$
or
$\typingContextTwo
=
\typingContextTwo'$
such that
$\typingContextOne
    \typingContextCat
    \typingContextTwo'
\types
P'$.
\end{restatable}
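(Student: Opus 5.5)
The proof goes by induction on the derivation of $P \redto P'$, with a case analysis on the last reduction rule applied. Before the induction I establish three auxiliary lemmas.

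\emph{Inversion.} For each syntactic form of $P$, a typing derivation can be normalised so that any uses of \procTypingEnd are pushed to the top. Once this is done, the remaining last rule is the one syntactically determined by $P$.

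\emph{Substitution.} If $\typingContextOne \typingContextCat \typingContextTwo, y \hasType L \types P$ and $c \notin \mathrm{dom}(\typingContextTwo)$, then $\typingContextOne \typingContextCat \typingContextTwo, c \hasType L \types P[c/y]$. This is proved by a routine induction on the typing derivation.

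\emph{Congruence.} If $\typingContextOne \typingContextCat \typingContextTwo \types P$ and $P \precongr P'$, then $\typingContextOne \typingContextCat \typingContextTwo \types P'$.
\begin{itemize}
\item Commutativity and associativity of $\parallel$ are handled by re-splitting the context in \procTypingParallel.
\item The rule $P \parallel \zero \congr P$ uses \procTypingZero with an empty context; all end-typed entries are placed in the non-$\zero$ component via \procTypingEnd.
\item Scope extrusion uses the fact that $s$ is not free in $P_1$, so no entry $s[\procA]$ is needed for $P_1$.
\item For $(\restr s \hasType \CSMabb{A})\,\zero \precongr \zero$, inversion shows that every $s[\procA] \hasType \vec{q}_\procA$ was discharged by \procTypingEnd. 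Deleting the restriction therefore leaves a valid derivation.
\end{itemize}
The congruence lemma is needed in both directions for $\congr$. Throughout, the invariant in hypothesis \ref{lm:subject-reduction-assumption-3} is maintained: each session context corresponds to a reachable configuration of $\CSMabb{A}_s$.

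The inductive cases other than \procReductionExc are routine.

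\emph{\procReductionCongr.} Apply the congruence lemma before and after the step and invoke the induction hypothesis.

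\emph{\procReductionContext.} Proceed by induction on the context $\redContext$.
\begin{itemize}
\item Under $\parallel$, split the context, apply the induction hypothesis to the reducing side, and recombine using \cref{prop:typing-reduction-cong}.
\item Under $(\restr s \hasType \CSMabb{A})$, invert \procTypingRestr to obtain $\typingContextTwo_s$ for some reachable $\vec{q}$ and apply the induction hypothesis. The resulting reduct $\typingContextTwo_s'$ is again of the form $\set{s[\procA] \hasType \vec{q}'_\procA}$ with $\vec{q} \to \vec{q}'$, so $\vec{q}'$ is reachable and \procTypingRestr can be reapplied. If the step touched only other sessions, the context for $s$ is unchanged.
\end{itemize}

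\emph{\procReductionProcName.} Invert \procTypingProcName to get $\vec{c} \hasType \vec{L}$ with $\typingContextOne(\pn{Q}) = \vec{L}$. Assumption \ref{lm:subject-reduction-assumption-1}, via \procTypingProcDef, gives $\typingContextOne \typingContextCat \vec{x} \hasType \vec{L} \types$ body. The substitution lemma then types $\Defs(\pn{Q},\vec{c})$ in the same context, and the induction hypothesis applies to the premise of \procReductionProcName.

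The main case is \procReductionExc. Inversion of \procTypingParallel splits the context into $\typingContextTwo_1$ and $\typingContextTwo_2$. Inverting \procTypingMixCh on each summand yields $s[\procA] \hasType q_1$ and $s[\procB] \hasType q_2$. The left side contains the send $\msgFromTo{\procA}{\procB}{\labelAndType{l}{L}}$ with $q_1 \to q_3$; it is typed using $\typingContextTwo_1', s[\procA] \hasType q_3$ and omits the delegated $v_k \hasType L$. The right side contains the receive with $q_2 \to q_4$, typed under $\typingContextTwo_2', s[\procB] \hasType q_4, y \hasType L$.

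There are two obstacles in this case.

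\emph{Same label, same payload type.} The standing assumption that $l_1 = l_2$ implies $L_1 = L_2$ ensures the sent payload type matches the type expected by the receive branch.

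\emph{Availability of the delegated endpoint.} Since $v_k \hasType L$ appears in the conclusion context of the sender, it lies in $\typingContextTwo_1$. After reduction it is absent from the sender's continuation context and is transferred to the receiver's continuation. Applying the substitution lemma to $P_{k'}[v_k/y_{k'}]$ then types the receiver, while \procTypingParallel retypes the composition.

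\textbf{Key difficulty.} The resulting context is exactly the reduct under \typingReductionMixCh, extended by \cref{prop:typing-reduction-cong}. Care is needed because the session entries may sit inside $\hat{\typingContextTwo}$ when the restriction is above the parallel composition. Also, the transition exists in both local automata with the same label, so the global configuration steps and reachability is preserved. I expect this bookkeeping, ensuring that the split contexts and delegated channels recombine into a single well-formed context equal to a \typingReductionMixCh reduct, to be the most delicate part.
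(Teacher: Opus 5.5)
Your proposal is correct and follows the paper's proof essentially step for step. Both argue by induction on the reduction derivation with \procReductionExc as the base case, using the label-determines-type assumption and the substitution lemma. The \procReductionProcName, \procReductionContext and \procReductionCongr cases are handled in both via inversion, substitution, \cref{prop:typing-reduction-cong}, and admissibility of structural precongruence.

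Two of your steps are slightly more careful than the paper's write-up: reapplying \procTypingRestr with the successor configuration $\vec{q}'$ rather than the original $\vec{q}$, and the nested induction on the context $\redContext$.
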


The second property, \emph{progress}, considers the opposite direction: if a typing context types a process and can take a step, then the process can match this step. 
As is standard, we restrict progress to single sessions because of potential interdependencies.
Orthogonal techniques have been studied to overcome this restriction which are out of scope for this work \cite{DBLP:conf/concur/Kobayashi06,DBLP:journals/mscs/CoppoDYP16}.

To achieve this restriction, we define $\typesSFd$ to be $\types$ but without the rule \procTypingRestr. 
We define $\typesSFs$ for processes as follows:
{ \small 
\begin{mathpar}
\inferrule*[right=\procTypingRestr ']{
\meta{\forall \procA \in \ProcsOf{\CSMabb{A}} \st}
    \forall c \hasType q' \in \typingContextTwo_\procA \st
    \EndState(q')
    \\
    \vec{q} \in \reach(\CSMabb{A})
    \\
\meta{\forall \procA \in \ProcsOf{\CSMabb{A}} \st}
        \typingContextOne
        \typingContextCat
        \typingContextTwo_\procA,
        s[\procA] \hasType \vec{q}_\procA
        \typesSFd
        Q_\procA
}{
    \typingContextOne
        \typingContextCat
        \set{\typingContextTwo_\procA}_{\procA \in \ProcsOf{\CSMabb{A}}}
\overset{\vec{q}}{\typesSFs}
        (\restr s \hasType \CSMabb{A})\,
        (\Parallel_{\procA \in \ProcsOf{\CSMabb{A}}} Q_\procA)
}
\end{mathpar}
}

\vspace{-2ex}
\noindent Note the use of 
$\overset{\vec{q}}{\typesSFs}$
instead of $\types$ 
in the conclusion:  
we augment the typing rule with an easy way to obtain the SCSM configuration.

\begin{restatable}[Progress]{theorem}{progressThm}
\label{lm:session-fidelity}
Let $\CSMabb{A}$ be a sink-final deadlock-free SCSM. Let $P$ be a process. 
We assume that 
\begin{enumerate}[label=\textnormal{(\arabic*)}]
 \item \label{sf-cond-1}
        $\typesSFs \Defs \hasType \typingContextOne$,
 \item \label{sf-cond-2}
        $
            \typingContextOne
                \typingContextCat
                \typingContextTwo
                \overset{\vec{q}}{\typesSFs}
                (\restr s \hasType \CSMabb{A})\,
                P
        $, and
 \item \label{sf-cond-3}
       $
       \vec{q}
       \rightarrow
       \pvec{q}'
       $
       for some
       $\pvec{q}'$. 
\end{enumerate}
Then, 
there is
$P'$
with
$P \redto P'$ such that
        $
            \typingContextOne
                \typingContextCat
                \typingContextTwo
\overset{\pvec{q}'}{\typesSFs}
                (\restr s \hasType \CSMabb{A})\,
                P'
        $.
\end{restatable}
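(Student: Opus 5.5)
We prove progress by inverting the typing derivation. The transition $\vec{q} \rightarrow \pvec{q}'$ then yields, at two participants, prefixes that \procReductionExc can fire.

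First, assumption~\ref{sf-cond-2} can only be derived by \procTypingRestr$'$. Hence $P = \Parallel_{\procA \in \ProcsOf{\CSMabb{A}}} Q_\procA$, there is a split $\typingContextTwo = \set{\typingContextTwo_\procA}_\procA$ in which every binding in each $\typingContextTwo_\procA$ is at an $\EndState$ state, and $\typingContextOne \typingContextCat \typingContextTwo_\procA, s[\procA] \hasType \vec{q}_\procA \typesSFd Q_\procA$ for each $\procA$.

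Since $\vec{q} \rightarrow \pvec{q}'$, there are $\procA \neq \procB$ and a label $l(L)$ with $\vec{q}_\procA \xrightarrow{\msgFromToNS{\procA}{\procB}{l(L)}} \pvec{q}'_\procA$ and $\vec{q}_\procB \xrightarrow{\msgFromToNS{\procA}{\procB}{l(L)}} \pvec{q}'_\procB$, while all other components stay fixed. In particular neither $\vec{q}_\procA$ nor $\vec{q}_\procB$ satisfies $\EndState$.

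Next we prove an inversion lemma, by induction on $\typesSFd$ derivations. Suppose $\typingContextOne \typingContextCat \typingContextTwo', c \hasType q \typesSFd Q$, $q$ has an outgoing transition, all other bindings are at end states, and the definitions are typed by~\ref{sf-cond-1}. Then $Q$ is, up to unfolding a process call via \procTypingProcName/\procReductionProcName, a choice $\MixCh$ typed by \procTypingMixCh against $q$. So every transition in $\delta(q)$ appears as a branch whose prefix sits on $c = s[\procA]$.

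The cases are as follows. \procTypingEnd cannot drop $c$, because $q$ is not an end state; it only drops the other bindings, and the induction hypothesis applies. \procTypingZero is impossible, since the context is nonempty. \procTypingParallel is excluded, or handled by descending into the component that owns $c$ (I would rather show \procTypingParallel cannot occur at top level of $Q_\procA$ in a single-session derivation, since $\typesSFd$ omits \procTypingRestr, and otherwise descend into that component). \procTypingProcName gives $\typingContextOne(\pn{Q}) = \vec{L}$, and \procTypingProcDef shows that the unfolded body is typed with the same context, so we recurse.

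Here the crucial point of \procTypingMixCh is that $\delta(q)$ is \emph{exactly} partitioned into the send branches $I$ and the receive branches $J$. Therefore the send $\msgFromToNS{\procA}{\procB}{l(L)}$ is offered by $Q_\procA$, and the matching receive $s[\procB][\procA]\recOp l(y)$ is offered by $Q_\procB$.

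We then build the reduction. Using \procReductionCongr we reorder the parallel composition so that $Q_\procA \parallel Q_\procB$ is adjacent. We apply \procReductionProcName to unfold calls where needed and \procReductionExc with the matching labels, and close with \procReductionContext under $(\restr s \hasType \CSMabb{A})$. This gives $P'$ with $Q_\procA$ replaced by $P_k$ and $Q_\procB$ replaced by $P_{k'}[c_k/y]$.

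For retyping, the premises of \procTypingMixCh type $P_k$ under $\typingContextTwo_\procA, s[\procA]\hasType \pvec{q}'_\procA$ minus the delegated channel $c_k \hasType L$. They type $P_{k'}$ under $\typingContextTwo_\procB, s[\procB] \hasType \pvec{q}'_\procB, y \hasType L$. Substituting $c_k$ for $y$ needs a standard substitution lemma, which I would prove by induction on typing. The delegated binding now moves from $\typingContextTwo_\procA$ to $\typingContextTwo_\procB$, so the split changes, but its binding is still at an end state: it was in $\typingContextTwo_\procA$, all of whose bindings are end, unless $c_k$ is some other session endpoint, which single-session typing excludes. Reassembling with \procTypingRestr$'$ at the reachable configuration $\pvec{q}'$ yields the claim.

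I expect the main obstacle to be the inversion lemma. It requires handling interleavings of \procTypingEnd, \procTypingProcName and recursion soundly, and making sure the derivation of $Q_\procA$ really exposes a top-level choice on $s[\procA]$ rather than one hidden behind a parallel composition or a choice on another channel. Sink-finality matters exactly here: it guarantees that a state with an outgoing transition is not an end state, and conversely that end states can be dropped. Deadlock freedom is not needed for the step itself, only to guarantee that hypothesis~\ref{sf-cond-3} is meaningful.
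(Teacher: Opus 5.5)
Your proposal is correct and follows the same route as the paper. Both arguments invert \procTypingRestr$'$, expose \procTypingMixCh derivations on $s[\procA]$ and $s[\procB]$ after unfolding calls, use the exact partition of $\delta$ into send and receive branches to find matching prefixes, fire \procReductionExc, and retype from the \procTypingMixCh premises plus the substitution lemma at the reachable $\pvec{q}'$.

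You are more careful than the paper about interleaved \procTypingEnd steps and about the delegated binding $c_k \hasType L$ moving between per-participant contexts. Two small corrections. \procTypingParallel genuinely can occur inside $Q_\procA$ (e.g.\ alongside a $\zero$), so it is your descent option, not the exclusion option, that does the work. And the \procTypingProcName case terminates because process definitions are syntactically guarded (which the paper uses explicitly), not by your induction on derivations, since the unfolded body's derivation is not a subderivation.
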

The proofs for both results can be found in \cref{app:type-system}. 
There, we also provide a proof sketch for deadlock freedom of typed processes as a consequence of progress and subject reduction. 
     \section{Related work}
\label{sec:related-work}

We structure our discussion of related work around the two most closely related formalisms: HMSCs and MSTs. 

\subsection{High-level message sequence charts}
High-level message sequence charts (HMSCs) have been studied extensively in academia 
~\cite{DBLP:conf/sdl/MauwR97,
DBLP:conf/ac/GenestMP03,DBLP:conf/acsd/GenestM05,DBLP:conf/concur/GazagnaireGHTY07,DBLP:journals/tosem/RoychoudhuryGS12} 
and were defined in an industry standard~\cite{z120-standard}. 
Interestingly, most HMSC research considered asynchronous communication models. 
In terms of realizability, two problems have been considered: 
weak and safe realizability~\cite{DBLP:journals/tcs/Lohrey03,DBLP:journals/tcs/AlurEY05}.
Both require protocol fidelity; the former permits deadlocks while the latter does not. 
Lohrey showed both realizability problems to be undecidable in general \cite{DBLP:journals/tcs/Lohrey03}. To make the problem tractable, various restrictions for HMSCs have been identified. 
We discuss restrictions that restore decidability of (safe) realizability without modifying the protocol's behavior. 
Bounded/regular HMSCs \cite{DBLP:conf/mfcs/MuschollP99,DBLP:conf/concur/AlurY99} always have realizations with bounded channels, yielding finite-state systems.
This is achieved by requiring the communication topology of every loop to be strongly connected. 
When this condition is weakened to weakly connected, one obtains c-HMSCs \cite{DBLP:conf/stacs/Morin02} or globally-cooperative HMSCs~\cite{DBLP:journals/jcss/GenestMSZ06}. 
For both regular and globally-cooperative HMSCs, \citet{DBLP:journals/tcs/Lohrey03} proved the realizability problem to be \EXPSPACE-complete.
\citet{DBLP:journals/tcs/Lohrey03} also introduced so-called $\mathcal{I}$-closed HMSCs where $\mathcal{I}$ is the independence relation $\interswapConc$ (\cref{def:interswapConc}) lifted to atomic MSCs, and proved the realizability problem to be \PSPACE-complete.

As mentioned in \cref{sec:introduction}, Mazurkiewicz trace theory has proven fruitful for obtaining solutions to other decision problems over HMSCs. 
Most works transfer results about recognizable trace languages. 
\citet{DBLP:conf/mfcs/MuschollP99} show that the race and confluence problem for MSC graphs is undecidable in general, but EXPSPACE-complete for locally synchronized graphs corresponding to recognizable trace languages. 
Asynchronous FIFO protocol semantics are not definable as the closure of an irreflexive, symmetric binary alphabetic relation in general. 
The authors circumvent this by prohibiting the communication pattern 
$(\msgFromTo{\procA}{\procB}{\val})^*\!$, which intuitively requires counting unreceived messages in a FIFO channel. 
Recognizability is in turn enforced by requiring MSCs to be loop-connected, which is similar to star-connectedness of regular expressions that was shown to characterize recognizability~\cite{DBLP:books/ws/95/DR1995}. 
Muscholl and Peled additionally show that loop-connectedness of an NFA is co-NP-complete. 
Interestingly, we mirror the observation made in \cite{DBLP:conf/mfcs/MuschollP99} that complexity results in trace theory are few and far between. 

A related problem of deciding whether the language recognized by a finite automaton is $I$-closed was shown to be PSPACE-complete in \cite{DBLP:journals/tcs/PeledWW98}. 
The $I$-closure problem asks, given a finite automaton over some alphabet and a concurrency relation, whether the language of the automaton is closed under the equivalence relation induced by the concurrency relation. 
\citet{DBLP:conf/stacs/Morin02} studies star-connected HMSCs, a restriction that again ensures recognizability, and yields that realizability is immediately decidable. 
\citet{DBLP:conf/dlt/GenestMK04} prove a Kleene theorem connecting monadic second-order logic, (asynchronously) communicating state machines and globally-cooperative compositional MSC graphs. 
In light of the above results, we transfer decidability results from recognizable trace languages to realizability in our setting (\cref{subsec:unambiguous}), but do not claim it as a novel contribution. 
In our work, we studied aspects in some sense orthogonal to recognizability: unambiguity, and the transitivity of the independence alphabet. 
In particular, the connection to unambiguous trace languages was inspired by syntactic choice restrictions inherited by multiparty session types from binary session types, called \emph{directed choice}, that was later generalized to \emph{sender-driven choice} by \citet{DBLP:conf/concur/MajumdarMSZ21} and to \emph{commutation-determinism} by \citet{DBLP:conf/ppdp/GiustoLU25}.

\subsection{Multiparty session types} MST frameworks guarantee the absence of communication errors using types. 
The first MST framework targeted asynchronously communicating processes~\cite{DBLP:conf/popl/HondaYC08}, but a synchronous variant followed soon afterwards \cite{DBLP:journals/entcs/BejleriY09}. 
MST frameworks typically address the realizability problem through a combination of syntactic restrictions on global and local types, a projection operator that computes local types from global types, and a type system that further rules out undesirable behaviors from local implementations. 
This hybrid approach trades completeness for efficiency, soundly approximating realizability via the notion of projectability. 
It is difficult and not meaningful to compare algorithms that solve different problems such as projectability and realizability. 
We focus our discussion on 
\begin{inparaenum}
\item comparing our results to other decision procedures for MST realizability, 
\item \label{rw:classical-MST}
      top-down MST frameworks,
\item \label{rw:bottom-up}
      bottom-up MST frameworks with mixed choice, and 
\item \label{rw:synt-MST-choraut}
      synthetic MSTs and choreography automata. 
\end{inparaenum}
Note that 
(\ref{rw:classical-MST})
is the classical MST approach while 
(\ref{rw:bottom-up})
and 
(\ref{rw:synt-MST-choraut})
are more recent developments.

\paragraph{Deciding realizability} 
To the best of our knowledge, \cite{DBLP:conf/ppdp/GiustoLU25} is the only work that provides solutions to the synchronous realizability problem of global types. 
We discussed their contributions in \cref{sec:introduction}, explained how their fragments relate to ours in \cref{subsec:unambiguous}, and  illustrated their relation to our contributions in \cref{fig:venn-diagram}. 
For the asynchronous setting, \cite{DBLP:conf/cav/LiSWZ23} took a similar approach to our work: separating the concerns of realizability and type checking. 
They provided the first complete projection operator, deciding the realizability for sender-driven sink-final global protocols. 
The same authors later showed this problem to be co-NP-complete \cite{DBLP:journals/pacmpl/LiSWZ25}. Their CSM realizations can be type-checked with the type system proposed by \cite{DBLP:conf/esop/StutzD25}. 

\paragraph{Top-down MST frameworks}
In terms of graph structure, syntactic multiparty session types correspond to tree-shaped, \sinkfinal finite automata with loops only between leaves and their ancestors. 
\citet{DBLP:conf/esop/StutzD25} showed that only sink finality restricts expressivity. 
Additional restrictions on branching choice are imposed on global types, local types and processes. 
Our work shows that as far as decidability of realizability is concerned, the role of choice is to ensure unambiguity of the global protocol. 
Our local processes are liberated from choice restrictions. 
We limit our discussion to synchronous MST frameworks and refer the reader to \cite{DBLP:journals/csur/HuttelLVCCDMPRT16} and \cite{DBLP:books/sp/24/Yoshida24} for a survey. 
Most synchronous MST frameworks 
\cite{DBLP:journals/entcs/BejleriY09,DBLP:conf/icdcit/YoshidaG20,DBLP:conf/itp/TiroreBC23,DBLP:series/lncs/YoshidaH24,DBLP:journals/pacmpl/UdomsrirungruangY25,DBLP:conf/itp/EkiciKY25}
impose directed choice on global types: in each choice, there is a unique sender and receiver pair, with distinct messages across choices. 
On a local level, a matching restriction is imposed: processes can either send or receive, with a unique communication partner in each case. 
In such classical frameworks, a partial function called projection operator implicitly attempts to solve realizability problem, \ie if the projection for each participant is defined then the global type is realizable. 
Behind the scenes, a merge operator is used to ``merge'' branches if participants do not take part in a branching. 
Most of these frameworks employ the so-called plain merge operator which does not allow participants to have different behavior in branches if they are not part of the decision-making, prohibiting learning about branching later.  
\cite{DBLP:journals/pacmpl/UdomsrirungruangY25} employs the so-called full merge operator which improves the situation but cannot unfold recursion for merging for instance. 
It can still happen that realizable global types have an undefined projection and it was shown that the merge-based approach is inherently incomplete \cite{DBLP:conf/ecoop/Stutz23}. 
Note that all these works 
do not support sender-driven or mixed choice, neither for global types, local types, nor processes. 

\paragraph{Bottom-up MST framework with mixed choice}
While aforementioned works follow the classical ``top-down'' approach, \cite{DBLP:journals/pacmpl/ScalasY19} forgoes global types and establishes properties of local types through model checking, coined the ``bottom-up'' approach. 
Their type system only guarantees operational correspondence between local types and processes in the absence of a global type. 
Our type system indirectly takes inspiration from \cite{DBLP:journals/pacmpl/ScalasY19} but extends it with mixed choice and types processes against deadlock-free SCSMs obtained from a global protocol. 
\citet{DBLP:conf/lics/PetersY24} also forgo global types, but their type system supports mixed choice for local types and processes. 
However, their type system only types a single session and hence no session interleaving or delegation is considered. 
In session types, subtyping provides flexibility for how to implement the behavior in processes.
Usually, the folklore is that a sender can send fewer messages and a~receiver can receive more messages than specified. 
In the presence of generalized choice, this is not necessarily the case -- at least not in the asynchronous setting \cite{DBLP:conf/esop/LiSW24}. 
\citet{DBLP:conf/lics/PetersY24} provide subtyping that splits the mixed choice into a send and a receive part and applies the above folklore for each part individually. 
Following our decoupling discipline, our type system does not provide rules for subtyping, rather, a variant of subtyping that preserves the full protocol semantics is addressed via the synchronous verification problem.

\paragraph{Synthetic MSTs and Choreography Automata}
A recent work by~\citet{NODBLPyet:journals/corr/abs-2511-22692} forgoes local --not global-- types, and addresses a variant of the synchronous verification problem, type checking processes against a global type directly. 
Their type system only supports single sessions and as such, their processes are very close to our realization model of SCSMs, but they do not allow receivers to receive from different senders. 
Protocol semantics in \cite{NODBLPyet:journals/corr/abs-2511-22692} is defined plainly as the word language of a labeled transition system (LTS), and not as the closure of a regular language under an equivalence relation.   
Note that when the LTS is finite-state (and only then the type system is known to be decidable), this means that its semantics is simply a regular language. 
Thus, both the realizability and the verification problem can be decided (soundly and completely) in \PSPACE (see \cref{rem:recognizable-PSPACE})
-- for a more expressive realization model -- and is in fact also \PSPACE-hard. 
Subsequently, the realization can be used to type check local processes. 
The authors take a different, hybrid approach, however, soundly approximating the verification problem through a combination of well-behavedness conditions on global protocols and typing rule restrictions. 
One such condition requires the LTS to be closed under~$\interswapConc$, but only conditionally. 
Unfortunately, the realization model's semantics is closed under $\interswapConc$, so any LTS that is not closed under $\interswapConc$ is trivially unrealizable. 
\citet{DBLP:conf/coordination/BarbaneraLT20} introduced choreography automata as a global protocol specification, whose semantics, like \cite{NODBLPyet:10.1145/3756907.3756918}, are not closed under any relation. Hence choreography automata semantics are also regular languages, and the same characterizations for deciding realizability and verification apply. 
\citet{NODBLPyet:10.1145/3756907.3756918} approximate the realizability problem through well-sequencedness and well-branchedness conditions on choreography automata and observed that well-sequencedness entails closure under~$\interswapConc$.
Notably, the proposed realizability conditions were shown to be flawed for both the asynchronous \cite{DBLP:journals/corr/abs-2107-03984} and the synchronous~\cite{DBLP:journals/lmcs/BarbaneraLT23} setting.

     \section{Discussion}

We conclude by discussing possible extensions to two separate and orthogonal infinite settings. 

Note that all trace-theoretic results invoked in this work define languages as sets of finite words.  
The first extension concerns infinite semantics of global protocols and their realizations, \ie when $\langsync(\protocol) \subseteq \AlphSync^* \union \AlphSync^\infty$ for a finite $\AlphSync$.  
In this work, global protocol and SCSM semantics are subsets of $\AlphSync^*\!$.  
Global protocols were given an infinite semantics by the authors of  \cite{DBLP:conf/concur/MajumdarMSZ21}, which amounts to a Büchi acceptance condition with all states accepting. 
Independently, peer-to-peer communicating state machines give rise to a definition of infinite words, which are simply defined as all infinite execution traces. 
Li and Wies~\cite{DBLP:conf/itp/LiW25} showed that the infinite word semantics for global protocols introduced in \cite{DBLP:conf/concur/MajumdarMSZ21} and adopted in subsequent works~\cite{DBLP:conf/cav/LiSWZ23, DBLP:conf/esop/LiSW24, DBLP:conf/ecoop/Stutz23} in fact does not match the infinite word semantics of the intended realization model, and thus proposed a revised infinite word semantics. 
\citet{DBLP:conf/ecoop/Stutz23} showed that reduced global protocols enjoy guarantees about its infinite behavior ``for free'', a fact that transfers immediately to our setting.

The second extension concerns global protocols with infinitely many states and transitions, \ie when $Q$ and $\AlphSync$ are infinite. 
Such global protocols are typically represented symbolically using logical predicates, as studied in \cite{DBLP:journals/pacmpl/LiSWZ25, DBLP:conf/cav/LiSWZ25,DBLP:conf/ecoop/VassorY24,DBLP:journals/pacmpl/00020HNY20}. 
\citet{DBLP:journals/pacmpl/LiSWZ25, DBLP:conf/cav/LiSWZ25} decide asynchronous realizability for symbolic protocols using a semantic characterization that is in essence a (co-)reachability reduction. The generality of this characterization enables it to uniformly derive decision procedures for both finite and infinite-state global protocols. 
It is less clear how the ``direct'' approach taken by our work extends to  infinite-state protocols represented symbolically. 
On the other hand, our direct approach immediately gives us decision procedures and upper bounds for the \emph{verification} problem. 
We leave such an exploration to future work. 
 	
    \phantomsection\label{paper-last-page}

    \clearpage
    \appendix
    \section{Additional Material for \cref{sec:realizability} -- Decidability}
\label{app:decidability}

\canonicalSCSMLocalLanguageProperty*
\begin{proof}
	Both inclusions of Claim (\ref{def:canonicity-prefixes}) follow by induction on prefix length and the determinism of $\CSMabb{A}_{\procA}$. 
	Claim (\ref{def:canonicity-finite-words}) follows from \ref{def:canonicity-prefixes} and the definition of final states $F_\procA$. 
\end{proof}

\protocolSemanticsEqualsClosureEquiv*
\begin{proof}
	Recall that $\interswapProcs$ is defined as for all participants $\procA \in \Procs$, 
	$w_1 \wproj_{\AlphSync_\procA} = w_2 \wproj_{\AlphSync_\procA}$. 
	The only if direction follows from the fact that $\interswapConc$ preserves each participant's total ordering of events. 
	The if direction follows from the fact that each participant's total ordering of events uniquely defines a message sequence chart (MSC), so we have $msc(w_1) = msc(w_2)$, of which $w_1$ and $w_2$ are both linearizations. 
	The proof follows from the fact that all linearizations of the same MSC are related by $\interswapConc$. We refer the reader to \cite{DBLP:journals/tcs/Lohrey03} for definitions. 
\end{proof}

\nonMixedChoiceUnambiguous*
\begin{proof} 
	Let $\protocol = (Q, \AlphSync, \delta, q_0, F)$ be a global protocol. 
	Suppose by contradiction that a trace $t \in M(\AlphSync, \interswapConc)$ has two representatives \mbox{$w_1 \neq w_2$} in $\lang(\protocol)$. 
	Note that $\lang(\protocol)$ refers to the regular language defined directly by $\protocol$, and not the synchronous semantics of $\protocol$. 
	Then, there exists two runs of $\protocol$ accepting $w_1$ and $w_2$ respectively. 
	Because $w_1 \neq w_2$, there exists a state $s \in Q$ at which the accepting runs for each diverge, and each word takes a different transition. 
	Suppose w.l.o.g.\ that the transitions are labeled $\msgFromTo{\procA}{\procB}{\val}$ and \mbox{$\msgFromTo{\procA}{\procC}{\val'}$}. 
	Then, it follows that $w_1 \wproj_{\AlphSync_\procA} \neq w_2 \wproj_{\AlphSync_\procA}$. 
	However, $t = [w_1] = [w_2]$. 
	We reach a contradiction via \cref{prop:equiv-equivalence}. 
\end{proof}

\realizabilityTransitiveDecidable*
\begin{proof}
	Let $\protocol$ be a global protocol over $\AlphSync$ and let $Prod(\protocol)$ be its canonical synchronous product. 
	Deadlock-freedom of $Prod(\protocol)$, \ie checking whether $Prod(\protocol)$ is reduced, is decidable in PSPACE. 
	$[\lang(Prod(\protocol))]_\interswapConc = [\lang(\protocol)]_\interswapConc$ is decidable because $\AlphSync_{\protocol}$ is transitive. 
\end{proof}

\twoPointSeven*
\begin{proof}
	Because $C$ is transitive, it is the disjoint union of a finite number of complete concurrency relations.
	Let $C$ be partitioned into cliques $C_1, C_2, \ldots, C_n$ whose alphabets $\Sigma_1, \Sigma_2, \ldots, \Sigma_n$ form a partition of $\Sigma$. 
	The proof proceeds by induction on $n$, with Lemma 2.3 \cite{DBLP:journals/ita/AalbersbergW86} covering the base case, and Lemma 2.6 \cite{DBLP:journals/ita/AalbersbergW86} covering the inductive step. Lemma 2.3 states that all concurrent alphabets $(\Sigma, C)$ where $C$ is complete are of type $\exists = \forall$. 
	Lemma 2.6 states that if $(\Sigma_1, C_1)$ is of type $\exists = \forall$ and $(\Sigma_2, C_2)$ is of type $\exists = \forall$, and $\Sigma_1$ is disjoint from $\Sigma_2$, then $(\Sigma_1 \cup \Sigma_2, C_1 \cup C_2)$ is of type $\exists = \forall$. 
\end{proof}

\twoPointEight*
\begin{proof}
	We construct $K = \overline{\psi^{-1}(\psi(\Sigma^*) - \psi(L))}$.
	The Parikh image of a regular language represented by an NFA can be computed in polynomial time in the size of the NFA, and exponential time in the size of its alphabet, with the size of the semi-linear set's description matching the running time~\cite{DBLP:conf/lics/KopczynskiT10}. 
	Complementation of a semi-linear set can be achieved in doubly exponential time in the size of its representation~\cite{DBLP:journals/corr/abs-1708-06460}. 
	Given a semi-linear set, a regular language whose Parikh image it is can be constructed in polynomial time in the size of the semi-linear set. 
	Finally, complementation of regular languages represented as NFAs can be done in linear time in the size of the NFA. 
	For the proof that $\Elang{L}{C} = \Alang{L}{C}$, see Lemma 2.3 of \cite{DBLP:journals/ita/AalbersbergW86}. 
\end{proof}

\twoPointFour*
\begin{proof}
	First, we construct one $K'_s$ for each $s \subseteq \set{1, \ldots, n}$ such that 
	$\Elang{\bigcup_{i \in s} L_i}{C} = \Alang{K'_s}{C}$. 
	Since there are $2^n$ subsets of $\set{1, \ldots, n}$, and the union of two NFAs remains polynomial in their size, each $K'_s$ can be constructed in doubly exponential time in the size of the input NFA, altogether the $K'_s$'s can be constructed in doubly exponential time. 
	We define $K_i = \bigcap_{i \in s, s \subseteq \set{1, \ldots, n}} K_s$. 
	For the proof of correctness of the languages $K_i$ thus constructed, we refer the reader to the proof of Lemma 2.4 in \cite{DBLP:journals/ita/AalbersbergW86}.
\end{proof} 

\twoPointFive*
\begin{proof}
	The size of $K$ is clear by construction. 
	To compute elements of $\Delta$ represented by finite automata requires $2*|Q|^2$ automata intersection operations. 
	Each intersection operation results in a finite automaton with $|Q|^2$ states. 
	To verify properties (1) through (5), we refer the reader to the proof of Lemma 2.5 in \cite{DBLP:journals/ita/AalbersbergW86}. 
\end{proof}

\twoPointSix*
\begin{proof}
	Towards constructing $M$, we first use \cref{lm:my-lemma-2.5} to construct from $L$ an intermediate automaton $K$ over an alphabet $\Delta$ of size $2*|Q|$. 
	Let $\Delta = \Delta_1 \cup \Delta_2$, and let $\Delta_1 = \set{a_1, \ldots, a_n}$ and $\Delta_2 = \set{b_1, \ldots, b_n}$. 
	Note that $\Delta_1$ and $\Delta_2$ are the same size by construction. 
	For each $a \in \Delta_1$ (respectively $b \in \Delta_2$), we compute a finite automata representation of $\eta(a)$ (respectively $\eta(b)$), which by \cref{lm:my-lemma-2.5} is of size $|Q|^2$. 
	For the $n$ languages $\eta(a_1), \ldots, \eta(a_n)$ represented by finite automata, we use the construction in \cref{lm:my-lemma-2.4} to find a sequence of $n$ languages that universally represent them, and do the same for the $n$ languages $\eta(b_1), \ldots, \eta(b_n)$. 
	The entire construction takes doubly exponential time in $|Q|$. 
	To see the correctness of the construction, we refer the reader to the proof of Lemma 2.6 in~\cite{DBLP:journals/ita/AalbersbergW86}. 
\end{proof} 

\forallInclExistsExptime*

\begin{proof}
	First, observe that:
	\begin{align*}
		\Alang{\lang(M_1)}{C} \subseteq \Elang{\lang(M_2)}{C} \iff \\
		\Alang{\lang(M_1)}{C} \setminus \Elang{\lang(M_2)}{C} = \emptyset \iff \\
		\Alang{\lang(M_1) \setminus \lang(M_2)}{C} = \emptyset \iff \\
		\Elang{N}{C} = \emptyset, \text{ where } 
		\Alang{\lang(M_1) \setminus \lang(M_2)}{C} = \Elang{N}{C} \iff \\ 
		N = \emptyset
	\end{align*}
	Let $n$ be the number of states in the DFA recognizing $\lang(M_1) - \lang(M_2)$, and let $m = \card{\Sigma}$. 
	Because $C$ is transitive, it is the disjoint union of a finite number of complete concurrency relations.
	Let $C$ be partitioned into cliques $C_1, C_2, \ldots, C_n$ whose alphabets $\Sigma_1, \Sigma_2, \ldots, \Sigma_n$ form a partition of $\Sigma$. 
	Let $M$ be a DFA recognizing $K$, defined in \cref{lm:my-lemma-2.5}, generalized to more than 2 alphabets. 
	Then, $M$ has $nm$ states and $n^2m$ transitions. 
	We replace each transition $a$ of $M$ with $\eta(a)$, which by \cref{lm:my-lemma-2.6}, is of size doubly exponential in $n$ and triply exponential in $m$, to obtain an automaton $A_N$ whose state size remains doubly exponential in $n$ and triply exponential in $m$. 
	Then, $\lang(A_N) = N$ by \cref{lm:my-lemma-2.6}. 
	Emptiness checking is linear in the size of the input automaton, and thus the entire algorithm runs in time doubly exponential in $n$ and triply exponential in $m$. 
\end{proof} 

\transitiveExptime*
\begin{proof}
	Let $\protocol$ be a global protocol over $(\AlphSync, \interswapConc)$ and let $Prod(\protocol)$ be its canonical synchronous product whose language $\lang(Prod(\protocol))$ is a regular language that existentially and universally defines the same trace language. 
By \cref{lm:realizable-minimal}, $\protocol$ is realizable iff $Prod(\protocol)$ is reachable and 
	$\Alang{\lang(Prod(\protocol))}{\interswapConc} \subseteq \Elang{\lang(\protocol)}{\interswapConc}$. 
	By \cref{thm:forall-incl-exists-exptime}, this inclusion is decidable in 2-\EXPTIME in the size of $\protocol$, and 3-\EXPTIME in the size of $\Sigma$. 
\end{proof}

\unambiguousDecidable*
\begin{proof} 
	Let $\protocol$ be a global protocol and let $Prod(\protocol)$ be its canonical synchronous product. 
	Let $A$ denote the finite state automaton over $\AlphSync$ that recognizes the lexicographic representation of $Prod(\protocol)$. 
	By \cref{lm:rec-is-unambiguous} $A$ is a witness for the unambiguity of $\lang(Prod(\protocol))$. 
	Thus, realizability reduces to equivalence, which for unambiguous representatives reduces to multiplicity equivalence, which by Theorem 5.4.10 of~\cite{DBLP:books/ws/95/DR1995} is decidable. 
\end{proof} 

\recIsUnambiguous*
\begin{proof}
	By Proposition 6.3.4, \cite{DBLP:books/ws/95/DR1995}, every recognizable trace language admits a lexicographic representation that is regular. 
	Being $T$ recognizable, let $M$ be an automaton such that $\lang(M) = lin(T)$. 
	Let $N$ be the complement of the automaton recognizing the regular language 
	$\bigcup \set{\Sigma^*bC(a)a\Sigma* \mid (a,b) \notin C \text{ and } a < b}$. 
	Thus, $\lang(N)$ is the set of all lexicographically minimal words over $\Sigma$ according to $<$.
	Intersecting $M$ and $N$ results in an automaton linear in the size of $M$ that recognizes the lexicographic representation of $T$. 
	Because $<$ on words is total, the lexicographic representation of $T$ contains at most one representative for each trace in $T$, and thus unambiguously represents $T$. 
\end{proof}

\unambiguousExptime*
\begin{proof} 
	Recall that deadlock freedom of the canonical SCSM can be checked in PSPACE. 
	The algorithm for checking protocol fidelity is an immediate consequence of our reduction to multiplicity equivalence, which in turn can be checked by means of a small model property. 
	From the proof of Theorem 5.4.10~\cite{DBLP:books/ws/95/DR1995}, the model size is bounded by the sum of the state spaces of the finite automata representing the two input trace languages. 
	Let $\protocol = (Q, \AlphSync, \delta, q_0, F)$. 
	We first construct the canonical synchronous product $M$, whose state space is $\Procs^{\card{Q}}$, \ie exponential in the size of $\protocol$. 
	Let $m = \card{Q} + \Procs^{\card{Q}}$. 
	Next, we enumerate the set of all traces over $(\AlphSync, \interswapConc)$ up to length $m$, and for each trace $t \in T_m$, we check that $t$ is contained in $\lang(\protocol)$ if and only if $t$ is contained in $\lang(M)$. 
	Note that because $M$ is a consistently regular trace language, we can check membership in time linear in $\card{t}$. 
	On the other hand, $\protocol$ represents a rational trace language, and by Proposition 5.5.3~\cite{DBLP:books/ws/95/DR1995}, membership can be checked in time $\card{t}^\alpha$, where $\alpha$ is the size of the maximum clique in~$\interswapConc$. 
\end{proof} 

     \section{Additional Material for \cref{sec:realizability} -- Undecidability}
\label{app:undecidability}

Note that it is possible to define communicating state machines parametric in a choice of communication model (cf.\,\cite{NODBLPyet:10.1145/3756907.3756918}). 
Below, we define communication over peer-to-peer FIFO channels, which we refer to simply as CSMs, after its original definition in \cite{DBLP:journals/jacm/BrandZ83}. 

\paragraph{Communicating State Machine.}
\label{def:csm-formalisation}
$\CSMabb{A} = \CSM{\CSMabb{A}}$ is a CSM over~$\Procs$ and~$\MsgVals$ if
${A}_\procA$
is a finite automaton
over~$\Alphabet_\procA$ for every $\procA\in\Procs$, denoted by 
$(Q_\procA, \Alphabet_\procA, \delta_\procA, q_{0, \procA}, F_\procA)$.
Let 
$\prod_{\procA \in \Procs} q_\procA$ 
denote the set of global states and
\mbox{$\channels = \set{(\channel{\procA}{\procB}) \mid \procA,\procB\in \Procs, \procA\neq \procB}$}
denote the set of channels. 
A~\emph{configuration} of $\CSMabb{A}$ is a pair $(\vec{q}, \xi)$, where $\vec{q}\,$ is a global state and
$\xi : \channels \rightarrow \MsgVals^*$
is a mapping from each channel to a sequence of messages.
We use $\vec{q}_\procA$ to denote the state of $\procA$ in $\vec{q}$.
The CSM transition relation, denoted $\rightarrow$, is defined as follows. 
\begin{itemize}
	\item
	$(\vec{q},\xi) \xrightarrow{\snd{\procA}{\procB}{\val}} (\pvec{q}',\xi')$ if
	$(\vec{q}_\procA, \snd{\procA}{\procB}{\val}, \pvec{q}'_\procA)\in\delta_\procA$,
	$\vec{q}_\procC = \pvec{q}'_\procC$ for every participant $\procC \neq \procA$,
	$\xi'(\channel{\procA}{\procB}) =  \xi(\channel{\procA}{\procB})\cdot\val$ and $\xi'(c) = \xi(c)$ for every other channel $c\in \channels$.
	
	\item
	$(\vec{q},\xi) \xrightarrow{\rcv{\procA}{\procB}{\val}} (\pvec{q}',\xi')$ if
	$(\vec{q}_\procB, \rcv{\procA}{\procB}{\val}, \pvec{q}'_\procB)\in\delta_\procB$,
	$\vec{q}_\procC = \pvec{q}'_\procC$ for every participant $\procC \neq \procB$,
	$\xi(\channel{\procA}{\procB}) = \val\cdot \xi'(\channel{\procA}{\procB})$
	and $\xi'(c) = \xi(c)$ for every other channel $c\in \channels$.
\end{itemize}
In the initial configuration $(\vec{q}_0, \xi_0)$, each participant's state in $\vec{q}_0$ is the initial state $q_{0,\procA}$ of $A_\procA$, and $\xi_0 = \xi_\emptystring$, which maps each channel to~$\emptystring$.
A configuration $(\vec{q}, \xi)$ is said to be \emph{final} iff $\vec{q}_\procA$ is final for every~$\procA \in \Procs$ and $\xi = \xi_\emptystring$. 
Runs and traces are defined in the expected way. 
A~run is \emph{maximal} if it is finite and ends in a final configuration. The (asynchronous) language $\lang(\CSMabb{A})$ of the CSM $\CSMabb{A}$ is a set of words over $\AlphAsync^*$ defined as the set of maximal traces.
A configuration $(\vec{q}, \xi)$ is a \emph{deadlock} if it is not final and has no outgoing transitions.
A CSM is \emph{deadlock-free} if no reachable configuration is a deadlock. 

\paragraph{Asynchronous Protocol Semantics.}
To define the \emph{asynchronous} semantics of a protocol $\protocol$ we first map finite and infinite words of~$\protocol$ onto their asynchronous counterpart using $\SyncToAsync(\hole)$, thus obtaining a set of asynchronous words in which matching send and receive events are adjacent to each other. 

We define our protocol semantics as the set of \emph{channel-compliant} words \cite{DBLP:conf/concur/MajumdarMSZ21} that are closed under this notion of indistinguishability. 
Channel compliance characterizes words that respect FIFO order, i.e. receive events appear after their matching send event, and the order of receive events follows that of send events in each~channel. 
Channel compliance is the word-based analogue of \emph{p2p-linearizable}, used to describe message sequence charts. 

\begin{definition}[Channel compliance] 
	Let $w \in \AlphAsyncSubscript^\infty$. We say that $w$ is \emph{channel-compliant} if for all prefixes $w' \leq w$, for all $\procA\neq \procB \in \Procs$, 
	$\MsgVals(w' \wproj_{\rcv{\procA}{\procB}{\_}}) \leq 
	\MsgVals(w' \wproj_{\snd{\procA}{\procB}{\_}})$.  
\end{definition} 

The asynchronous semantics of a protocol is defined as follows. 

\begin{definition}[Asynchronous protocol semantics]
Let $\protocol$ be a global protocol. Its asynchronous semantics is given by:
	\begin{align*}
		\lang_{async}(\protocol) = \; & \{ w' \!\in\! \AlphAsync^* \mid \exists w \in \AlphAsync^*. \, w \!\in\! \SyncToAsync(\lang(\protocol)) 
		\\ & \hspace{2.5cm}
		\land w' \text{ is channel-compliant } \\ & \hspace{2.5cm}
		\land \forall\procA \!\in\! \Procs.~w' \wproj_{\AlphAsync_\procA} \!=\! w \wproj_{\AlphAsync_\procA}\} 
		\enspace . 
	\end{align*}
For our proofs, we use part of the condition as binary relation $\interswap$ on words over $\AlphAsync$.
Let $w \in \AlphAsync^*$ be a channel-compliant word and let $w' \in \AlphAsync^*$.
We define $w \interswap w'$ iff 
		$w'$ is channel-compliant 
		and 
        $\forall\procA \!\in\! \Procs.~w' \wproj_{\AlphAsync_\procA} \!=\! w \wproj_{\AlphAsync_\procA}$. 
\end{definition}

In addition, we use the asynchronous send events in the following proof, \ie 
$
\AlphAsync_! = \set{\snd{\procA}{\procB}{\val} \mid \procA \neq \procB \in \Procs \land \val \in \MsgVals}, 
$
which is a subset of $\AlphAsync$. 

\syncRealizationAndConditionXyieldAsyncRealization*
\begin{proof}
We enumerate the claims: 
\begin{enumerate}[label=(\alph*)]
\item \label{syncRealAndXAsyncReal:obligation-1}
      $\lang(\makeAsync{\CSMabb{B}}) = \langasync(\protocol)$ and
\item \label{syncRealAndXAsyncReal:obligation-2}
      $\makeAsync{\CSMabb{B}}$ is deadlock-free if $\CSMabb{B}$ satisfies Condition~\condX. 
\end{enumerate}
We might explicitly say asynchronous (resp.\ synchronous) run or asynchronously (resp.\ synchronously) deadlock-free -free even though the semantics of a CSM (resp.\ SCSM) is always asynchronous (resp.\ synchronous).
For both, we will make use of the following property, which we prove first. 

\textit{Property (i):}
For every asynchronous run 
\[ 
(\vec{q}_0, \xi_0) \xrightarrow{w_1} (\vec{q}_1, \xi_1) \xrightarrow{w_2} \ldots 
(\vec{q}_{n-1}, \xi_{n-1}) \xrightarrow{w_{n-1}} (\vec{q}_n, \xi_n)
\]
with $w = w_1 \ldots w_{n-1}$,
there is $u \interswap w$ and 
$u = u_1 \ldots u_{n-1}$ along with $0 < j < n$ 
such that 
$u_1 \ldots u_j \in (\SyncToAsync(\AlphSync))^*$, 
$u_{j+1} \ldots u_{n-1} \in \AlphAsync_{!}^*$
and 
the senders in the latter are different: 
for every $i,k$ such that $j < i < n$ and $j < k < n$ with 
$u_i = \snd{\procA}{\procB}{\val_1}$ and
$u_k = \snd{\procC}{\procD}{\val_2}$, 
if $i \neq k$ 
then $\procA \neq \procC$. 

\textit{Proof of Property (i).}
We prove this claim by induction on $n$.
For the induction base, let $n = 0$. It is trivial that the claim holds for the empty run and empty trace $\emptystring$. 
For the induction step, we assume that the claim holds for $n$ (as spelled out above) and prove it for $n+1$. 
Let 
{\small
\[ 
(\vec{q}_0, \xi_0) \xrightarrow{w_1} (\vec{q}_1, \xi_1) 
\ldots 
(\vec{q}_{n-1}, \xi_{n-1}) 
\xrightarrow{w_{n-1}} (\vec{q}_n, \xi_n)
\xrightarrow{w_{n}} (\vec{q}_{n+1}, \xi_{n+1})
\]
}

\noindent be the asynchronous run to consider. 
We do a case analysis on the shape of $w_n$, \ie if it is a send or receive event. 

Suppose that $w_n = \snd{\procA}{\procB}{\val}$. 
Then, we do not need to reorder $w$ and it suffices to show that $\procA$ does not occur as sender in $u_j \ldots u_{n-1}$. 
Towards a contradiction, assume that this was the case, \ie 
there is $j < i < n$ such that $u_i = \snd{\procA}{\procC}{\val'}$. 
Then, however, $\vec{q}_{n,\procA}$ would be waiting to receive the immediate acknowledgment, contradicting the fact that $\procA$ is the sender in $w_n$. 

Suppose that $w_n = \rcv{\procB}{\procA}{\val}$. 
We use the induction hypothesis to obtain the following run: 
{\small
\[ 
(\vec{q}_0, \xi_0) \xrightarrow{u_1} (\pvec{q}'_1, \xi'_1) 
\ldots 
(\vec{q}_{n-1}, \xi_{n-1}) 
\xrightarrow{u_{n-1}} (\vec{q}_n, \xi_n)
\xrightarrow{w_{n}} (\vec{q}_{n+1}, \xi_{n+1})
\]
}

\noindent By definition of CSM semantics, there is 
$j < k < n$ with $u_k = \snd{\procB}{\procA}{\val}$. 
We claim that there is no $j < i < n$ with 
$u_i = \snd{\procA}{\procC}{\val'}$. 

Towards a contradiction, assume there is such $i$. 
We do a case analysis if $i < k$ or $k < i$. 
(It is trivial that they cannot be the same.)
First, let $k < i$, \ie $\procA$ sent after $\procB$: then $\procA$ would wait for the immediate acknowledgment for $\val'$ by $\procC$ currently. Even if $\procC = \procB$, the above send event could not belong to the requested acknowledgment because it was sent before the actual message, yielding a contradiction.
Second, let $i < k$, \ie $\procB$ sent after $\procA$: still, $\procA$ waits for the acknowledgment (if not, $i \leq j$ because it was received); 
if $w_n$ is not this acknowledgment, we get a contradiction because $w_n$ is not possible; 
if $w_n$ is this acknowledgment, $\procB$ must have received $\val'$ before and then $j < i$, yielding a contradiction. 

Thus, we know that $w_n$ is $\procA$'s only event in $u_{j+1} \ldots u_{n-1} \cat w_n$.  
With the induction hypothesis, we also know that $u_k$ is $\procB$'s only (send) event in 
$u_{j+1} \ldots u_{n-1} \cat w_n$. 
Because of this, there are no dependencies and we can move $u_k$ and $w_n$ to the front of the send events using $\interswap$:
\[  
u_{j+1} \ldots u_{n-1} \cat w_n 
\; \interswap \;
u_k \cat w_n \cat u_{j+1} \ldots u_{k-1} \cat u_{k+1} \ldots u_{n-1} 
\]
Thus, 
\[
u_1 \ldots 
u_{n-1} \cat w_n 
\; \interswap \;
u_1 \ldots u_j \cat 
u_k \cat w_n \cat u_{j+1} \ldots u_{k-1} \cat u_{k+1} \ldots u_{n-1} 
\]
and $u_k \cat w_n$ basically belongs to the part that is already acknowledged. 
By the induction hypothesis, the remaining send events satisfy the requirement on different senders. 

\textit{End Proof of Property (i).}

We prove \ref{syncRealAndXAsyncReal:obligation-1} through two inclusions.
First, we show 
$\lang(\makeAsync{\CSMabb{B}}) \subseteq \langasync(\protocol)$.
Let $w \in \lang(\makeAsync{\CSMabb{B}})$. 
We show that $w \in \langasync(\protocol)$. 
By definition, there is an asynchronous run for $w$: 
\[ 
(\vec{q}_0, \xi_0) 
\xrightarrow{w}\starred
(\vec{q}, \xi_\emptystring)
\enspace .
\]
With Property (i), there is $u \in (\SyncToAsync(\AlphSync))^*$
such that $u \interswap w$ with $j = \card{w}$ because there are no remaining send events with empty channels. 
By the semantics of SCSMs, 
it follows that there is $u' \in \AlphSync^*$ with 
$\SyncToAsync(u') = u$ and $u \in \lang(\CSMabb{B})$.
By assumption, 
we have that $u' \in \langsync(\protocol)$.
Thus, by definition, it holds that 
$u = \SyncToAsync(u') \in \langasync(\protocol)$.
Together with $w \interswap u$, we have $w \in \langasync(\protocol)$, which concludes this case. 

Second, we show 
$\langasync(\protocol) \subseteq \lang(\makeAsync{\CSMabb{B}})$. 
Thus, let $w \in \langasync(\protocol)$. 
We show that $w \in \lang(\makeAsync{\CSMabb{B}})$. 
By definition, there is 
$w' \in \lang(\protocol)$ with $\SyncToAsync(w') \interswap w$. 
By assumption, we have 
$w' \in \lang(\CSMabb{B})$, giving rise to a synchronous run
\[
\vec{q}_0
\xrightarrow{w'}\starred
\vec{q} \enspace .
\]
It is easy to see that this run can be mimicked asynchronously: between every two configurations, one needs to add a configuration where all but one channel is empty. 
Thus, 
$w' \in \lang(\makeAsync{\CSMabb{B}})$. 
By closure of CSM semantics under $\interswap$, it holds that  
$w \in \lang(\makeAsync{\CSMabb{B}})$, which concludes this case. 

We prove \ref{syncRealAndXAsyncReal:obligation-2} by contraposition. 
We assume that $\makeAsync{\CSMabb{B}}$ has an asynchronous deadlock and show that $\CSMabb{B}$ then also has a synchronous deadlock. 
Let 
{ \small 
\[ 
(\vec{q}_0, \xi_0) \xrightarrow{w_1} (\vec{q}_1, \xi'_1) 
\ldots 
(\vec{q}_{n-1}, \xi_{n-1}) 
\xrightarrow{w_{n-1}} (\vec{q}_n, \xi_n)
\xrightarrow{w_{n}} (\vec{q}_{n+1}, \xi_{n+1})
\]
}

\noindent be an asynchronous run that ends in a deadlock. 

With Property (i), 
there is $u \interswap w$ and 
$u = u_1 \ldots u_{n-1}$ along with $0 < j < n$ 
such that 
$u_1 \ldots u_j \in (\SyncToAsync(\AlphSync))^*$, 
$u_{j+1} \ldots u_{n-1} \in \AlphAsync_!^*$
and 
the senders in the latter are different: 
for every $i,k$ with $j < i < n$ and $j < k < n$ with 
$u_i = \snd{\procA}{\procB}{\val_1}$ and
$u_k = \snd{\procC}{\procD}{\val_2}$, 
if $i \neq k$ 
then $\procA \neq \procC$. 

We show that $\vec{q}_j$ is a (reachable) synchronous deadlock in $\CSMabb{B}$. 
With the above reordering, it is obvious that $\vec{q}_j$ is synchronously reachable. 
Due to sink finality, $\vec{q}_j$ cannot be a final configuration because some participants can send messages so we need to extend the run to avoid a deadlock. 
Generally speaking, it could be that a synchronous run from $\vec{q}_j$ will never encounter the send events in 
$u_{j+1} \ldots u_{n-1}$
 because the respective receiver blocks the exchange. 
Here, Condition \condX prevents this though. 
Let $S$ be all the senders in 
$u_{j+1} \ldots u_{n-1}$. 
By Property (i), all senders are different. 
We instantiate Condition \condX for $\vec{q}_j$. 
For each sender $\procA \in S$ with send event $\snd{\procA}{\procB}{\val}$, the first case of Condition \condX can never apply. 
If it did, $\procB$ could receive, yielding a contradiction to having an asynchronous deadlock.
So the second case must apply for every sender, which implies that every possible receiver actually is obliged to send (including acknowledgment) before receiving another message.  
Hence, every sender is waiting for their receiver who themselves needs to send first and then to wait for their acknowledgment. 
This is also a deadlock in the synchronous setting, which is what we ought to show. 
\end{proof}

\canonicalRealizationSatConditionX*
\begin{proof}
Let $\TapeAlph$ be the tape alphabet and $\TMStates$ be the states of $\TM$ with 
\mbox{$\TapeAlph \inters \TMStates = \emptyset$}.
To prove the claim, we recall the necessary notation and intuition about the construction of $\protocol_\TM$ from \cite[Thm.\,9.1]{DBLP:phd/dnb/Stutz24} where the full construction can be found.

The encoding has five participants $\procA_1, \ldots, \procA_5$ who send configurations to each other.
Therefore, messages are from the set
$\TapeAlph \dunion
\set{
    \markNewRound,
    \markBeginConf,
    \markEndConf,
    \markEndLoop
}
\dunion Q$.

The following notation is used for conciseness: 
$\msgBackForth{\procA}{\procB}{\val}$
abbreviates
$\msgFromTo{\procA}{\procB}{\val} \cat
 \msgFromTo{\procB}{\procA}{\val}$.
Interactions are indeed only specified using
$\msgBackForth{\_}{\_}{\_}$, 
which is why the encoding is immediately acknowledging. 
These are also used to define regular expressions and complements thereof for which 
$\msgBackForth{\procA}{\procB}{\val}$
is considered as its individual symbols.
For a word $w = w_1 \ldots w_i$, 
the sequence 
$\msgBackForth{\procA}{\procB}{w_1}
 \cdots
 \msgBackForth{\procA}{\procB}{w_i}$
is abbreviated as 
$\msgBackForth{\procA}{\procB}{w}$. 

For words $C_1, D_1, C_2, D_2 \ldots, C_m, D_m \in (\TapeAlph \dunion \TMStates)^*$, let 
{
\small
\begin{align*}
   & w(C_1, D_1, C_2, D_2, \ldots, C_m, D_m)
   \is \quad
   \\ & \qquad  
    \msgBackForth{\procA_3}{\procA_2}{\markNewRound} \cat
    \msgBackForth{\procA_2}{\procA_1}{\markBeginConf} \cat
    \msgBackForth{\procA_2}{\procA_1}{C_1} \cat
    \msgBackForth{\procA_2}{\procA_1}{\markEndConf} \cat \invisibleEndLine
    \\ & \qquad
    \msgBackForth{\procA_3}{\procA_4}{\markNewRound} \cat
    \msgBackForth{\procA_4}{\procA_5}{\markBeginConf} \cat
    \msgBackForth{\procA_4}{\procA_5}{D_1} \cat
    \msgBackForth{\procA_4}{\procA_5}{\markEndConf} \cat \invisibleEndLine
    \\ & \qquad
    \msgBackForth{\procA_3}{\procA_2}{\markNewRound} \cat
    \msgBackForth{\procA_2}{\procA_1}{\markBeginConf} \cat
    \msgBackForth{\procA_2}{\procA_1}{C_2} \cat
    \msgBackForth{\procA_2}{\procA_1}{\markEndConf} \cat \invisibleEndLine
    \\ & \qquad
    \msgBackForth{\procA_3}{\procA_4}{\markNewRound} \cat
    \msgBackForth{\procA_4}{\procA_5}{\markBeginConf} \cat
    \msgBackForth{\procA_4}{\procA_5}{D_2} \cat
    \msgBackForth{\procA_4}{\procA_5}{\markEndConf} \cat
    \\ & \qquad \cdots \\ & \qquad
    \msgBackForth{\procA_3}{\procA_2}{\markNewRound} \cat
    \msgBackForth{\procA_2}{\procA_1}{\markBeginConf} \cat
    \msgBackForth{\procA_2}{\procA_1}{C_m} \cat
    \msgBackForth{\procA_2}{\procA_1}{\markEndConf} \cat \invisibleEndLine
    \\ & \qquad
    \msgBackForth{\procA_3}{\procA_4}{\markNewRound} \cat
    \msgBackForth{\procA_4}{\procA_5}{\markBeginConf} \cat
    \msgBackForth{\procA_4}{\procA_5}{D_m} \cat
    \msgBackForth{\procA_4}{\procA_5}{\markEndConf} \phantom{\cat}
    \enspace \hspace{-1ex} .
\end{align*}
}

\cref{fig:mixed-choice-encoding} illustrates the word 
$w(C_1, D_1, C_2, D_2, \ldots, C_m, D_m)$ 
as message sequence chart (MSC).
Intuitively, $\procA_2$ sends the sequence $C_i$ to~$\procA_1$ while
$\procA_4$ sends the sequence $D_i$ to~$\procA_5$.
Each sequence is started by a $\markBeginConf$-message and finished by a $\markEndConf$-message between the respective pair.
The participant $\procA_3$ starts each round by sending $\markNewRound$.

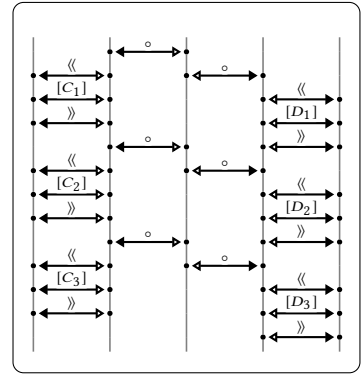
\begin{figure}[t]
\centering
    \begin{tikzpicture}[hmsc,baseline, node distance=7.5ex]

\begin{scope}[msc=s0]
  \begin{scope}[participant=p_1]
    \node[head];
    \node[no event] {};
    \node[event] {};
    \node[event] {};
    \node[event] {};
    \node[no event] {};
    \node[event] {};
    \node[event] {};
    \node[event] {};
    \node[no event] {};
    \node[event] {};
    \node[event] {};
    \node[event] {};
    \node[no event] {};
  \end{scope}

  \begin{scope}[participant=p_2]
    \node[head];
    \node[event] {};
    \node[event] {};
    \node[event] {};
    \node[event] {};
    \node[event] {};
    \node[event] {};
    \node[event] {};
    \node[event] {};
    \node[event] {};
    \node[event] {};
    \node[event] {};
    \node[event] {};
    \node[no event] {};
  \end{scope}

  \begin{scope}[participant=p_3]
    \node[head];
    \node[event] {};
    \node[event] {};
    \node[no event] {};
    \node[no event] {};
    \node[event] {};
    \node[event] {};
    \node[no event] {};
    \node[no event] {};
    \node[event] {};
    \node[event] {};
    \node[no event] {};
    \node[no event] {};
    \node[no event] {};
  \end{scope}

  \begin{scope}[participant=p_4]
    \node[head];
    \node[no event] {};
    \node[event] {};
    \node[event] {};
    \node[event] {};
    \node[event] {};
    \node[event] {};
    \node[event] {};
    \node[event] {};
    \node[event] {};
    \node[event] {};
    \node[event] {};
    \node[event] {};
    \node[event] {};
  \end{scope}

  \begin{scope}[participant=p_5]
    \node[head];
    \node[no event] {};
    \node[no event] {};
    \node[event] {};
    \node[event] {};
    \node[event] {};
    \node[no event] {};
    \node[event] {};
    \node[event] {};
    \node[event] {};
    \node[no event] {};
    \node[event] {};
    \node[event] {};
    \node[event] {};
  \end{scope}

  \draw[messagesMC]
    (p_3-2) edge node[msg]{$\markNewRound$} (p_2-2)
    (p_2-3) edge node[msg]{$\markBeginConf$} (p_1-3)
    (p_2-4) edge node[msg]{$[C_1]$} (p_1-4)
    (p_2-5) edge node[msg]{$\markEndConf$} (p_1-5)
    (p_3-6) edge node[msg]{$\markNewRound$} (p_2-6)
    (p_2-7) edge node[msg]{$\markBeginConf$} (p_1-7)
    (p_2-8) edge node[msg]{$[C_2]$} (p_1-8)
    (p_2-9) edge node[msg]{$\markEndConf$} (p_1-9)
    (p_3-10) edge node[msg]{$\markNewRound$} (p_2-10)
    (p_2-11) edge node[msg]{$\markBeginConf$} (p_1-11)
    (p_2-12) edge node[msg]{$[C_3]$} (p_1-12)
    (p_2-13) edge node[msg]{$\markEndConf$} (p_1-13)
(p_3-3) edge node[msg]{$\markNewRound$} (p_4-3)
    (p_4-4) edge node[msg]{$\markBeginConf$} (p_5-4)
    (p_4-5) edge node[msg]{$[D_1]$} (p_5-5)
    (p_4-6) edge node[msg]{$\markEndConf$} (p_5-6)
    (p_3-7) edge node[msg]{$\markNewRound$} (p_4-7)
    (p_4-8) edge node[msg]{$\markBeginConf$} (p_5-8)
    (p_4-9) edge node[msg]{$[D_2]$} (p_5-9)
    (p_4-10) edge node[msg]{$\markEndConf$} (p_5-10)
    (p_3-11) edge node[msg]{$\markNewRound$} (p_4-11)
    (p_4-12) edge node[msg]{$\markBeginConf$} (p_5-12)
    (p_4-13) edge node[msg]{$[D_3]$} (p_5-13)
    (p_4-14) edge node[msg]{$\markEndConf$} (p_5-14)
  ;
\end{scope}

\end{tikzpicture}
     \caption
    {MSC representation $\msc(w(C_1, D_1, C_2, D_2, C_3, D_3))$ \\
    where the double-sided arrow abbreviates two message interactions of which the direction with the filled tip goes first and
    $[C_1]$ denotes a sequence of such message interactions.}
    \label{fig:mixed-choice-encoding}
\end{figure}

Two languages $L_l$ and $L_r$ are defined, one for each branch of the protocol: 
\begin{align*}
 L_l \is \; &
    \set{
        \langasync(w(C_1, D_1, \ldots, C_m, D_m)) \mid 
        \\ & \hspace{8ex}
        m \geq 1, C_1, D_1, \ldots, C_m, D_m \in (\TapeAlph \dunion \TMStates)^*
    }
 \\
 L_r \is \; &
    L_l \setminus
    \set{
        \langasync(w(u_1, u_1, \ldots, u_m, u_m)) \mid
        \\ & \hspace{8ex}
        (u_1, \ldots, u_m) \text{ is an accepting computation} \enspace .
    }
\end{align*}
The definition of accepting computation is as expected but not important for this proof and hence omitted.
The following sequence of messages ensures that the protocol is \sinkfinal: 
\[
 w_{\mathit{end}} \is
    \msgBackForth{\procA_3}{\procA_2}{\markEndLoop} \cat
    \msgBackForth{\procA_2}{\procA_1}{\markEndLoop} \cat
    \msgBackForth{\procA_3}{\procA_4}{\markEndLoop} \cat
    \msgBackForth{\procA_4}{\procA_5}{\markEndLoop}
\]
It is appended to $L_l$ and $L_r$ to obtain 
$L'_l  \is \set{w \cat w_{\mathit{end}} \mid w \in L_l}$
and
$L'_r  \is \set{w \cat w_{\mathit{end}} \mid w \in L_r}$
.

Ultimately, the protocol $\protocol_\TM$ is constructed such that 
\[
\langasync(\protocol_{\TM}) =
    \set{\msgFromTo{\procA_2}{\procA_3}{\lbl{l}} \cat w \mid w \in L'_l}
    \dunion
    \set{\msgFromTo{\procA_2}{\procA_3}{\lbl{r}} \cat w \mid w \in L'_r}
    \enspace .
\]
By definition of $L'_l$ and $L'_r$, every word ends with $w_{\mathit{end}}$ so $\protocol_{\TM}$ is \sinkfinal.
It is easy to see that only $\procA_2$ and $\procA_3$ will ever know about the initial branching choice. 
In \cite[Thm.\,9.1]{DBLP:phd/dnb/Stutz24} it was proven that $\protocol_\TM$ is asynchronously realizable if and only if $L_l = L_r$.
We claim that synchronous realizability implies $L_l = L_r$.
(While the other direction likely also holds, we do not need it for this proof.)
\FS{@EL: Too informal?}
Towards a contradiction, assume that $L_l \neq L_r$. 
Participant $\procA_4$ never knows about the choice and sends sequences of configurations to $\procA_5$ (interrupted by communication with $\procA_3$). 
By construction, it holds that $L_l \wproj_{\AlphSync_{\procA_4}} \neq L_r \wproj_{\AlphSync_{\procA_4}}$ though, so $\procA_4$ will send unspecified messages in one of the branches, yielding a deadlock. 
In fact, this will happen in the branch for $L_l$ because $L_l \supseteq L_r$ by construction. 

For our proof, equality of both languages is very helpful: 
by assumption, $\protocol_\TM$ is synchronously realizable so we know that $L_l = L_r$ and we only need to show Condition $\condX$ holds for the canonical SCSM of $\protocol_\TM$ where
\[
\langasync(\protocol_{\TM}) = 
    \set{\msgFromTo{\procA_2}{\procA_3}{\lbl{l}} \cat w \mid w \in L'_l}
     \dunion 
    \set{\msgFromTo{\procA_2}{\procA_3}{\lbl{r}} \cat w \mid w \in L'_l}
    \enspace .
\]
For $L'_l$, we generate the finite automata of the canonical SCSM for participants 
$\procA_1$, $\procA_2$, and $\procA_3$. 
The ones for $\procA_5$ and $\procA_4$ are analogous to the ones for $\procA_1$ and $\procA_2$ respectively. 
Once shown that Condition \condX holds for these finite automata, it is easy to see that adding the initial choice by $\procA_2$ and $\procA_3$ preserves Condition~\condX. 
\cref{fig:mixed-choice-canonical-p1} shows the finite automata.  
Note that we specify messages and their acknowledgments jointly. 
Thus, technically, we would introduce one additional intermediate state per message exchange. 

\begin{figure}[t]
\centering
    \begin{subfigure}[b]{1\columnwidth}
    \centering
    \begin{tikzpicture}[sem, node distance=1.5cm]

  \node[state, initial above] (p11) {$q_{1,1}$};
  \node[state, left = of p11] (p12) {$q_{1,2}$}; 
  \node[state, final, right = of p11] (p13) {$q_{1,3}$}; 

  \path 
  (p11) edge [bend right, above] node {$\msgBackForth{\procA_2}{\procA_1}{\markBeginConf}$} (p12)
  (p12) edge [loop left] node [yshift=.7mm] {$\meta{\forall \val \in \TMStates \dunion \TapeAlph.} \; \msgBackForth{\procA_2}{\procA_1}{\val}$} (p12)
  (p12) edge [bend right, below] node {$\msgBackForth{\procA_2}{\procA_1}{\markEndConf}$} (p11)
  (p11) edge [above] node {$\msgBackForth{\procA_2}{\procA_1}{\markEndLoop}$} (p13)
  ;

\end{tikzpicture}
     \caption{Finite automaton for $\procA_1$.}
    \end{subfigure}
    \begin{subfigure}[b]{1\columnwidth}
    \centering
    \begin{tikzpicture}[sem, node distance=1.5cm]

  \node[state, initial above] (p21) {$q_{2,1}$};
  \node[state, left = of p21] (p22) {$q_{2,2}$}; 
  \node[state, below = of p22] (p23) {$q_{2,3}$}; 
  \node[state, right = of p21] (p24) {$q_{2,4}$}; 
  \node[state, final, below = of p24] (p25) {$q_{2,5}$}; 

  \path 
  (p21) edge [above] node {$\msgBackForth{\procA_3}{\procA_2}{\markNewRound}$} (p22)
  (p22) edge [left] node {$\msgBackForth{\procA_2}{\procA_1}{\markBeginConf}$} (p23)
  (p23) edge [loop left] node [yshift=.7mm] {$\meta{\forall \val \in \TMStates \dunion \TapeAlph.} \; \msgBackForth{\procA_2}{\procA_1}{\val}$} (p23)
  (p23) edge [sloped, below] node {$\msgBackForth{\procA_2}{\procA_1}{\markEndConf}$} (p21)
  (p21) edge [above] node {$\msgBackForth{\procA_3}{\procA_2}{\markEndLoop}$} (p24)
  (p24) edge [left] node {$\msgBackForth{\procA_2}{\procA_1}{\markEndLoop}$} (p25)
  ;

\end{tikzpicture}
     \caption{Finite automaton for $\procA_2$.}
    \end{subfigure}
    \begin{subfigure}[b]{1\columnwidth}
    \centering
    \begin{tikzpicture}[sem, node distance=1.5cm]

  \node[state, initial above] (p31) {$q_{3,1}$};
  \node[state, left = of p31] (p32) {$q_{3,2}$}; 
  \node[state, right = of p31] (p33) {$q_{3,3}$}; 
  \node[state, final, right = of p33] (p34) {$q_{3,4}$}; 
  \node[left = of p32, xshift=0.9cm] (pEmpty) {}; 

  \path 
  (p31) edge [bend right, above] node {$\msgBackForth{\procA_3}{\procA_2}{\markNewRound}$} (p32)
  (p32) edge [bend right, below] node {$\msgBackForth{\procA_3}{\procA_4}{\markNewRound}$} (p31)
(p31) edge [above] node {$\msgBackForth{\procA_3}{\procA_2}{\markEndLoop}$} (p33)
  (p33) edge [above] node {$\msgBackForth{\procA_3}{\procA_4}{\markEndLoop}$} (p34)
  ;

\end{tikzpicture}
     \caption{Finite automaton for $\procA_3$.}
    \end{subfigure}
    \caption
    {
    Canonical SCSM for $\protocol_\TM$ (if realizable):
    finite automata for 
    $\procA_1$, $\procA_2$, and $\procA_3$ where  
    $\msgFromTo{\procA_1}{\procA_2}{\val}$
    is abbreviation for  
    $
    \msgFromTo{\procA_1}{\procA_2}{\val}
    \cat 
    \msgFromTo{\procA_2}{\procA_1}{\val}
    $. 
    }
    \label{fig:mixed-choice-canonical-p1}
\end{figure}

Observe that the finite automata are immediately acknowledging. 
Hence, once a message is sent, the sender will wait for the immediate acknowledgment, satisfying the first condition of \condX, so we focus on the actual messages being transmitted. 

In the synchronous semantics of the canonical SCSM, we could exhaustively check that \condX is true. 
We argue more abstractly though. 
No participant ever initiates an exchange with $\procA_3$, \ie it only receives immediate acknowledgments, so it trivially satisfies Condition~\condX. 
One can see that $\procA_1$ (resp.\ $\procA_5$) does only receive from $\procA_2$ (resp.\ $\procA_4$). 
Thus, they act ``in sync'': 
$\procA_1$ is in $q_{1,1}$ iff $\procA_2$ is in $q_{2,1}$ or $q_{2,2}$ 
and
$\procA_1$ is in $q_{1,2}$ iff $\procA_2$ is in $q_{2,3}$. 
It is easy to observe that $\procA_1$ (resp.\ $\procA$) can always receive immediately what $\procA_2$ (resp.\ $\procA_4$) sends, satisfying the first case of \condX. 
Except for immediate acknowledgments, only $\procA_3$ sends to $\procA_2$, \ie $\markNewRound$ and $\markEndLoop$. 
It can happen that $\procA_2$ is not able to receive this message immediately as it might be in 
$q_{2,2}$ or $q_{2,3}$ but it satisfies the second case of \condX: it only sends to another participant with immediate acknowledgments and can receive all possible values from $\procA_3$ next, \ie in $q_{2,1}$.  
It cannot receive any of those values in $q_{2,4}$ but it is easy to see that these values cannot be sent by $\procA_3$: 
$\markNewRound$ is never sent after $\markEndLoop$, which in turn is only sent once. 
The reasoning for $\procA_4$ is analogous. 
This shows that Condition \condX holds for the canonical SCSM and concludes the proof. 
\end{proof}

\syncRealizabilityUndecidable*
\begin{proof}
As explained before, we use the sink-final (mixed choice) protocol $\protocol_{\TM}$ that is asynchronously realizable if and only if there is no accepting computation for some $\TM$ \cite[Thm.\,9.1]{DBLP:phd/dnb/Stutz24}, which in turn is an undecidable problem. 
Hence, it suffices to show that $\protocol_{\TM}$ is synchronously realizable if and only if $\protocol_{\TM}$ is asynchronously realizable. 
The ``if'' direction follows directly from \cite[Thm.\,5.2]{NODBLPyet:10.1145/3756907.3756918}: while this work uses the term global types, they actually represent protocols as general as ours. 
For the ``only if'' direction, we make use of the fact that the canonical SCSM is indeed a synchronous realization because $\protocol_\TM$ is synchronously realizable. 
\cref{lm:canonical-realization-sat-Condition-X} shows that the canonical SCSM indeed satisfies Condition \condX.
This allows us to apply 
\cref{lm:sync-realization-and-Condition-X-yield-async-realization}, which gives that the canonical SCSM can be transformed into an asynchronous realization. 
\end{proof}

     \section{Additional Material for \cref{sec:type-system}}
\label{app:type-system}

\begin{lemma}[Admissibility of structural precongruence for typing]
\label{lm:str-congr-preserves-typability-procs}
Let $P_1$ and $P_2$ be processes.
If
    $
        \typingContextOne
        \typingContextCat
        \typingContextTwo
            \types P_1
    $
    and
    $
        P_1 \precongr P_2
    $,
then
    $
        \typingContextOne
        \typingContextCat
        \typingContextTwo
        \typingContextCat
        \typingContextThree
            \types P_2
    $.
\end{lemma}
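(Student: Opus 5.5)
The plan is to proceed by induction on the derivation of $P_1 \precongr P_2$, where $\precongr$ is the smallest precongruence containing $\congr$ (used in both directions) and the axiom $(\restr s \hasType \CSMabb{A}) \, \zero \precongr \zero$. The cases are then reflexivity, transitivity, closure under the process constructors, the five axioms of $\congr$ in each direction, and the garbage-collection axiom. Reflexivity is immediate. Transitivity follows by chaining the induction hypotheses. I read $\typingContextThree$ as a (possibly empty) set of bindings $c \hasType q$ with $\EndState(q)$, which can always be added through \procTypingEnd. Hence the bulk of the work is to show $\typingContextOne \typingContextCat \typingContextTwo \types P_2$ and then weaken.

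\textbf{Auxiliary lemmas.} Before the case analysis I would prove two lemmas about the non-syntax-directed rule \procTypingEnd.

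\emph{End-weakening.} If $\typingContextOne \typingContextCat \typingContextTwo \types P$ and $\EndState(q)$ with $c$ not in $\typingContextTwo$, then $\typingContextOne \typingContextCat \typingContextTwo, c \hasType q \types P$. This is a single application of \procTypingEnd.

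\emph{Inversion.} Any derivation of $\typingContextOne \typingContextCat \typingContextTwo \types P$ can be normalised so that it consists of a block of \procTypingEnd steps followed by the syntax-directed rule for the head constructor of $P$. Concretely:
\begin{itemize}
\item for $P = P_a \parallel P_b$ we get $\typingContextTwo = \typingContextTwo_e, \typingContextTwo_a, \typingContextTwo_b$, where $\typingContextTwo_e$ contains only end-typed bindings and $\typingContextTwo_a \types P_a$, $\typingContextTwo_b \types P_b$;
\item for $P = \zero$, every binding in $\typingContextTwo$ is end-typed.
\end{itemize}
This is shown by induction on derivations, noting that \procTypingEnd only discards bindings and so commutes downward past the other rules.

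\textbf{Axioms of $\congr$.}
\begin{itemize}
\item Commutativity and associativity of $\parallel$: invert and reassemble the context splits. Typing contexts are treated up to reordering, so the split is unchanged.
\item $P \parallel \zero \congr P$, left to right: inversion gives the end-typed context of $\zero$, which is pushed into $P$ by end-weakening. Right to left: type $\zero$ with the empty context via \procTypingZero and apply \procTypingParallel.
\item Swapping two restrictions: two applications of \procTypingRestr in the opposite order. This works because $s \neq s'$ gives disjoint $\typingContextTwo_s$ and $\typingContextTwo_{s'}$.
\item Scope extrusion, $(\restr s \hasType \CSMabb{A})(P_a \parallel P_b) \congr P_a \parallel (\restr s \hasType \CSMabb{A}) P_b$ with $s$ not free in $P_a$.
  \begin{itemize}
  \item Left to right: inversion yields $\typingContextTwo, \typingContextTwo_s = \typingContextTwo_e, \typingContextTwo_a, \typingContextTwo_b$. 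Since $s$ is not free in $P_a$, any $s[\procA]$ binding in $\typingContextTwo_a$ was introduced only to be discarded by \procTypingEnd and is therefore end-typed. I move all such bindings, together with those of $\typingContextTwo_e$, into the context of $P_b$ by end-weakening. This establishes the premise of \procTypingRestr for $P_b$ with the same reachable configuration $\vec{q}$, and \procTypingParallel finishes the case.
  \item Right to left: directly recombine using \procTypingParallel under \procTypingRestr.
  \end{itemize}
\end{itemize}

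\textbf{Garbage-collection axiom.} For $(\restr s \hasType \CSMabb{A}) \, \zero \precongr \zero$, inversion of \procTypingRestr followed by inversion for $\zero$ shows that every binding in $\typingContextTwo, \typingContextTwo_s$ is end-typed. Dropping $\typingContextTwo_s$ then gives $\typingContextOne \typingContextCat \typingContextTwo \types \zero$ via \procTypingZero and \procTypingEnd.

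\textbf{Congruence closure.} Closure under $\parallel$ and under $(\restr s \hasType \CSMabb{A})$ follows by inverting the outer rule, applying the induction hypothesis to the subprocess with its component context, and reapplying the rule. For prefixes in $\MixCh$ and under process definitions, structural congruence is not applied beneath guards in our reduction contexts. If it is nevertheless included, the same inversion-and-reapply pattern works for \procTypingMixCh, since each premise types a continuation independently.

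\textbf{Main obstacle.} I expect the difficulty to lie in the inversion lemma, and specifically in its use for scope extrusion. Because \procTypingEnd can be applied anywhere, end-typed endpoints of $s$ may be located in the "wrong" parallel component. The argument must show that they can be migrated freely without affecting the \procTypingRestr premise $\vec{q} \in \reach(\CSMabb{A})$, which does not depend on where the bindings reside. I would first establish the permutation of \procTypingEnd downward through all other rules as a standalone lemma and only then carry out the case analysis.
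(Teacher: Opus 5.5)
Your proposal is correct and takes essentially the same route as the paper. The paper likewise does a case analysis on the generating axioms of $\congr$ in both directions plus the garbage-collection rule, using inversion and \procTypingZero and \procTypingEnd for the $\zero$-cases. For scope extrusion it uses the same observation you make: bindings for $s$ in the component where $s$ is not free can only be discarded by \procTypingEnd, so they can be relocated under the restriction. Your explicit induction on the derivation of $\precongr$ also covers the reflexivity, transitivity and congruence-closure cases that the paper leaves implicit. Your core claim $\typingContextOne \typingContextCat \typingContextTwo \types P_2$ is exactly what the paper's proof establishes; the $\typingContextThree$ never features in it.

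One caution concerns the auxiliary lemma you plan to prove first. State it as strengthening for bindings of names that do not occur free in the process: such a binding passes unchanged through every rule and must be discarded by \procTypingEnd in every branch. The fully general downward permutation you announce does not hold:
\begin{itemize}
\item \procTypingEnd cannot be moved below \procTypingRestr for the bindings that rule introduces.
\item It cannot be moved below \procTypingMixCh for the subject's continuation binding $c \hasType q_i$, since $c \hasType q$ in the conclusion is not end-typed.
\end{itemize}
The restricted version is all your scope-extrusion step uses.
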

\begin{proof}
We do a case analysis on $\congr$ and reason for both directions.
Subsequently, we consider the additional rule for $\precongr$.
\begin{itemize}
 \item $P_1 \parallel P_2
        \congr
        P_2 \parallel P_1$: \\
        By inversion, we know that \procTypingParallel is the first rule applied in the typing derivation.
        This rule is symmetric so the same typing derivation works.
 \item $(P_1 \parallel P_2) \parallel P_3
        \congr
        P_1 \parallel (P_2 \parallel P_3)$: \\
        By inversion, we know that \procTypingParallel is the first and second rule applied in the typing derivation.
        It is easy to see that the typing derivation can be rearranged to match the structure.
 \item $P \parallel \zero
        \congr
        P$: \\
        First, assume that there is a typing derivation

        \vspace{-2ex}
        { \small
        \begin{mathpar}
            \inferrule*[right=\procTypingParallel]{
                \typingContextOne
                    \typingContextCat
                    \typingContextTwo_1
                    \types
                    P \\
\typingContextOne
                    \typingContextCat
                    \typingContextTwo_2
                    \types
                    \zero \\
            }{
                \typingContextOne
                    \typingContextCat
                    \typingContextTwo_1,
                    \typingContextTwo_2
                    \types
                    P \parallel \zero
            }
        \end{mathpar}
        }

        We show there is a typing derivation
        $
            \typingContextOne
                \typingContextCat
                \typingContextTwo_1,
                \typingContextTwo_2
                \types
                P
        $.
        Inversion yields that two rules can be applied for the given typing derivation
        $
            \typingContextOne
                \typingContextCat
                \typingContextTwo_2
                \types
                \zero
        $:
        \procTypingEnd
        and
        \procTypingZero.
        Thus, it follows that
        $
            \typingContextTwo_2 =
\set{c \hasType \vec{q}_\procA}_{c \in S}
        $
for a set of session endpoints~$S$
        and
        $\EndState(\vec{q}_\procA)$ holds
for every
        $c \in S$.
        By inversion, there is a typing derivation for
        $
                \typingContextOne
                    \typingContextCat
                    \typingContextTwo_1
                    \types
                    P
        $.
The only difference to our goal is the typing context
        $\typingContextTwo_2$.
        This can be taken care of using \mbox{\procTypingEnd} as in the other typing derivation, concluding this~case.
        \\
        Second, assume there is a typing derivation for 
        $
            \typingContextOne
                \typingContextCat
                \typingContextTwo
                \types
            P
        $. 
        We show there is a typing derivation for
        $
            \typingContextOne
                \typingContextCat
                \typingContextTwo
                \types
                P \parallel \zero
        $.
        We first apply \procTypingParallel to obtain

        \vspace{-2ex}
        { \small
        \begin{mathpar}
            \inferrule*[right=\procTypingParallel]{
                \typingContextOne
                    \typingContextCat
                    \typingContextTwo
                    \types
                    P \\
\inferrule*[right=\procTypingZero]{
}{
                \typingContextOne
                    \typingContextCat
                    \emptyset
                    \types
                    \zero
                }
            }{
                \typingContextOne
                    \typingContextCat
                    \typingContextTwo
                    \types
                    P \parallel \zero
            }
        \end{mathpar}
        }

        for which the right premise is met with \procTypingZero and the left premise is given by~assumption.

 \item $(\restr s \hasType \CSMabb{A}) \, (\restr s' \hasType \CSMabb{B}) \, P
        \congr
        (\restr s' \hasType \CSMabb{B}) \, (\restr s \hasType \CSMabb{A}) \, P$: \\
        By inversion, both typing derivations need to apply
        \procTypingRestr
        twice in the beginning.
        It is straightforward that both rule applications do not interfere with each other, yielding the same premise to prove:
        \[
            \typingContextOne
                \typingContextCat
                \typingContextTwo,
                \typingContextTwo_s,
                \typingContextTwo_{s'}
            \types
            P
        \]
        Thus, this is given by assumption.

 \item $(\restr s \hasType \CSMabb{A}) \, (P_1 \parallel P_2)
        \congr
        P_1 \parallel (\restr s \hasType \CSMabb{A}) \, P_2$
        and $s$ is not free in $P_1$: \\
        First, we assume there is a typing derivation for
        $(\restr s \hasType \CSMabb{A}) \, (P_1 \parallel P_2)$
        and show there is a typing derivation for
        $P_1 \parallel (\restr s \hasType \CSMabb{A}) \, P_2$.
        Applying inversion twice yields

        \vspace{-2ex}
        { \footnotesize
        \begin{mathpar}
            \inferrule*[right=\procTypingRestr]{
                \inferrule*[left=\procTypingParallel]{
                    \typingContextOne
                        \typingContextCat
                        \typingContextTwo_1
                    \types
                    P_1
                    \\
                    \typingContextOne
                        \typingContextCat
                        \typingContextTwo_2,
                        \typingContextTwo_s
                    \types
                    P_2
                }{
                    \typingContextOne
                        \typingContextCat
                        \typingContextTwo_1,
                        \typingContextTwo_2,
                        \typingContextTwo_s
                    \types
                    P_1 \parallel P_2
                }
(\vec{q}, \xi) \in \reach(\CSMabb{A})
                \\
                \typingContextTwo_s =
                    \set{s[\procA] \hasType \vec{q}_\procA}_{\procA \in \ProcsOf{\CSMabb{A}}}
                \\
}{
                \typingContextOne
                    \typingContextCat
                    \typingContextTwo_1,
                    \typingContextTwo_2
                    \types
                    (\restr s \hasType \CSMabb{A}) \, (P_1 \parallel P_2)
            }
        \end{mathpar}
        }

        In the above deriviation, we used $\typingContextTwo_s$ in the typing derivation for
        $
            \typingContextOne
                \typingContextCat
                \typingContextTwo_2,
                \typingContextTwo_s
            \types
            P_2
        $.
        Let us explain why we can assume this. 
        By definition, these only contain type bindings related to~$s$, which does not occur in $P_1$ by assumption.
        There might exist a typing derivation where parts of $\typingContextTwo_s$ appear in the typing derivation for $P_1$ but these can only removed with the rules
        \procTypingEnd. 
        Hence, such derivations can be mimicked in the typing derivation for $P_2$, justifying our treatment of $\typingContextTwo_s$ and~$\typingContextThree_s$.
        We construct a typing derivation:

        \vspace{-2ex}
        { \small
        \begin{mathpar}
            \inferrule*[right=\procTypingParallel]{
                \inferrule*[right=\procTypingRestr]{
(\vec{q}, \xi) \in \reach(\CSMabb{A})
                    \\
                    \typingContextTwo_s =
                        \set{s[\procA] \hasType \vec{q}_\procA}_{\procA \in \ProcsOf{\CSMabb{A}}}
                    \\
                    \typingContextOne
                        \typingContextCat
                        \typingContextTwo_2,
                        \typingContextTwo_s
                    \types
                    P_2
                }{
                    \typingContextOne
                        \typingContextCat
                        \typingContextTwo_2
                    \types
                    (\restr s \hasType \CSMabb{A}) \, P_2
                }
                \\
\typingContextOne
                    \typingContextCat
                    \typingContextTwo_1
                \types
                P_1
            }{
                \typingContextOne
                    \typingContextCat
                    \typingContextTwo_1,
                    \typingContextTwo_2
                    \types
                    P_1 \parallel (\restr s \hasType \CSMabb{A}) \, P_2
            }
        \end{mathpar}
        }

        All premises coincide with the ones of the original typing derivation, concluding this case.
        \\
        Second, we assume there is a typing derivation for
        $P_1 \parallel (\restr s \hasType \CSMabb{A}) \, P_2$
        and show there is a typing derivation for
        $(\restr s \hasType \CSMabb{A}) \, (P_1 \parallel P_2)$.
        The proof is analogous to the previous case but we do not need to reason about the treatment of $\typingContextTwo_s$ and $\typingContextThree_s$ but it suffices to show there is one typing derivation and we can choose the respective treatment.

 \item $(\restr s \hasType \CSMabb{A}) \, \zero
    \precongr
    \zero$: \\
We assume there is a typing derivation for
    $
        \typingContextOne
            \typingContextCat
            \typingContextTwo
        \types
        (\restr s \hasType \CSMabb{A}) \, \zero
    $.
    We show there is a typing derivation for
    $
        \typingContextOne
            \typingContextCat
            \typingContextTwo
        \types
        \zero
    $.
    By inversion, we know that \procTypingRestr is the last rule to be applied and we get one of the premises:
    $
        \typingContextOne
            \typingContextCat
            \typingContextTwo,
            \typingContextTwo_s
        \types
        \zero
    $
    with
    $
        \typingContextTwo_s =
            \set{s[\procA] \hasType \reach(\CSMabb{A}_{\procA})}_{\procA \in \ProcsOf{\CSMabb{A}}}
    $.
    By inversion,
    \procTypingZero
    and
    \procTypingEnd
    are the only rules that can be applied in the typing derivation for
    $
        \typingContextOne
            \typingContextCat
            \typingContextTwo,
            \typingContextTwo_s
        \types
        \zero
    $.
    Since
    \procTypingZero
    needs the second typing context to be empty, it is applied last and all other derivations are applications of
    \procTypingEnd.
    Therefore,
    $
        \typingContextTwo =
            \Union_{s' \in \SessionName}
            \set{s'[\procA] \hasType \vec{q}_\procA}_{\procA \in \Procs_{s'}}
    $
    for some set of sessions $\SessionName$ that does not contain $s$
    and
    $\EndState(\vec{q}_\procA)$
for every $\procA \in \Procs_{s'}$ and $s' \in \SessionName$.
    Therefore, we can also first apply \procTypingEnd $\card{\typingContextTwo}$ times and
    last \procTypingZero to obtain a typing derivation for
    $
        \typingContextOne
            \typingContextCat
            \typingContextTwo
        \types
        \zero
    $.

\end{itemize}
This concludes the proof of admissibility of $\precongr$.
\end{proof}

\begin{lemma}
\label{lm:every-variable-in-process-needs-typing}
Let $P$ be a process.
If
$\typingContextOne
\typingContextCat
\typingContextTwo
\types P$
and $x$ is not in $\typingContextTwo$,
then $x$ cannot occur in~$P$.
\end{lemma}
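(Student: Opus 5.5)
We argue by induction on the height of the typing derivation of $\typingContextOne \typingContextCat \typingContextTwo \types P$, with a case split on the last rule applied. More precisely, we show the contrapositive in a form suited to the induction: if $x$ occurs free in $P$ and the judgement is derivable, then $x$ has a binding in $\typingContextTwo$. We read ``occur'' as ``occur free''; bound occurrences are those introduced by a receive prefix $c[\procB] \recOp \labelAndVar{l}{y}$ or by the parameters of a process definition, and these are handled by $\alpha$-renaming.

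The cases are as follows.
\begin{itemize}
\item \procTypingZero: $\zero$ contains no variables, so the claim is trivial.
\item \procTypingEnd: the premise types $P$ under $\typingContextTwo$ minus one binding. By the induction hypothesis, every free variable of $P$ is bound in that smaller context, and hence also in $\typingContextTwo$.
\item \procTypingParallel: every free variable of $P_1 \parallel P_2$ is free in some $P_i$. By the induction hypothesis it is bound in $\typingContextTwo_i \subseteq \typingContextTwo_1,\typingContextTwo_2$.
\item \procTypingRestr: the premise context is $\typingContextTwo, \typingContextTwo_s$. Since $\typingContextTwo_s$ contains only endpoints $s[\procA]$ and no variables, any variable bound in the premise context is already bound in $\typingContextTwo$.
\item \procTypingProcName: the conclusion context is exactly $\vec{c} \hasType \vec{L}$, and the free names of $\pn{Q}[\vec{c}]$ are precisely the entries of $\vec{c}$. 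So every variable occurring in $P$ is typed.
\end{itemize}

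The main case is \procTypingMixCh. The process is a sum of prefixes $c[\procB_i] \sendOp \labelAndMsg{l_i}{c_i}\seq P_i$ and $c[\procB_j] \recOp \labelAndVar{l_j}{y_j}\seq P_j$, and the conclusion context is $\typingContextTwo, c \hasType q, \set{c_i \hasType L_i}_{i\in I}$. A free variable $x$ can occur in four places.
\begin{itemize}
\item If $x$ is the subject $c$, it is bound by $c \hasType q$.
\item If $x$ is a sent payload $c_i$, it is bound by $c_i \hasType L_i$.
\item If $x$ is free in some continuation $P_i$ (send branch), the induction hypothesis gives a binding of $x$ in $\typingContextTwo, c\hasType q_i, \set{c_k \hasType L_k}_{k \neq i}$. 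Every such binding is either in $\typingContextTwo$ or concerns $c$ or some $c_k$, all of which are bound in the conclusion.
\item If $x$ is free in some continuation $P_j$ (receive branch), then $x \neq y_j$, since $y_j$ is bound by the prefix. The premise context is $\typingContextTwo, c \hasType q_j, \set{c_i\hasType L_i}_{i\in I}, y_j \hasType L_j$, so the binding the induction hypothesis provides for $x$ is not the extra $y_j \hasType L_j$ and already appears in the conclusion.
\end{itemize}

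The only delicate point is exactly this receive case, together with making the bound/free convention precise. We assume, via Barendregt's convention, that the bound $y_j$ is distinct from all names in $\typingContextTwo$, so that removing $y_j \hasType L_j$ cannot remove a binding of $x$. Process definitions are typed separately by \procTypingProcDef and never appear inside $P$, so they do not need to be considered.
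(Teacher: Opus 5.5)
Your proof is correct and is essentially the paper's argument: the paper only observes that an occurrence of $x$ must be handled by \procTypingProcName or \procTypingMixCh, both of which require $x$ to be in the context, and it leaves implicit what your induction verifies, namely that no intermediate rule can supply a binding for a free $x$ (\procTypingRestr adds only endpoints, and the receive premise of \procTypingMixCh adds only the bound $y_j$). Your appeal to Barendregt's convention in the receive case is harmless but unnecessary, since $x \neq y_j$ already excludes the extra binding.
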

\begin{proof}
Towards a contradiction, assume that $x$ occurs in $P$.
Then, at some point in the typing derivation
\[
    \typingContextOne
    \typingContextCat
    \typingContextTwo
    \types P
\]
one of these two rules applies to handle $x$:
\procTypingProcName or
\procTypingMixCh. 
Each requires all variables to occur in their respective typing contexts, yielding a contradiction.
\end{proof}

\begin{lemma}[Substitution Lemma]
\label{lm:substitution-lemma}
Let $P$ be a process.
For all $L$,
if it holds that
      $\typingContextOne
           \typingContextCat
           \typingContextTwo,
           x \hasType L
       \types
       P$,
then
       $
        \typingContextOne
           \typingContextCat
           \typingContextTwo,
           v \hasType L
       \types
       P[v / x]
       \enspace .
       $
\end{lemma}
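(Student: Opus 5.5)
The proof goes by induction on the height of the typing derivation of $\typingContextOne \typingContextCat \typingContextTwo, x \hasType L \types P$, with a case analysis on the last rule applied. We may assume $v$ is not already bound in $\typingContextTwo$, since otherwise $\typingContextTwo, v \hasType L$ would not be a typing context. We treat $v$ as a term $c$, that is, a session endpoint or another variable. Bound names are renamed apart in the usual way, so that $v$ is not captured by any binder $y_j$ in a receive prefix of $P$.

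The structural cases propagate the substitution.
\begin{itemize}
\item For \procTypingZero the context is empty, so this rule cannot be the last one applied; this case is vacuous.
\item For \procTypingEnd, if the dropped binding is $x \hasType q$ with $\EndState(q)$, then \cref{lm:every-variable-in-process-needs-typing} gives that $x$ does not occur in $P$. Hence $P[v/x] = P$, and we reapply \procTypingEnd to drop $v \hasType q$ instead. If the dropped binding is some other $c \hasType q$, we apply the induction hypothesis to the premise and reapply the rule.
\item For \procTypingParallel, the binding $x \hasType L$ goes to exactly one of the split contexts $\typingContextTwo_1, \typingContextTwo_2$. By \cref{lm:every-variable-in-process-needs-typing}, $x$ does not occur in the other component, so substitution leaves it unchanged. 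We apply the induction hypothesis to the component that holds $x$.
\item For \procTypingRestr, the added bindings $\typingContextTwo_s$ concern only endpoints $s[\procA]$, which are not variables. We apply the induction hypothesis to the premise under $\typingContextTwo, \typingContextTwo_s$.
\item For \procTypingProcName, the parameter list $\vec{c}$ is typed by $\vec{L}$ position-wise. Replacing $x$ by $v$ in $\vec{c}$ gives the corresponding context with $v \hasType L$, and $\typingContextOne(\pn{Q})$ is unchanged, so the rule applies directly. Process definitions themselves are not affected, since they are typed separately.
\end{itemize}

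The main case is \procTypingMixCh. There are three places $x$ can occur.
\begin{itemize}
\item $x$ can be the subject $c$ typed by the state $q$. Then $L = q$, and the premises type the continuations $P_i$ and $P_j$ with $x \hasType q_i$ or $x \hasType q_j$. The types of $x$ differ across premises, so the induction hypothesis must be applied \emph{for all $L$}, which is why the statement quantifies over $L$. Each premise yields a derivation for $P_k[v/x]$ under $v \hasType q_k$, and we reassemble with subject $v$.
\item $x$ can be among the delegated payloads $c_i \hasType L_i$. In the premises these bindings stay in the context except in branch $i$, where $x$ is sent away. In that branch $x$ does not occur in $P_i$ by \cref{lm:every-variable-in-process-needs-typing}, so substitution is trivial there. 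In the other premises we use the induction hypothesis.
\item $x$ can be in $\typingContextTwo$. Then it is carried unchanged into every premise, and the induction hypothesis applies in each.
\end{itemize}
In the receive branches we rely on $y_j \neq x, v$, which the renaming ensures.

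I expect the main obstacle to be the bookkeeping in \procTypingMixCh: the binding for $x$ changes type or disappears across premises, and we must make sure each premise either contains a binding for $x$ or has $x$ absent from its process. \cref{lm:every-variable-in-process-needs-typing} is what makes the absent case trivial, and induction on derivation height, with the statement generalised over $L$, handles the changing types.
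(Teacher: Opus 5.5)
Your proof is correct and follows the paper's own argument essentially step for step: induction on the depth of the typing derivation with a case analysis on the last rule. You use \cref{lm:every-variable-in-process-needs-typing} to dispose of the premises from which $x$ has been dropped (in \procTypingEnd, \procTypingParallel, and the delegated-payload branch of \procTypingMixCh), and the quantification over $L$ to handle the subject's changing type in \procTypingMixCh. Your extra hygiene about $v$ not already occurring in $\typingContextTwo$ and renaming bound names apart is sensible and left implicit in the paper.
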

\begin{proof}
We do an induction on the depth of the typing derivation and do a case analysis on the last applied rule of the derivation.

For the induction base, we consider the (only) rule with depth~$0$: \procTypingZero.
In this case, the second typing context for \procTypingZero is empty, which contradicts our assumption that $x \hasType L$ or $v \hasType L$ appear.

For the induction step, the induction hypothesis yields that the claim holds for typing derivations of smaller depth.

\begin{itemize}
 \item \procTypingProcName: 
    We have that

    \vspace{-2ex}
    { \small
    \begin{mathpar}
        \inferrule*[right=\procTypingProcName]{
            \typingContextOne(\pn{Q}) = L_1,\dots,L_n \\
        }{
            \typingContextOne
                \typingContextCat
                c_1 \hasType L_1,
                \ldots,
                c_{i-1} \hasType L_{i-1},
                x \hasType L_{i},
                c_{i+1} \hasType L_{i+1},
                \ldots
                c_n \hasType L_n
                \types
                \pn{Q}[\vec{c}]
        }
    \end{mathpar}
    }

    It is straightforward that we need to show precisely the same premise for the desired typing derivation:
    \begin{align*}
        \typingContextOne
            \typingContextCat
            & \; 
            c_1 \hasType L_1,
            \ldots,
            c_{i-1} \hasType L_{i-1},
            \\ & \;
            v \hasType L_{i},
            c_{i+1} \hasType L_{i+1},
            \ldots,
            c_n \hasType L_n
            \types
            (\pn{Q}[\vec{c}])[v / x]
        \enspace .
    \end{align*}

 \item \procTypingEnd: We have two cases. \\
    First, we have

    \vspace{-2ex}
    { \small
    \begin{mathpar}
        \inferrule*[right=\procTypingEnd]{
            \typingContextOne
            \typingContextCat
            \typingContextTwo
            \types P \\
\EndState(q)
}{
            \typingContextOne
            \typingContextCat
            x \hasType q,
            \typingContextTwo
                \types
            P
        }
    \end{mathpar}
    }

    We show that

    \vspace{-2ex}
    { \small
    \begin{mathpar}
        \inferrule*[right=\procTypingEnd]{
            \typingContextOne
            \typingContextCat
            \typingContextTwo
            \types P[v/x] \\
\EndState(q)
}{
            \typingContextOne
            \typingContextCat
            v \hasType q,
            \typingContextTwo
                \types
            P[v/x]
        }
    \end{mathpar}
    }

    For the first typing derivation, we have $x \hasType q, \typingContextTwo$ as typing context.
    By the fact that $\typingContextTwo$ is a typing context (and no syntactic typing context),
    $x$ does not occur in $\typingContextTwo$.
    By inversion, we have
    $
            \typingContextOne
            \typingContextCat
            \typingContextTwo
            \types P
    $.
    Thus, $x$ cannot occur in $P$.
    If it did, $P$ could only be typed with a typing context with $x$, which does not occur in~$\typingContextTwo$, given by contraposition of
    \cref{lm:every-variable-in-process-needs-typing}.
    Hence $P = P[v/x]$ and, thus, both premises coincide.

    Second, we have

    \vspace{-2ex}
    { \small
    \begin{mathpar}
        \inferrule*[right=\procTypingEnd]{
            \typingContextOne
            \typingContextCat
            \typingContextTwo,
            x \hasType L
            \types P \\
\EndState(q)
}{
            \typingContextOne
            \typingContextCat
            c \hasType q,
            \typingContextTwo,
            x \hasType L
                \types
            P
        }
    \end{mathpar}
    }

    We show that

    \vspace{-2ex}
    { \small
    \begin{mathpar}
        \inferrule*[right=\procTypingEnd]{
            \typingContextOne
            \typingContextCat
            \typingContextTwo,
            v \hasType L
            \types P[v/x] \\
\EndState(q)
}{
            \typingContextOne
            \typingContextCat
            c \hasType q,
            \typingContextTwo,
            v \hasType L
                \types
            P[v/x]
        }
    \end{mathpar}
    }

    By inversion of the first typing derivation, we know that both premises hold.
    The second premise is the same for both derivations.
    For the first premise, the induction hypothesis applies.

 \item \procTypingMixCh:  
    We do a case analysis if $x = c$, $x = c_k$ for some $k \in I$ or neither of both.

    \begin{itemize}
    \item $x \neq c$ and $x \neq c_k$ for any $k \in I$: \\
        In this case, we can apply inversion and the induction hypothesis applies to all cases of the second and third premise. 
    \item $x = c$: \\
We have 

        { \small
        \begin{mathpar}
        \inferrule*[right=\procTypingMixCh]{
\delta(q) =
            \set{(\msgFromTo{\procA}{\procB_i}{\labelAndType{l_i}{L_i}}, q_i) \mid i \in I}
            \dunion
            \set{(\msgFromTo{\procB_j}{\procA}{\labelAndType{l_j}{L_j}}, q_j) \mid j \in J}
            \\
\meta{\forall i \in I \st}
            \typingContextOne \typingContextCat
                \typingContextTwo , x \hasType q_i,
                \set{c_j \hasType L_j}_{j \in I\setminus \set{i}}
                \types P_i 
            \\
            \meta{\forall j \in J \st}
            \typingContextOne \typingContextCat
                \typingContextTwo,
                x \hasType q_j,
                \set{c_i \hasType L_i}_{i \in I}, 
                y_j \hasType L_j
                \types P_j \\
        }{
            \typingContextOne
            \typingContextCat
            \typingContextTwo,
            x \hasType q,
            \set{c_i \hasType L_i}_{i \in I}
            \\
                \types
            \MixCh_{i \in I} x[\procB_i] \sendOp \labelAndMsg{l_i}{c_i} \seq P_i
            + 
            \MixCh_{j \in J} x[\procB_j] \recOp \labelAndVar{l_j}{y_j} \seq P_j
        }
        \end{mathpar}
        }
        We need to show that 
        { \small
        \begin{mathpar}
        \inferrule*[right=\procTypingMixCh]{
\delta(q) =
            \set{(\msgFromTo{\procA}{\procB_i}{\labelAndType{l_i}{L_i}}, q_i) \mid i \in I}
            \dunion
            \set{(\msgFromTo{\procB_j}{\procA}{\labelAndType{l_j}{L_j}}, q_j) \mid j \in J}
            \\
\meta{\forall i \in I \st}
            \typingContextOne \typingContextCat
                \typingContextTwo , v \hasType q_i,
                \set{c_j \hasType L_j}_{j \in I\setminus \set{i}}
                \types P_i [v / x]
            \\
            \meta{\forall j \in J \st}
            \typingContextOne \typingContextCat
                \typingContextTwo,
                v \hasType q_j,
                \set{c_i \hasType L_i}_{i \in I}, 
                y_j \hasType L_j
                \types P_j [v / x] \\
        }{
            \typingContextOne
            \typingContextCat
            \typingContextTwo,
            v \hasType q,
            \set{c_i \hasType L_i}_{i \in I}
            \\
                \types
            (
            \MixCh_{i \in I} x[\procB_i] \sendOp \labelAndMsg{l_i}{c_i} \seq P_i
            + 
            \MixCh_{j \in J} x[\procB_j] \recOp \labelAndVar{l_j}{y_j} \seq P_j
            ) [v / x]
        }
        \end{mathpar}
        }

        The first permise is the same while second and third premise follow by the induction hypothesis.
    \item $x = c_k$ for $k \in I$: \\
        We have 
        { \small
        \begin{mathpar}
        \inferrule*[right=\procTypingMixCh]{
\delta(q) =
            \set{(\msgFromTo{\procA}{\procB_i}{\labelAndType{l_i}{L_i}}, q_i) \mid i \in I}
            \dunion
            \set{(\msgFromTo{\procB_j}{\procA}{\labelAndType{l_j}{L_j}}, q_j) \mid j \in J}
            \\
\meta{\forall i \in I \st}
            \typingContextOne \typingContextCat
                \typingContextTwo , c \hasType q_i,
                x \hasType L,
                \set{c_{i'} \hasType L_{i'}}_{{i'} \in I\setminus \set{i,k}}
                \types P_i 
            \\
            \\
            \typingContextOne \typingContextCat
                \typingContextTwo , c \hasType q_k,
\set{c_{i} \hasType L_{i}}_{i \in I\setminus \set{k}}
                \types P_k
            \\
            \\
            \meta{\forall j \in J \st}
            \typingContextOne \typingContextCat
                \typingContextTwo,
                c \hasType q_j,
                \set{c_i \hasType L_i}_{i \in I}, 
                y_j \hasType L_j
                \types P_j \\
        }{
            \typingContextOne
            \typingContextCat
            \typingContextTwo,
            c \hasType q,
            x \hasType L,
            \set{c_i \hasType L_i}_{i \in I \setminus \set{k}}
            \\
                \types
            \MixCh_{i \in I} c[\procB_i] \sendOp \labelAndMsg{l_i}{c_i} \seq P_i
            + 
            \MixCh_{j \in J} c[\procB_j] ? \labelAndVar{l_j}{y_j} \seq P_j
        }
        \end{mathpar}
        }
        We show that
        { \small
        \begin{mathpar}
        \inferrule*[right=\procTypingMixCh]{
\delta(q) =
            \set{(\msgFromTo{\procA}{\procB_i}{\labelAndType{l_i}{L_i}}, q_i) \mid i \in I}
            \dunion
            \set{(\msgFromTo{\procB_j}{\procA}{\labelAndType{l_j}{L_j}}, q_j) \mid j \in J}
            \\
\meta{\forall i \in I \setminus \set{k} \st}
            \typingContextOne \typingContextCat
                \typingContextTwo , c \hasType q_i,
                v \hasType L,
                \set{c_{i'} \hasType L_{i'}}_{{i'} \in I\setminus \set{i,k}}
                \types P_i [v / k]
            \\
            \typingContextOne \typingContextCat
                \typingContextTwo , c \hasType q_k,
\set{c_{i} \hasType L_{i}}_{i \in I\setminus \set{k}}
                \types P_k [v / x]
            \\
            \meta{\forall j \in J \st}
            \typingContextOne \typingContextCat
                \typingContextTwo,
                c \hasType q_j,
                \set{c_i \hasType L_i}_{i \in I}, 
                y_j \hasType L_j
                \types P_j [v / x] \\
        }{
            \typingContextOne
            \typingContextCat
            \typingContextTwo,
            c \hasType q,
            x \hasType L,
            \set{c_i \hasType L_i}_{i \in I \setminus \set{k}}
            \\
                \types
            (
            \MixCh_{i \in I} c[\procB_i] \sendOp \labelAndMsg{l_i}{c_i} \seq P_i
            + 
            \MixCh_{j \in J} c[\procB_j] ? \labelAndVar{l_j}{y_j} \seq P_j
            ) [v / x]
        }
        \end{mathpar}
        }
    \end{itemize}

    The first premise is the same.
    The second and fourth premise all follow by inversion and applying the induction hypotheses.
    For the third premise, we claim that $x$ cannot occur in~$P_k$.
    In the conclusion of the first derivation, we have the following typing context: 
    $
            \typingContextTwo,
            c \hasType q,
            x \hasType L,
            \set{c_i \hasType L_i}_{i \in I \setminus \set{k}}
    $.
    Thus, by assumption that each element has at most one type in a typing context, we know that $x$ cannot occur in
    $
            \typingContextTwo,
            c \hasType q,
            \set{c_i \hasType L_i}_{i \in I \setminus \set{k}}
    $.
    With \cref{lm:every-variable-in-process-needs-typing}, $x$ hence cannot occur in~$P_k$.
    Thus, $P_k[v/x] = P_k$ and the third premise in both typing derivations coincide, concluding this~case.

 \item \procTypingParallel: \\
    There are two symmetric cases.
    We only consider one of both.
    For this, we~have

    \vspace{-2ex}
    { \small
    \begin{mathpar}
        \inferrule*[right=\procTypingParallel]{
            \typingContextOne
                \typingContextCat
                \typingContextTwo_1,
                x \hasType L
                \types
                P_1 \\
\typingContextOne
                \typingContextCat
                \typingContextTwo_2
                \types
                P_2 \\
        }{
            \typingContextOne
                \typingContextCat
                \typingContextTwo_1,
                x \hasType L,
                \typingContextTwo_2
                \types
                P_1 \parallel P_2
        }
    \end{mathpar}
    }

    We show that there is a typing derivation for
    \[
            \typingContextOne
                \typingContextCat
                \typingContextTwo_1,
                v \hasType L,
                \typingContextTwo_2
                \types
                (P_1 \parallel P_2)[v / x]
    \]
    By our assumption that typing contexts have at most one type per element, $\typingContextTwo_1$ and $\typingContextTwo_2$ cannot share any names.
    By inversion,
    we have
           $ \typingContextOne
                \typingContextCat
                \typingContextTwo_2
                \types
                P_2$.
    Thus, by \cref{lm:every-variable-in-process-needs-typing}, $x$~cannot occur in $P_2$.
    Hence, $(P_1 \parallel P_2)[v / x]
                =
                P_1[v/x] \parallel P_2$.
    It suffices to show that the following typing derivation exists:

    \vspace{-2ex}
    { \small
    \begin{mathpar}
        \inferrule*[right=\procTypingParallel]{
            \typingContextOne
                \typingContextCat
                \typingContextTwo_1,
                v \hasType L,
                \types
                P_1 \\
\typingContextOne
                \typingContextCat
                \typingContextTwo_2
                \types
                P_2 \\
        }{
            \typingContextOne
                \typingContextCat
                \typingContextTwo_1,
                v \hasType L,
                \typingContextTwo_2
                \types
                P_1[v/x] \parallel P_2
        }
    \end{mathpar}
    }

    The second premise is the same as in the original typing derivation.
    The first premise can be obtained by inversion on the original typing derivation and applying the induction hypothesis.

 \item \procTypingRestr: \\
    We have a typing derivation for
    \[
        \typingContextOne
            \typingContextCat
            \typingContextTwo,
            x \hasType L,
            \typingContextTwo_s
            \types (\restr s \hasType \CSMabb{A}) \, P
        \enspace .
    \]
    This case follows easily from inversion and applying the induction hypothesis to obtain a typing derivation for
    \[
        \typingContextOne
            \typingContextCat
            \typingContextTwo,
            v \hasType L,
            \typingContextTwo_s
            \types (\restr s \hasType \CSMabb{A}) \, P[v/x]
        \enspace .
    \]

\end{itemize}
This concludes the proof of the substitution lemma.
\end{proof}

\subjectReduction*
\begin{proof}
We do an induction on the depth of the derivation tree for
 $P \redto P'$ 
 \ref{lm:subject-reduction-assumption-4}. 

For $d = 1$, $\procReductionExc$ is the only rule that applies. 
By inversion on~\ref{lm:subject-reduction-assumption-4}, we get: 
{ \small 
\begin{mathpar}
\inferrule*[right=\procReductionExc]{
        I \inters J = \emptyset \\
        k \in I  \\
\prefixPi_k = s[\procA][\procC] \sendOp \labelAndMsg{l_k}{c_k} \\
        k' \in J  \\
\prefixPi_{k'} = s[\procC][\procA] \recOp \labelAndVar{l_{k'}}{y_{k'}} \\
        l_k = l_{k'} 
}{
        \MixCh_{i \in I} \prefixPi_i \seq P_i
        \parallel
        \MixCh_{j \in J} \prefixPi_j \seq P_j
            \redto
        P_k
        \parallel
        P_{k'}[c_k / y_{k'}]
}
\end{mathpar}
} 

Hence, $P = Q_1 \parallel Q_2$ with
$
Q_1 = \MixCh_{i \in I} 
\prefixPi_i \seq P_i
$
and 
$
Q_2 = \MixCh_{j \in J} 
\prefixPi_j \seq P_j 
$ 
and $P' = P_{k'}[c_k / y_{k'}]$.
We do inversion on 
\ref{lm:subject-reduction-assumption-2}: 
\begin{mathpar}
  \inferrule*[right=\procTypingParallel]{
      (i)\; \typingContextOne
          \typingContextCat
          \typingContextTwo_1
          \types
          Q_1 \\
(ii)\; \typingContextOne
          \typingContextCat
          \typingContextTwo_2
          \types
          Q_2 \\
  }{
      \typingContextOne
          \typingContextCat
          \typingContextTwo_1,
          \typingContextTwo_2
          \types
          Q_1 \parallel Q_2
  }
\end{mathpar}
We do inversion on $(i)$ and $(ii)$: 

{ \small 
\begin{mathpar}
  \inferrule*[right=\procTypingMixCh]{
\delta(\hat{q}_1) =
      \set{(\msgFromTo{\procA}{\procB_i}{\labelAndType{l_i}{L_i}}, q_i) \mid i \in I_1}
      \dunion
      \set{(\msgFromTo{\procB_j}{\procA}{\labelAndType{l_j}{L_j}}, q_j) \mid j \in J_1}
      \\
\meta{\forall i \in I_1 \st}
      \typingContextOne \typingContextCat
          \hat{\typingContextTwo}_1,
           s[\procA] \hasType q_i,
          \set{c_j \hasType L_j}_{j \in I_1\setminus \set{i}}
          \types P_i 
      \\
      \meta{\forall j \in J_1 \st}
      \typingContextOne \typingContextCat
          \hat{\typingContextTwo}_1,
          s[\procA] \hasType q_j,
          \set{c_i \hasType L_i}_{i \in I_1}, 
          y_j \hasType L_j
          \types P_j \\
  }{
      (i) \;
      \typingContextOne
      \typingContextCat
      \hat{\typingContextTwo}_1,
      s[\procA] \hasType \hat{q}_1,
      \set{c_i \hasType L_i}_{i \in I_1}
      \\
          \types
      \MixCh_{i \in I_1} s[\procA][\procB_i] \sendOp \labelAndMsg{l_i}{c_i} \seq P_i
      + 
      \MixCh_{j \in J_1} s[\procA][\procB_j] \recOp \labelAndVar{l_j}{y_j} \seq P_j
  }

  \inferrule*[right=\procTypingMixCh]{
\delta(\hat{q}_2) =
      \set{(\msgFromTo{\procC}{\procD_i}{\labelAndType{l_i}{L_i}}, q_i') \mid i \in I_2}
      \dunion
      \set{(\msgFromTo{\procD_j}{\procC}{\labelAndType{l_j}{L_j}}, q_j') \mid j \in J_2}
      \\
\meta{\forall i \in I_2 \st}
      \typingContextOne \typingContextCat
          \hat{\typingContextTwo}_2,
           s[\procC] \hasType q_i',
          \set{c_j \hasType L_j}_{j \in I_2\setminus \set{i}}
          \types P'_i 
      \\
      \meta{\forall j \in J_2 \st}
      \typingContextOne \typingContextCat
          \hat{\typingContextTwo}_2,
          s[\procC] \hasType q_j',
          \set{c_i \hasType L_i}_{i \in I_2}, 
          y_j \hasType L_j
          \types P'_j \\
  }{
      (ii) \;
      \typingContextOne
      \typingContextCat
      \hat{\typingContextTwo}_2,
      s[\procC] \hasType \hat{q}_2,
      \set{c_i \hasType L_i}_{i \in I_2}
      \\
          \types
      \MixCh_{i \in I_2} s[\procC][\procD_i] \sendOp \labelAndMsg{l_i}{c_i} \seq P'_i
      + 
      \MixCh_{j \in J_2} s[\procC][\procD_j] \recOp \labelAndVar{l_j}{y_j} \seq P'_j
  }
\end{mathpar}
}

where $I = I_1 \dunion J_1$ and $J = I_2 \dunion J_2$ as well as 
\[
\typingContextTwo_1 = 
    \hat{\typingContextTwo}_1, 
    s[\procA] \hasType \hat{q}_1,
    \set{c_i \hasType L_i}_{i \in I_1}, \text{  and  }
 \typingContextTwo_2 = 
    \hat{\typingContextTwo}_2, 
    s[\procC] \hasType \hat{q}_2,
    \set{c_i \hasType L_i}_{i \in I_2} .
\]
Hence, we have 
\[ 
\typingContextTwo = 
    \hat{\typingContextTwo}_1, 
    s[\procA] \hasType \hat{q}_1,
    \set{c_i \hasType L_i}_{i \in I_1},
    \hat{\typingContextTwo}_2, 
    s[\procC] \hasType \hat{q}_2,
    \set{c_i \hasType L_i}_{i \in I_2}
\] 
and define 
\[
 \typingContextTwo' = 
    \hat{\typingContextTwo}_1, 
    s[\procA] \hasType q_k,
    \set{c_i \hasType L_i}_{i \in I_1},
    \hat{\typingContextTwo}_2, 
    s[\procC] \hasType q'_{k'},
    \set{c_i \hasType L_i}_{i \in I_2} \enspace .
\]
We need to show the following two claims. 
By assumption, we know that the label $l_k = l_{k'}$ determines the type: here, we use that $L_k = L_{k'}$ in both cases. 
\begin{itemize}
    \item $\typingContextOne, \typingContextTwo' \types P_k \parallel P'_{k'}[c_k / y_{k'}]$: 
        \begin{mathpar}[\phantom{soi}]
            \inferrule*[right=\procTypingParallel]{
                (a) \;
                \typingContextOne 
                \typingContextCat
                \hat{\typingContextTwo}_1, 
                s[\procA] \hasType q_k,
                \set{c_i \hasType L_i}_{i \in I_1 \setminus \set{k}},
                    \types P_k 
                \\
                (b) \;
                \typingContextOne 
                \typingContextCat
                \hat{\typingContextTwo}_2, 
                s[\procC] \hasType q_{k'},
                \set{c_i \hasType L_i}_{i \in I_2},
                c_k \hasType L_k 
                    \types P'_{k'}[c_k / y_{k'}]
            }{
                \typingContextOne
                \typingContextCat
                \typingContextTwo' 
                \types P_k \parallel P'_{k'}[c_k / y_{k'}]
            }
        \end{mathpar}
        Derivation $(a)$ follows directly from the second premise of the inversion on $(i)$. 
        Derivation~$(b)$ follows from the third premise of the inversion of $(ii)$ and application of the Substitution Lemma (\cref{lm:substitution-lemma}) and the fact that $L_k = L_{k'}$. 
        
    \item $\typingContextTwo \redto \typingContextTwo'$ 
          (or $\typingContextTwo = \typingContextTwo'$): 
        \begin{mathpar}
            \inferrule*[right=\typingReductionMixCh]{
                q'_1
                    \xrightarrow{\msgFromToNS{\procA}{\procB_i}{\labelAndType{l_k}{L_k}}}
                q_k
                \\
                q'_2
                    \xrightarrow{\msgFromToNS{\procD_j}{\procC}{\labelAndType{l_{k'}}{L_{k'}}}}
                q_{k'}
            }{
                \typingContextTwo
                \redto
                \typingContextTwo'
            }
        \end{mathpar}
        From the inversion on \ref{lm:subject-reduction-assumption-4}, we get 
        $\procB_i = \procC$, $\procA = \procD_j$ and $k = k'$, which suffices with $L_k = L_{k'}$. 
\end{itemize}

For the induction step, we assume that the claim holds for all $P$ and $P'$ with a derivation tree of depth at most $d$ for $P \redto P'$. 
We do a case analysis on the last applied rule of a derivation tree of depth $d+1$. 
\begin{itemize}
\item $\procReductionProcName$: \\
    We have 
    \begin{mathpar}
    \inferrule*[right=\procReductionProcName]{
            \Defs(\pn{Q}, \vec{c})
            \parallel
            P_2
            \redto
            P'
    }{
            \pn{Q}[\vec{c}]
            \parallel
            P_2
            \redto
            P'_2
    }
        \end{mathpar}
    In addition, we know that 
    \ref{lm:subject-reduction-assumption-1} to 
    \ref{lm:subject-reduction-assumption-4} hold for 
    $P = \pn{Q}[\vec{c}] \parallel P_2$ and $P' = P'_2$. 
    We need to show that there is $\typingContextTwo'$ with
    $\typingContextTwo \redto \typingContextTwo'$ 
    or
    $\typingContextTwo = \typingContextTwo'$ 
    such that
    $\typingContextOne \typingContextCat \typingContextTwo' \types P'_2$. 
    We want to apply the induction hypothesis for 
    $P = \Defs(\pn{Q}, \vec{c}) \parallel P_2$ and $P' = P'_2$. 
    We need to show that all conditions 
    \ref{lm:subject-reduction-assumption-1} to 
    \ref{lm:subject-reduction-assumption-4} hold. 
    It is obvious that 
    \ref{lm:subject-reduction-assumption-1} and 
    \ref{lm:subject-reduction-assumption-3} do not change so they hold. 
    By assumption, 
    $\Defs(\pn{Q}, \vec{c}) \parallel P_2 \redto P'_2$ has a derivation tree of depth $d$ and hence also satisfies 
    \ref{lm:subject-reduction-assumption-4}. 
    For \ref{lm:subject-reduction-assumption-2}, 
    we need to show that 
    \begin{mathpar}
        \inferrule*[right=\procTypingParallel]{
            (a) \;
            \typingContextOne
                \typingContextCat
                \vec{c} \hasType \vec{L}
                \types
                \Defs(\pn{Q}, \vec{c}) \\
(b) \;
            \typingContextOne
                \typingContextCat
                \typingContextTwo_2
                \types
                P_2 \\
        }{
            \typingContextOne
                \typingContextCat
                \vec{c} \hasType \vec{L},
                \typingContextTwo_2
                \types
                \Defs(\pn{Q}, \vec{c}) 
                \parallel 
                P_2
        }
    \end{mathpar}
    We do an inversion on the original 
    \ref{lm:subject-reduction-assumption-2}:
    \begin{mathpar}
        \inferrule*[right=\procTypingParallel]{
            \inferrule*[left=\procTypingProcName]{
                \typingContextOne(\pn{Q}) = \vec{L}
            }{
                \typingContextOne
                    \typingContextCat
                    \vec{c} \hasType \vec{L}
                    \types
                    \pn{Q}[\vec{c}] \\
            }
\typingContextOne
                \typingContextCat
                \typingContextTwo_2
                \types
                P_2 \\
        }{
            \typingContextOne
                \typingContextCat
                \vec{c} \hasType \vec{L},
                \typingContextTwo_2
                \types
                \pn{Q}[\vec{c}]
                \parallel
                P_2
        }
    \end{mathpar}
    From this, $(b)$ immediately follows. 
    For $(a)$, we rely on the well-typedness of process definitions 
    \ref{lm:subject-reduction-assumption-1}. 
    By assumption, $\Defs$ has the following shape: 
    $ \Defs = \Defs_1; (\pn{Q}[\vec{x} = P]); \Defs_2 $ 
    such that $\Defs(\pn{Q}, \vec{c}) = P [\vec{c} / \vec{x}]$.
    Hence our goal $(a)$ is to provide a typing derivation for 
    $\typingContextOne \typingContextCat \vec{c} \hasType \vec{L} \types P[\vec{c} / \vec{x}]$. 
    By applying inversion to \ref{lm:subject-reduction-assumption-1} $\card{\Defs_1}$ times, we get 
    $\typingContextOne \typingContextCat \vec{x} \hasType \vec{L} \types P$. 
    By applying the Substitution Lemma (\cref{lm:substitution-lemma}) $\card{\vec{x}}$ times, we obtain $(a)$. 
    With that, we have shown that the induction hypothesis applies, yielding its conclusion:
    there is $\typingContextTwo'$ 
    with
    $\typingContextTwo \redto \typingContextTwo'$ or
    $\typingContextTwo = \typingContextTwo'$ such that
    $\typingContextOne \typingContextCat \typingContextTwo' \types P'_2$. 
    This is precisely our goal, which concludes this~case. 
\item $\procReductionContext$: \\
    We do a case analysis on the different kind of contexts. 
    \begin{itemize}
    \item $\redContext \parallel Q$: \\
        We have 
        \begin{mathpar}
        \inferrule*[right=\procReductionContext]{
                P_1 \redto P'_1
        }{
                P_1 \parallel Q \redto P'_1 \parallel Q
        }
        \end{mathpar}
        In addition, we know that 
        \ref{lm:subject-reduction-assumption-1} to 
        \ref{lm:subject-reduction-assumption-4} hold for 
        $P = P_1 \parallel Q$ and $P' = P'_1 \parallel Q$. 
        We need to show that there is $\typingContextTwo'$ 
        $\typingContextTwo \redto \typingContextTwo'$ 
        or
        $\typingContextTwo = \typingContextTwo'$ 
        such that 
        $\typingContextOne \typingContextCat \typingContextTwo' \types P'_1 \parallel Q$. 
        We want to apply the induction hypothesis for 
        $P = P_1$ and $P' = P'_1$. 
        We need to show that all conditions 
        \ref{lm:subject-reduction-assumption-1} to 
        \ref{lm:subject-reduction-assumption-4} hold. 
        It is obvious that 
        \ref{lm:subject-reduction-assumption-1} and 
        \ref{lm:subject-reduction-assumption-3} do not change so they hold. 
        By assumption, 
        $P_1 \parallel Q \redto P'_1 \parallel Q$ has a derivation tree of depth $d$ and hence also satisfies 
        \ref{lm:subject-reduction-assumption-4}. 
        For~\ref{lm:subject-reduction-assumption-2}, we need to show that 
        $\typingContextOne_1 \typingContextCat \typingContextTwo \types P_1$, which follows directly from inversion on the original~\ref{lm:subject-reduction-assumption-2}: 
        \begin{mathpar}
        \inferrule*[right=\procTypingParallel]{
            \typingContextOne
                \typingContextCat
                \typingContextTwo_1
                \types
                P_1 \\
\typingContextOne
                \typingContextCat
                \typingContextTwo_2
                \types
                Q \\
        }{
            \typingContextOne
                \typingContextCat
                \typingContextTwo_1,
                \typingContextTwo_2
                \types
                P_1
                \parallel
                Q
        }
        \end{mathpar}
        From the conclusion of the induction hypothesis, 
        there exists 
        $\typingContextTwo'_1$ with
        $\typingContextTwo_1 \redto \typingContextTwo'_1$ 
        or
        $\typingContextTwo_1 = \typingContextTwo'_1$ such that
        $\typingContextOne \typingContextCat \typingContextTwo'_1 \types P'_1$. 
        Let us assume that $\typingContextTwo_1 \neq \typingContextTwo'_1$. 
        With \cref{lm:typing-reduction-cong}, it follows that 
        $\typingContextTwo_1, \typingContextTwo_2
        \redto 
        \typingContextTwo'_1, \typingContextTwo_2$.
        Again for both cases, \ie
        $\typingContextTwo_1 = \typingContextTwo'_1$ or not, 
        this leaves us to prove that 
        $\typingContextOne \typingContextCat \typingContextTwo'_1, \typingContextTwo_2 \types P'_1 \parallel Q$. 
        This is easy to see: 
        \begin{mathpar}
        \inferrule*[right=\procTypingParallel]{
            \typingContextOne
                \typingContextCat
                \typingContextTwo'_1
                \types
                P'_1 \\
\typingContextOne
                \typingContextCat
                \typingContextTwo_2
                \types
                Q \\
        }{
            \typingContextOne
                \typingContextCat
                \typingContextTwo'_1,
                \typingContextTwo_2
                \types
                P'_1
                \parallel
                Q
        }
        \end{mathpar}
        where the first premise is given by the conclusion of the induction hypothesis and the second premise is the same as the inversion above. 
    \item $Q \parallel \redContext$: \\
    The proof is analogous to the previous case. 
    \item $(\restr s \hasType \CSMabb{A}) \, \redContext$: \\
        We have 
        \begin{mathpar}
        \inferrule*[right=\procReductionContext]{
                Q \redto Q'
        }{
                (\restr s \hasType \CSMabb{A}) \, Q 
                \redto 
                (\restr s \hasType \CSMabb{A}) \, Q'
        }
        \end{mathpar}
        In addition, we know that 
        \ref{lm:subject-reduction-assumption-1} to 
        \ref{lm:subject-reduction-assumption-4} hold for 
        $P = (\restr s \hasType \CSMabb{A}) \, Q$ and $P' = (\restr s \hasType \CSMabb{A}) \, Q'$. 
        We need to show that there is $\typingContextTwo'$ with
        $\typingContextTwo \redto \typingContextTwo'$ 
        or
        $\typingContextTwo = \typingContextTwo'$ 
        such that  
        $\typingContextOne \typingContextCat \typingContextTwo' \types (\restr s \hasType \CSMabb{A}) \, Q'$. 
        We want to apply the induction hypothesis for 
        $P = Q$ and $P' = Q'$. 
        We need to show that all conditions 
        \ref{lm:subject-reduction-assumption-1} to 
        \ref{lm:subject-reduction-assumption-4} hold. 
        It is obvious that \ref{lm:subject-reduction-assumption-1} is the same as before and that 
        \ref{lm:subject-reduction-assumption-4} holds by assumption. 
        Let us consider \ref{lm:subject-reduction-assumption-2}. 
        We do inversion on the original 
        \ref{lm:subject-reduction-assumption-2} and obtain: 
        \begin{mathpar}
        \inferrule*[right=\procTypingRestr]{
\vec{q} \in \reach(\CSMabb{A})
            \\
            \typingContextTwo_s =
                \set{s[\procA] \hasType \vec{q}_\procA}_{\procA \in \ProcsOf{\CSMabb{A}}}
            \\
\typingContextOne
                \typingContextCat
                \typingContextTwo,
                \typingContextTwo_s
            \types
            Q
}{
            \typingContextOne
                \typingContextCat
                \typingContextTwo
                \types
                (\restr s \hasType \CSMabb{A})\, Q
        }
        \end{mathpar}
        Technically, we would need to generalize the typing context $\typingContextTwo$ in our claim first to do the following. 
        This is straightforward and only applies in this single case and hence omitted in the general case. 
        We choose $\typingContextTwo, \typingContextTwo_s$ as typing context and immediately obtain \ref{lm:subject-reduction-assumption-2}, with the side condition met by $\typingContextTwo$ and $\typingContextTwo_s$ individually. 
        For \ref{lm:subject-reduction-assumption-3}, 
        we observe that the condition is met for all sessions $s' \in \mathcal{S} \setminus \set{s}$ by inversion on the original \ref{lm:subject-reduction-assumption-2} while it is met for $s$ by the above inversion. 
        The conclusion of the induction hypothesis yields that there is 
        $\typingContextTwo'$ and $\typingContextTwo'_s$ 
        such that 
        $\typingContextTwo, \typingContextTwo_s 
        \redto 
         \typingContextTwo', \typingContextTwo'_s$
        or
        $\typingContextTwo, \typingContextTwo_s 
         = 
         \typingContextTwo', \typingContextTwo'_s$
        such that  
        $\typingContextOne \typingContextCat \typingContextTwo', \typingContextTwo'_s \types Q'$. 
        We have to show two claims. 

        First, we show that 
        \begin{mathpar}
        \inferrule*[right=\procTypingRestr]{
\vec{q} \in \reach(\CSMabb{A})
            \\
            \typingContextTwo_s =
                \set{s[\procA] \hasType \vec{q}_\procA}_{\procA \in \ProcsOf{\CSMabb{A}}}
            \\
\typingContextOne
                \typingContextCat
                \typingContextTwo',
                \typingContextTwo'_s
            \types
            Q'
}{
            \typingContextOne
                \typingContextCat
                \typingContextTwo'
                \types
                (\restr s \hasType \CSMabb{A})\, Q'
        }
        \end{mathpar}
        While the first two premises are the same as above, the third premise follows from the conclusion of the induction hypothesis. 

        Second, we need to show that 
        $\typingContextTwo \redto \typingContextTwo'$ 
        or
        $\typingContextTwo = \typingContextTwo'$.
        We do a case analysis if 
        $\typingContextTwo, \typingContextTwo_s 
        \redto 
         \typingContextTwo', \typingContextTwo'_s$
        or
        $\typingContextTwo, \typingContextTwo_s 
         = 
         \typingContextTwo', \typingContextTwo'_s$.
        If the latter is the case, then 
        $\typingContextTwo = \typingContextTwo'$ (because no type binding for session $s$ can appear in $\typingContextTwo$). 
        If the former is the case, we do another case analysis 
        if $\typingContextTwo_s' = \typingContextTwo_s$ or not.
        This basically means if session $s$ is not concerned with this step or it is: the step cannot concern both because a step always happens within a single session. 
        If $\typingContextTwo'_s = \typingContextTwo_s$, then $\typingContextTwo \redto \typingContextTwo'$. 
        If $\typingContextTwo'_s \neq \typingContextTwo_s$, then $\typingContextTwo = \typingContextTwo'$. 
        This concludes this case. 
    \item $[\,]$: The proof is trivial because the induction hypothesis applies immediately. 
    \end{itemize}
\item $\procReductionCongr$: \\
    We have 
    \begin{mathpar}
    \inferrule*[right=\procReductionCongr]{
            P_1 \precongr P'_1
            \\
            P'_1 \redto P'_2
            \\
            P'_2 \precongr P_2
    }{
            P_1 \redto P_2
    }
    \end{mathpar}
    In addition, we know that 
    \ref{lm:subject-reduction-assumption-1} to 
    \ref{lm:subject-reduction-assumption-4} hold for 
    $P = P_1$ and $P' = P_2$. 
    We need to show that there is 
    $\typingContextTwo'$ with
    $\typingContextTwo \redto \typingContextTwo'$ or
    $\typingContextTwo = \typingContextTwo'$ such that
    $\typingContextOne \typingContextCat \typingContextTwo \types P_2$. 
    We want to apply the induction hypothesis for 
    $P = P'_1$ and $P' = P'_2$. 
    We need to show that all conditions 
    \ref{lm:subject-reduction-assumption-1} to 
    \ref{lm:subject-reduction-assumption-4} hold. 
    It is obvious that 
    \ref{lm:subject-reduction-assumption-1} and 
    \ref{lm:subject-reduction-assumption-3} do not change so they hold. 
    \ref{lm:subject-reduction-assumption-4} holds by assumption. 
    For \ref{lm:subject-reduction-assumption-2}, we observe that 
    $\typingContextOne \typingContextCat \typingContextTwo \types P_1$ holds by assumption.
    With the admissibility lemma
    (\cref{lm:str-congr-preserves-typability-procs}),
    we get 
    $\typingContextOne \typingContextCat \typingContextTwo \types P'_1$. 
    From conclusion of the induction hypothesis, we get 
    $\typingContextTwo'$ with
    $\typingContextTwo \redto \typingContextTwo'$ or
    $\typingContextTwo = \typingContextTwo'$ such that 
    $\typingContextOne \typingContextCat \typingContextTwo' \types P'_2$. 
    Applying the admissibility lemma 
    (\cref{lm:str-congr-preserves-typability-procs}), we obtain 
    $\typingContextOne \typingContextCat \typingContextTwo \types P_2$, which concludes this case.
\end{itemize}
This concludes the proof of subject reduction.
\end{proof}

\paragraph{On Type Safety.}
Usually, one proves type safety as a consequence of subject reduction. 
That is, no typable process can reduce to an error. 
However, we have not defined any error in our reduction rules, like many synchronous MST frameworks 
\cite{DBLP:journals/entcs/BejleriY09,DBLP:conf/icdcit/YoshidaG20,DBLP:conf/itp/EkiciKY25,NODBLPyet:journals/corr/abs-2511-22692}. 
\citet{DBLP:conf/lics/PetersY24}, closest to our work by supporting mixed choice, define two kind of errors: 
label mismatch and value error. 
The latter checks that if-then-else conditions are actually booleans. 
Our processes do not support if-then-else conditions. 
Accommodating this feature would not be difficult as it is orthogonal to message-passing concurrency. 
Label mismatch has previously been considered for processes with directed choice, \ie a process at any given branching can only send to one dedicated receiver or receive from one dedicated sender. 
In this setting, a label mismatch occurs when $\procA$ wants to send to $\procB$ but there is no common label. 
With mixed choice, such a situation is not necessarily problematic: a receiver might be waiting for a sender who might send to someone else in the meantime. 
\citet{DBLP:conf/lics/PetersY24} generalized this condition for mixed choice (for a single session which we call $s$) as follows: 
$P$~has a label error if 
\[
P \congr 
    \MixCh_{i \in I} 
        s[\procA][\procB_i] \, 
        \hat{\prefixPi}_i \seq P_i \parallel
    \MixCh_{j \in J} 
        s[\procC][\procD_j] \, 
        \hat{\prefixPi}_j \seq P_j \parallel P'
\]
such that there is $i \in I$ with 
$\procB_i = \procC$ and 
$\hat{\prefixPi}_i = \sendOp \labelAndMsg{l_i}{\val_i}$,
and for all $j \in J$ with 
$\hat{\prefixPi}_j = \recOp \labelAndVar{l_j}{x_j}$, 
it holds that $l_j \neq l_i$. 

Intuitively, a label error arises when $\procA$ wants to send a message to $\procB$, but $\procB$ is unable to receive it. 
We do not view all such scenarios as problematic. 
Consider the following simple protocol, in which $\procB$ first interacts with $\procC$ before interacting with $\procA$. 
Consider this simple protocol: 
$\msgFromTo{\procB}{\procC}{\labelAndMsg{l_1}{\val_1}} \cat \msgFromTo{\procA}{\procB}{\labelAndMsg{l_2}{\val_2}}$. 
The following serves as a deadlock-free implementation of the protocol, despite exhibiting a label error according to \cite{DBLP:conf/lics/PetersY24}:
\begin{align*}
P_\procA & = 
    s[\procA][\procB] \sendOp \labelAndMsg{l_2}{\val_2} \seq \zero
\\
P_\procB & = 
    s[\procB][\procC] \sendOp \labelAndMsg{l_1}{\val_1} \seq 
    s[\procB][\procA] \recOp \labelAndVar{l_2}{x_2} \seq \zero
\\
P_\procC & = 
    s[\procC][\procB] \recOp \labelAndVar{l_1}{x_1} \seq \zero
\end{align*} 
At the same time, the following reduction sequence is possible, according to their and our definition of reductions: 
\begin{align*}
    P_\procA 
        \parallel 
    P_\procB 
        \parallel 
    P_\procC 
\redto
    P_\procA 
        \parallel 
    s[\procB][\procA] \recOp \labelAndVar{l_2}{x_2} \seq \zero
        \parallel
    \zero
\redto
    \zero
        \parallel 
    \zero
        \parallel
    \zero
\congr \zero
\end{align*}
Hence, it is impossible to only consider two participants to anticipate errors. 
What would be problematic is if there is a circle of dependent processes who cannot move on. 
If this concerns all participants, we detect this as a deadlock but if some participants can still act, we do not detect this. 
In other words, progress ensures that some participant can take a step at any time but some participants might not be able to take a step anymore at some point. 
For any SCSM, one can check if the latter is the case, and a typed process will match the SCSM behavior.

\progressThm*
\begin{proof}
Without loss of generality, we assume the transition label for \ref{sf-cond-3} is 
 $\msgFromTo{\procE}{\procC}{\labelAndType{l}{L}}$. 
 We do inversion on \ref{sf-cond-2} and get
\begin{mathpar}
\inferrule*[right=\procTypingRestr ']{
\meta{\forall \procA \in \ProcsOf{\CSMabb{A}} \st}
    \forall c \hasType q' \in \typingContextTwo_\procA \st
    \EndState(q')
    \\
    \vec{q} \in \reach(\CSMabb{A})
    \\
\meta{\forall \procA \in \ProcsOf{\CSMabb{A}} \st}
        \typingContextOne
        \typingContextCat
        \typingContextTwo_\procA,
        s[\procA] \hasType \vec{q}_\procA
        \typesSFd
        Q_\procA
}{
    \typingContextOne
        \typingContextCat
        \set{\typingContextTwo_\procA}_{\procA \in \ProcsOf{\CSMabb{A}}}
\typesSFs
        (\restr s \hasType \CSMabb{A})\,
        (\Parallel_{\procA \in \ProcsOf{\CSMabb{A}}} Q_\procA)
}
\end{mathpar}
We instantiate the premise
\[
\meta{\forall \procA \in \ProcsOf{\CSMabb{A}} \st}
    \typingContextOne
    \typingContextCat
    \typingContextTwo_\procA,
    s[\procA] \hasType \vec{q}_\procA
    \typesSFd
    Q_\procA
\]
with $\procE$ and $\procC$ and observe how their typing derivation can be obtained. 
We know that $\vec{q}_\procE$ and $\vec{q}_\procC$ both have outgoing transitions so 
they are not final and hence neither $\EndState(\vec{q}_\procE)$ nor $\EndState(\vec{q}_\procC)$ holds. 
Hence, only \procTypingMixCh and \procTypingProcName can apply. 
We do a case analysis and show that applying \procTypingProcName will eventually lead to applying \procTypingMixCh as well.
First, we do an inversion:
\begin{mathpar}
\inferrule*[right=\procTypingProcName]{
    \typingContextOne(\pn{Q}) = \vec{L} }{
    \typingContextOne
        \typingContextCat
        \vec{c}  \hasType \vec{L}
\typesSFd
        \pn{Q}[\vec{c}]
}
\end{mathpar}
Assume that we have $\pn{Q}[\vec{x}] = P'$ in the process definitions $\typingContextOne$.
From~\ref{sf-cond-1},
it follows that
$
    \typingContextOne
    \typingContextCat
    \vec{x} \hasType \vec{L}
    \typesSFd
    P'
$.
By assumption, process definitions are guarded.
Thus, $P'$ ought to be typed with \procTypingMixCh itself.  
From here on, the reasoning of both cases are very similar again. 
In fact, there are two differences:
first, we would carry
$\vec{x} \hasType \vec{L}$ around, and
second, we would apply
\procReductionProcName
to prove that $P \redto P'$. 
For conciseness, we avoid the indirection through a process definition. 

Therefore, we consider \procTypingMixCh as typing rule for both 
\begin{align*}
Q_\procE & = 
    \MixCh_{i \in I_1} s[\procE][\procB_i] \sendOp \labelAndMsg{l_i}{c_i} \seq P_i
    + 
    \MixCh_{j \in J_1} s[\procE][\procB_j] \recOp \labelAndVar{l_j}{y_j} \seq P_j
\text{  and  }
\\
Q_\procC & = 
    \MixCh_{i \in I_2} s[\procC][\procD_i] \sendOp \labelAndMsg{l_i}{c_i} \seq P'_i
    + 
    \MixCh_{j \in J_2} s[\procC][\procD_j] \recOp \labelAndVar{l_j}{y_j} \seq P'_j
\enspace .
\end{align*}

We do inversion on their typing derivations and obtain

{ \small 
\begin{mathpar}
  \inferrule*[right=\procTypingMixCh]{
\delta(\vec{q}_\procE) =
      \set{(\msgFromTo{\procE}{\procB_i}{\labelAndType{l_i}{L_i}}, q_i) \mid i \in I_1}
      \dunion
      \set{(\msgFromTo{\procB_j}{\procE}{\labelAndType{l_j}{L_j}}, q_j) \mid j \in J_1}
      \\
\meta{\forall i \in I_1 \st}
      \typingContextOne \typingContextCat
          \hat{\typingContextTwo}_\procE,
           s[\procE] \hasType q_i,
          \set{c_j \hasType L_j}_{j \in I_1\setminus \set{i}}
          \types P_i 
      \\
      \meta{\forall j \in J_1 \st}
      \typingContextOne \typingContextCat
          \hat{\typingContextTwo}_\procE,
          s[\procE] \hasType q_j,
          \set{c_i \hasType L_i}_{i \in I_1}, 
          y_j \hasType L_j
          \types P_j \\
  }{
      (i) \;
      \typingContextOne
      \typingContextCat
      \hat{\typingContextTwo}_\procE,
      \set{c_i \hasType L_i}_{i \in I_1}, 
      s[\procE] \hasType \vec{q}_\procE
      \\
          \types
      \MixCh_{i \in I_1} s[\procE][\procB_i] \sendOp \labelAndMsg{l_i}{c_i} \seq P_i
      + 
      \MixCh_{j \in J_1} s[\procE][\procB_j] \recOp \labelAndVar{l_j}{y_j} \seq P_j
  }

  \inferrule*[right=\procTypingMixCh]{
\delta(\vec{q}_\procC) =
      \set{(\msgFromTo{\procC}{\procD_i}{\labelAndType{l_i}{L_i}}, q'_i) \mid i \in I_2}
      \dunion
      \set{(\msgFromTo{\procD_j}{\procC}{\labelAndType{l_j}{L_j}}, q'_j) \mid j \in J_2}
      \\
\meta{\forall i \in I_2 \st}
      \typingContextOne \typingContextCat
          \hat{\typingContextTwo}_\procC,
           s[\procC] \hasType q_i',
          \set{c_j \hasType L_j}_{j \in I_2\setminus \set{i}}
          \types P'_i 
      \\
      \meta{\forall j \in J_2 \st}
      \typingContextOne \typingContextCat
          \hat{\typingContextTwo}_\procC,
          s[\procC] \hasType q_j',
          \set{c_i \hasType L_i}_{i \in I_2}, 
          y_j \hasType L_j
          \types P'_j \\
  }{
      (ii) \;
      \typingContextOne
      \typingContextCat
      \hat{\typingContextTwo}_\procC,
      \set{c_i \hasType L_i}_{i \in I_2}, 
      s[\procC] \hasType \vec{q}_\procC
      \\
          \types
      \MixCh_{i \in I_2} s[\procC][\procD_i] \sendOp \labelAndMsg{l_i}{c_i} \seq P'_i
      + 
      \MixCh_{j \in J_2} s[\procC][\procD_j] \recOp \labelAndVar{l_j}{y_j} \seq P'_j
  }
\end{mathpar}
} 

where 
$
    \typingContextTwo_\procE = 
    \hat{\typingContextTwo}_\procE,
    \set{c_i \hasType L_i}_{i \in I_1} 
$
and 
$
    \typingContextTwo_\procC = 
    \hat{\typingContextTwo}_\procC,
    \set{c_i \hasType L_i}_{i \in I_2} 
$. 
By \ref{sf-cond-3} and our assumption that the transition label is 
$\msgFromTo{\procE}{\procC}{\labelAndType{l}{L}}$, 
there is $k \in (I_1 \inters J_2) \union (I_2 \inters J_1)$ with $l_k = l$.
Without loss of generality, assume that $k \in I_1 \inters J_2$, \ie $\procE$ sends and $\procC$ receives. 
Then, 
\[ 
P' = \Parallel_{\procE \in \ProcsOf{\CSMabb{A}} \setminus \set{\procE, \procC}} Q_\procA 
        \parallel P_k \parallel P'_k[c_k / y_k]
\enspace .
\]
We show that there is a typing derivation
\begin{mathpar}
\inferrule*[right=\procTypingRestr ']{
(a) \;
    \meta{\forall \procA \in \ProcsOf{\CSMabb{A}} \st}
    \forall c \hasType q' \in \typingContextTwo_\procA \st
    \EndState(q')
    \\
    (b) \;
    \pvec{q}' \in \reach(\CSMabb{A})
    \\
(c) \;
    \meta{\forall \procA \in \ProcsOf{\CSMabb{A} \setminus \set{\procE, \procC}} \st}
        \typingContextOne
        \typingContextCat
        \typingContextTwo_\procA,
        s[\procA] \hasType \pvec{q}'_\procA
        \typesSFd
        Q_\procA
    \\
    (d) \;
    \typingContextOne
    \typingContextCat
    \hat{\typingContextTwo}_\procE,
    s[\procE] \hasType \pvec{q}'_\procE,
    \set{c_j \hasType L_j}_{j \in I_1\setminus \set{k}} 
    \typesSFd
    P_k
    \\
    (e) \;
    \typingContextOne
    \typingContextCat
    \typingContextTwo_\procC,
    s[\procC] \hasType \pvec{q}'_\procC, 
    c_k \hasType L_k
    \typesSFd
    P'_k[c_k / y_k]
}{
    \typingContextOne
        \typingContextCat
        \set{\typingContextTwo_\procA}_{\procA \in \ProcsOf{\CSMabb{A}}}
\typesSFs
        (\restr s \hasType \CSMabb{A})\,
        \left(
            \Parallel_{\procA \in \ProcsOf{\CSMabb{A}} \setminus \set{\procE, \procC}} Q_\procA 
            \parallel P_k \parallel P'_k[c_k / y_k]
        \right)
}
\end{mathpar}
Premises $(a)$ and  $(b)$ are the same as the premises for the first inversion on \ref{sf-cond-2} while $(c)$ follows directly from the third premise of the same inversion. 
Premise $(d)$ can be obtained from the second premise of the inversion on $(i)$ when instantiated to $k$ because $\pvec{q}'_\procE = q_k$. 
Premise $(e)$ can be obtained similarly: 
from the third premise of the inversion on $(ii)$ when instantiated to $k$ because $\pvec{q}'_\procC = q_k'$, we obtain 
$
    \typingContextOne
    \typingContextCat
    \typingContextTwo_\procC,
    s[\procC] \hasType \pvec{q}'_\procC, 
    y_k \hasType L_k
    \typesSFd
    P'_k
$, 
to which one applies the substitution lemma (\cref{lm:substitution-lemma}) to obtain the goal. 
\end{proof}

\begin{restatable}[Deadlock freedom with sink-final FSMs]{lemma}{deadlockFreedom}
Let $\CSMabb{A}$ be a sink-final deadlock-free SCSM and let $P$ be a process. 
We assume that 
\begin{itemize}
    \item $\typesSFs \Defs \hasType \typingContextOne$,
    \item $\typingContextOne
            \typingContextCat
            \typingContextTwo
            \typesSFs
            (\restr s \hasType \CSMabb{A}) \, P$, and
    \item $(\restr s \hasType \CSMabb{A}) \, P \redto^* P_1$.
\end{itemize}
Then, it holds that $P_1 \precongr \zero$ or there is $P_2$ such that $P_1 \redto P_2$.
\end{restatable}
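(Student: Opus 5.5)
The plan is to combine subject reduction (\cref{thm:subject-reduction}) with progress (\cref{lm:session-fidelity}). I would proceed by induction on the length of the reduction sequence $(\restr s \hasType \CSMabb{A}) \, P \redto^* P_1$. The invariant is that $P_1$ is structurally precongruent to some $(\restr s \hasType \CSMabb{A})\, P'_1$ with $\typingContextOne \typingContextCat \typingContextTwo \overset{\pvec{q}}{\typesSFs} (\restr s \hasType \CSMabb{A})\, P'_1$ for some $\pvec{q} \in \reach(\CSMabb{A})$.

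For the inductive step, I would rely on the fact that \procTypingRestr\ is an instance of the general typing relation whenever \procTypingRestr' is; the premises of \procTypingRestr' include those of \procTypingRestr. So \cref{thm:subject-reduction} applies, and it yields a reduced (or equal) typing context that types the successor. I would then check that this successor still fits the single-session shape required by \procTypingRestr'. The argument has three parts. First, the bindings for $s$ still come from a reachable configuration, because a context step $\typingContextTwo_s \redto \typingContextTwo'_s$ is exactly an SCSM transition. Second, the non-$s$ bindings in each $\typingContextTwo_\procA$ are all $\EndState$, and they cannot be consumed by a reduction: $\EndState$ states have no transitions, and no other session exists to interact with. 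Third, the parallel decomposition into one $Q_\procA$ per participant is preserved. This last point follows from the shape of the reductions: \procReductionExc\ replaces two components by their continuations, \procReductionProcName\ unfolds one component, and \procReductionCongr/\cref{lm:str-congr-preserves-typability-procs} only reshuffle.

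For the conclusion at $P_1$ with configuration $\pvec{q}$, I would case split on whether $\pvec{q}$ is final. If it is not final, then deadlock freedom of $\CSMabb{A}$ gives a transition $\pvec{q} \rightarrow \pvec{q}''$. \cref{lm:session-fidelity} then gives $P_1 \redto P_2$ (after transporting along $\precongr$ via \procReductionCongr). If $\pvec{q}$ is final, sink-finality means every $\pvec{q}_\procA$ satisfies $\EndState$. Inverting $\typesSFd$ for each $Q_\procA$ then shows that only \procTypingEnd\ and \procTypingZero\ can apply. The reason is that \procTypingMixCh\ requires $\delta(\pvec{q}_\procA)$ to cover the prefixes, with $\card{I} > 0$, and process calls are guarded, so they reduce to \procTypingMixCh. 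Hence each $Q_\procA \congr \zero$, and therefore $P_1 \precongr (\restr s \hasType \CSMabb{A})\, \zero \precongr \zero$.

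The main obstacle I expect is the invariant step: showing that subject reduction's output can be repackaged as a \procTypingRestr' judgement with one component per participant, in particular when \procReductionCongr\ has rearranged the parallel structure. I would first try to prove a small normal-form lemma. It would state that any process typed by \procTypingRestr' is, up to $\congr$, of the form $(\restr s \hasType \CSMabb{A})(\Parallel_\procA Q_\procA)$ with $Q_\procA$ a sum, a call, or $\zero$, and that every reduction preserves this form. I would prove it by direct inspection of \procReductionExc\ and \procReductionProcName, rather than going through the general subject reduction theorem.
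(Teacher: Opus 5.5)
Your proposal is correct and follows essentially the same route as the paper's sketch. You preserve the single-session \procTypingRestr{}$'$ typing along the reduction sequence; the paper only asserts this ``similarly to subject reduction'', whereas you state it up to $\precongr$ and correctly identify the regrouping into one component per participant as the real work. You then either apply Progress at a non-final configuration, which deadlock freedom makes enabled, or, at a final configuration, use sink-finality to get $\EndState$ everywhere and invert the typing to conclude $P_1 \precongr \zero$.

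One small correction: in that final inversion \procTypingParallel can also occur (the paper lists \procTypingParallel, \procTypingEnd and \procTypingZero), which is exactly why you get $Q_\procA \congr \zero$ rather than $Q_\procA = \zero$.
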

\textit{Proof Sketch.}
First, we claim that for all $k$ with $k \geq 0$ such that
$(\restr s \hasType \CSMabb{A}) \, P \redto^k P'$,
it holds that
$
\typingContextOne
    \typingContextCat
    \typingContextTwo
    \overset{\vec{q}}{\typesSFs}
    P'
$
for some
$\vec{q}$. 
This can be shown similarly to subject reduction (\cref{thm:subject-reduction}).

We do a case analysis if there is
$
    \pvec{q}'
$
with
$
    \vec{q}
        \redto
    \pvec{q}'
$.

If so, we know from
\cref{lm:session-fidelity}
that there is $P''$ with $P' \redto P''$, which concludes this case.

Suppose that there is no
$
    \pvec{q}'
$
with
$
    \vec{q}
        \redto
    \pvec{q}'
$.
By assumption $\CSMabb{A}$ is deadlock-free and, thus, all states in $\vec{q}$ are final states. 
Hence, $\EndState(\vec{q}_\procA)$ holds for every $\procA \in \Procs$. 
At the same time, by sink finality, one cannot take a transition from any of these states.
We do inversion on the typing derivation
$
\typingContextOne
    \typingContextCat
    \typingContextTwo
    \overset{\vec{q}}{\typesSFs}
    P'
$ and obtain the following when rewriting $\typingContextTwo$ and $P'$: 

\begin{mathpar}
\inferrule*[right=\procTypingRestr ']{
\meta{\forall \procA \in \ProcsOf{\CSMabb{A}} \st}
    \forall c \hasType q' \in \typingContextTwo_\procA \st
    \EndState(q')
    \\
    \vec{q} \in \reach(\CSMabb{A})
    \\
\meta{\forall \procA \in \ProcsOf{\CSMabb{A}} \st}
        \typingContextOne
        \typingContextCat
        \typingContextTwo_\procA,
        s[\procA] \hasType \vec{q}_\procA
        \typesSFd
        Q_\procA
}{
    \typingContextOne
        \typingContextCat
        \set{\typingContextTwo_\procA}_{\procA \in \ProcsOf{\CSMabb{A}}}
\typesSFs
        (\restr s \hasType \CSMabb{A})\,
        (\Parallel_{\procA \in \ProcsOf{\CSMabb{A}}} Q_\procA)
}
\end{mathpar}
We claim that $Q_\procA = \zero$ for every $\procA \in \Procs$.
This suffices to prove our goal because it entails that $P' \precongr \zero$. 
Let $\procA \in \Procs$. 
We do (multiple) inversions on 
$
        \typingContextOne
        \typingContextCat
        \typingContextTwo_\procA,
        s[\procA] \hasType \vec{q}_\procA
        \typesSFd
        Q_\procA
$ to understand which rules can appear. 
The process $Q_\procA$ is restriction-free so $\procTypingRestr '$ cannot appear. 
We know that for each state $q$ in the typing context 
$
        \typingContextTwo_\procA,
        s[\procA] \hasType \vec{q}_\procA
$, 
we have that $\EndState(q)$ holds. 
This is why $\procTypingMixCh$ cannot appear. 
The rule $\procTypingProcName$ can also not appear because process definitions are guarded and such a guard could not be typed with this typing context whose states all satisfy $\EndState(\hole)$. 
Hence, only $\procTypingParallel$, $\procTypingEnd$ and $\procTypingZero$ can appear, which ultimately gives a parallel composition of $\zero$, possibly nested. 
It follows that $P' \precongr \zero$.


\clearpage


\begin{thebibliography}{70}



\ifx \showCODEN    \undefined \def \showCODEN     #1{\unskip}     \fi
\ifx \showDOI      \undefined \def \showDOI       #1{#1}\fi
\ifx \showISBNx    \undefined \def \showISBNx     #1{\unskip}     \fi
\ifx \showISBNxiii \undefined \def \showISBNxiii  #1{\unskip}     \fi
\ifx \showISSN     \undefined \def \showISSN      #1{\unskip}     \fi
\ifx \showLCCN     \undefined \def \showLCCN      #1{\unskip}     \fi
\ifx \shownote     \undefined \def \shownote      #1{#1}          \fi
\ifx \showarticletitle \undefined \def \showarticletitle #1{#1}   \fi
\ifx \showURL      \undefined \def \showURL       {\relax}        \fi
\providecommand\bibfield[2]{#2}
\providecommand\bibinfo[2]{#2}
\providecommand\natexlab[1]{#1}
\providecommand\showeprint[2][]{arXiv:#2}

\bibitem[Aalbersberg and Welzl(1986)]{DBLP:journals/ita/AalbersbergW86}
\bibfield{author}{\bibinfo{person}{IJsbrand~Jan Aalbersberg} {and}
  \bibinfo{person}{Emo Welzl}.} \bibinfo{year}{1986}\natexlab{}.
\newblock \showarticletitle{Trace Languages Defined by Regular String
  Languages}.
\newblock \bibinfo{journal}{\emph{{RAIRO} Theor. Informatics Appl.}}
  \bibinfo{volume}{20}, \bibinfo{number}{2} (\bibinfo{year}{1986}),
  \bibinfo{pages}{103--119}.
\newblock
\urldef\tempurl \url{https://doi.org/10.1051/ITA/1986200201031}
\showDOI{\tempurl}


\bibitem[Alur et~al\mbox{.}(2003)]{DBLP:journals/tse/AlurEY03}
\bibfield{author}{\bibinfo{person}{Rajeev Alur}, \bibinfo{person}{Kousha
  Etessami}, {and} \bibinfo{person}{Mihalis Yannakakis}.}
  \bibinfo{year}{2003}\natexlab{}.
\newblock \showarticletitle{Inference of Message Sequence Charts}.
\newblock \bibinfo{journal}{\emph{{IEEE} Trans. Software Eng.}}
  \bibinfo{volume}{29}, \bibinfo{number}{7} (\bibinfo{year}{2003}),
  \bibinfo{pages}{623--633}.
\newblock
\urldef\tempurl \url{https://doi.org/10.1109/TSE.2003.1214326}
\showDOI{\tempurl}


\bibitem[Alur et~al\mbox{.}(2005)]{DBLP:journals/tcs/AlurEY05}
\bibfield{author}{\bibinfo{person}{Rajeev Alur}, \bibinfo{person}{Kousha
  Etessami}, {and} \bibinfo{person}{Mihalis Yannakakis}.}
  \bibinfo{year}{2005}\natexlab{}.
\newblock \showarticletitle{Realizability and verification of {MSC} graphs}.
\newblock \bibinfo{journal}{\emph{Theor. Comput. Sci.}} \bibinfo{volume}{331},
  \bibinfo{number}{1} (\bibinfo{year}{2005}), \bibinfo{pages}{97--114}.
\newblock
\urldef\tempurl \url{https://doi.org/10.1016/J.TCS.2004.09.034}
\showDOI{\tempurl}


\bibitem[Alur and Yannakakis(1999)]{DBLP:conf/concur/AlurY99}
\bibfield{author}{\bibinfo{person}{Rajeev Alur} {and} \bibinfo{person}{Mihalis
  Yannakakis}.} \bibinfo{year}{1999}\natexlab{}.
\newblock \showarticletitle{Model Checking of Message Sequence Charts}. In
  \bibinfo{booktitle}{\emph{{CONCUR} '99: Concurrency Theory, 10th
  International Conference, Eindhoven, The Netherlands, August 24-27, 1999,
  Proceedings}} \emph{(\bibinfo{series}{Lecture Notes in Computer Science},
  Vol.~\bibinfo{volume}{1664})}, \bibfield{editor}{\bibinfo{person}{Jos C.~M.
  Baeten} {and} \bibinfo{person}{Sjouke Mauw}} (Eds.).
  \bibinfo{publisher}{Springer}, \bibinfo{pages}{114--129}.
\newblock
\urldef\tempurl \url{https://doi.org/10.1007/3-540-48320-9\_10}
\showDOI{\tempurl}


\bibitem[Barbanera et~al\mbox{.}(2020)]{DBLP:conf/coordination/BarbaneraLT20}
\bibfield{author}{\bibinfo{person}{Franco Barbanera}, \bibinfo{person}{Ivan
  Lanese}, {and} \bibinfo{person}{Emilio Tuosto}.}
  \bibinfo{year}{2020}\natexlab{}.
\newblock \showarticletitle{Choreography Automata}. In
  \bibinfo{booktitle}{\emph{Coordination Models and Languages - 22nd {IFIP}
  {WG} 6.1 International Conference, {COORDINATION} 2020, Held as Part of the
  15th International Federated Conference on Distributed Computing Techniques,
  DisCoTec 2020, Valletta, Malta, June 15-19, 2020, Proceedings}}
  \emph{(\bibinfo{series}{Lecture Notes in Computer Science},
  Vol.~\bibinfo{volume}{12134})}, \bibfield{editor}{\bibinfo{person}{Simon
  Bliudze} {and} \bibinfo{person}{Laura Bocchi}} (Eds.).
  \bibinfo{publisher}{Springer}, \bibinfo{pages}{86--106}.
\newblock
\urldef\tempurl \url{https://doi.org/10.1007/978-3-030-50029-0\_6}
\showDOI{\tempurl}


\bibitem[Barbanera et~al\mbox{.}(2023)]{DBLP:journals/lmcs/BarbaneraLT23}
\bibfield{author}{\bibinfo{person}{Franco Barbanera}, \bibinfo{person}{Ivan
  Lanese}, {and} \bibinfo{person}{Emilio Tuosto}.}
  \bibinfo{year}{2023}\natexlab{}.
\newblock \showarticletitle{A Theory of Formal Choreographic Languages}.
\newblock \bibinfo{journal}{\emph{Log. Methods Comput. Sci.}}
  \bibinfo{volume}{19}, \bibinfo{number}{3} (\bibinfo{year}{2023}).
\newblock
\urldef\tempurl \url{https://doi.org/10.46298/LMCS-19(3:9)2023}
\showDOI{\tempurl}


\bibitem[Beier et~al\mbox{.}(2017)]{DBLP:journals/corr/abs-1708-06460}
\bibfield{author}{\bibinfo{person}{Simon Beier}, \bibinfo{person}{Markus
  Holzer}, {and} \bibinfo{person}{Martin Kutrib}.}
  \bibinfo{year}{2017}\natexlab{}.
\newblock \showarticletitle{On the Descriptional Complexity of Operations on
  Semilinear Sets}. In \bibinfo{booktitle}{\emph{Proceedings 15th International
  Conference on Automata and Formal Languages, {AFL} 2017, Debrecen, Hungary,
  September 4-6, 2017}} \emph{(\bibinfo{series}{{EPTCS}},
  Vol.~\bibinfo{volume}{252})},
  \bibfield{editor}{\bibinfo{person}{Erzs{\'{e}}bet Csuhaj{-}Varj{\'{u}}},
  \bibinfo{person}{P{\'{a}}l D{\"{o}}m{\"{o}}si}, {and}
  \bibinfo{person}{Gy{\"{o}}rgy Vaszil}} (Eds.). \bibinfo{pages}{41--55}.
\newblock
\urldef\tempurl \url{https://doi.org/10.4204/EPTCS.252.8}
\showDOI{\tempurl}


\bibitem[Bejleri and Yoshida(2008)]{DBLP:journals/entcs/BejleriY09}
\bibfield{author}{\bibinfo{person}{Andi Bejleri} {and} \bibinfo{person}{Nobuko
  Yoshida}.} \bibinfo{year}{2008}\natexlab{}.
\newblock \showarticletitle{Synchronous Multiparty Session Types}. In
  \bibinfo{booktitle}{\emph{Proceedings of the First Workshop on Programming
  Language Approaches to Concurrency and Communication-cEntric Software,
  PLACES@DisCoTec 2008, Oslo, Norway, June 7, 2008}}
  \emph{(\bibinfo{series}{Electronic Notes in Theoretical Computer Science},
  Vol.~\bibinfo{volume}{241})}, \bibfield{editor}{\bibinfo{person}{Vasco~T.
  Vasconcelos} {and} \bibinfo{person}{Nobuko Yoshida}} (Eds.).
  \bibinfo{publisher}{Elsevier}, \bibinfo{pages}{3--33}.
\newblock
\urldef\tempurl \url{https://doi.org/10.1016/J.ENTCS.2009.06.002}
\showDOI{\tempurl}


\bibitem[Bertoni et~al\mbox{.}(1982)]{DBLP:conf/icalp/BertoniMS82}
\bibfield{author}{\bibinfo{person}{Alberto Bertoni}, \bibinfo{person}{Giancarlo
  Mauri}, {and} \bibinfo{person}{Nicoletta Sabadini}.}
  \bibinfo{year}{1982}\natexlab{}.
\newblock \showarticletitle{Equivalence and Membership Problems for Regular
  Trace Languages}. In \bibinfo{booktitle}{\emph{Automata, Languages and
  Programming, 9th Colloquium, Aarhus, Denmark, July 12-16, 1982, Proceedings}}
  \emph{(\bibinfo{series}{Lecture Notes in Computer Science},
  Vol.~\bibinfo{volume}{140})}, \bibfield{editor}{\bibinfo{person}{Mogens
  Nielsen} {and} \bibinfo{person}{Erik~Meineche Schmidt}} (Eds.).
  \bibinfo{publisher}{Springer}, \bibinfo{pages}{61--71}.
\newblock
\urldef\tempurl \url{https://doi.org/10.1007/BFB0012757}
\showDOI{\tempurl}


\bibitem[Bocchi et~al\mbox{.}(2012)]{DBLP:conf/tgc/BocchiDY12}
\bibfield{author}{\bibinfo{person}{Laura Bocchi}, \bibinfo{person}{Romain
  Demangeon}, {and} \bibinfo{person}{Nobuko Yoshida}.}
  \bibinfo{year}{2012}\natexlab{}.
\newblock \showarticletitle{A Multiparty Multi-session Logic}. In
  \bibinfo{booktitle}{\emph{Trustworthy Global Computing - 7th International
  Symposium, {TGC} 2012, Newcastle upon Tyne, UK, September 7-8, 2012, Revised
  Selected Papers}} \emph{(\bibinfo{series}{Lecture Notes in Computer Science},
  Vol.~\bibinfo{volume}{8191})}, \bibfield{editor}{\bibinfo{person}{Catuscia
  Palamidessi} {and} \bibinfo{person}{Mark~Dermot Ryan}} (Eds.).
  \bibinfo{publisher}{Springer}, \bibinfo{pages}{97--111}.
\newblock
\urldef\tempurl \url{https://doi.org/10.1007/978-3-642-41157-1\_7}
\showDOI{\tempurl}


\bibitem[Bocchi et~al\mbox{.}(2010)]{DBLP:conf/concur/BocchiHTY10}
\bibfield{author}{\bibinfo{person}{Laura Bocchi}, \bibinfo{person}{Kohei
  Honda}, \bibinfo{person}{Emilio Tuosto}, {and} \bibinfo{person}{Nobuko
  Yoshida}.} \bibinfo{year}{2010}\natexlab{}.
\newblock \showarticletitle{A Theory of Design-by-Contract for Distributed
  Multiparty Interactions}. In \bibinfo{booktitle}{\emph{{CONCUR} 2010 -
  Concurrency Theory, 21th International Conference, {CONCUR} 2010, Paris,
  France, August 31-September 3, 2010. Proceedings}}
  \emph{(\bibinfo{series}{Lecture Notes in Computer Science},
  Vol.~\bibinfo{volume}{6269})}, \bibfield{editor}{\bibinfo{person}{Paul
  Gastin} {and} \bibinfo{person}{Fran{\c{c}}ois Laroussinie}} (Eds.).
  \bibinfo{publisher}{Springer}, \bibinfo{pages}{162--176}.
\newblock
\urldef\tempurl \url{https://doi.org/10.1007/978-3-642-15375-4\_12}
\showDOI{\tempurl}


\bibitem[Brand and Zafiropulo(1983)]{DBLP:journals/jacm/BrandZ83}
\bibfield{author}{\bibinfo{person}{Daniel Brand} {and} \bibinfo{person}{Pitro
  Zafiropulo}.} \bibinfo{year}{1983}\natexlab{}.
\newblock \showarticletitle{On Communicating Finite-State Machines}.
\newblock \bibinfo{journal}{\emph{J. {ACM}}} \bibinfo{volume}{30},
  \bibinfo{number}{2} (\bibinfo{year}{1983}), \bibinfo{pages}{323--342}.
\newblock
\urldef\tempurl \url{https://doi.org/10.1145/322374.322380}
\showDOI{\tempurl}


\bibitem[Bruschi et~al\mbox{.}(1994)]{DBLP:journals/iandc/BruschiPS94}
\bibfield{author}{\bibinfo{person}{Danilo Bruschi}, \bibinfo{person}{Giovanni
  Pighizzini}, {and} \bibinfo{person}{Nicoletta Sabadini}.}
  \bibinfo{year}{1994}\natexlab{}.
\newblock \showarticletitle{On the Existence of Minimum Asynchronous Automata
  and on the Equivalence Problem for Unambiguous Regular Trace Languages}.
\newblock \bibinfo{journal}{\emph{Inf. Comput.}} \bibinfo{volume}{108},
  \bibinfo{number}{2} (\bibinfo{year}{1994}), \bibinfo{pages}{262--285}.
\newblock
\urldef\tempurl \url{https://doi.org/10.1006/INCO.1994.1010}
\showDOI{\tempurl}


\bibitem[Castellani et~al\mbox{.}(1999)]{DBLP:conf/fsttcs/CastellaniMT99}
\bibfield{author}{\bibinfo{person}{Ilaria Castellani},
  \bibinfo{person}{Madhavan Mukund}, {and} \bibinfo{person}{P.~S.
  Thiagarajan}.} \bibinfo{year}{1999}\natexlab{}.
\newblock \showarticletitle{Synthesizing Distributed Transition Systems from
  Global Specification}. In \bibinfo{booktitle}{\emph{Foundations of Software
  Technology and Theoretical Computer Science, 19th Conference, Chennai, India,
  December 13-15, 1999, Proceedings}} \emph{(\bibinfo{series}{Lecture Notes in
  Computer Science}, Vol.~\bibinfo{volume}{1738})},
  \bibfield{editor}{\bibinfo{person}{C.~Pandu Rangan},
  \bibinfo{person}{Venkatesh Raman}, {and} \bibinfo{person}{Ramaswamy
  Ramanujam}} (Eds.). \bibinfo{publisher}{Springer}, \bibinfo{pages}{219--231}.
\newblock
\urldef\tempurl \url{https://doi.org/10.1007/3-540-46691-6\_17}
\showDOI{\tempurl}


\bibitem[Castro{-}Perez et~al\mbox{.}(2026)]{NODBLPyet:journals/corr/abs-2511-22692}
\bibfield{author}{\bibinfo{person}{David Castro{-}Perez},
  \bibinfo{person}{Francisco Ferreira}, {and} \bibinfo{person}{Sung{-}Shik
  Jongmans}.} \bibinfo{year}{2026}\natexlab{}.
\newblock \showarticletitle{A Synthetic Reconstruction of Multiparty Session
  Types}.
\newblock \bibinfo{journal}{\emph{Proc. {ACM} Program. Lang.}}
  \bibinfo{volume}{10}, \bibinfo{number}{{POPL}} (\bibinfo{year}{2026}),
  \bibinfo{pages}{50:1--50:29}.
\newblock
\urldef\tempurl \url{https://doi.org/10.1145/3776692}
\showDOI{\tempurl}


\bibitem[Coppo et~al\mbox{.}(2016)]{DBLP:journals/mscs/CoppoDYP16}
\bibfield{author}{\bibinfo{person}{Mario Coppo}, \bibinfo{person}{Mariangiola
  Dezani{-}Ciancaglini}, \bibinfo{person}{Nobuko Yoshida}, {and}
  \bibinfo{person}{Luca Padovani}.} \bibinfo{year}{2016}\natexlab{}.
\newblock \showarticletitle{Global progress for dynamically interleaved
  multiparty sessions}.
\newblock \bibinfo{journal}{\emph{Math. Struct. Comput. Sci.}}
  \bibinfo{volume}{26}, \bibinfo{number}{2} (\bibinfo{year}{2016}),
  \bibinfo{pages}{238--302}.
\newblock
\urldef\tempurl \url{https://doi.org/10.1017/S0960129514000188}
\showDOI{\tempurl}


\bibitem[Cruz{-}Filipe and Montesi(2020)]{DBLP:journals/tcs/Cruz-FilipeM20}
\bibfield{author}{\bibinfo{person}{Lu{\'{\i}}s Cruz{-}Filipe} {and}
  \bibinfo{person}{Fabrizio Montesi}.} \bibinfo{year}{2020}\natexlab{}.
\newblock \showarticletitle{A core model for choreographic programming}.
\newblock \bibinfo{journal}{\emph{Theor. Comput. Sci.}}  \bibinfo{volume}{802}
  (\bibinfo{year}{2020}), \bibinfo{pages}{38--66}.
\newblock
\urldef\tempurl \url{https://doi.org/10.1016/j.tcs.2019.07.005}
\showDOI{\tempurl}


\bibitem[Di~Giusto et~al\mbox{.}(2025)]{NODBLPyet:10.1145/3756907.3756918}
\bibfield{author}{\bibinfo{person}{Cinzia Di~Giusto}, \bibinfo{person}{Etienne
  Lozes}, {and} \bibinfo{person}{Pascal Urso}.}
  \bibinfo{year}{2025}\natexlab{}.
\newblock \showarticletitle{Realisability and Complementability of Multiparty
  Session Types}. In \bibinfo{booktitle}{\emph{Proceedings of the 27th
  International Symposium on Principles and Practice of Declarative
  Programming}} \emph{(\bibinfo{series}{PPDP '25})}.
  \bibinfo{publisher}{Association for Computing Machinery},
  \bibinfo{address}{New York, NY, USA}, Article \bibinfo{articleno}{11},
  \bibinfo{numpages}{12}~pages.
\newblock
\showISBNx{9798400720857}
\urldef\tempurl \url{https://doi.org/10.1145/3756907.3756918}
\showDOI{\tempurl}


\bibitem[Diekert and Rozenberg(1995)]{DBLP:books/ws/95/DR1995}
\bibfield{editor}{\bibinfo{person}{Volker Diekert} {and}
  \bibinfo{person}{Grzegorz Rozenberg}} (Eds.).
  \bibinfo{year}{1995}\natexlab{}.
\newblock \bibinfo{booktitle}{\emph{The Book of Traces}}.
\newblock \bibinfo{publisher}{World Scientific}.
\newblock
\showISBNx{978-981-02-2058-7}
\urldef\tempurl \url{https://doi.org/10.1142/2563}
\showDOI{\tempurl}


\bibitem[Ekici et~al\mbox{.}(2025)]{DBLP:conf/itp/EkiciKY25}
\bibfield{author}{\bibinfo{person}{Burak Ekici}, \bibinfo{person}{Tadayoshi
  Kamegai}, {and} \bibinfo{person}{Nobuko Yoshida}.}
  \bibinfo{year}{2025}\natexlab{}.
\newblock \showarticletitle{Formalising Subject Reduction and Progress for
  Multiparty Session Processes}. In \bibinfo{booktitle}{\emph{16th
  International Conference on Interactive Theorem Proving, {ITP} 2025,
  September 28 to October 1, 2025, Reykjavik, Iceland}}
  \emph{(\bibinfo{series}{LIPIcs}, Vol.~\bibinfo{volume}{352})},
  \bibfield{editor}{\bibinfo{person}{Yannick Forster} {and}
  \bibinfo{person}{Chantal Keller}} (Eds.). \bibinfo{publisher}{Schloss
  Dagstuhl - Leibniz-Zentrum f{\"{u}}r Informatik},
  \bibinfo{pages}{19:1--19:23}.
\newblock
\urldef\tempurl \url{https://doi.org/10.4230/LIPICS.ITP.2025.19}
\showDOI{\tempurl}


\bibitem[Fischer and Rosenberg(1968)]{DBLP:journals/jcss/FischerR68}
\bibfield{author}{\bibinfo{person}{Patrick~C. Fischer} {and}
  \bibinfo{person}{Arnold~L. Rosenberg}.} \bibinfo{year}{1968}\natexlab{}.
\newblock \showarticletitle{Multitape One-Way Nonwriting Automata}.
\newblock \bibinfo{journal}{\emph{J. Comput. Syst. Sci.}} \bibinfo{volume}{2},
  \bibinfo{number}{1} (\bibinfo{year}{1968}), \bibinfo{pages}{88--101}.
\newblock
\urldef\tempurl \url{https://doi.org/10.1016/S0022-0000(68)80006-6}
\showDOI{\tempurl}


\bibitem[Gazagnaire et~al\mbox{.}(2007)]{DBLP:conf/concur/GazagnaireGHTY07}
\bibfield{author}{\bibinfo{person}{Thomas Gazagnaire}, \bibinfo{person}{Blaise
  Genest}, \bibinfo{person}{Lo{\"{\i}}c H{\'{e}}lou{\"{e}}t},
  \bibinfo{person}{P.~S. Thiagarajan}, {and} \bibinfo{person}{Shaofa Yang}.}
  \bibinfo{year}{2007}\natexlab{}.
\newblock \showarticletitle{Causal Message Sequence Charts}. In
  \bibinfo{booktitle}{\emph{{CONCUR} 2007 - Concurrency Theory, 18th
  International Conference, {CONCUR} 2007, Lisbon, Portugal, September 3-8,
  2007, Proceedings}} \emph{(\bibinfo{series}{Lecture Notes in Computer
  Science}, Vol.~\bibinfo{volume}{4703})},
  \bibfield{editor}{\bibinfo{person}{Lu{\'{\i}}s Caires} {and}
  \bibinfo{person}{Vasco~Thudichum Vasconcelos}} (Eds.).
  \bibinfo{publisher}{Springer}, \bibinfo{pages}{166--180}.
\newblock
\urldef\tempurl \url{https://doi.org/10.1007/978-3-540-74407-8\_12}
\showDOI{\tempurl}


\bibitem[Genest and Muscholl(2005)]{DBLP:conf/acsd/GenestM05}
\bibfield{author}{\bibinfo{person}{Blaise Genest} {and} \bibinfo{person}{Anca
  Muscholl}.} \bibinfo{year}{2005}\natexlab{}.
\newblock \showarticletitle{Message Sequence Charts: {A} Survey}. In
  \bibinfo{booktitle}{\emph{Fifth International Conference on Application of
  Concurrency to System Design {(ACSD} 2005), 6-9 June 2005, St. Malo,
  France}}. \bibinfo{publisher}{{IEEE} Computer Society},
  \bibinfo{pages}{2--4}.
\newblock
\urldef\tempurl \url{https://doi.org/10.1109/ACSD.2005.25}
\showDOI{\tempurl}


\bibitem[Genest et~al\mbox{.}(2004)]{DBLP:conf/dlt/GenestMK04}
\bibfield{author}{\bibinfo{person}{Blaise Genest}, \bibinfo{person}{Anca
  Muscholl}, {and} \bibinfo{person}{Dietrich Kuske}.}
  \bibinfo{year}{2004}\natexlab{}.
\newblock \showarticletitle{A Kleene Theorem for a Class of Communicating
  Automata with Effective Algorithms}. In
  \bibinfo{booktitle}{\emph{Developments in Language Theory, 8th International
  Conference, {DLT} 2004, Auckland, New Zealand, December 13-17, 2004,
  Proceedings}} \emph{(\bibinfo{series}{Lecture Notes in Computer Science},
  Vol.~\bibinfo{volume}{3340})}, \bibfield{editor}{\bibinfo{person}{Cristian
  Calude}, \bibinfo{person}{Elena Calude}, {and} \bibinfo{person}{Michael~J.
  Dinneen}} (Eds.). \bibinfo{publisher}{Springer}, \bibinfo{pages}{30--48}.
\newblock
\urldef\tempurl \url{https://doi.org/10.1007/978-3-540-30550-7\_4}
\showDOI{\tempurl}


\bibitem[Genest et~al\mbox{.}(2003)]{DBLP:conf/ac/GenestMP03}
\bibfield{author}{\bibinfo{person}{Blaise Genest}, \bibinfo{person}{Anca
  Muscholl}, {and} \bibinfo{person}{Doron~A. Peled}.}
  \bibinfo{year}{2003}\natexlab{}.
\newblock \showarticletitle{Message Sequence Charts}. In
  \bibinfo{booktitle}{\emph{Lectures on Concurrency and Petri Nets, Advances in
  Petri Nets [This tutorial volume originates from the 4th Advanced Course on
  Petri Nets, {ACPN} 2003, held in Eichst{\"{a}}tt, Germany in September 2003.
  In addition to lectures given at {ACPN} 2003, additional chapters have been
  commissioned]}} \emph{(\bibinfo{series}{Lecture Notes in Computer Science},
  Vol.~\bibinfo{volume}{3098})}, \bibfield{editor}{\bibinfo{person}{J{\"{o}}rg
  Desel}, \bibinfo{person}{Wolfgang Reisig}, {and} \bibinfo{person}{Grzegorz
  Rozenberg}} (Eds.). \bibinfo{publisher}{Springer}, \bibinfo{pages}{537--558}.
\newblock
\urldef\tempurl \url{https://doi.org/10.1007/978-3-540-27755-2\_15}
\showDOI{\tempurl}


\bibitem[Genest et~al\mbox{.}(2006)]{DBLP:journals/jcss/GenestMSZ06}
\bibfield{author}{\bibinfo{person}{Blaise Genest}, \bibinfo{person}{Anca
  Muscholl}, \bibinfo{person}{Helmut Seidl}, {and} \bibinfo{person}{Marc
  Zeitoun}.} \bibinfo{year}{2006}\natexlab{}.
\newblock \showarticletitle{Infinite-state high-level MSCs: Model-checking and
  realizability}.
\newblock \bibinfo{journal}{\emph{J. Comput. Syst. Sci.}} \bibinfo{volume}{72},
  \bibinfo{number}{4} (\bibinfo{year}{2006}), \bibinfo{pages}{617--647}.
\newblock
\urldef\tempurl \url{https://doi.org/10.1016/J.JCSS.2005.09.007}
\showDOI{\tempurl}


\bibitem[Giallorenzo et~al\mbox{.}(2021)]{DBLP:conf/ecoop/GiallorenzoMPRS21}
\bibfield{author}{\bibinfo{person}{Saverio Giallorenzo},
  \bibinfo{person}{Fabrizio Montesi}, \bibinfo{person}{Marco Peressotti},
  \bibinfo{person}{David Richter}, \bibinfo{person}{Guido Salvaneschi}, {and}
  \bibinfo{person}{Pascal Weisenburger}.} \bibinfo{year}{2021}\natexlab{}.
\newblock \showarticletitle{Multiparty Languages: The Choreographic and
  Multitier Cases (Pearl)}. In \bibinfo{booktitle}{\emph{35th European
  Conference on Object-Oriented Programming, {ECOOP} 2021, July 11-17, 2021,
  Aarhus, Denmark (Virtual Conference)}} \emph{(\bibinfo{series}{LIPIcs},
  Vol.~\bibinfo{volume}{194})}, \bibfield{editor}{\bibinfo{person}{Anders
  M{\o}ller} {and} \bibinfo{person}{Manu Sridharan}} (Eds.).
  \bibinfo{publisher}{Schloss Dagstuhl - Leibniz-Zentrum f{\"{u}}r Informatik},
  \bibinfo{pages}{22:1--22:27}.
\newblock
\urldef\tempurl \url{https://doi.org/10.4230/LIPIcs.ECOOP.2021.22}
\showDOI{\tempurl}


\bibitem[Giusto et~al\mbox{.}(2025)]{DBLP:conf/ppdp/GiustoLU25}
\bibfield{author}{\bibinfo{person}{Cinzia~Di Giusto},
  \bibinfo{person}{{\'{E}}tienne Lozes}, {and} \bibinfo{person}{Pascal Urso}.}
  \bibinfo{year}{2025}\natexlab{}.
\newblock \showarticletitle{Realisability and Complementability of Multiparty
  Session Types}. In \bibinfo{booktitle}{\emph{Proceedings of the 27th
  International Symposium on Principles and Practice of Declarative
  Programming, {PPDP} 2025, Rende, Italy, September 10-11, 2025}},
  \bibfield{editor}{\bibinfo{person}{Malgorzata Biernacka},
  \bibinfo{person}{Carlos Olarte}, \bibinfo{person}{Francesco Ricca}, {and}
  \bibinfo{person}{James Cheney}} (Eds.). \bibinfo{publisher}{{ACM}},
  \bibinfo{pages}{11:1--11:12}.
\newblock
\urldef\tempurl \url{https://doi.org/10.1145/3756907.3756918}
\showDOI{\tempurl}


\bibitem[Hinrichsen et~al\mbox{.}(2020)]{DBLP:journals/pacmpl/HinrichsenBK20}
\bibfield{author}{\bibinfo{person}{Jonas~Kastberg Hinrichsen},
  \bibinfo{person}{Jesper Bengtson}, {and} \bibinfo{person}{Robbert Krebbers}.}
  \bibinfo{year}{2020}\natexlab{}.
\newblock \showarticletitle{Actris: session-type based reasoning in separation
  logic}.
\newblock \bibinfo{journal}{\emph{Proc. {ACM} Program. Lang.}}
  \bibinfo{volume}{4}, \bibinfo{number}{{POPL}} (\bibinfo{year}{2020}),
  \bibinfo{pages}{6:1--6:30}.
\newblock
\urldef\tempurl \url{https://doi.org/10.1145/3371074}
\showDOI{\tempurl}


\bibitem[Hinrichsen et~al\mbox{.}(2022)]{DBLP:journals/lmcs/HinrichsenBK22}
\bibfield{author}{\bibinfo{person}{Jonas~Kastberg Hinrichsen},
  \bibinfo{person}{Jesper Bengtson}, {and} \bibinfo{person}{Robbert Krebbers}.}
  \bibinfo{year}{2022}\natexlab{}.
\newblock \showarticletitle{Actris 2.0: Asynchronous Session-Type Based
  Reasoning in Separation Logic}.
\newblock \bibinfo{journal}{\emph{Log. Methods Comput. Sci.}}
  \bibinfo{volume}{18}, \bibinfo{number}{2} (\bibinfo{year}{2022}).
\newblock
\urldef\tempurl \url{https://doi.org/10.46298/LMCS-18(2:16)2022}
\showDOI{\tempurl}


\bibitem[Hirsch and Garg(2021)]{DBLP:journals/corr/abs-2111-03484}
\bibfield{author}{\bibinfo{person}{Andrew~K. Hirsch} {and}
  \bibinfo{person}{Deepak Garg}.} \bibinfo{year}{2021}\natexlab{}.
\newblock \showarticletitle{Pirouette: Higher-Order Typed Functional
  Choreographies}.
\newblock \bibinfo{journal}{\emph{CoRR}}  \bibinfo{volume}{abs/2111.03484}
  (\bibinfo{year}{2021}).
\newblock
\showeprint[arXiv]{2111.03484}
\urldef\tempurl \url{https://arxiv.org/abs/2111.03484}
\showURL{\tempurl}


\bibitem[Hirsch and Garg(2022)]{DBLP:journals/pacmpl/HirschG22}
\bibfield{author}{\bibinfo{person}{Andrew~K. Hirsch} {and}
  \bibinfo{person}{Deepak Garg}.} \bibinfo{year}{2022}\natexlab{}.
\newblock \showarticletitle{Pirouette: higher-order typed functional
  choreographies}.
\newblock \bibinfo{journal}{\emph{Proc. {ACM} Program. Lang.}}
  \bibinfo{volume}{6}, \bibinfo{number}{{POPL}} (\bibinfo{year}{2022}),
  \bibinfo{pages}{1--27}.
\newblock
\urldef\tempurl \url{https://doi.org/10.1145/3498684}
\showDOI{\tempurl}


\bibitem[Honda et~al\mbox{.}(2008)]{DBLP:conf/popl/HondaYC08}
\bibfield{author}{\bibinfo{person}{Kohei Honda}, \bibinfo{person}{Nobuko
  Yoshida}, {and} \bibinfo{person}{Marco Carbone}.}
  \bibinfo{year}{2008}\natexlab{}.
\newblock \showarticletitle{Multiparty asynchronous session types}. In
  \bibinfo{booktitle}{\emph{Proceedings of the 35th {ACM} {SIGPLAN-SIGACT}
  Symposium on Principles of Programming Languages, {POPL} 2008, San Francisco,
  California, USA, January 7-12, 2008}},
  \bibfield{editor}{\bibinfo{person}{George~C. Necula} {and}
  \bibinfo{person}{Philip Wadler}} (Eds.). \bibinfo{publisher}{{ACM}},
  \bibinfo{pages}{273--284}.
\newblock
\urldef\tempurl \url{https://doi.org/10.1145/1328438.1328472}
\showDOI{\tempurl}


\bibitem[H{\"{u}}ttel et~al\mbox{.}(2016)]{DBLP:journals/csur/HuttelLVCCDMPRT16}
\bibfield{author}{\bibinfo{person}{Hans H{\"{u}}ttel}, \bibinfo{person}{Ivan
  Lanese}, \bibinfo{person}{Vasco~T. Vasconcelos}, \bibinfo{person}{Lu{\'{\i}}s
  Caires}, \bibinfo{person}{Marco Carbone}, \bibinfo{person}{Pierre{-}Malo
  Deni{\'{e}}lou}, \bibinfo{person}{Dimitris Mostrous}, \bibinfo{person}{Luca
  Padovani}, \bibinfo{person}{Ant{\'{o}}nio Ravara}, \bibinfo{person}{Emilio
  Tuosto}, \bibinfo{person}{Hugo~Torres Vieira}, {and}
  \bibinfo{person}{Gianluigi Zavattaro}.} \bibinfo{year}{2016}\natexlab{}.
\newblock \showarticletitle{Foundations of Session Types and Behavioural
  Contracts}.
\newblock \bibinfo{journal}{\emph{{ACM} Comput. Surv.}} \bibinfo{volume}{49},
  \bibinfo{number}{1} (\bibinfo{year}{2016}), \bibinfo{pages}{3:1--3:36}.
\newblock
\urldef\tempurl \url{https://doi.org/10.1145/2873052}
\showDOI{\tempurl}


\bibitem[Ibarra(1978)]{DBLP:journals/jacm/Ibarra78}
\bibfield{author}{\bibinfo{person}{Oscar~H. Ibarra}.}
  \bibinfo{year}{1978}\natexlab{}.
\newblock \showarticletitle{Reversal-Bounded Multicounter Machines and Their
  Decision Problems}.
\newblock \bibinfo{journal}{\emph{J. {ACM}}} \bibinfo{volume}{25},
  \bibinfo{number}{1} (\bibinfo{year}{1978}), \bibinfo{pages}{116--133}.
\newblock
\urldef\tempurl \url{https://doi.org/10.1145/322047.322058}
\showDOI{\tempurl}


\bibitem[Jacobs et~al\mbox{.}(2023)]{DBLP:journals/pacmpl/JacobsHK23}
\bibfield{author}{\bibinfo{person}{Jules Jacobs},
  \bibinfo{person}{Jonas~Kastberg Hinrichsen}, {and} \bibinfo{person}{Robbert
  Krebbers}.} \bibinfo{year}{2023}\natexlab{}.
\newblock \showarticletitle{Dependent Session Protocols in Separation Logic
  from First Principles (Functional Pearl)}.
\newblock \bibinfo{journal}{\emph{Proc. {ACM} Program. Lang.}}
  \bibinfo{volume}{7}, \bibinfo{number}{{ICFP}} (\bibinfo{year}{2023}),
  \bibinfo{pages}{768--795}.
\newblock
\urldef\tempurl \url{https://doi.org/10.1145/3607856}
\showDOI{\tempurl}


\bibitem[Kobayashi(2006)]{DBLP:conf/concur/Kobayashi06}
\bibfield{author}{\bibinfo{person}{Naoki Kobayashi}.}
  \bibinfo{year}{2006}\natexlab{}.
\newblock \showarticletitle{A New Type System for Deadlock-Free Processes}. In
  \bibinfo{booktitle}{\emph{{CONCUR}}} \emph{(\bibinfo{series}{Lecture Notes in
  Computer Science}, Vol.~\bibinfo{volume}{4137})}.
  \bibinfo{publisher}{Springer}, \bibinfo{pages}{233--247}.
\newblock
\urldef\tempurl \url{https://doi.org/10.1007/11817949_16}
\showDOI{\tempurl}


\bibitem[Kopczynski and To(2010)]{DBLP:conf/lics/KopczynskiT10}
\bibfield{author}{\bibinfo{person}{Eryk Kopczynski} {and}
  \bibinfo{person}{Anthony~Widjaja To}.} \bibinfo{year}{2010}\natexlab{}.
\newblock \showarticletitle{Parikh Images of Grammars: Complexity and
  Applications}. In \bibinfo{booktitle}{\emph{Proceedings of the 25th Annual
  {IEEE} Symposium on Logic in Computer Science, {LICS} 2010, 11-14 July 2010,
  Edinburgh, United Kingdom}}. \bibinfo{publisher}{{IEEE} Computer Society},
  \bibinfo{pages}{80--89}.
\newblock
\urldef\tempurl \url{https://doi.org/10.1109/LICS.2010.21}
\showDOI{\tempurl}


\bibitem[Li et~al\mbox{.}(2024)]{DBLP:conf/esop/LiSW24}
\bibfield{author}{\bibinfo{person}{Elaine Li}, \bibinfo{person}{Felix Stutz},
  {and} \bibinfo{person}{Thomas Wies}.} \bibinfo{year}{2024}\natexlab{}.
\newblock \showarticletitle{Deciding Subtyping for Asynchronous Multiparty
  Sessions}. In \bibinfo{booktitle}{\emph{Programming Languages and Systems -
  33rd European Symposium on Programming, {ESOP} 2024, Held as Part of the
  European Joint Conferences on Theory and Practice of Software, {ETAPS} 2024,
  Luxembourg City, Luxembourg, April 6-11, 2024, Proceedings, Part {I}}}
  \emph{(\bibinfo{series}{Lecture Notes in Computer Science},
  Vol.~\bibinfo{volume}{14576})}, \bibfield{editor}{\bibinfo{person}{Stephanie
  Weirich}} (Ed.). \bibinfo{publisher}{Springer}, \bibinfo{pages}{176--205}.
\newblock
\urldef\tempurl \url{https://doi.org/10.1007/978-3-031-57262-3\_8}
\showDOI{\tempurl}


\bibitem[Li et~al\mbox{.}(2023)]{DBLP:conf/cav/LiSWZ23}
\bibfield{author}{\bibinfo{person}{Elaine Li}, \bibinfo{person}{Felix Stutz},
  \bibinfo{person}{Thomas Wies}, {and} \bibinfo{person}{Damien Zufferey}.}
  \bibinfo{year}{2023}\natexlab{}.
\newblock \showarticletitle{Complete Multiparty Session Type Projection with
  Automata}. In \bibinfo{booktitle}{\emph{Computer Aided Verification - 35th
  International Conference, {CAV} 2023, Paris, France, July 17-22, 2023,
  Proceedings, Part {III}}} \emph{(\bibinfo{series}{Lecture Notes in Computer
  Science}, Vol.~\bibinfo{volume}{13966})},
  \bibfield{editor}{\bibinfo{person}{Constantin Enea} {and}
  \bibinfo{person}{Akash Lal}} (Eds.). \bibinfo{publisher}{Springer},
  \bibinfo{pages}{350--373}.
\newblock
\urldef\tempurl \url{https://doi.org/10.1007/978-3-031-37709-9\_17}
\showDOI{\tempurl}


\bibitem[Li et~al\mbox{.}(2025a)]{DBLP:journals/pacmpl/LiSWZ25}
\bibfield{author}{\bibinfo{person}{Elaine Li}, \bibinfo{person}{Felix Stutz},
  \bibinfo{person}{Thomas Wies}, {and} \bibinfo{person}{Damien Zufferey}.}
  \bibinfo{year}{2025}\natexlab{a}.
\newblock \showarticletitle{Characterizing Implementability of Global Protocols
  with Infinite States and Data}.
\newblock \bibinfo{journal}{\emph{Proc. {ACM} Program. Lang.}}
  \bibinfo{volume}{9}, \bibinfo{number}{{OOPSLA1}} (\bibinfo{year}{2025}),
  \bibinfo{pages}{1434--1463}.
\newblock
\urldef\tempurl \url{https://doi.org/10.1145/3720493}
\showDOI{\tempurl}


\bibitem[Li et~al\mbox{.}(2025b)]{extendedversion}
\bibfield{author}{\bibinfo{person}{Elaine Li}, \bibinfo{person}{Felix Stutz},
  \bibinfo{person}{Thomas Wies}, {and} \bibinfo{person}{Damien Zufferey}.}
  \bibinfo{year}{2025}\natexlab{b}.
\newblock \bibinfo{title}{Characterizing Implementability of Global Protocols
  with Infinite States and Data}.
\newblock
\newblock
\showeprint[arxiv]{2411.05722}~[cs.PL]
\urldef\tempurl \url{https://arxiv.org/abs/2411.05722}
\showURL{\tempurl}


\bibitem[Li et~al\mbox{.}(2025c)]{DBLP:conf/cav/LiSWZ25}
\bibfield{author}{\bibinfo{person}{Elaine Li}, \bibinfo{person}{Felix Stutz},
  \bibinfo{person}{Thomas Wies}, {and} \bibinfo{person}{Damien Zufferey}.}
  \bibinfo{year}{2025}\natexlab{c}.
\newblock \showarticletitle{Sprout: {A} Verifier for Symbolic Multiparty
  Protocols}. In \bibinfo{booktitle}{\emph{Computer Aided Verification - 37th
  International Conference, {CAV} 2025, Zagreb, Croatia, July 23-25, 2025,
  Proceedings, Part {III}}} \emph{(\bibinfo{series}{Lecture Notes in Computer
  Science}, Vol.~\bibinfo{volume}{15933})},
  \bibfield{editor}{\bibinfo{person}{Ruzica Piskac} {and}
  \bibinfo{person}{Zvonimir Rakamaric}} (Eds.). \bibinfo{publisher}{Springer},
  \bibinfo{pages}{304--317}.
\newblock
\urldef\tempurl \url{https://doi.org/10.1007/978-3-031-98682-6\_16}
\showDOI{\tempurl}


\bibitem[Li and Wies(2025)]{DBLP:conf/itp/LiW25}
\bibfield{author}{\bibinfo{person}{Elaine Li} {and} \bibinfo{person}{Thomas
  Wies}.} \bibinfo{year}{2025}\natexlab{}.
\newblock \showarticletitle{Certified Implementability of Global Multiparty
  Protocols}. In \bibinfo{booktitle}{\emph{16th International Conference on
  Interactive Theorem Proving, {ITP} 2025, Reykjavik, Iceland, September 28 -
  October 1, 2025}} \emph{(\bibinfo{series}{LIPIcs},
  Vol.~\bibinfo{volume}{352})}, \bibfield{editor}{\bibinfo{person}{Yannick
  Forster} {and} \bibinfo{person}{Chantal Keller}} (Eds.).
  \bibinfo{publisher}{Schloss Dagstuhl - Leibniz-Zentrum f{\"{u}}r Informatik},
  \bibinfo{pages}{15:1--15:20}.
\newblock
\urldef\tempurl \url{https://doi.org/10.4230/LIPICS.ITP.2025.15}
\showDOI{\tempurl}


\bibitem[Lohrey(2003)]{DBLP:journals/tcs/Lohrey03}
\bibfield{author}{\bibinfo{person}{Markus Lohrey}.}
  \bibinfo{year}{2003}\natexlab{}.
\newblock \showarticletitle{Realizability of high-level message sequence
  charts: closing the gaps}.
\newblock \bibinfo{journal}{\emph{Theor. Comput. Sci.}} \bibinfo{volume}{309},
  \bibinfo{number}{1-3} (\bibinfo{year}{2003}), \bibinfo{pages}{529--554}.
\newblock
\urldef\tempurl \url{https://doi.org/10.1016/J.TCS.2003.08.002}
\showDOI{\tempurl}


\bibitem[Majumdar et~al\mbox{.}(2021a)]{DBLP:conf/concur/MajumdarMSZ21}
\bibfield{author}{\bibinfo{person}{Rupak Majumdar}, \bibinfo{person}{Madhavan
  Mukund}, \bibinfo{person}{Felix Stutz}, {and} \bibinfo{person}{Damien
  Zufferey}.} \bibinfo{year}{2021}\natexlab{a}.
\newblock \showarticletitle{Generalising Projection in Asynchronous Multiparty
  Session Types}. In \bibinfo{booktitle}{\emph{32nd International Conference on
  Concurrency Theory, {CONCUR} 2021, August 24-27, 2021, Virtual Conference}}
  \emph{(\bibinfo{series}{LIPIcs}, Vol.~\bibinfo{volume}{203})},
  \bibfield{editor}{\bibinfo{person}{Serge Haddad} {and}
  \bibinfo{person}{Daniele Varacca}} (Eds.). \bibinfo{publisher}{Schloss
  Dagstuhl - Leibniz-Zentrum f{\"{u}}r Informatik},
  \bibinfo{pages}{35:1--35:24}.
\newblock
\urldef\tempurl \url{https://doi.org/10.4230/LIPICS.CONCUR.2021.35}
\showDOI{\tempurl}


\bibitem[Majumdar et~al\mbox{.}(2021b)]{DBLP:journals/corr/abs-2107-03984}
\bibfield{author}{\bibinfo{person}{Rupak Majumdar}, \bibinfo{person}{Madhavan
  Mukund}, \bibinfo{person}{Felix Stutz}, {and} \bibinfo{person}{Damien
  Zufferey}.} \bibinfo{year}{2021}\natexlab{b}.
\newblock \showarticletitle{Generalising Projection in Asynchronous Multiparty
  Session Types}.
\newblock \bibinfo{journal}{\emph{CoRR}}  \bibinfo{volume}{abs/2107.03984}
  (\bibinfo{year}{2021}).
\newblock
\showeprint[arXiv]{2107.03984}
\urldef\tempurl \url{https://arxiv.org/abs/2107.03984}
\showURL{\tempurl}


\bibitem[Mauw and Reniers(1997)]{DBLP:conf/sdl/MauwR97}
\bibfield{author}{\bibinfo{person}{Sjouke Mauw} {and}
  \bibinfo{person}{Michel~A. Reniers}.} \bibinfo{year}{1997}\natexlab{}.
\newblock \showarticletitle{High-level message sequence charts}. In
  \bibinfo{booktitle}{\emph{{SDL} '97 Time for Testing, SDL, {MSC} and Trends -
  8th International {SDL} Forum, Evry, France, 23-29 September 1997,
  Proceedings}}, \bibfield{editor}{\bibinfo{person}{Ana~R. Cavalli} {and}
  \bibinfo{person}{Amardeo Sarma}} (Eds.). \bibinfo{publisher}{Elsevier},
  \bibinfo{pages}{291--306}.
\newblock


\bibitem[Morin(2002)]{DBLP:conf/stacs/Morin02}
\bibfield{author}{\bibinfo{person}{R{\'{e}}mi Morin}.}
  \bibinfo{year}{2002}\natexlab{}.
\newblock \showarticletitle{Recognizable Sets of Message Sequence Charts}. In
  \bibinfo{booktitle}{\emph{{STACS} 2002, 19th Annual Symposium on Theoretical
  Aspects of Computer Science, Antibes - Juan les Pins, France, March 14-16,
  2002, Proceedings}} \emph{(\bibinfo{series}{Lecture Notes in Computer
  Science}, Vol.~\bibinfo{volume}{2285})},
  \bibfield{editor}{\bibinfo{person}{Helmut Alt} {and} \bibinfo{person}{Afonso
  Ferreira}} (Eds.). \bibinfo{publisher}{Springer}, \bibinfo{pages}{523--534}.
\newblock
\urldef\tempurl \url{https://doi.org/10.1007/3-540-45841-7\_43}
\showDOI{\tempurl}


\bibitem[Muscholl and Peled(1999)]{DBLP:conf/mfcs/MuschollP99}
\bibfield{author}{\bibinfo{person}{Anca Muscholl} {and}
  \bibinfo{person}{Doron~A. Peled}.} \bibinfo{year}{1999}\natexlab{}.
\newblock \showarticletitle{Message Sequence Graphs and Decision Problems on
  Mazurkiewicz Traces}. In \bibinfo{booktitle}{\emph{Mathematical Foundations
  of Computer Science 1999, 24th International Symposium, MFCS'99, Szklarska
  Poreba, Poland, September 6-10, 1999, Proceedings}}
  \emph{(\bibinfo{series}{Lecture Notes in Computer Science},
  Vol.~\bibinfo{volume}{1672})}, \bibfield{editor}{\bibinfo{person}{Miroslaw
  Kutylowski}, \bibinfo{person}{Leszek Pacholski}, {and}
  \bibinfo{person}{Tomasz Wierzbicki}} (Eds.). \bibinfo{publisher}{Springer},
  \bibinfo{pages}{81--91}.
\newblock
\urldef\tempurl \url{https://doi.org/10.1007/3-540-48340-3\_8}
\showDOI{\tempurl}


\bibitem[Peled et~al\mbox{.}(1998)]{DBLP:journals/tcs/PeledWW98}
\bibfield{author}{\bibinfo{person}{Doron~A. Peled}, \bibinfo{person}{Thomas
  Wilke}, {and} \bibinfo{person}{Pierre Wolper}.}
  \bibinfo{year}{1998}\natexlab{}.
\newblock \showarticletitle{An Algorithmic Approach for Checking Closure
  Properties of Temporal Logic Specifications and Omega-Regular Languages}.
\newblock \bibinfo{journal}{\emph{Theor. Comput. Sci.}} \bibinfo{volume}{195},
  \bibinfo{number}{2} (\bibinfo{year}{1998}), \bibinfo{pages}{183--203}.
\newblock
\urldef\tempurl \url{https://doi.org/10.1016/S0304-3975(97)00219-3}
\showDOI{\tempurl}


\bibitem[Peters and Yoshida(2024)]{DBLP:conf/lics/PetersY24}
\bibfield{author}{\bibinfo{person}{Kirstin Peters} {and}
  \bibinfo{person}{Nobuko Yoshida}.} \bibinfo{year}{2024}\natexlab{}.
\newblock \showarticletitle{Separation and Encodability in Mixed Choice
  Multiparty Sessions}. In \bibinfo{booktitle}{\emph{Proceedings of the 39th
  Annual {ACM/IEEE} Symposium on Logic in Computer Science, {LICS} 2024,
  Tallinn, Estonia, July 8-11, 2024}}, \bibfield{editor}{\bibinfo{person}{Pawel
  Sobocinski}, \bibinfo{person}{Ugo~Dal Lago}, {and} \bibinfo{person}{Javier
  Esparza}} (Eds.). \bibinfo{publisher}{{ACM}}, \bibinfo{pages}{62:1--62:15}.
\newblock
\urldef\tempurl \url{https://doi.org/10.1145/3661814.3662085}
\showDOI{\tempurl}


\bibitem[Roychoudhury et~al\mbox{.}(2012)]{DBLP:journals/tosem/RoychoudhuryGS12}
\bibfield{author}{\bibinfo{person}{Abhik Roychoudhury}, \bibinfo{person}{Ankit
  Goel}, {and} \bibinfo{person}{Bikram Sengupta}.}
  \bibinfo{year}{2012}\natexlab{}.
\newblock \showarticletitle{Symbolic Message Sequence Charts}.
\newblock \bibinfo{journal}{\emph{{ACM} Trans. Softw. Eng. Methodol.}}
  \bibinfo{volume}{21}, \bibinfo{number}{2} (\bibinfo{year}{2012}),
  \bibinfo{pages}{12:1--12:44}.
\newblock
\urldef\tempurl \url{https://doi.org/10.1145/2089116.2089122}
\showDOI{\tempurl}


\bibitem[Sakarovitch(1987)]{DBLP:journals/tcs/Sakarovitch87}
\bibfield{author}{\bibinfo{person}{Jacques Sakarovitch}.}
  \bibinfo{year}{1987}\natexlab{}.
\newblock \showarticletitle{On Regular Trace Languages}.
\newblock \bibinfo{journal}{\emph{Theor. Comput. Sci.}}  \bibinfo{volume}{52}
  (\bibinfo{year}{1987}), \bibinfo{pages}{59--75}.
\newblock
\urldef\tempurl \url{https://doi.org/10.1016/0304-3975(87)90080-6}
\showDOI{\tempurl}


\bibitem[Sakarovitch(1992)]{DBLP:conf/latin/Sakarovitch92}
\bibfield{author}{\bibinfo{person}{Jacques Sakarovitch}.}
  \bibinfo{year}{1992}\natexlab{}.
\newblock \showarticletitle{The "Last" Decision Problem for Rational Trace
  Languages}. In \bibinfo{booktitle}{\emph{{LATIN} '92, 1st Latin American
  Symposium on Theoretical Informatics, S{\~{a}}o Paulo, Brazil, April 6-10,
  1992, Proceedings}} \emph{(\bibinfo{series}{Lecture Notes in Computer
  Science}, Vol.~\bibinfo{volume}{583})},
  \bibfield{editor}{\bibinfo{person}{Imre Simon}} (Ed.).
  \bibinfo{publisher}{Springer}, \bibinfo{pages}{460--473}.
\newblock
\urldef\tempurl \url{https://doi.org/10.1007/BFB0023848}
\showDOI{\tempurl}


\bibitem[Scalas and Yoshida(2019)]{DBLP:journals/pacmpl/ScalasY19}
\bibfield{author}{\bibinfo{person}{Alceste Scalas} {and}
  \bibinfo{person}{Nobuko Yoshida}.} \bibinfo{year}{2019}\natexlab{}.
\newblock \showarticletitle{Less is more: multiparty session types revisited}.
\newblock \bibinfo{journal}{\emph{Proc. {ACM} Program. Lang.}}
  \bibinfo{volume}{3}, \bibinfo{number}{{POPL}} (\bibinfo{year}{2019}),
  \bibinfo{pages}{30:1--30:29}.
\newblock
\urldef\tempurl \url{https://doi.org/10.1145/3290343}
\showDOI{\tempurl}


\bibitem[Shen et~al\mbox{.}(2023)]{DBLP:journals/corr/abs-2303-00924}
\bibfield{author}{\bibinfo{person}{Gan Shen}, \bibinfo{person}{Shun Kashiwa},
  {and} \bibinfo{person}{Lindsey Kuper}.} \bibinfo{year}{2023}\natexlab{}.
\newblock \showarticletitle{HasChor: Functional Choreographic Programming for
  All (Functional Pearl)}.
\newblock \bibinfo{journal}{\emph{CoRR}}  \bibinfo{volume}{abs/2303.00924}
  (\bibinfo{year}{2023}).
\newblock
\urldef\tempurl \url{https://doi.org/10.48550/ARXIV.2303.00924}
\showDOI{\tempurl}
\showeprint[arXiv]{2303.00924}


\bibitem[Stutz(2023)]{DBLP:conf/ecoop/Stutz23}
\bibfield{author}{\bibinfo{person}{Felix Stutz}.}
  \bibinfo{year}{2023}\natexlab{}.
\newblock \showarticletitle{Asynchronous Multiparty Session Type
  Implementability is Decidable - Lessons Learned from Message Sequence
  Charts}. In \bibinfo{booktitle}{\emph{37th European Conference on
  Object-Oriented Programming, {ECOOP} 2023, July 17-21, 2023, Seattle,
  Washington, United States}} \emph{(\bibinfo{series}{LIPIcs},
  Vol.~\bibinfo{volume}{263})}, \bibfield{editor}{\bibinfo{person}{Karim Ali}
  {and} \bibinfo{person}{Guido Salvaneschi}} (Eds.).
  \bibinfo{publisher}{Schloss Dagstuhl - Leibniz-Zentrum f{\"{u}}r Informatik},
  \bibinfo{pages}{32:1--32:31}.
\newblock
\urldef\tempurl \url{https://doi.org/10.4230/LIPICS.ECOOP.2023.32}
\showDOI{\tempurl}


\bibitem[Stutz(2024)]{DBLP:phd/dnb/Stutz24}
\bibfield{author}{\bibinfo{person}{Felix Stutz}.}
  \bibinfo{year}{2024}\natexlab{}.
\newblock \emph{\bibinfo{title}{Implementability of Asynchronous Communication
  Protocols - The Power of Choice}}.
\newblock \bibinfo{thesistype}{Ph.\,D. Dissertation}.
  \bibinfo{school}{Kaiserslautern University of Technology, Germany}.
\newblock
\urldef\tempurl \url{https://kluedo.ub.rptu.de/frontdoor/index/index/docId/8077}
\showURL{\tempurl}


\bibitem[Stutz and D'Osualdo(2025)]{DBLP:conf/esop/StutzD25}
\bibfield{author}{\bibinfo{person}{Felix Stutz} {and} \bibinfo{person}{Emanuele
  D'Osualdo}.} \bibinfo{year}{2025}\natexlab{}.
\newblock \showarticletitle{An Automata-theoretic Basis for Specification and
  Type Checking of Multiparty Protocols}. In
  \bibinfo{booktitle}{\emph{Programming Languages and Systems - 34th European
  Symposium on Programming, {ESOP} 2025, Held as Part of the International
  Joint Conferences on Theory and Practice of Software, {ETAPS} 2025, Hamilton,
  ON, Canada, May 3-8, 2025, Proceedings, Part {II}}}
  \emph{(\bibinfo{series}{Lecture Notes in Computer Science},
  Vol.~\bibinfo{volume}{15695})}, \bibfield{editor}{\bibinfo{person}{Viktor
  Vafeiadis}} (Ed.). \bibinfo{publisher}{Springer}, \bibinfo{pages}{314--346}.
\newblock
\urldef\tempurl \url{https://doi.org/10.1007/978-3-031-91121-7\_13}
\showDOI{\tempurl}


\bibitem[Tirore et~al\mbox{.}(2023)]{DBLP:conf/itp/TiroreBC23}
\bibfield{author}{\bibinfo{person}{Dawit~Legesse Tirore},
  \bibinfo{person}{Jesper Bengtson}, {and} \bibinfo{person}{Marco Carbone}.}
  \bibinfo{year}{2023}\natexlab{}.
\newblock \showarticletitle{A Sound and Complete Projection for Global Types}.
  In \bibinfo{booktitle}{\emph{14th International Conference on Interactive
  Theorem Proving, {ITP} 2023, July 31 to August 4, 2023, Bia{\l}ystok,
  Poland}} \emph{(\bibinfo{series}{LIPIcs}, Vol.~\bibinfo{volume}{268})},
  \bibfield{editor}{\bibinfo{person}{Adam Naumowicz} {and}
  \bibinfo{person}{Ren{\'{e}} Thiemann}} (Eds.). \bibinfo{publisher}{Schloss
  Dagstuhl - Leibniz-Zentrum f{\"{u}}r Informatik},
  \bibinfo{pages}{28:1--28:19}.
\newblock
\urldef\tempurl \url{https://doi.org/10.4230/LIPICS.ITP.2023.28}
\showDOI{\tempurl}


\bibitem[Toninho and Yoshida(2017)]{DBLP:journals/jlp/ToninhoY17}
\bibfield{author}{\bibinfo{person}{Bernardo Toninho} {and}
  \bibinfo{person}{Nobuko Yoshida}.} \bibinfo{year}{2017}\natexlab{}.
\newblock \showarticletitle{Certifying data in multiparty session types}.
\newblock \bibinfo{journal}{\emph{J. Log. Algebraic Methods Program.}}
  \bibinfo{volume}{90} (\bibinfo{year}{2017}), \bibinfo{pages}{61--83}.
\newblock
\urldef\tempurl \url{https://doi.org/10.1016/J.JLAMP.2016.11.005}
\showDOI{\tempurl}


\bibitem[Udomsrirungruang and Yoshida(2025)]{DBLP:journals/pacmpl/UdomsrirungruangY25}
\bibfield{author}{\bibinfo{person}{Thien Udomsrirungruang} {and}
  \bibinfo{person}{Nobuko Yoshida}.} \bibinfo{year}{2025}\natexlab{}.
\newblock \showarticletitle{Top-Down or Bottom-Up? Complexity Analyses of
  Synchronous Multiparty Session Types}.
\newblock \bibinfo{journal}{\emph{Proc. {ACM} Program. Lang.}}
  \bibinfo{volume}{9}, \bibinfo{number}{{POPL}} (\bibinfo{year}{2025}),
  \bibinfo{pages}{1040--1071}.
\newblock
\urldef\tempurl \url{https://doi.org/10.1145/3704872}
\showDOI{\tempurl}


\bibitem[Union(1996)]{z120-standard}
\bibfield{author}{\bibinfo{person}{International~Telecommunication Union}.}
  \bibinfo{year}{1996}\natexlab{}.
\newblock \bibinfo{booktitle}{\emph{Z.120: Message Sequence Chart}}.
\newblock \bibinfo{type}{{T}echnical {R}eport}.
  \bibinfo{institution}{International Telecommunication Union}.
\newblock
\urldef\tempurl \url{https://www.itu.int/rec/T-REC-Z.120}
\showURL{\tempurl}


\bibitem[Vassor and Yoshida(2024)]{DBLP:conf/ecoop/VassorY24}
\bibfield{author}{\bibinfo{person}{Martin Vassor} {and} \bibinfo{person}{Nobuko
  Yoshida}.} \bibinfo{year}{2024}\natexlab{}.
\newblock \showarticletitle{Refinements for Multiparty Message-Passing
  Protocols: Specification-Agnostic Theory and Implementation}. In
  \bibinfo{booktitle}{\emph{38th European Conference on Object-Oriented
  Programming, {ECOOP} 2024, September 16-20, 2024, Vienna, Austria}}
  \emph{(\bibinfo{series}{LIPIcs}, Vol.~\bibinfo{volume}{313})},
  \bibfield{editor}{\bibinfo{person}{Jonathan Aldrich} {and}
  \bibinfo{person}{Guido Salvaneschi}} (Eds.). \bibinfo{publisher}{Schloss
  Dagstuhl - Leibniz-Zentrum f{\"{u}}r Informatik},
  \bibinfo{pages}{41:1--41:29}.
\newblock
\urldef\tempurl \url{https://doi.org/10.4230/LIPICS.ECOOP.2024.41}
\showDOI{\tempurl}


\bibitem[Yoshida(2024)]{DBLP:books/sp/24/Yoshida24}
\bibfield{author}{\bibinfo{person}{Nobuko Yoshida}.}
  \bibinfo{year}{2024}\natexlab{}.
\newblock \showarticletitle{Programming Language Implementations with
  Multiparty Session Types}.
\newblock In \bibinfo{booktitle}{\emph{Active Object Languages: Current
  Research Trends}}, \bibfield{editor}{\bibinfo{person}{Frank~S. de~Boer},
  \bibinfo{person}{Ferruccio Damiani}, \bibinfo{person}{Reiner H{\"{a}}hnle},
  \bibinfo{person}{Einar~Broch Johnsen}, {and} \bibinfo{person}{Eduard
  Kamburjan}} (Eds.). \bibinfo{series}{Lecture Notes in Computer Science},
  Vol.~\bibinfo{volume}{14360}. \bibinfo{publisher}{Springer},
  \bibinfo{pages}{147--165}.
\newblock
\urldef\tempurl \url{https://doi.org/10.1007/978-3-031-51060-1\_6}
\showDOI{\tempurl}


\bibitem[Yoshida and Gheri(2020)]{DBLP:conf/icdcit/YoshidaG20}
\bibfield{author}{\bibinfo{person}{Nobuko Yoshida} {and}
  \bibinfo{person}{Lorenzo Gheri}.} \bibinfo{year}{2020}\natexlab{}.
\newblock \showarticletitle{A Very Gentle Introduction to Multiparty Session
  Types}. In \bibinfo{booktitle}{\emph{Distributed Computing and Internet
  Technology - 16th International Conference, {ICDCIT} 2020, Bhubaneswar,
  India, January 9-12, 2020, Proceedings}} \emph{(\bibinfo{series}{Lecture
  Notes in Computer Science}, Vol.~\bibinfo{volume}{11969})},
  \bibfield{editor}{\bibinfo{person}{Dang~Van Hung} {and}
  \bibinfo{person}{Meenakshi D'Souza}} (Eds.). \bibinfo{publisher}{Springer},
  \bibinfo{pages}{73--93}.
\newblock
\urldef\tempurl \url{https://doi.org/10.1007/978-3-030-36987-3\_5}
\showDOI{\tempurl}


\bibitem[Yoshida and Hou(2024)]{DBLP:series/lncs/YoshidaH24}
\bibfield{author}{\bibinfo{person}{Nobuko Yoshida} {and} \bibinfo{person}{Ping
  Hou}.} \bibinfo{year}{2024}\natexlab{}.
\newblock \showarticletitle{Less is More Revisited: Association with Global
  Multiparty Session Types}.
\newblock In \bibinfo{booktitle}{\emph{The Practice of Formal Methods: Essays
  in Honour of Cliff Jones, Part {II}}}, \bibfield{editor}{\bibinfo{person}{Ana
  Cavalcanti} {and} \bibinfo{person}{James Baxter}} (Eds.).
  \bibinfo{series}{Lecture Notes in Computer Science},
  Vol.~\bibinfo{volume}{14781}. \bibinfo{publisher}{Springer},
  \bibinfo{pages}{268--291}.
\newblock
\urldef\tempurl \url{https://doi.org/10.1007/978-3-031-66673-5\_14}
\showDOI{\tempurl}


\bibitem[Zhou et~al\mbox{.}(2020)]{DBLP:journals/pacmpl/00020HNY20}
\bibfield{author}{\bibinfo{person}{Fangyi Zhou}, \bibinfo{person}{Francisco
  Ferreira}, \bibinfo{person}{Raymond Hu}, \bibinfo{person}{Rumyana Neykova},
  {and} \bibinfo{person}{Nobuko Yoshida}.} \bibinfo{year}{2020}\natexlab{}.
\newblock \showarticletitle{Statically verified refinements for multiparty
  protocols}.
\newblock \bibinfo{journal}{\emph{Proc. {ACM} Program. Lang.}}
  \bibinfo{volume}{4}, \bibinfo{number}{{OOPSLA}} (\bibinfo{year}{2020}),
  \bibinfo{pages}{148:1--148:30}.
\newblock
\urldef\tempurl \url{https://doi.org/10.1145/3428216}
\showDOI{\tempurl}


\bibitem[Zielonka(1987)]{DBLP:journals/ita/Zielonka87}
\bibfield{author}{\bibinfo{person}{Wieslaw Zielonka}.}
  \bibinfo{year}{1987}\natexlab{}.
\newblock \showarticletitle{Notes on Finite Asynchronous Automata}.
\newblock \bibinfo{journal}{\emph{{RAIRO} Theor. Informatics Appl.}}
  \bibinfo{volume}{21}, \bibinfo{number}{2} (\bibinfo{year}{1987}),
  \bibinfo{pages}{99--135}.
\newblock
\urldef\tempurl \url{https://doi.org/10.1051/ITA/1987210200991}
\showDOI{\tempurl}


\end{thebibliography}
\end{document}